\newtheorem{prop}{Proposition}[section]
\newtheorem{remark}{Remark}[section]
\newtheorem{defi}{Definition}[section]
\newtheorem{lemm}{Lemma}[section]
\newtheorem{thm}{Theorem}[section]
\newtheorem{coro}{Corollary}[section]
\newtheorem{ex}{Example}[section]
\newcommand {\R}{\mathbb{R}}
\newcommand {\Z}{\mathbb{Z}}
\newcommand {\C}{\mathbb{C}}
\newcommand {\N}{\mathbb{N}}
\newcommand {\calD}{{\mathcal D}}
\newcommand {\calm}{{\mathcal M}}
\newcommand {\calo}{{\mathcal O}}
\newcommand {\calr}{{\mathcal R}}
\newcommand {\bfd}{\mathbf{d}}
\newcommand{\delete}[1]{}
\newcommand {\greenf}{\mathfrak{G}}
\newcommand {\gmd} {meromorphic germ of distributions }
\newcommand {\gmds} {meromorphic germs of distributions }
\newcommand {\zb}[1]{\textcolor{blue}{ #1}}
\newcommand{\bbox}{\normalsize {}%
        \nolinebreak \hfill $\blacksquare$ \medbreak \par}
\newenvironment{proof}{\noindent\emph{Proof} ---}{\bbox\vspace{0,15cm}}
\date{}
\begin{document}
\title{Renormalization of Feynman amplitudes on manifolds by spectral zeta regularization and blow-ups.}

\author{Nguyen Viet Dang\footnote{
Institut Camille Jordan (U.M.R. CNRS 5208), Universit\'e Claude Bernard Lyon 1, 
B\^atiment Braconnier, 43, boulevard du 11 novembre 1918, 
69622 Villeurbanne Cedex}
\\
Bin Zhang\footnote{
School of
Mathematics, Yangtze
Center of
Mathematics, Sichuan
University, Chengdu, 610064, P. R. China}
}

\maketitle

\begin{abstract}
Our goal in this paper is to present
a generalization of the spectral zeta regularization
for general Feynman amplitudes on Riemannian manifolds.
Our method uses complex powers of elliptic operators
but involves several complex parameters
in the spirit of the \emph{analytic renormalization} by Speer, 
to build mathematical foundations for the
renormalization of perturbative interacting
quantum field theories.
Our main result 
shows that spectrally regularized
Feynman amplitudes admit an analytic continuation as
meromorphic germs with linear poles 
in the sense of the works of
Guo--Paycha and the second author. 
We also give
an explicit determination 
of the affine hyperplanes
supporting the poles. 
Our proof relies
on suitable resolution 
of singularities of products 
of heat kernels to make them smooth.

As an application of the analytic continuation result, we use a universal
projection from meromorphic germs with linear poles on holomorphic germs 
to construct renormalization maps 
which subtract singularities of Feynman amplitudes of Euclidean fields. 
Our renormalization maps are shown to 
satisfy consistency conditions previously 
introduced in the work of Nikolov--Todorov--Stora 
in the case of flat space--times.
\end{abstract}

\tableofcontents

\section{Introduction.}
%
%\zb {Main topics in introduction: 
%\begin {itemize}
%\item In QFT, we need renormalize Feynman amplitutes
%\item We want to construct invariants for Riemannian manifolds by renormalization
%\item Euclidean QFT on Riemannian manifolds 
%\item Zeta regularization is powerful
%\item Speer's work proves zeta regularization is useful in renormalization Feynman amplitutes
%\end{itemize}}
\paragraph{Zeta regularization.}
Let $M$ be a smooth, compact, connected manifold 
without boundary and $P$ be a symmetric, positive, elliptic pseudodifferential
operator on $M$. 
Later on, we will 
specialize to
Schr\"odinger 
operators of the form $P=-\Delta_g+V$ 
where $-\Delta_g$ is a Laplace
operator and
$V$ is a smooth nonnegative
potential. But the present discussion 
applies to any symmetric, positive, elliptic pseudodifferential
operator $P$. 
Then $P$ admits a discrete spectral resolution~\cite[Lemma 1.6.3 p.~51]{Gilkey} which means there is
an increasing sequence
of eigenvalues $$\sigma (P)=
\{0 \leqslant\lambda_0\leqslant \lambda_1\leqslant \lambda_2 \leqslant \dots \leqslant \lambda_n\rightarrow +\infty \}$$
and corresponding $L^2$-basis of eigenfunctions
$(e_\lambda)_{\lambda\in \sigma(P)}$ so that
$P e_\lambda=\lambda e_\lambda$.
In his seminal work~\cite{Seeley},
Seeley constructed the
complex powers $(P^{-s})_{s\in \mathbb{C}}$ of $P$ as a holomorphic
family
of linear continuous operators acting on suitable scales of
Sobolev spaces on the manifold $M$. 
In particular for $Re(s)\geqslant 0$, $P^{-s}$ is bounded in $L^2(M)$.
%as a consequence of
%holomorphic functional calculus and bounds on the resolvent.
Now let us consider the \emph{spectral zeta function} $\zeta_{P}(s)$ which is defined as
the trace $TR\left(P^{-s}\right)$ and coincides with
the series~:
\begin{equation}
\boxed{\zeta_{P}(s)=TR\left(P^{-s}\right)=\sum_{\lambda \in\sigma(P)\setminus \{0\}} \lambda^{-s}.}
\end{equation}
By Weyl's law on the growth of eigenvalues of $P$~\cite[Lemma 1.12.6 p.~113]{Gilkey},
the operator $P^{-s}$ is trace class
and the series $\zeta_{P}(s)=\sum_{\lambda>0} \lambda^{-s}$
converges as a holomorphic function in $s$ on the
half--plane
$Re(s)>\frac{\dim(M)}{\deg(P)}$.
Then Seeley showed that $\zeta_P(s)$
admits an \textbf{analytic continuation}
on the complex plane as a \textbf{meromorphic function}
~\cite[Thm 1.12.2 p.~108]{Gilkey} with simple poles. 
In case $P$ is a \textbf{differential operator}, 
$\zeta_P(s)$ is holomorphic at $s=0$.
This result shows one of the first instances of the power of zeta regularization, 
where we can regularize the divergent series $\sum_{\lambda\in \sigma(P)} 1$
and obtain the value $\zeta_P(0)$ of the spectral zeta function $\zeta_P$ at $s=0$. 
More importantly,
the residues of $\zeta_{P}(s)$ at its poles can be expressed as multiple of
integrals over $M$ of \textbf{local invariants} of the operator $P$~\cite[p.~299-303]{Atiyah-73} and are intimately
related to the heat invariants of $P$~\cite[Thm 1.12.2 p.~108]{Gilkey}.

\paragraph{From zeta regularization to regularized traces.}
In the same spirit, zeta regularization techniques were also used in 
global analysis to 
construct regularized traces for certain algebras of pseudodifferential operators. 
The above result of Seeley on the analytic continuation of $TR(P^{-s})$ has been generalized to
\emph{canonical traces} on pseudodifferential operators by Kontsevich-Vishik~\cite{kontsevichdeterminants}, 
to
study anomalies of regularized zeta determinants with related works
by 
Lesch~\cite{lesch1999noncommutative} among many authors. Then general 
types of tracial anomalies were discussed
in~\cite{MelroseNistor,paychacardonaducourtioux,PaychaScottchernweil}, sometimes in relation with
quantum field theory, 
and finally a general notion of trace for 
\emph{holomorphic families of pseudodifferential operators} appears in
the work of Paycha--Scott~\cite{paycha2007laurent}.
An important object underlying all these constructions is the
notion of noncommutative residue
for any pseudodifferential operator $A$.
This noncommutative residue can be defined by 
zeta regularization using 
complex powers of elliptic operators as follows. 
Choose any symmetric, positive, elliptic differential operator $P$,
then the noncommutative residue of $A$ is defined as the residue at $s=0$
of the meromorphic continuation of the trace $TR\left(AP^{-s}\right)$,
and is given by a local formula in the symbol of $A$.
In his seminal works, Wodzicki~\cite{wodzicki1987,wodzicki1984} proved that up to constant,
this residue is the unique trace on the algebra of pseudodifferential
operators. It plays a central role in global analysis and noncommutative geometry.
We refer the reader to the monographs~\cite{paycha2012regularised, scott2010traces} for
further details on these topics.

\paragraph{Zeta regularization for partition functions.}

Already in the simple case of spectral zeta
functions of the Laplace--Beltrami operator, these regularization methods turn out to be
extremely useful to study Euclidean quantum fields on Riemannian manifolds. 
In the mathematical physics literature,
zeta regularization was
first applied to quantum field theory on curved spaces by Hawking~\cite{hawking1977zeta} to give a
definition of the partition function of Euclidean QFT. 
It can also be used to
give a mathematical model of the
Casimir effect~\cite{fulling2007vacuum}. For topological quantum field theories,
following the seminal work of Ray--Singer~\cite{raysinger1971}
on analytic torsion,
it was soon realized by Schwarz that one can  
define and calculate the partition function of some
abelian BF theories~\cite{schwarz1978partition}
using zeta regularized determinants.
Formally, for some flat bundle $(E,\nabla)$ over some smooth
compact manifold $M$ of dimension $d$, his formula for the partition 
function of the BF theory reads
$$ \int_{(A,B)\in \Omega^k(M,E)\times \Omega^{n-k-1}(M,E)} \exp\left(-\int_M B\wedge d^\nabla A \right)=\prod_{k=0}^d  \det\left(\Delta^{(k)} \right)^{(-1)^kk/2}  $$
where $d^\nabla$ is the twisted differential acting on $\Omega^\bullet(M,E)$ 
and the right hand side is the Ray--Singer analytic torsion
of the flat bundle $(E,\nabla)\mapsto M$ which is a topological invariant~\cite[(10) p.~9]{mnev2014lecture}.  
Then Witten generalized the above
work of Schwarz by showing that the perturbative partition
function of Chern--Simons theory involved
the Ray--Singer analytic torsion and also the eta invariant of Atiyah--Patodi--Singer. Since
the formula looks quite complicated, 
we refer the reader to~\cite[(12) p.~9]{mnev2014lecture}.
But the important point is that 
the formula involves zeta 
regularized
determinants.
The main idea underlying the above results
is that partition functions are \textbf{formally} expressed as
functional integrals on some space of fields, 
these partition functions are
then identified with 
regularized determinants 
of elliptic operators. 
For instance, in the case of the Dirichlet action 
functional $S(\varphi)=\frac{1}{2}\int_M\varphi(-\Delta_g)\varphi dv(x)$
where $-\Delta_g$ is the Laplace--Beltrami operator and $dv$
the Riemannian volume, the partition function $Z$ reads~:
$$Z=\int d\varphi \exp\left(-\frac{1}{2}\int_M \varphi(-\Delta_g)\varphi dv\right)=\det(-\Delta_g)^{-\frac{1}{2}} $$
where $\det(-\Delta_g)$ may be defined as
$\exp(-\zeta^\prime(0))$ where $\zeta$ is the regularized zeta
function of the elliptic operator $(-\Delta_g)$ appearing in the definition of the partition function.
 
 For applications in mathematical physics and in the present work,
a particular role will be played
by complex powers of generalized Laplacians (more generally elliptic, positive, self--adjoint operators of order $2$) and their relation with the
heat kernel asymptotics. 
These methods based on the local asymptotic expansion of the heat kernel
are crucial in the local index theory~\cite{Berline-04} and 
are also used in the works~\cite{BarMoroianu,Baereinstein} 
to give a purely spectral definition
of the Einstein--Hilbert
action functional following~\cite{kalau1995gravity, ackermann1996note, kastler1995dirac}.
 
 Another interesting physical property of zeta regularization is its natural covariance which
is why it was used in the first place by Hawking. Indeed,
%for $(M,g)$ a closed, compact Riemannian manifold, if $e_\lambda$ is some eigenfunction
%of $-\Delta_g$ for the eigenvalue $\lambda$, then by definition $-\Delta_ge_\lambda=\lambda e_\lambda$.
for any diffeomorphism
$\Phi:M\mapsto M$, 
%the Riemannian structure $(M,\Phi^*g)$ is isometric to 
%$(M,g)$ 
%and 
%the pulled--back eigenfunction $\Phi^*e_\lambda$ solves the 
%equation $\left(-\Delta_{\Phi^*g}\right)\Phi^*e_\lambda=\lambda \Phi^*e_\lambda $. 
%Hence 
the spectrum $\sigma\left(-\Delta_{\Phi^*g}\right)$ of 
$-\Delta_{\Phi^*g}$ on $(M,\Phi^*g)$ coincides with the spectrum
of $-\Delta_g$ on $(M,g)$ and $\sigma(-\Delta_g)$ is thus an \textbf{invariant of the Riemannian structure} $(M,g)$\footnote{The space of Riemannian structure if the set of pairs $(M,g)$ quotiented by isometries}.
Therefore zeta regularization is a coordinate independent regularization scheme
which depends only on the spectral properties of the Laplacian
which in turns is entirely specified by the Riemannian structure $(M,g)$.

\paragraph{Renormalization in quantum field theory.}

The present paper is written for analysts and
does not require any background in physics or
quantum field theory.
We present our results in a
purely mathematical form.
However, we felt that
for readers with some interest in QFT, it would be
preferable to
present some
physical motivations and the uninterested reader can skip the present paragraph.
QFT is a general framework
aimed at describing the fundamental forces and particles.
In QFT, we are given some graphs called
\emph{Feynman graphs} which
pictorially represent complicated interaction
process between various particles and we associate to 
every graph
$G$ some number $c_G$, called \emph{Feynman amplitude}, which is often given by some divergent
integral
as soon as the graph $G$ contains any loops.
The issue is that the above zeta regularization methods can only be used to
renormalize one loop graphs
as discussed in~\cite[1.4 p.~10]{bytsenko2003analytic}. 
For interacting QFT's, it is not enough to 
regularize only the partition function and one loop graphs,
one must renormalize
amplitudes whose corresponding graphs contain
an arbitrary number of loops.
For instance in quantum electrodynamics (QED) which is the QFT
describing the interaction of light and matter,
the computation of the probability amplitude
of some scattering
process for two incoming and two outgoing electrons
is represented by the following Feynman diagram~:
\begin{center}
\includegraphics[scale=0.4]{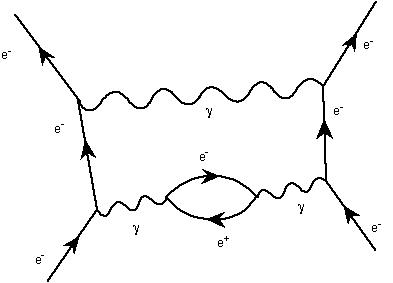}
\end{center}
where 
%the wiggly lines represent photons and
%straight lines represent electrons positrons.
the electrons are denoted by $e^-$,
positrons by $e^+$ and photons
by $\gamma$. 
The corresponding Feynman amplitude
is given by some product of electron propagators, represented
by the straight lines, and photon
propagators represented by the 
wiggly lines. These propagators are distributions on $\R^4\times \R^4$
valued in $4\times 4$ matrices. 
%. 
%Oversimplifying a bit, we may 
%represent the
%corresponding
%Feynman amplitude
%as some integral
%\begin{eqnarray*}
%\int_{(\mathbb{R}^4)^6} G(x_1,x_2)G(x_3,x_4)G(x_5,x_6)\Delta(x_1,x_5)\Delta(x_4,x_6)\\
%\times\Delta(x_2,x_3)^2\Delta(y_1,x_1)\Delta(y_2,x_4)\Delta(x_5,y_3)\Delta(x_6,y_4)
%d^4x_1\dots d^4x_6  
%\end{eqnarray*}
%where $\Delta$ is the electron propagator and $G$ is the photon propagator\footnote{we 
%deliberately neglected the fact that we are dealing 
%with $4\times 4$ matrices in QED and 
%we did not write the corresponding sums over indices}.
%\zb {say a liitle more about this example? Especially the associated Feynman amplitude. }

For the sake of simplicity, we limit ourselves to
scalar theories in the present paper. In these theories, unlike in gauge theories, 
there is only one scalar valued propagator which is denoted by $\greenf$ in the sequel. 
The topology of the Feynman graphs that we encounter is dictated
by the interaction of the theory. For instance in the massless $\phi^4$ theory, the only Feynman graphs
that we encounter
have vertices of degree $4$. 
Our goal is to use spectral zeta regularization to renormalize multiple
loop amplitudes for Euclidean QFT on Riemannian manifolds 
with the aim to relate them to geometric
invariants of Riemannian 
manifolds which is
the subject of future work of the authors.
Our starting point is the work of Eugene Speer
on analytic renormalization in QFT~\cite{speer1968, speer1976, speerseminar} who found an
alternative formulation of the usual BPHZ renormalization algorithm,
based on analytic regularization with several complex parameters.
The analytic structure 
of the regularized amplitude in these
variables 
encodes 
rich algebraic structure so that a renormalized
amplitude
may be defined by the application of a universal
projector, independent of the graph in question, to the regularized amplitude.
Indeed, we will show that
regularized amplitudes are
\textbf{meromorphic germs with linear poles} and
in subsection \ref{ss:renormproj},
we will describe  
a straightforward way of subtracting the divergent part of the regularized amplitudes while 
keeping only the holomorphic part.
Then renormalization will
be reformulated in definition
\ref{d:renormmapszeta}
as the 
evaluation at some poles
of the holomorphic part
of the regularized
amplitude.
This projection is a useful
substitute to the BPHZ algorithm and the
method pioneered by Connes--Kreimer
based on Hopf algebras and Birkhoff factorizations.
In our work, a common point with
the BPHZ algorithm and Speer's work, is that we rely on Hepp sectors and 
resolution of singularities arguments.

Let us show how the idea of \emph{analytic renormalization} works in an example
on flat space.
On Euclidean space $\R^4$,
the Green function of the Laplace operator reads
$ \greenf(x,y)=C Q^{-1}(x-y)$ where $C$ is some constant
and $Q$ is the quadratic form $Q(v)=\sum_{i=1}^4 v_i^2$.
On configuration space $(\mathbb{R}^4)^6$,
the Feynman rules assign to the graph
\begin{center}
\includegraphics[scale=0.3]{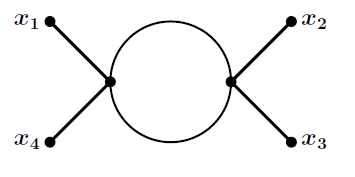}
\end{center}
the corresponding amplitude
$$T(x_1,x_2,x_3,x_4)=\int_{(y_1,y_2)\in (\mathbb{R}^4)^2} 
\greenf(x_1,y_1)
\greenf(x_2,y_1)
\greenf^2(y_1,y_2)\greenf(y_2,x_3)\greenf(y_2,x_4)d^4y_1d^4y_2 ,$$
which
is given by some formal
product of Green functions.
To get rid
of the infrared divergence due to the fact that we integrate in some
infinite volume $(\mathbb{R}^{4})^2$, one may either introduce
a sharp cut--off by replacing $\mathbb{R}^4$ by a finite box, or
we may as well insert some 
smooth compactly supported cut--off
function $g\in C^\infty_c(\R^4)$ 
for each variable $(y_i)_{i\in \{1,2\}}$ corresponding to the internal vertices of the Feynman graph as
follows~:
$$\boxed{T(x_1,x_2,x_3,x_4)=\int_{(\mathbb{R}^4)^2} \greenf(x_1,y_1)
\greenf(x_2,y_1)
\greenf^2(y_1,y_2)\greenf(y_2,x_3)\greenf(y_2,x_4)g(y_1)g(y_2)d^4y_1d^4y_2 .}$$
In fact, it is natural to view the full amplitude $\greenf(x_1,y_1)
\greenf(x_2,y_1)
\greenf^2(y_1,y_2)\greenf(y_2,x_3)\greenf(y_2,x_4)$ 
as a \textbf{distribution} in 
$\mathcal{D}^\prime\left(\left(\R^4\right)^6\right)$, so we may think that 
we insert some smooth compactly supported cut--off
function $g(y_1)g(y_2)$ on $(\R ^4)^2$ so that 
$T(x_1,x_2,x_3,x_4)$ 
is well--defined as the
\textbf{pushforward} of the product
$\greenf(x_1,y_1)
\greenf(x_2,y_1)
\greenf^2(y_1,y_2)\greenf(y_2,x_3)\greenf(y_2,x_4)g(y_1)g(y_2)d^4y_1d^4y_2$
along the fibers of
the projection $(\R^4)^6\mapsto (\R^4)^4$.

In terms of the quadratic function $Q$, the above
amplitude reads~:
\begin{eqnarray*}
\int_{(y_1,y_2)\in (\mathbb{R}^4)^2} Q^{-1}(x_1,y_1)
Q^{-1}(x_2,y_1)
Q^{-1}(y_1,y_2)Q^{-1}(y_1,y_2)\\
\times Q^{-1}(y_2,x_3)Q^{-1}(y_2,x_4)g(y_1)g(y_2)d^4y_1d^4y_2 .
\end{eqnarray*}
Now for each $Q^{-1}$ in factor in the amplitude, we shall introduce a complex power $s$ as follows~:
\begin{eqnarray*}
T(\mathbf{s})=\int_{(y_1,y_2)\in (\mathbb{R}^4)^2} 
Q^{-s_1}(x_1,y_1)
Q^{-s_2}(x_2,y_1)
Q^{-s_3}(y_1,y_2)Q^{-s_4}(y_1,y_2)\\
\times Q^{-s_5}(y_2,x_3)Q^{-s_6}(y_2,x_4)g(y_1)g(y_2)d^4y_1d^4y_2 
\end{eqnarray*}
where the new amplitude depends 
on $\mathbf{s}=(s_1,\dots,s_6)\in \mathbb{C}^6$.
For $Re(s_i)_{1\leqslant i\leqslant 6}$ large enough, one can easily see that
the amplitude defining $T$ is integrable. The main result of Speer
is the fact that $T(\mathbf{s})$ admits an analytic continuation
in $\mathbf{s}\in \mathbb{C}^6$ as a meromorphic function with linear poles.
Then he shows that $T(\mathbf{s})$ decomposes as the sum of a singular part and a holomorphic part
at $\mathbf{s}=(1,\dots,1)\in \C^6$ and renormalization consists in subtracting the singular part
and evaluating at $(1,\dots,1)\in \C^6$.

The main goal of the present paper is to combine the
methods from zeta regularization
to present a generalization of analytic renormalization
to general Riemannian manifolds. 
Then we will show that
the renormalization defined satisfies the consistency axioms of
Nikolov--Todorov--Stora in~\cite{NST}
inspired by the
seminal works of Epstein--Glaser~\cite{Epstein}.

\section{Main results.}

In the present section, we introduce the 
main objects of study
and state the main results
of our work. We define first Feynman amplitudes, next we explain how to
implement a zeta regularization with several complex parameters then 
we state the first main analytic continuation Theorem and we finally
give a simplified
version of our second main theorem
concerning applications of the analytic continuation 
result to renormalization in QFT.

\subsection{Feynman amplitudes }

We work on a compact, connected Riemannian manifold $(M,g)$ without boundary, 
the Laplace--Beltrami operator
is denoted by $\Delta_g$ and $C^\infty_{\geqslant 0}(M)$ denotes smooth, nonnegative functions
on $M$.
For a potential $V\in C^\infty_{\geqslant 0}(M)$, 
it is well--known
that the Schr\"odinger operator $P=-\Delta_g+V$ is
a second order, symmetric, positive, elliptic differential operator which defines a unique
unbounded, self--adjoint operator acting
on $L^2(M)$~\cite[p.~34-35]{taylor2013partial}.
We now generalize the Feynman rules to this case. That is, to every graph we associate a formal product
of Green kernels of the
operator $P$. Since on a general manifold, there is no Fourier transform,
our Feynman rules
are just the Riemannian versions of the Euclidean Feynman rules
in \textbf{position space}
of~\cite[definition 2.1]{ceyhan2012feynman} (see also~\cite{Costello-11}).

\begin{defi}
[\textbf{Feynman rules}]
 Let $\greenf (x,y)$ denote the Green kernel of the operator $P$,
then for a graph $G$ with the set of vertices $V(G)$
and the set of edges $E(G)$, if for any edge $e\in E(G)$, the vertices incident to $e$ are $i(e)$ and $j(e)$
and
$G$ has no self--loops,
then the Feynman amplitude associated to $G$ is defined as
\begin{equation}
\boxed{t_G=\prod_{e\in E(G)}\greenf (x_{i(e)},x_{j(e)})}
\end{equation}
as a $C^\infty$ function on $M^{V(G)}\setminus\{\text{all diagonals}\}$.
\end{defi}

\begin{remark}
Since the Green kernel
$\greenf$ is symmetric in its variables,
$t_G$ is well-defined.
The graphs are not allowed to have self--loops since the 
Green function
$\greenf$
is not well--defined on the diagonal hence cannot be evaluated at coinciding
points.
The above Feynman rules correspond to a perturbative
Euclidean QFT
where the Lagrangian is already Wick renormalized
which explains why self--loops (also called tadpoles in the
physics literature) are excluded.
\end{remark}

\subsection{Multiple spectral zeta regularization.}

%Recall we have a spectral resolution associated to the operator
%$P$~:
%a sequence of eigenvalues
%$\sigma(P)=\{0 \leqslant \lambda_0 \leqslant \lambda_1\leqslant \lambda_2 \leqslant \dots \leqslant \lambda_n\rightarrow +\infty \}$
%and an orthonormal $L^2$--basis of eigenfunctions
%$(e_\lambda)_\lambda$ so that
%$Pe_\lambda=\lambda e_\lambda .$
%Since $\log$ makes sense for positive numbers, 
The operator $P^{-s}$ is defined as a spectral function of the
operator $P$ in a very simple way following~\cite[equation (1.12.13) p.~107]{Gilkey}~:
\begin{defi}[Complex powers]
For $Re(s)\geqslant 0$, for every $u\in L^2(M)$, decompose
$u$ in the orthonormal basis $(e_\lambda)_\lambda$ of $L^2(M)$
given by the eigenfunctions of $P$.
Then
$$P^{-s}u=\sum_{\lambda\in\sigma(P)\setminus\{0\}}\lambda^{-s}\langle u,e_\lambda \rangle e_\lambda $$
where the sum on the right hand side converges absolutely in $L^2(M)$ since the eigenvalue $\lambda_i
\underset{i\rightarrow+\infty}{\longrightarrow} +\infty$
hence the sequence $(\lambda_i^{-s})_i$ remains bounded.
\end{defi}
%Notice that when $Re(s)>0$, the sequence $(\lambda_i^{-s})_{i\neq 0}$ is in $\ell_\infty$, it is bounded but $\langle u,e_i \rangle_i$ is in $\ell_2$ , it is square summable hence the product is square summable.
The Schwartz kernel of $P^{-s}$ is then by definition~:
\begin{equation}
\boxed{\greenf ^s(x,y)=\sum_{\lambda\in \sigma(P)\setminus\{0\}} \lambda^{-s}e_\lambda(x) e_\lambda(y)}
\end{equation}
where we abusively denoted by $\greenf ^s(x,y)$ an actual \emph{distribution}
$\greenf^s\in \mathcal{D}^\prime(M\times M)$ and the series on the r.h.s. converges in
$\mathcal{D}^\prime(M\times M)$. We will later see that
$\greenf ^s$ is actually a function on $M\times M$ for $Re(s)>\frac{d}{2}$ where $d=\dim(M)$.
We shall generalize this regularization to the case $M=\mathbb{R}^d$ with flat Euclidean metric $g$
and $P=-\Delta_g+m^2, m\in \R_{\geqslant 0}$.
Our definition of $\greenf^s$ in the flat case is similar to the compact case
since we define $\greenf^s$ with complex powers of the Laplace operator~:
\begin{defi}[Complex powers for flat space.]
\label{d:complexflat}
If $M=\mathbb{R}^d$, $g$ is a constant quadratic form and $m\in \R_{\geqslant 0}$ is a mass, then
we set~:
$$ \boxed{\greenf^s(x,y)=\frac{1}{(2\pi)^d} \int_{\mathbb{R}^d} \frac{e^{i\langle (x-y),\xi \rangle}}{(g^{\mu\nu}\xi_\mu\xi_\nu +m^2)^s} d^d\xi=  \frac{1}{\Gamma(s)} \int_0^{+\infty} \frac{1}{(4\pi t)^{\frac{d}{2}}} e^{-\frac{\langle x-y,x-y \rangle_g}{4t}}e^{-tm^2} t^{s-1}dt. }$$
\end{defi}

It is immediate from the above formulas that $\greenf^s$ is the Schwartz kernel of
$(-\Delta_g+m^2)^{-s}$ and that when $s=1$, we recover the Green function of the
operator $(-\Delta_g+m^2)$.

\begin{defi}
[\textbf{Regularized Feynman rules}] Under the above assumptions,
we denote by $P^{-s}$ the complex powers of $P$ and
by $  \greenf  ^s(x,y)\in\mathcal{D}^\prime(M\times M)$ the corresponding Schwartz kernel. Then for 
a graph $G$ with vertex set $V(G)$ and edge set $E(G)$, the regularized Feynman amplitude reads
\begin{equation}
\boxed{t_G(s)=\prod_{e\in E(G)}\greenf ^{s_e}(x_{i(e)},x_{j(e)})}
\end{equation}
which is in $C^\infty(M^{V(G)}\setminus \{\text{all diagonals}\})$.
\end{defi}
\begin{remark}
We will see later in Lemma \ref{keylemma} that
$\greenf^s$ is actually in $C^k(M\times M)$ for $Re(s)$ large enough hence it follows
that the above Feynman rules also make sense for graphs $G$ with self--loops when
$Re(s)$ is large enough
which was not true for $s=1$ since $\greenf$ would be a distribution singular on
the diagonal.
\end{remark}

Let us state our first main Theorem~:
\\
\\
\fbox{
\begin{minipage}{0.94\textwidth}
\begin{thm}
\label{mainthmintro}
Let $(M,g)$ be a smooth, compact, connected Riemannian manifold without boundary of dimension $d$, $dv(x)$ the Riemannian volume and $P=-\Delta_g+V$, $V\in C^\infty_{\geqslant 0}(M)$ 
or $M=\mathbb{R}^d$ with a constant metric $g$ and $P=-\Delta_g+m^2, m\in \R_{\geqslant 0}$.
Then for every graph 
$G$, on the configuration space $(M^{V(G)},g_{V(G)})$ endowed with the product metric $g_{V(G)}$, 
and product volume form $dv_{M^{V(G)}}$,
for any test function $\varphi\in C^\infty(M^{V(G)})$,
\begin{equation}
s\mapsto \int_{M^{V(G)}} t_G(s)\varphi dv_{M^{V(G)}}
\end{equation}
can be analytically continued near $(s_{e}=1)_{e \in E(G)}$ as
a \textbf{meromorphic germ with possible linear poles}
on the hyperplanes of equation $\sum_{e\in G^\prime}s_e-\vert E(G^\prime)\vert=0$
where $G^\prime$ is a subgraph of $G$ such that
$2\vert E(G^\prime)\vert-b_1(G^\prime)d\leqslant 0 $, 
$\vert E(G^\prime)\vert$ is the number of edges
in $G^\prime$ and $b_1(G^\prime)$ the first Betti number of $G^\prime$.
\end{thm}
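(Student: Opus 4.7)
The plan is to start from the heat-kernel Mellin representation $\greenf^s(x,y)=\frac{1}{\Gamma(s)}\int_0^\infty K_t(x,y)\,t^{s-1}\,dt$, valid for $\mathrm{Re}(s)$ large enough, with $K_t$ the heat kernel of $P$. Substituting this edge by edge into the regularized Feynman amplitude rewrites the pairing against $\varphi$ as
$$\langle t_G(\mathbf{s}),\varphi\rangle=\Bigl(\prod_{e\in E(G)}\frac{1}{\Gamma(s_e)}\Bigr)\int_{\mathbb{R}_+^{E(G)}}F(\mathbf{t})\,\prod_{e}t_e^{s_e-1}\,d\mathbf{t},$$
with $F(\mathbf{t})=\int_{M^{V(G)}}\prod_{e}K_{t_e}(x_{i(e)},x_{j(e)})\,\varphi\,dv$. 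On the region where some $t_e\geq 1$, the exponential decay of $K_{t_e}$ on the orthogonal complement of $\ker P$ (or the Gaussian decay on flat space) makes the resulting contribution entire in $\mathbf{s}$, so all possible poles of the meromorphic continuation come from the compact region $[0,1]^{E(G)}$.

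On $[0,1]^{E(G)}$ I perform a Hepp sector decomposition: for each permutation $\sigma$ of $E(G)$, the sector $H_\sigma=\{t_{\sigma(1)}\leq\dots\leq t_{\sigma(n)}\}$ is unfolded by the nested substitution $t_{\sigma(i)}=u_iu_{i+1}\cdots u_n$, $u_j\in[0,1]$, with Jacobian $\prod_j u_j^{j-1}$. The face $\{u_j=0\}$ corresponds to the simultaneous shrinking of $t_{\sigma(1)},\dots,t_{\sigma(j)}$, that is, to the collapse of the subgraph $G_j^\sigma$ spanned by $e_{\sigma(1)},\dots,e_{\sigma(j)}$. Simultaneously I perform an iterated Fulton--MacPherson-type blow-up of the configuration space $M^{V(G)}$ along the nested partial diagonals attached to $G_1^\sigma\subset G_2^\sigma\subset\dots\subset G_n^\sigma$: rescaling at each step the relative positions inside the connected components of $G_j^\sigma$ by $\sqrt{u_j}$ converts the Gaussian factor in the heat-kernel short-time expansion into a Schwartz function of the resolved spatial variables. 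The Minakshisundaram--Pleijel expansion then shows that the pullback of $\prod_e K_{t_e}$ is a convergent sum of monomials
$$\prod_j u_j^{-jd/2+d(j-b_1(G_j^\sigma))/2+K_j},\qquad K_j\in\mathbb{N},$$
times smooth compactly supported functions of the resolved variables, with $K_j$ indexing the contributing heat coefficients.

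Collecting this monomial with the Jacobian and the Mellin weight $\prod_j u_j^{\sum_{i\leq j}(s_{\sigma(i)}-1)}$, the integrand of each sector becomes a sum of products $\prod_j u_j^{E_j^{(\mathbf{K})}(\mathbf{s})}$ times smooth functions, where
$$E_j^{(\mathbf{K})}(\mathbf{s})=\sum_{i\leq j}s_{\sigma(i)}-1-\tfrac{d}{2}b_1(G_j^\sigma)+K_j.$$
Each elementary Mellin integral $\int_0^1 u_j^{E_j^{(\mathbf{K})}}\,du_j$ produces a simple pole on the hyperplane $\sum_{i\leq j}s_{\sigma(i)}=\tfrac{d}{2}b_1(G_j^\sigma)-K_j$. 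Restricting to the germ at $(1,\dots,1)$, only hyperplanes through this point survive, forcing $K_j=\tfrac12(db_1(G_j^\sigma)-2|E(G_j^\sigma)|)\in\mathbb{N}_{\geq 0}$; this is precisely the condition $2|E(G_j^\sigma)|-db_1(G_j^\sigma)\leq 0$, and the associated hyperplane reads $\sum_{e\in G_j^\sigma}s_e=|E(G_j^\sigma)|$. As $\sigma$ and $j$ vary, the nested $G_j^\sigma$ exhaust all subgraphs of $G$ meeting the divergence condition, giving the stated pole locus. The prefactor $\prod_e\Gamma(s_e)^{-1}$ is entire and causes no cancellation issues near $\mathbf{s}=(1,\dots,1)$.

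The main obstacle is the simultaneous resolution step: proving that the iterated blow-up of $M^{V(G)}$ tailored to the nested $G_j^\sigma$ is compatible with the edge-wise Gaussian concentration, so that the pullback of $\prod_e K_{t_e}$ is genuinely smooth in $(\mathbf{u},\text{resolved spatial coordinates})$ with an expansion uniformly convergent near $\mathbf{u}=\mathbf{0}$. Lemma~\ref{keylemma} on the $C^k$-regularity of $\greenf^s$ for $\mathrm{Re}(s)$ large supplies the base case, but the coherent matching of the spatial blow-up with the nested scales $\sqrt{u_j}$, together with the commutation of the iterated blow-up with the heat-coefficient expansion, is the geometric heart of the argument; once it is in place, the pole analysis reduces to the elementary one-dimensional Mellin computation above.
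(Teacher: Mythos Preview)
Your outline follows the same architecture as the paper: heat-kernel Mellin representation, reduction to the cube $[0,1]^{E(G)}$ (the large-time piece being holomorphic), Hepp-sector decomposition, a coupled resolution of the time and space variables, and then one-dimensional Mellin integrals whose poles give the stated hyperplanes. Your exponent bookkeeping and the identification of the divergence condition $2|E(G')|-db_1(G')\le 0$ agree with the paper's.

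The one place your write-up and the paper genuinely differ is the resolution step, which you yourself flag as the crux. You invoke an ``iterated Fulton--MacPherson-type blow-up'' with nested rescalings of relative positions by $\sqrt{u_j}$; the paper instead builds the blow-up \emph{by hand} via spanning trees. In each sector the Kruskal algorithm selects a unique spanning tree $T$ with the key compatibility property that $T|_{G_j^\sigma}$ is a spanning forest of $G_j^\sigma$ for every $j$; the tree edges then serve as an explicit basis for the ``relative positions'' you rescale, and the Kruskal property (every loop-closing edge is the longest in its fundamental cycle) is exactly what makes the pullback of $\mathbf{d}^2(x_{i(e)},x_{j(e)})/\ell_e$ smooth down to $t=0$ for edges $e\notin E(T)$. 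This explicit construction is what allows the paper to work in the $C^\infty$ category on a curved manifold, where $\mathbf{d}^2$ is not a quadratic form (only equal to $g_{\mu\nu}(x)(x-y)^\mu(x-y)^\nu$ modulo third-order terms) and where Hironaka or Bernstein--Sato arguments are unavailable.

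A second, more technical point you pass over: the Minakshisundaram--Pleijel expansion is an asymptotic, not convergent, series, so one must truncate at some order $p$ and control the remainder. The paper closes this loop by bounding the distributional order of the sector integrals $I_{G,\overrightarrow{k}}$ by a constant depending only on the graph (not on $\overrightarrow{k}$), which then dictates how large $p$ must be so that the remainder lands in $C^m$ and can be absorbed as a holomorphic factor. Your phrase ``convergent sum of monomials'' hides this issue.
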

\end{minipage}
}
\\
\\
To recover renormalized Feynman amplitudes, we follow the strategy
of~\cite[2.2]{dang2015complex}. We cannot evaluate
$t_G(s)$ at $(s_e=1)_{e\in E(G)}$ since it might belong to the polar set
of $t_G$. However, applying the machinery from~\cite{guopaychazhang2015}
allows us to subtract
the polar part of $t_G(s)$ at $(s_e=1)_{e\in E(G)}$
while keeping
a holomorphic part.
This is based on an extension of the framework
of~\cite{guopaychazhang2015} to distributions
valued in meromorphic germs with linear poles constructed
in paragraph \ref{ss:decompositiongermsdistrib}. 
Then to recover the renormalized Feynman 
amplitude, it suffices to
evaluate the holomorphic part at $(s_e=1)_{e\in E(G)}$. 
Following Speer~\cite[section 3]{speer1976}, analytic 
renormalization will
be reformulated in definition
\ref{d:renormmapszeta}
as the 
evaluation at some poles
of the holomorphic part
of the regularized
amplitude.
This idea was recently abstracted
in the works~\cite{clavier2017algebraic,guo2017counting} in a purely algebraic way where the composition 
of a
projection on the holomorphic part and the evaluation at $(s_e=1)_{e\in E(G)}$ is called
\textbf{evaluator}~\cite[1.3 p.~6]{guo2017counting}. 
The renormalization $\mathcal{R}(t_G)$ of some
amplitude $t_G$ is the
composition of the operations
summarized
in the following diagram~:
\begin{eqnarray*}
\boxed{t_G \overset{\small{\text{regularization}}}{\longrightarrow} t_G(s)  \overset{\small{\text{projection on holomorphic part}}}{\longrightarrow} \pi(t_G(s)) 
 \overset{\small{\text{evaluation at }s=s_0}}{\longrightarrow} \mathbf{ev}|_{s_0}\left(\pi(t_G(s)) \right)=\mathcal{R}(t_G),}
\end{eqnarray*}
where $s_0=(s_e=1)_{e\in E(G)}$.

In section~\ref{s:renormalization}, we apply
the above ideas to
the renormalization of
quantum field theories on Riemannian manifolds
and show the existence of a collection of renormalization maps $(\mathcal{R}_{M^I})_{I\subset \mathbb{N}}$ that roughly
assign to each graph $G$ a renormalized amplitude in $\mathcal{D}^\prime(M^{V(G)})$
such that
the renormalization maps satisfy the consistency axioms~\ref{d:Functionalequations} which
come from the work of Nikolov--Todorov--Stora~\cite{NST}.
Let us state a simplified version of our second main Theorem~\ref{t:renormtheorem}~:
\\
\\
\fbox{
\begin{minipage}{0.94\textwidth}
\begin{thm}
\label{t:renormsimplified}
Let $(M,g)$ be a smooth, compact, connected Riemannian manifold without boundary of dimension $d$, $dv(x)$ the Riemannian volume and $P=-\Delta_g+V$, $V\in C^\infty_{\geqslant 0}(M)$ 
or $M=\mathbb{R}^d$ with a constant metric $g$ and $P=-\Delta_g+m^2, m\in \R_{\geqslant 0}$.
Then for every
graph $G$, on the configuration space $M^{V(G)}$ endowed with the 
product volume form $dv_{M^{V(G)}}$~:
\begin{itemize}
\item there exists distributions
$\pi(t_G(s)),(1-\pi)(t_G(s))$ such that
for any test function $\varphi\in C^\infty(M^{V(G)})$, we have a unique decomposition
\begin{eqnarray*}
\int_{M^{V(G)}} t_G(s)\varphi dv_{M^{V(G)}}=\underset{\text{meromorphic germ}}{\underbrace{\int_{M^{V(G)}} (1-\pi)(t_G(s))\varphi dv_{M^{V(G)}}}}+ \int_{M^{V(G)}} \pi(t_G(s))\varphi dv_{M^{V(G)}}
\end{eqnarray*}
where $s\mapsto \int_{M^{V(G)}} \pi(t_G(s))\varphi dv_{M^{V(G)}}$ is a holomorphic
germ at $s_0=(s_e=1)_{e\in E(G)}\in \mathbb{C}^{\vert E(G)\vert}$.

\item If $\varphi\in C_c^\infty(M^{V(G)}\setminus\text{ all diagonals})$ then
\begin{equation}
\lim_{s\rightarrow s_0}\int_{M^{V(G)}} \pi(t_G(s))\varphi dv_{M^{V(G)}}=\int_{M^{V(G)}} t_G\varphi dv_{M^{V(G)}}
\end{equation}
which means
$\lim_{s\rightarrow s_0} \pi(t_G(s))$ is a distributional extension
of $t_G\in C^\infty(M^{V(G)}\setminus\text{ all diagonals}) $.
\end{itemize}
\end{thm}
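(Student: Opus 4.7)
The plan is to combine the first main Theorem \ref{mainthmintro} with the distribution-valued version of the Guo--Paycha--Zhang decomposition alluded to in paragraph \ref{ss:decompositiongermsdistrib}. Concretely, I would first fix a test function $\varphi\in C^\infty(M^{V(G)})$ and invoke Theorem \ref{mainthmintro} to conclude that
\begin{equation*}
F_\varphi(s)=\int_{M^{V(G)}} t_G(s)\,\varphi\, dv_{M^{V(G)}}
\end{equation*}
is a meromorphic germ at $s_0=(1,\dots,1)$ whose polar locus lies in the union of the affine hyperplanes $H_{G'}=\{\sum_{e\in E(G')} s_e=|E(G')|\}$ indexed by the divergent subgraphs $G'$ of $G$. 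This places $F_\varphi$ in the space of meromorphic germs with linear poles to which the universal projector $\pi$ of Guo--Paycha--Zhang applies.

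Next I would use the decomposition result for such germs: the space of meromorphic germs with linear poles decomposes canonically as the direct sum of holomorphic germs and polar germs, giving a \emph{linear} projection $\pi$ onto the holomorphic part. Linearity in $\varphi$ of the map $\varphi\mapsto F_\varphi$, combined with linearity and continuity properties of $\pi$ (this is exactly the content of paragraph \ref{ss:decompositiongermsdistrib}, which extends the framework to \emph{distribution-valued} meromorphic germs), lets me define two distribution-valued germs $\pi(t_G(s))$ and $(1-\pi)(t_G(s))$ by the requirements
\begin{equation*}
\langle \pi(t_G(s)),\varphi\rangle=\pi(F_\varphi)(s),\qquad \langle (1-\pi)(t_G(s)),\varphi\rangle=(1-\pi)(F_\varphi)(s).
\end{equation*}
Uniqueness of the decomposition for each fixed $\varphi$ then yields uniqueness of the distributional decomposition, and by construction $s\mapsto\langle\pi(t_G(s)),\varphi\rangle$ is holomorphic at $s_0$ while $s\mapsto\langle(1-\pi)(t_G(s)),\varphi\rangle$ carries all the polar information.

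For the second assertion, I would exploit the fact that, for each fixed pair of distinct points $(x,y)$, the Schwartz kernel $\greenf^{s}(x,y)$ is holomorphic in $s$ (this follows from the spectral series definition together with Lemma \ref{keylemma} giving uniform smoothness in $s$ off the diagonal). Therefore if $\varphi\in C_c^\infty(M^{V(G)}\setminus\text{all diagonals})$, the product $t_G(s)\varphi$ is a smooth family in $s$ and $F_\varphi(s)$ is already holomorphic at $s_0$. Since $\pi$ acts as the identity on holomorphic germs, $\pi(F_\varphi)(s)=F_\varphi(s)$, and evaluating at $s_0$ recovers $\int t_G\,\varphi\, dv_{M^{V(G)}}$, which is the desired identity.

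The main obstacle is the step promoting the scalar decomposition $F_\varphi=\pi(F_\varphi)+(1-\pi)(F_\varphi)$ to a genuine decomposition of distributions $t_G(s)=\pi(t_G(s))+(1-\pi)(t_G(s))$: one must verify that $\varphi\mapsto\pi(F_\varphi)$ is continuous on $C^\infty(M^{V(G)})$ with values in the Fr\'echet space of holomorphic germs at $s_0$, and likewise for $(1-\pi)$. This is exactly why the paper invests in the framework of paragraph \ref{ss:decompositiongermsdistrib}; the pole structure supplied by Theorem \ref{mainthmintro} is uniform in $\varphi$ (the hyperplanes $H_{G'}$ depend only on $G$), which is the precise input making the extension work. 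Once this continuity is in place, the remaining verifications are formal.
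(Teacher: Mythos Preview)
Your proposal is correct and follows essentially the same route as the paper: invoke Theorem~\ref{mainthmintro} to place $t_G(s)$ in $\mathcal{D}^\prime(M^{V(G)},\calm_{s_0})$, apply the distributional decomposition of paragraph~\ref{ss:decompositiongermsdistrib} (Theorem~\ref{t:decompgmd} and Proposition~\ref{p:proj}) to obtain $\pi(t_G(s))$, and use Lemma~\ref{keylemma} together with the fact that $\pi$ is the identity on holomorphic germs for the second bullet. The only cosmetic difference is that the paper builds the distributional projection $\pi$ directly and then proves the compatibility $(\pi t(s))(\varphi)=\pi(t(s)(\varphi))$ as Corollary~\ref{coro:pi}, whereas you start from the scalar projection and argue continuity in $\varphi$; these are two sides of the same construction.
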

\end{minipage}
}
\\
\\
Then the reader is referred to
Theorem~\ref{t:renormtheorem}
where we prove
many important properties
enjoyed by the renormalized amplitudes
$\lim_{s\rightarrow s_0} \pi(t_G(s))$. 
The most important being
a factorization equation
appearing in definition \ref{d:Functionalequations}
which translates in mathematical terms
the essential property of locality in
Euclidean QFT. 

\paragraph{Related works.}

In recent works of Hairer~\cite{HairerBPHZ} and Pottel~\cite{pottel2017bphz,pottel2017analytic}, the authors
give analytic treatments of the BPHZ algorithm.
Like in the present paper, they also start
from Feynman amplitudes in \textbf{position space}
but Hairer works on $\mathbb{R}^d$ with abstract kernels $K$
with specific singularity along diagonals,
whereas we work on Riemannian manifolds but we
limit our discussion to Green kernels of
Laplace type operators. He also uses Hepp sectors
to perform some kind of multiscale analysis
to analyze the divergences of the Feynman amplitudes.
It would be
interesting to compare the renormalization maps
defined in the present paper with the valuations in
Hairer's paper~\cite{HairerBPHZ} and definition \ref{d:Functionalequations}
with the consistency axioms of~\cite{HairerBPHZ}.

Our treatment of renormalization bears a strong inspiration from the seminal work of
Epstein--Glaser~\cite{Epstein} who were among the first to understand
the central role of causality (this is replaced in the current work by locality) in perturbative
renormalization. Their work was generalized by Brunetti--Fredenhagen~\cite{Brunetti2} to curved
space times while the crucial physical notions
of covariance of the renormalization were adressed in the works of Hollands--Wald~\cite{Hollands, Hollands2}.
A recent investigation of the Epstein--Glaser renormalization using
resolution of singularities
can be found in the thesis of Berghoff~\cite{berghoff,berghoff2015} clarifying some previous attempts~\cite{Bergbauerdip, Bergbauer2009}.
Our results seem to be more general 
since we work in the manifold case and we resolve singularities by hand 
instead of
using the compactifications of configuration 
space of Fulton--McPherson and de Concini--Procesi.

There is a famous interpretation of the BPHZ renormalization in terms of Hopf algebras
pioneered by Connes--Kreimer~\cite{Connes,CKI,CKII}. This approach using dimensional regularization
works essentially in momentum space and does not generalize in a straightforward way
to curved spaces. Motivated by problems from number theory,
Marcolli--Ceyhan~\cite{ceyhan2012feynman} managed to reformulate the Hopf--algebraic approach on configuration space.

Renormalization in the Riemannian setting was recently discussed 
in the book by Costello~\cite{Costello-11}, however
it seems
his proof of subtraction of counterterms contains some gaps
that were fixed by Albert who also extended Costello's work to manifolds with
boundary~\cite{albert2016heat, albertthesis}.

\paragraph{Perspectives.}

A natural extension of our results 
would be to prove
an analytic continuation result
for Feynman amplitudes made from 
Schwartz kernels of holomorphic families of pseudodifferential operators 
in the sense of Paycha--Scott~\cite{paycha2007laurent}
generalizing the
Schwartz kernels of complex powers of
Laplace operators. 
For the sake of simplicity,
we limited ourselves to complex powers of Laplace operators
because of their explicit relation with heat kernels and
leave it to another work for the investigation of the more general case.
Another interesting situation is when the manifold $M$ is noncompact
with specific asymptotic structure as in scattering theory. Probably in this case, we would need to use resolvents to define
complex powers.

It would also be very interesting to test our proof in the Lorentz case
with the \emph{Feynman propagator} instead of the Green's function of the Laplacian.
Then a
natural question would be what is the substitute in the Lorentz case
for the complex powers of the Laplace operators ?
The first author defined
complex regularization of Feynman propagators in some previous work~\cite{dang2015complex} 
on \textbf{analytic Lorentzian space--times}
under some very restrictive assumptions of \textbf{geodesic convexity}.
This was based on the Hadamard parametrix for the Feynman propagator.
From our point of view, it would be preferable to define
some
complex regularization scheme on
\textbf{smooth} Lorentzian space--times which are \textbf{not necessarily geodesically convex}. 
The scheme
should be manifestly covariant as spectral regularization on
Riemannian manifolds. Probably, this would be based on the recent
results of~\cite{bar1506index,gerard2016massive,gerard2016feynman,gerard2017hadamard,derezinski2016feynman,derezinski2017evolution,vasy2015quantum} on the analytic structure of
Feynman propagators.
Another
interesting direction is to
investigate if it is possible to
renormalize the amplitudes in Euclidean
theory then
perform a geometric Wick rotation as in
Gérard--Wrochna~\cite{gerard2017analytic}
to build renormalized amplitudes of the
corresponding
Lorentzian QFT.

\paragraph{Acknowledgements.}

We would like to thank
M. Wrochna, D. H\"afner, Y. Colin de Verdière, E. Herscovich,
for many interesting comments they made to one of us (NVD)
when we first presented this work in Grenoble.
Also big thanks to C. Guillarmou and S. Paycha for answering many
of our questions on
regularization using pseudodifferential powers.
The first author wants to thank Dang Nguyen Bac for many discussions 
related to blow--ups and algebraic geometry.
Finally, NVD acknowledges the ANR-16-CE40-0012-01 grant 
for financial support.

\section{Preliminaries.}

The goal of the present section
is to introduce
the language of meromorphic germs with linear
poles and give the main definitions since
meromorphic germs appear in the formulation of Theorem~\ref{mainthmintro}. 
We also introduce their
distributional counterpart
which we call \gmds which is the fundamental object 
needed for the proof of Theorem~\ref{mainthmintro}.
The \gmds are essentially distributions depending on some parameter
$s\in \mathbb{C}^p, p\in \mathbb{N}$, such that when they are paired with some test function $\varphi$, they
give meromorphic germs in $s$.

\subsection{Meromorphic functions with linear poles.}

In this paper, all
meromorphic functions of
several variables $s=(s_1,\dots,s_p)\in\mathbb{C}^p$
have singularities
along
unions of affine hyperplanes.
In fact, we will work with 
\textbf{meromorphic germs with linear poles}
in the terminology
of~\cite{guopaychazhang2015}.
We work in the space $\mathbb{R}^p$, and with the standard complex structure
on $\mathbb{C}^p=\mathbb{R}^p \otimes \mathbb{C}$, let $(\mathbb{R}^p)^*$ be the dual.
%Let $\Z ^p \subset \R ^p$ be a lattice in the vector space $\R ^p \subset \C ^p=\R ^p\otimes _{\R}\C$, and $(\Z ^p)^*\subset (\R ^p)^* \subset (\C ^p)^*=(\R ^p)^*\otimes _{\R}\C$ be the duals.
%
%Recall a {\bf lattice (vector) space} is a pair $(V, \Lambda _V)$ where $V$ is a finite dimensional real vector space and $\Lambda _V$
%is a lattice in $V$, that is, a finitely generated abelian subgroup of $V$ whose $\R$-linear span is $V$. In this paper, we work in the lattice space $(\R ^p, \Z ^p)$. The lattice
%space $(\R ^p, \Z ^p)$ is embedded in $\C^p$ as $\Z^p\subset \R^p \subset\C ^p=\R ^p\otimes _{\R}\C$. 
%Dualizing, we also get the sequence of inclusions 
%$(\Z ^p)^*\subset (\R ^p)^* \subset (\C ^p)^*=(\R ^p)^*\otimes _{\R}\C$.
In the sequel, holomorphic functions on 
some domain $\Omega\subset \mathbb{C}^p$ and holomorphic 
germs at $s_0\in \mathbb{C}^p$ 
are denoted by $\mathcal{O}(\Omega)$ and 
$\mathcal{O}_{s_0}(\mathbb{C}^p)$
respectively.

\begin{defi}[meromorphic germs]
Let $s_0\in \mathbb{R}^p$,
then $f$ is a meromorphic germ with (real) \textbf{linear poles} at $s_0$ if there are vectors
$(L_i)_{1\leqslant i\leqslant m}$ in $(\R^p)^*$,
such that
\begin{equation}
(\prod_{i=1}^m L_i(.-s_0))f\in \mathcal{O}_{s_0}(\C^p).
\end{equation}
Meromorphic germs with linear poles at $s_0\in \mathbb{C}^p$
are denoted by 
$\mathcal{M}_{s_0}(\mathbb{C}^p)$.
%An element of the form $\frac{1}{\prod_{i=1}^m L_i(.-k) }$ is called a
%\textbf{rational simplicial fraction
%of order} $m$ at $k$.
\end{defi}

Geometrically such a meromorphic germ $f$ is singular
along some arrangement of affine hyperplanes
$\{s\in\mathbb{C}^p \text{ s.t. } L_i(s-s_0)=0 \}_{1\leqslant i \leqslant m},$
intersecting at the point $s_0$.

\subsection{Meromorphic germs of distributions.}
\label{ss:meromgermsdistrib}
In this paper, we deal with families of distributions $t(s)$ 
on some smooth second countable manifold $X$ without boundary, depending meromorphically on
some parameter $s$ and whose poles are linear.
We will also call them distributions valued in meromorphic germs with linear poles and will
denote the space of such families by 
$\mathcal{D}^\prime(X,\mathcal{M})$.
We devote this subsection to their proper definition.
Our plan is to give the definition gradually starting from holomorphic objects.
For a smooth manifold $X$ with given smooth density $dv$, 
we will use $\calD^\prime (X)$ to denote the space of distributions on $X$ and 
$\calD^\prime(X)$ is defined
in the present paper as the topological dual of $C_c^\infty(X)\otimes dv$ which is 
the space of smooth, compactly supported densities. But in many situations where the density
is explicitely given by the geometric problem, we may equivalently
think of distributions as the dual of $C_c^\infty(X)$.

\paragraph{Holomorphic families of distributions.}
%\zb {Viet, to simplify notations, I drop the upper prime, but it seems you use $\calD (M)$ to denote the space of test functions, can we just use $C^\infty _c(M)$ to denote the space of test functions and save the notation $\calD (M)$ for distributions? }
Before we discuss \gmds, let us start smoothly by 
defining distributions depending holomorphically on some extra parameter.
\begin{defi}[holomorphic families]
Let $\Omega\subset \mathbb{C}^p$ be a complex domain, and $X$ be a smooth manifold, a holomorphic family of
distributions on $X$ parametrized by $\Omega$
is a family $(t(s))_{s\in \Omega}$ of distributions on $X$, such that
for every test function $\varphi\in C_c^{\infty}(X)$,
$s\mapsto \langle t(s),\varphi \rangle$ defines a holomorphic function on $\Omega$.
Such set of holomorphic families of distributions will be denoted by
$\mathcal{D}^\prime(X,\mathcal{O}(\Omega))$.
%A holomorphic family $(t(s))_{s\in \Omega}$ of distribution is called a holomorphic family of distributions with real coefficients if $t(s)(\varphi) \in \R$ for any real valued test function $\varphi$, $s\in \R^p$.
\end{defi}
Then we next introduce a variant of the above definition 
involving distributions whose distributional order is bounded by some
integer $m$. 
\begin{defi}[holomorphic families with bounded order]
Let $m$ be an integer and $X$
a smooth manifold, then a distribution $t$ is of order bounded above by $m$ on $X$ if $t$ defines a
continuous linear function on $C_c^{m}(X)$. For a complex domain $\Omega $, a holomorphic family of
distributions $(t(s))_{s\in\Omega}$ of order bounded above by $m$ on $X$,
is a family $(t(s))_{s\in \Omega}$ of distributions of order bounded above by $m$ such that
for every test function $\varphi\in C_c^{m}(X)$,
$s\mapsto \langle t(s),\varphi \rangle$ defines a holomorphic function on $\Omega$.
This set is denoted by $\mathcal{D}^{\prime,m}(X,\mathcal{O}(\Omega))$.  
%Such a family is called a \textbf{holomorphic family of distributions of order bounded above by $m$ with real coefficients} if $t(s)(\varphi) \in \R$ for any real valued test function $\varphi$, $s\in \R^p$.
\end{defi}

Once we defined holomorphic families of distributions where the complex parameter lives on some domain $\Omega$ containing 
some element $s_0$, it is natural to give a definition where
we want to forget the information about $\Omega$ and localize around $s_0$. 
We thus work at the level of 
holomorphic germs near $s_0$.
\begin{defi}[holomorphic germs]
A holomorphic germ at a point $s_0\in \mathbb{C}^p$ of distributions on $X$ 
is an equivalence class of holomorphic families of distributions on $X$ 
w.r.t. the natural equivalence relation: $(t(s))_{s\in\Omega _1}\sim (u(s))_{s\in\Omega _2}$ 
if there exists $\Omega _3\subset \Omega _1\cap \Omega _2$ 
such that $s_0\in \Omega _3$ and $t(s) =u(s)$ 
for all $s\in\Omega _3$. This set is denoted by 
$\mathcal{D}^\prime(X,\mathcal{O}_{s_0}(\C^p))$.
%A holomorphic germ of distribution is called a 
%\textbf{holomorphic germ of distribution with real coefficients} 
%if it is the equivalence class of a holomorphic family of 
%distributions with real coefficients, the set of holomorphic 
%family of distributions with real coefficients will be denoted by $\calmd _{s_0,+}$.
\end{defi}

\begin{ex}
The family of distributions
$t(s):\varphi\in C^\infty_c(\mathbb{R}) \mapsto \int_{\mathbb{R}}e^{sx}\varphi(x)dx  $
defines a holomorphic germ of distributions at $s=0$ with real coefficients.
\end{ex}

\paragraph{Meromorphic germs of distributions.}

Once we have a proper definition for holomorphic families of distributions, we can give a very natural definition
of meromorphic families of distributions as follows~:
\begin{defi} [meromorphic family of distributions with linear poles]
For a complex domain $\Omega \subset \C ^p$, a meromorphic family of distributions on $\Omega $ is a holomorphic family $(t(s))_ {s\in\Omega \setminus\{s;L_1=\dots =L_k(s)=0\}}$ of distributions, where $L_1,\dots,L_k$ are linear functions on $\C ^p$, such that
\begin{equation}
t(s)=(L_1(s)\dots L_k(s))^{-1}h(s), \forall s\in \Omega\setminus\{s;L_1=\dots =L_k(s)=0\}
\end{equation}
where $(h(s))_ {s\in\Omega }\in \calD^\prime(X,\mathcal{O}(\Omega))$.
%It is called a meromorphic family of distributions with real coefficients if $L_1, \cdots L_k \in (\R ^p)^*$, and $h(s)$ is a holomorphic family with real coefficients.
\end{defi}
Now we localize the above definition to germs
at $s_0\in \C^p$~:
\begin{defi}[meromorphic germs of distributions]
A \gmd at $s _0$ with linear poles is an equivalence class of meromorphic families of distributions on some neighborhood $\Omega$ of $s _0$ with linear poles under the equivalence relation: $t(s)_{s\in\Omega _1\setminus Y_1}\sim t^\prime (s)_{s\in\Omega _2\setminus Y_2}$, where $Y_i=\{s;L_1^i=\dots =L_{k_i}^i(s)=0\}$, $i=1,2$ with  linear functions $L_j^i$ $j=1,\cdots , k_i$, $i=1,2$, if there exist a complex domain $\Omega _3\subset \Omega _1\cap \Omega _2$, and linear functions $L_1^3, \cdots L^3_{k_3}$ such that $s _0\in \Omega_3$, $s_0\in Y_1\cap Y_2$, $Y_3=\{s;L_1^i=\dots =L_{k_i}^i(s)=0\}\subset Y_1, Y_3\subset Y_2$, and $t(s)_{s\in\Omega _3\setminus Y_3} =t^\prime (s)_{s\in\Omega _3\setminus Y_3}$. 
%A meromorphic germ of distribution is called a meromorphic germ of distributions with real coefficients if it is the equivalence class of a meromorphic family of distributions with real coefficients. 
The set of meromorphic germs of distribution with real coefficients will be denoted by $\mathcal{D}^\prime(X,\calm_{s_0}(\C^p))$.
\end{defi}
It is simple to show that~:
\begin{prop} The set $\calD^\prime(X,\mathcal{O}_{s_0}(\C^p))$ is a vector subspace of $\mathcal{D}^\prime(X,\calm_{s_0}(\C^p))$.
\end{prop}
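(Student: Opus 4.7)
The plan is to exhibit an explicit inclusion map $\iota: \mathcal{D}^\prime(X,\mathcal{O}_{s_0}(\C^p))\hookrightarrow \mathcal{D}^\prime(X,\calm_{s_0}(\C^p))$ and verify it is linear and injective. Given a holomorphic germ represented by a holomorphic family $(t(s))_{s\in\Omega}$ with $s_0\in\Omega$, the natural candidate for $\iota[t(s)]$ is the meromorphic family with empty polar set: i.e.\ pick any linear function $L$ on $\C^p$ such that $L(s_0)\neq 0$, shrink $\Omega$ if necessary so that $L$ does not vanish on $\Omega$, and write $t(s) = L(s)^{-1}\bigl(L(s)\,t(s)\bigr)$, where $h(s):=L(s)t(s)\in \calD^\prime(X,\mathcal{O}(\Omega))$. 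This exhibits $(t(s))_{s\in\Omega}$ as a meromorphic family of distributions with linear poles in the sense of the definition (the ``pole set'' $\{L=0\}$ simply does not meet $s_0$).

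Next, I would check that $\iota$ is well-defined on equivalence classes. If two holomorphic families $(t(s))_{s\in\Omega_1}$ and $(t'(s))_{s\in\Omega_2}$ represent the same holomorphic germ at $s_0$, there is $\Omega_3\subset \Omega_1\cap\Omega_2$ containing $s_0$ on which $t(s)=t'(s)$. Choosing compatible auxiliary linear functions $L,L'$ with $L(s_0),L'(s_0)\neq 0$, and shrinking $\Omega_3$ so that neither vanishes on it, the associated meromorphic families agree on $\Omega_3\setminus Y_3$ for any suitable $Y_3$ contained in $\{LL'=0\}\cap\Omega_3$; this verifies the equivalence relation defining $\mathcal{D}^\prime(X,\calm_{s_0}(\C^p))$. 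Linearity of $\iota$ follows because the meromorphic sum and scalar multiple are represented, after bringing families to common denominators, by the pointwise sum and scalar multiple of the underlying distributions, which coincide with the holomorphic sum and scalar multiple.

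For injectivity, suppose $\iota[t(s)]=0$ in $\mathcal{D}^\prime(X,\calm_{s_0}(\C^p))$. Then there is a neighborhood $\Omega_3$ of $s_0$ and a zero set $Y_3$ of linear functions such that $t(s)$ vanishes on $\Omega_3\setminus Y_3$. Testing against any $\varphi\in C^\infty_c(X)$, the function $s\mapsto \langle t(s),\varphi\rangle$ is holomorphic on $\Omega_3$ and vanishes on the open dense subset $\Omega_3\setminus Y_3$, hence vanishes identically on $\Omega_3$ by the identity principle. Thus $[t(s)]=0$ in $\mathcal{D}^\prime(X,\mathcal{O}_{s_0}(\C^p))$.

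The only real subtlety — and where I would be careful — is the bookkeeping in the equivalence relation for meromorphic germs, since as written the definition insists on a polar set $Y$ and even requires $s_0\in Y$; one either reads this as allowing $k=0$ (so $Y=\emptyset$ is trivially contained) or, as I did above, artificially introduces a dummy linear function $L$ with $L(s_0)\neq 0$ so that $s_0$ lies off the polar set and all containments $Y_3\subset Y_1, Y_3\subset Y_2$ can be arranged. Modulo this formal check, the inclusion is essentially tautological.
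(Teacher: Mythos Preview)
The paper does not actually prove this proposition; it merely prefaces it with ``It is simple to show that'' and moves on. Your argument is correct and considerably more explicit than anything the paper provides: you construct the inclusion, verify it respects the germ equivalence relations, check linearity, and establish injectivity via the identity theorem applied to $s\mapsto\langle t(s),\varphi\rangle$.

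You are also right to flag the bookkeeping issue in the equivalence relation for meromorphic germs, where the paper's definition literally requires $s_0\in Y_1\cap Y_2$. This is a genuine imprecision in the paper (and note that as written the $Y_i$ are intersections of hyperplanes rather than unions, which is already suspect as a description of a polar locus). The intended reading is surely that $k=0$ is allowed, so that a holomorphic family is a meromorphic family with empty polar set; your dummy-linear-function workaround is a reasonable alternative, though strictly speaking if $L(s_0)\neq 0$ then $s_0\notin\{L=0\}$ and the letter of the definition is still not satisfied. Either way, this is a defect of the definition rather than of your proof, and the proposition is certainly true under any sensible interpretation.
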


\subsection{Power expansions of holomorphic germs}

Let us state a convenient proposition about 
power series expansion of holomorphic families of distributions whose proof is given in the appendix. 

\begin{prop}\label{weakstrongholo}
Let $X$ be some smooth manifold, $\Omega\subset \mathbb{C}^p$ and
$(t(s))_{s\in\Omega}\in \mathcal{D}^\prime(X,\mathcal{O}(\Omega))$
be some holomorphic family of distributions.
Then near every $s_0\in\Omega$, $t_s$ admits a power series expansion
$$t(s)=\sum_{\alpha} \frac{(s-s_0)^\alpha}{\alpha!} t_\alpha$$
where $\alpha=(\alpha_1,\dots,\alpha_n)\in\N^n$ and $t_\alpha$ is a distribution in $\mathcal{D}^\prime(X)$
such that
for all test function $\varphi$,
$\sum_{\alpha} \frac{(s-s_0)^\alpha}{\alpha!} t_\alpha(\varphi)$
converges as power series near $s_0$.
\end{prop}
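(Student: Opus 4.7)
My plan is to construct the coefficients $t_\alpha$ via the multivariate Cauchy integral formula applied to the scalar holomorphic functions $s \mapsto \langle t(s),\varphi\rangle$, and then verify they are continuous by invoking the Banach--Steinhaus theorem to convert pointwise weak boundedness of the family $\{t(s)\}$ into a uniform seminorm estimate.

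First I would fix $s_0 \in \Omega$ and choose a closed polydisc $\overline{D} = \prod_{j=1}^p\{|s_j - s_{0,j}| \leqslant r\} \subset \Omega$. For any $\varphi \in C_c^\infty(X)$ the function $f_\varphi(s) := \langle t(s),\varphi\rangle$ is holomorphic on $\Omega$ by hypothesis, and the Cauchy integral formula in several complex variables on the distinguished boundary $T^p$ of $\overline{D}$ gives
$$\partial^\alpha f_\varphi(s_0) = \frac{\alpha!}{(2\pi i)^p}\oint_{T^p} \frac{f_\varphi(s)}{(s-s_0)^{\alpha+\mathbf{1}}}\, ds.$$
I then define $t_\alpha : \varphi \mapsto \partial^\alpha f_\varphi(s_0)$, which is linear in $\varphi$.

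The critical step is to prove that $t_\alpha \in \calD^\prime(X)$. Fix a compact set $K \subset X$. For each $\varphi \in C_c^\infty(K)$ the continuity of $f_\varphi$ on the compact $\overline{D}$ implies that the scalar family $\{\langle t(s),\varphi\rangle : s \in \overline{D}\}$ is bounded; hence the family $\{t(s)\}_{s\in\overline{D}}$ is weakly bounded in the dual of the Fr\'echet space $C_c^\infty(K)$. By Banach--Steinhaus this family is equicontinuous, so there exist $C>0$ and $k \in \N$ (depending on $K$ and $\overline{D}$) with $|\langle t(s),\varphi\rangle| \leqslant C\,\|\varphi\|_{C^k(K)}$ uniformly in $s \in \overline{D}$. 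Plugging this estimate into the Cauchy integral yields
$$|t_\alpha(\varphi)| \leqslant \frac{\alpha!}{r^{|\alpha|}}\, C\,\|\varphi\|_{C^k(K)},$$
which shows that $t_\alpha$ is a genuine distribution.

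Finally, the series identity $t(s) = \sum_\alpha \frac{(s-s_0)^\alpha}{\alpha!} t_\alpha$ follows after pairing both sides with any test function $\varphi \in C_c^\infty(K)$: the resulting scalar identity is precisely the Taylor expansion of the holomorphic function $f_\varphi$ at $s_0$, which converges on $\overline{D}$. The main obstacle is the equicontinuity step, i.e.\ translating the essentially free pointwise boundedness coming from holomorphicity into a uniform seminorm estimate via Banach--Steinhaus; everything else is bookkeeping around Cauchy's formula and the standard scalar Taylor expansion.
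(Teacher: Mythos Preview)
Your proof is correct and follows essentially the same route as the paper: define the coefficients $t_\alpha$ via the multivariate Cauchy integral formula on a polydisc, then obtain their continuity (and the Cauchy-type bound $|t_\alpha(\varphi)|\leqslant \alpha!\,r^{-|\alpha|}CP(\varphi)$) by applying the uniform boundedness principle on the Fr\'echet space $C^\infty_K(X)$ to upgrade the pointwise boundedness of $s\mapsto \langle t(s),\varphi\rangle$ to equicontinuity. The paper's argument is essentially identical, differing only in presentation.
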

This classical result is just a multivariable version of~\cite[Theorem 1]{grothendieckholo}
which is stated for general locally convex spaces $E$, we include a proof in the appendix to make our text self--contained.

\subsection{From Green functions to the heat kernel.}
The fundamental tool we use to investigate the
singularities of Feynman amplitudes is the heat kernel. In this section, 
we recall its main properties and explain how one can express the
regularized Green functions and the Feynman amplitudes
in terms of the heat kernel.

\subsubsection{Heat kernels}
The complex powers
of $P=-\Delta_g+V$ are related to the heat kernel $e^{-tP}$ in the following way (see also~\cite[paragraph 1.12.14 p.~112]{Gilkey}):
\begin{prop}
Let $(M,g)$ be a smooth compact, connected Riemannian manifold without boundary
and let $P=-\Delta_g+V$, $V\in C_{\geqslant 0}^\infty(M)$ 
or $M=\R^d$ with constant metric $g$ and
$P=-\Delta_g+m^2, m\in \R_{\geqslant 0}$.
Set $\Pi$ to be the spectral projector
on $\ker(P)$, and $s\in\mathbb{C}$ with $Re(s)> 0$ then
\begin{equation}
 P^{-s}=\frac{1}{\Gamma(s)}\int_0^\infty \left(e^{-tP}-\Pi\right)t^s\frac{dt}{t}
\end{equation}
in the sense of bounded operator 
from $L^2(M,\mathbb{C})\mapsto L^2(M,\mathbb{C})$ 
where $\Gamma$ is the 
Euler Gamma function.
In the sense of Schwartz kernels~:
$$\boxed{\greenf  ^s(x,y)=\frac{1}{\Gamma(s)}\int_0^\infty \left(K_t(x,y)-\Pi(x,y)\right)t^s\frac{dt}{t} }$$
where $K_t\in C^\infty((0, \infty)\times M\times M)$ is the heat kernel.
\end{prop}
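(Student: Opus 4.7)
\noindent\emph{Proof proposal.} The plan is to reduce the identity to the scalar Mellin transform
\begin{equation*}
\lambda^{-s}=\frac{1}{\Gamma(s)}\int_0^\infty e^{-t\lambda}\,t^{s-1}\,dt,\qquad \Re(s)>0,\ \lambda>0,
\end{equation*}
and then promote it first to an operator identity on $L^2(M)$ via the spectral calculus, and finally to a Schwartz kernel identity. In the flat case $M=\R^d$ with $P=-\Delta_g+m^2$, there is essentially nothing to prove: the second integral representation in Definition \ref{d:complexflat} \emph{is} the claim, since the integrand $(4\pi t)^{-d/2}e^{-\langle x-y,x-y\rangle_g/(4t)}e^{-tm^2}$ is exactly the heat kernel $K_t(x,y)$ of $P$ on $\R^d$, and $\Pi=0$ because $-\Delta_g+m^2$ has no $L^2$ kernel.

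For the compact case, I would start from the spectral definition of $P^{-s}$ and apply the scalar Mellin identity eigenvalue by eigenvalue, the subtraction of the harmonic projector $\Pi$ serving precisely to excise the $\lambda=0$ modes for which the $t$--integral diverges at infinity. The operator valued integral $\frac{1}{\Gamma(s)}\int_0^\infty(e^{-tP}-\Pi)\,t^{s-1}\,dt$ converges in operator norm on $L^2(M)$ for all $\Re(s)>0$: near $t=0$ one has $\|e^{-tP}-\Pi\|\leqslant 2$ and $t^{\Re(s)-1}$ is integrable, while near $t=+\infty$ the spectral gap yields $\|e^{-tP}-\Pi\|\leqslant e^{-t\lambda_1}$ with $\lambda_1>0$ the smallest strictly positive eigenvalue of $P$. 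These bounds allow Fubini to interchange the eigenfunction sum and the $t$--integral on matrix elements $\langle\cdot\, e_\lambda,e_\mu\rangle$, which delivers the claimed operator identity.

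To pass from operators to Schwartz kernels, I would first treat the half plane $\Re(s)>\frac{d}{2}$, where Weyl's law ensures $P^{-s}$ is trace class and $\greenf^s$ is the continuous kernel given by an absolutely convergent eigenfunction series. There the kernel identity follows directly from the operator identity by pairing with smooth functions and reading off jointly continuous kernels. For the remaining strip $0<\Re(s)\leqslant\frac{d}{2}$, I would pair both sides against $\varphi(x)\psi(y)\in C^\infty(M)\otimes C^\infty(M)$: the left hand side gives $\langle P^{-s}\psi,\overline{\varphi}\rangle$, holomorphic in $s$ by the spectral calculus, and the right hand side is likewise holomorphic thanks to the dominated convergence bounds above. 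Since these two holomorphic germs agree on the trace class half plane, the identity theorem extends the equality to all of $\Re(s)>0$. The main technical point is precisely the interchange of sum and integral at the kernel level for moderate $\Re(s)$, which is bypassed by performing the interchange only on $L^2$ matrix elements, where the spectral bounds give absolute control, and then analytically continuing.
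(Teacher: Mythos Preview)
Your proposal is correct and follows essentially the same approach as the paper: both reduce to the scalar Mellin identity $\lambda^{-s}=\frac{1}{\Gamma(s)}\int_0^\infty e^{-t\lambda}t^{s-1}\,dt$ and then promote it to operators via the spectral resolution of $P$, with the flat case read directly off Definition~\ref{d:complexflat}. You are in fact more explicit than the paper about the operator-norm convergence of the Bochner integral (the paper simply asserts convergence in operator norm), and your two-step passage to Schwartz kernels via the trace-class half plane and analytic continuation is more cautious than needed---once two bounded operators on $L^2$ agree, their Schwartz kernels agree as distributions without any analytic continuation---but this extra care does no harm.
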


Note that when $0\notin \ker(P)$ and $M$ compact or when $M=\R^d$, then
we can set $\Pi=0$.

\begin{proof}
The proposition is clear when $M=\R^d$ hence we just discuss the compact case.
As a consequence of the compactness of $M$ and the fact that $P$ is an elliptic, positive, 
self-adjoint operator, $P$ has discrete spectrum denoted by $\sigma(P)$,
the eigenfunctions $(e_\lambda)_{\lambda\in \sigma(P)}$ of $P$ 
form an orthonormal basis
of $L^2(M,\mathbb{C})$, 
so for any $u\in L^2(M,\mathbb{C})$, 
$u=\sum \langle u,e_\lambda \rangle e_\lambda$. 
By definition,
$P^{-s}u=\sum_{\lambda\in\sigma(P),\lambda\neq 0}\lambda^{-s}\langle u,e_\lambda \rangle e_\lambda $
where the sum on the right hand side converges absolutely in $L^2(M,\mathbb{C})$.
And the spectral projector $\Pi$ on $\ker(P)$ is simply
$\Pi(u)=\sum_{\lambda=0} \langle u,e_\lambda \rangle e_\lambda $.

The heat operator $e^{-tP}$ is a strongly continuous semigroup acting
on $L^2(M,\mathbb{C})$. For every
$u\in L^2(M,\mathbb{C})$~:
$$(e^{-tP}-\Pi)u=\sum_{\lambda\in \sigma(P)\setminus 0} e^{-t\lambda}\langle u,e_\lambda \rangle e_\lambda$$
where the sum on the r.h.s converges in $L^2(M,\mathbb{C})$.

Therefore for $\lambda> 0$, by a change of variable in the $\Gamma$ function
$\lambda^{-s}= \frac{1}{\Gamma(s)}\int_0^\infty e^{-t\lambda}t^s\frac{dt}{t}$,
it follows that the identity
$P^{-s}=\frac{1}{\Gamma(s)}\int_0^\infty \left(e^{-tP}-\Pi\right)t^s\frac{dt}{t}$
holds true in operator sense where the integral on the r.h.s
converges
in operator norm. Hence the same identity should
hold true for the corresponding Schwartz kernels. 
\end{proof}

\subsection{Local asymptotic expansions of heat kernels }

We will use the following property of the heat kernel
asymptotics~\cite[Thm 2.30]{Berline-04} (see also~\cite[Thm 7.15]{roe1999})~:
\begin{thm}[ Minakshisundaram–Pleijel]\label{heatasympt}
Let $(M,g)$ be a compact Riemannian manifold without boundary, $\varepsilon$ is the \textbf{injectivity radius}  
of $M$ and $P=-\Delta_g+V$, $V\in C^\infty_{\geqslant 0}(M)$. 
For any cut--off
function $\psi:\mathbb{R}_+\mapsto [0,1]$ 
such that $\psi(s)=1$ if $s\leqslant \varepsilon^2/4$ 
and $\psi(s)=0$ if $s>4\varepsilon^2/9$.
Let $K_t(x,y)\in C^\infty((0,\infty)\times M\times M)$ 
denote the heat kernel, then there exist smooth
\textbf{real valued} functions $a_k(x,y)\in C^\infty (M\times M)$, 
$k=0,1, \cdots$, with $a_0(x,y)=1$, such that for 
all $(n,p)\in\mathbb{N}^2$, and differential 
operator $P(x,D_x)$ of degree $m$,
there exists a constant $C>0$ such that
for all $t\in (0,1]$~:
\begin{eqnarray}
\sup_{(x,y)\in M^2} \vert P(x,D_x)\partial^p_t\left( K_t(x,y)-\sum_{k=0}^n \psi(\mathbf{d}^2)\frac{e^{-\frac{\mathbf{d}^2(x,y)}{4t}}}{(4\pi t)^{\frac{d}{2}}} a_k(x,y) t^k \right)  \vert\leqslant Ct^{n-\frac{d}{2}-\frac{m}{2}-p}
\end{eqnarray}
where $\mathbf{d}(.,.):M\times M\mapsto \R_{\geqslant 0}$ is the Riemannian distance function.
\end{thm}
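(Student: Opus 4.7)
The plan is to construct a parametrix (approximate heat kernel) near the diagonal and then control the error via Duhamel's principle. First I would work in a tubular neighborhood of the diagonal where the cut-off $\psi(\mathbf{d}^2)$ is supported; inside the injectivity radius one can use normal coordinates centered at $y$ so that $\mathbf{d}(x,y) = |x|$ in those coordinates. I would then write the ansatz
\begin{equation*}
H_n(t,x,y) = \frac{e^{-\mathbf{d}^2(x,y)/(4t)}}{(4\pi t)^{d/2}} \sum_{k=0}^{n} a_k(x,y) t^k,
\end{equation*}
plug it into $(\partial_t + P_x)H_n = 0$, and equate coefficients of powers of $t$. This yields a sequence of ODEs along radial geodesics from $y$: the first forces $a_0(x,y) = j(x,y)^{-1/2}$ where $j$ is the Jacobian of the exponential map (so $a_0(x,x) = 1$), and subsequent equations have the form $(\nabla_r + (k+r\partial_r\log j^{1/2}/(2r))) a_k = -P_x a_{k-1}/r^{k-1}$ or a variant thereof, each uniquely solvable in a neighborhood of the diagonal with smooth real-valued solutions.

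Next, I would extend to a global object by defining
\begin{equation*}
\tilde H_n(t,x,y) = \psi(\mathbf{d}^2(x,y)) \frac{e^{-\mathbf{d}^2(x,y)/(4t)}}{(4\pi t)^{d/2}} \sum_{k=0}^n a_k(x,y) t^k.
\end{equation*}
A direct computation using the transport equations shows that $(\partial_t + P_x)\tilde H_n$ has two contributions: a term of the form $t^{n-d/2} e^{-\mathbf{d}^2/(4t)} P_x a_n(x,y)$ where $\psi \equiv 1$, and terms where derivatives of $\psi$ appear. The latter are supported where $\mathbf{d}(x,y) \geq \varepsilon/2$, hence carry a factor $e^{-\varepsilon^2/(16t)}$ that is $O(t^N)$ for every $N$ as $t\downarrow 0$. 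Thus $(\partial_t + P_x)\tilde H_n = R_n(t,x,y)$ with $\sup_{M\times M}|R_n(t,\cdot,\cdot)| = O(t^{n-d/2})$ uniformly on $(0,1]$, and similarly for all derivatives in $x,y,t$ by differentiating the explicit Gaussian.

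Then I would apply Duhamel's formula. Writing $E_n = K_t - \tilde H_n$, and checking that $\tilde H_n(t,x,y) \to \delta_y(x)$ as $t\downarrow 0$ in the distributional sense (which follows from the standard Gaussian initial condition of the leading term), Duhamel gives
\begin{equation*}
E_n(t,x,y) = -\int_0^t \int_M K_{t-s}(x,z) R_n(s,z,y)\, dv(z)\, ds.
\end{equation*}
Using the contraction of $e^{-(t-s)P}$ on $L^\infty$ (since $V \geq 0$, by maximum principle / sub-Markov property), one obtains $\sup_{x,y}|E_n(t,x,y)| \leq C t^{n-d/2+1}$. Iterating Duhamel with $n$ replaced by $n+N$ for $N$ sufficiently large, and invoking parabolic elliptic regularity to trade regularity gains for derivative bounds, produces the claimed estimate on $P(x,D_x)\partial_t^p E_n$ with a loss of only $m/2 + p$ in the power of $t$.

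The main obstacle will be the uniformity of the remainder estimate across all orders of derivatives $m$ in $x$ and $p$ in $t$ simultaneously; handling this cleanly requires either iterating Duhamel enough times so that $E_n$ gains arbitrary $C^k$ regularity, or invoking standard parabolic interior estimates for $\partial_t + P$ applied to $E_n$ with source $R_n$. The real-valuedness of $a_k$ follows automatically because $P$ has real coefficients and the transport equations have real right-hand sides starting from $a_0 = j^{-1/2} \in \mathbb{R}$.
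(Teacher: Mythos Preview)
The paper does not actually prove this theorem; it is quoted as a classical result with references to \cite[Thm~2.30]{Berline-04} and \cite[Thm~7.15]{roe1999}, together with a remark that the cut--off $\psi$ can be inserted because $K_t$ vanishes to infinite order in $t$ off the diagonal. Your sketch is precisely the standard parametrix-plus-Duhamel argument found in those references, so there is nothing to compare.

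One small point: you write that the transport equations force $a_0(x,y)=j(x,y)^{-1/2}$, whereas the statement you are asked to prove asserts $a_0(x,y)=1$. This is a normalization issue (some authors absorb the half-density $j^{-1/2}$ into the Gaussian prefactor, others into $a_0$); the paper's convention appears to be the latter, or else the claim $a_0=1$ is only meant on the diagonal. Either way it does not affect the remainder estimate, but you should flag which convention you are using so that your $a_0$ matches the stated one.
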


Note that
our statement differs from the
statement in~\cite{roe1999} from
the fact that we use a cut--off function $\psi$ since
outside some neighborhood of the diagonal $\Delta\subset M\times M$,
$K_t$ vanishes at infinite order in $t$ when
$t\rightarrow 0$ (see the proof of~\cite[Thm 7.15 p.~102]{roe1999}).
In case $M=\R^d$ with constant metric $g$ and $P=-\Delta_g+m^2, m\in \R_{\geqslant 0}$,
we have the well known exact formula~:
$$ K_t(x,y)=\frac{1}{(4\pi t)^{\frac{d}{2}}}\exp\left(-\frac{\vert x-y\vert^2_g}{4t}-tm^2\right)=\frac{\exp\left(-\frac{\vert x-y\vert^2_g}{4t}\right)}{(4\pi t)^{\frac{d}{2}}} \sum_{k=0}^\infty \frac{(-1)^kt^km^{2k}}{k!}  $$
which already appeared in definition \ref{d:complexflat}. 
%Note that a similar result holds 
%for elliptic operator of the second order whose 
%principal symbol is scalar and reads in any system of local coordinates $g^{\mu\nu}(x)\xi_\mu\xi_\nu$
%where $g^{\mu\nu}$ is positive definite and depends smoothly on $x$~\cite[Proposition 13.3 p.~62]{Taylor2}.
%\begin{thm}
%For every $x\in M$, we fix some local chart $U$ near $x$. 
%There is a sequence $p_j(x,z)$ smooth in $x$ and polynomial in $z$ such that
%for every $N\in \mathbb{N}$,
%any differential operator $P$ of order $m$ and any $k$ such that
%$\frac{m}{2}+k\leqslant N $, we have the inequality 
%\begin{eqnarray}
%\sup_{(x,y)\in U^2} \vert P(x,D_x)\partial^k_t\left( K_t(x,y)-\sum_{j=0}^N t^{\frac{j-d}{2}} p_j(x,\frac{x-y}{\sqrt{t}})e^{-\frac{\langle g(x)(x-y),x-y\rangle}{4t}} \right)\vert \leqslant C_{N,k,m}t^{N-\frac{d}{2}-k-\frac{m}{2}}.
%\end{eqnarray}
%\end{thm}

%The most general result available can be found in \cite[Lemma 1.8.1 p.~66]{Gilkey} deals with a general elliptic operator $P$
%such that the symbolic spectrum of $P$ lies in a cone in the right half plane. In particular, positive self-adjoint elliptic operators
%belong to this class.
%

%The claim that the heat invariants
%$a_k(x,x)$ depend
%on finite jets of the metric $g$ at $x$ also follows from \cite[Lemma 1.8.2 p.~68]{Gilkey}.
%
%Note that
%our statement differs from the
%statement in~\cite{Taylor2} from
%the fact that we use a cut--off function $\psi$ since
%outside some neighborhood of the diagonal $\Delta\subset M\times M$,
%$K_t$ vanishes at infinite order in $t$ when
%$t\rightarrow 0$ (see the proof of~\cite[Thm 7.15 p.~102]{Roe}).

\section{Reduction of regularized Feynman amplitudes}
Recall our aim was to prove
analytic continuation of the regularized
amplitude 
$$t_G=\prod_{e\in E(G)} \greenf^{s_e}(x_{i(e)},x_{j(e)}) $$
in
$\calD^\prime(M^{V(G)},\calm_{s_0}(\C^{E(G)}))$
where $s_0=(s_e=1)_{e\in E(G)}$ and
$\greenf^s$ is the Schwartz kernel of
the complex power $P^{-s}$.
The main goal of this section
is to prove a technical Theorem \ref{t:technicalreduction}
which allows us to reduce our main Theorem~\ref{mainthmintro}
to the proof of an analytic continuation Theorem 
for simpler analytic objects. These are 
some kind of Feynman amplitudes
introduced in definition~\ref{d:labelledFeynmanrules} 
corresponding to
\emph{labelled Feynman graphs} defined in definition~\ref{d:labelledgraphs} which are
graphs
whose edges are decorated by some integer. Intuitively, the amplitude of the labelled graph
is obtained from the regularized amplitude $t_G(s)$ where we replace the heat kernels appearing in the
formula for the Green function
$\greenf^s=\frac{1}{\Gamma(s)}\int_0^\infty t^{s-1}\left(e^{-tP}-\Pi\right)dt$,
by the heat kernel asymptotic expansion. The integers decorating the edges exactly
correspond
to the heat coefficients in the heat kernel asymptotic expansion.

\subsection{Holomorphicity of Green's function}

The next Lemma discusses analytical properties
of the regularized Green function
of the Schr\"odinger
operator $P$ which is elliptic 
since its leading part coincides with
the Laplace operator, it is therefore automatically self-adjoint by the symmetry assumption~\cite[p.~35]{taylor2013partial}.
%For simplicity, from now on, the Riemannian manifold is always assumed to be connected.

\begin{lemm}\label{keylemma}
Let $(M,g)$ be a smooth compact, connected Riemannian manifold without boundary
and let $P=-\Delta_g+V$, $V\in C_{\geqslant 0}^\infty(M)$ 
or $M=\R^d$ with constant metric $g$ and
$P=-\Delta_g+m^2, m\in \R_{\geqslant 0}$. 
Denote by $K_t$ the corresponding heat kernel.
Then~:
\begin{enumerate}
\item For all $k\in\mathbb{N}$, if $Re(s)>\frac{d}{2}+k$, $\greenf ^{s}$ is a $C^k$ function on $M\times M$.
\item For all $k\in\mathbb{N}$, a compact subset $K\subset M\times M\setminus {\rm Diagonal} $, the kernel $\greenf ^{s}$ is holomorphic in $s$ and valued in $C^k(K)$.
\item If we write
\begin{eqnarray*}
\greenf ^{s}(x,y)=\greenf ^{s}_{\leq}(x,y)+ \greenf ^{s}_{\geq}(x,y)
\end{eqnarray*}
where
$$\greenf ^{s}_{\geq}(x,y)=\frac{1}{\Gamma(s)}\left(\int_0^{1} (K_t-\Pi)(x,y) t^{s-1}dt\right), \greenf ^{s}_{\leq}(x,y)=\frac{1}{\Gamma(s)}\left(\int_{1}^\infty (K_t-\Pi)(x,y) t^{s-1}dt\right)
$$
then $\greenf ^{s}_{\leq}(x,y)$ is holomorphic in $s$ and valued in $C^\infty (M\times M)$ which is denoted by\\
$\greenf^s_{\leq}\in C^\infty(M\times M, \mathcal{O}(\C^p))$.
\end{enumerate}
\end{lemm}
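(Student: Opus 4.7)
The plan is to use the integral representation
$$\greenf^s(x,y) = \frac{1}{\Gamma(s)}\int_0^\infty (K_t-\Pi)(x,y)\, t^{s-1}\, dt,$$
split the time domain at $t=1$ (this is the decomposition $\greenf^s = \greenf^s_{\leq}+\greenf^s_{\geq}$ in the statement), and estimate each piece with the appropriate heat-kernel bound; the flat case runs identically via Definition~\ref{d:complexflat}. Since $1/\Gamma$ is entire, holomorphy in $s$ will follow from locally uniform convergence of the resulting integrals.

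For part (3), I would exploit the spectral gap $\lambda_1>0$ of $P$: writing $e^{-tP}-\Pi=e^{-(t-1)P/2}\circ (e^{-P/2}-\Pi)\circ e^{-(t-1)P/2}$ for $t\geq 1$ and using the infinite-order smoothing of $e^{-P/2}$ (an elliptic semigroup at positive time), one obtains uniform bounds $\sup_{M\times M}|\partial_x^\alpha\partial_y^\beta(K_t-\Pi)(x,y)|\leq C_{\alpha,\beta}\, e^{-\lambda_1 t/2}$ for $t\geq 1$; the flat case uses the explicit Gaussian with $e^{-tm^2}$. Dividing by $\Gamma(s)$ and differentiating under the integral then yields a holomorphic family in $s\in\C$ valued in $C^\infty(M\times M)$.

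For part (2), on compact $K\subset M\times M\setminus\text{Diagonal}$ one has $\bfd(x,y)\geq \varepsilon_0>0$ throughout $K$, and the Gaussian factor $e^{-\bfd^2/4t}$ in the Minakshisundaram--Pleijel expansion of Theorem~\ref{heatasympt} kills every polynomial singularity in $1/t$, so $|\partial^\alpha K_t|\leq C_{\alpha,N}\, t^N$ on $K\times(0,1]$ for every $N$. Combined with (3) this makes the full integrand absolutely integrable on $(0,\infty)$ for every $s\in\C$ with locally uniform derivatives in $(x,y)$, giving an entire $C^k(K)$-valued family.

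For part (1), only $\greenf^s_{\geq}$ remains after (3). Applying Theorem~\ref{heatasympt} with $n>k$ gives
$$K_t-\Pi = \sum_{j=0}^n \psi(\bfd^2)\frac{e^{-\bfd^2/4t}}{(4\pi t)^{d/2}} a_j(x,y)\, t^j + R_n(t,x,y),$$
where the remainder and its $(x,y)$-derivatives of order $m\leq 2k$ are $O(t^{n-d/2-m/2})$. Each Gaussian term, multiplied by $t^{s-1+j-d/2}$ and integrated over $(0,1]$, is $C^k$ in $(x,y)$ provided each derivative of $e^{-\bfd^2/4t}/(4\pi t)^{d/2}$, which costs at worst a factor $t^{-1}$ (using $\bfd$ bounded on $M$ without exploiting Gaussian cancellation), keeps the integrand integrable near $t=0$; this is precisely the threshold $\operatorname{Re}(s)>d/2+k$, and $R_n$ is tamed by choosing $n$ large. \textbf{Main obstacle.} The delicate point is the bookkeeping in (1): carefully tracking how $(x,y)$-derivatives of the singular short-time Gaussian factor worsen the $t\to 0^+$ behavior, so as to recover exactly the stated threshold. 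Parts (2) and (3) are much cleaner consequences of off-diagonal Gaussian decay and the spectral-gap bound respectively.
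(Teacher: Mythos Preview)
Your approach is correct and more self-contained than the paper's. For parts (1) and (2) the paper does not argue via the heat kernel at all: it simply invokes Seeley's theorem on complex powers of elliptic pseudodifferential operators, applied to $A=P-\Pi$, and reads off the regularity and holomorphy of the Schwartz kernel from there. Your direct route through the Mellin transform of $K_t$ together with the Minakshisundaram--Pleijel expansion (Theorem~\ref{heatasympt}) and the off-diagonal bound $|\partial^\alpha K_t|\leq C_{\alpha,N}t^N$ works, and in particular your crude estimate ``each $(x,y)$-derivative of the Gaussian costs at worst a factor $t^{-1}$'' already recovers the stated (non-sharp) threshold $\operatorname{Re}(s)>d/2+k$ in (1); the bookkeeping obstacle you flag is thus manageable, and the paper sidesteps it entirely by citing Seeley.

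For part (3), the two arguments are essentially the same: split off a fixed positive-time smoothing factor and use the spectral gap for exponential decay. The paper follows B\"ar--Moroianu, writing $p_t(x,y)=\langle(e^{-\varepsilon P/2}-\Pi)\delta_x,(e^{-\varepsilon P/2}-\Pi)\delta_y\rangle$ and bounding by the operator-norm factor $\|e^{-(t-\varepsilon)P}-\Pi\|_{B(L^2)}\leq e^{-(t-\varepsilon)\lambda_1}$. Note a small arithmetic slip in your factorization: $e^{-(t-1)P/2}(e^{-P/2}-\Pi)e^{-(t-1)P/2}=e^{-(t-1/2)P}-\Pi$, not $e^{-tP}-\Pi$; replacing $e^{-P/2}$ by $e^{-P}$ (or using an $\varepsilon$-split as the paper does) fixes this immediately. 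What your approach buys is a proof that stays entirely within the heat-kernel toolkit already assembled in the paper; what the paper's approach buys is brevity, at the cost of importing Seeley's pseudodifferential machinery.
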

The proof of these classical properties, when $M$ is compact, is recalled in the appendix.
For $M=\R^d$, they follow from straightforward computations.

\subsection{Reduction to local charts and localization near deepest diagonal.}

The purpose of the next two Lemmas is to localize
the proof of our main Theorem about the
analytic continuation of the distribution
$t_G(s)$ to neighborhoods
of the deepest diagonals in $M^{V(G)}$.

\begin{lemm}\label{l:gluing}
Let $X$ be a manifold without boundary, $s_0\in\C^p$ then $t(s)\in \calD^\prime(X,\calm_{s_0}(\C^p))$ iff
for every $x\in N$, there exists a neighborhood $U_x$ of $x$ such that
$t(s)|_{U_x}
\in \mathcal{D}^\prime(U_x,\calm_{s_0}(\C^p))$.

If $X$ is non compact, we require 
that there are linear functions $(L_i)_{i=1}^k$ corresponding to
a fixed polar set $Y=\{s;L_1(s)=\dots=L_k(s)=0\}\subset \mathbb{C}^p, s_0\in Y$ such that
$t(s)|_{U_x}\in\mathcal{D}^\prime(U_x,\calm_{s_0}(\C^p))$ singular along the polar set $Y$.
\end{lemm}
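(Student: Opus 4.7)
The plan is to prove this as a standard sheaf-theoretic gluing statement: meromorphic germs of distributions with linear poles form a sheaf, and this lemma records the gluing axiom for compact (and, under the extra assumption, noncompact) base manifolds. The tool throughout is a partition of unity.

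First I would dispatch the forward direction, which is essentially formal. If $\big(\prod_{i=1}^k L_i(s-s_0)\big)\cdot t(s)$ is a holomorphic family of distributions on $X$ near $s_0$, then its restriction to any open $U_x$ is still a holomorphic family of distributions on $U_x$, so $t(s)|_{U_x}$ is meromorphic with polar set contained in $\{L_1\cdots L_k=0\}$. This automatically delivers the fixed $Y$ required by the noncompact statement.

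For the substantive backward direction in the compact case, I would extract a finite subcover $\{U_{x_1},\dots,U_{x_N}\}$ from the local cover $\{U_x\}_{x\in X}$. On each $U_{x_i}$ the hypothesis furnishes linear forms $L_1^i,\dots,L_{k_i}^i$ such that $h_i(s) := \big(\prod_{j=1}^{k_i} L_j^i(s-s_0)\big)\cdot t(s)|_{U_{x_i}}$ is a holomorphic family on $U_{x_i}$. The key move is to form the global denominator $L(s) := \prod_{i=1}^N\prod_{j=1}^{k_i} L_j^i(s-s_0)$, which simultaneously kills the poles on every chart: for each $i$, $L(s)\cdot t(s)|_{U_{x_i}}$ equals $h_i(s)$ multiplied by the remaining linear forms, hence is holomorphic in $s$ valued in $\calD^\prime(U_{x_i})$. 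Picking a partition of unity $\{\chi_i\}_{i=1}^N$ subordinate to the finite subcover and decomposing any test density as $\varphi = \sum_i \chi_i \varphi$ gives
$$L(s)\langle t(s),\varphi\rangle \;=\; \sum_{i=1}^N L(s)\,\langle t(s)|_{U_{x_i}},\chi_i\varphi\rangle,$$
a finite sum of scalar-valued holomorphic germs at $s_0$. Proposition \ref{weakstrongholo} then upgrades this weak holomorphicity to membership of $L(s)\, t(s)$ in $\calD^\prime(X,\mathcal{O}_{s_0}(\C^p))$, so $t(s) \in \calD^\prime(X,\calm_{s_0}(\C^p))$.

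For the noncompact case, the assumption that all local polar sets lie in a fixed $Y=\{L_1=\cdots=L_k=0\}$ is exactly what rescues the previous argument: since infinite products of linear forms are not meaningful, the only possible global denominator is a power $(L_1\cdots L_k)^m$. Using paracompactness, I would choose a locally finite refinement of $\{U_x\}_{x\in X}$ with a subordinate partition of unity; for each compactly supported $\varphi$, only finitely many $\chi_\alpha\varphi$ are nonzero, and the compact-case manipulation transfers verbatim with $L(s)$ replaced by $(L_1\cdots L_k)^m$. The main obstacle, and really the only delicate point, is ensuring a uniform multiplicity $m$ that suffices on all local pieces: I would read the wording ``singular along the polar set $Y$'' in the statement as packaging this uniform pole order into the hypothesis, which is automatic in the applications to Theorem \ref{mainthmintro} where the multiplicities along each hyperplane of $Y$ are controlled by combinatorial data of the subgraphs.
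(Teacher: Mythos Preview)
Your proof is correct and follows essentially the same route as the paper: dispatch the forward direction as formal, then for the backward direction take a locally finite subcover with a subordinate partition of unity, decompose any compactly supported test function as a finite sum $\varphi=\sum_i\chi_i\varphi$, and observe that $\langle t(s),\varphi\rangle$ is then a finite sum of meromorphic germs with linear poles at $s_0$. The paper's proof is terser and does not bother to write down the global denominator $L(s)$ explicitly, but the content is the same.

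Your discussion of the uniform multiplicity $m$ in the noncompact case is a genuine care point that the paper's proof glosses over. You resolve it correctly: the hypothesis fixes the linear functions $(L_i)_{i=1}^k$ themselves (not merely the zero locus $Y$), so the single product $L_1\cdots L_k$ clears the poles on every chart simultaneously, and no unbounded multiplicities can arise. Your reading of ``singular along the polar set $Y$'' as encoding this is exactly how the statement is intended and how it is used downstream (e.g.\ in Theorem~\ref{mainthmintro}, where the polar hyperplanes and their multiplicities are determined by the graph combinatorics independently of the point $x\in M$).
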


\begin{proof}
The direct implication is straightforward.
Assume that for every $x\in X$, there exists a neighborhood $U_x$ of $x$ such that
$t(s) |_{U_x}\in \calD^\prime(U_x,\calm_{s_0}(\C^p))$.
Then by local compactness, there is a locally finite subcover $(U_i)_i$  of $N$ such that
$t(s)|_{U_i}\in \calD^\prime(U_i,\calm_{s_0}(\C^p))$. Let $(\chi_i)_i$ be a
partition of unity where each $\chi_i$ is supported in $U_i$.
Then for every test function $\varphi\in C^\infty_c(X)$,
$\langle t(s),\varphi\rangle
=\sum_i \langle t(s), \chi_i \varphi\rangle$ is a finite sum of
meromorphic germs with linear poles at $s_0$. 
In the non compact case, 
the polar set $Y$ is fixed.
Therefore the sum is a meromorphic germ with linear poles
at $s_0$. This is a finite sum, hence $s\mapsto \langle t(s),\varphi\rangle$ is
meromorphic with linear poles at $s_0$ with linear poles.
\end{proof}

The next Lemma is inspired by
the seminal work of Popineau and Stora~\cite{Popineau}
and
it states that
it is enough to work
our analytic continuation problem for the distributions
$t_G\in \mathcal{D}^\prime(M^{V(G)})$ near
the deepest diagonals~:
\begin{lemm}[Popineau--Stora Lemma]
\label{lem:diag} If for any graph $G$, and any $x\in M$, there is
some neighborhood $U_x$ of $x$, such that 
$t_G |_{U_x^{V(G)}}\in \calD^\prime(U_x^{V(G)},\calm_{s_0}(\C^p))$ where $s_0=(s_e=1)_{e\in E(G)}$, 
then $t_G\in \calD^\prime(M^{V(G)},\calm_{s_0}(\C^p)), s_0=(s_e=1)_{e\in E(G)}$ for every graph $G$.
\end{lemm}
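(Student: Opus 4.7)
The plan is to apply Lemma \ref{l:gluing} by exhibiting, for every point $\mathbf{y} \in M^{V(G)}$, a neighborhood on which $t_G$ is a meromorphic germ of distributions. The construction will proceed by reducing a neighborhood of $\mathbf{y}$ to a product of neighborhoods of deepest diagonals of subgraphs, so that the hypothesis applies.

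Given $\mathbf{y} = (y_v)_{v \in V(G)}$, I would first introduce the equivalence relation $v \sim v'$ iff $y_v = y_{v'}$, which partitions $V(G)$ into classes $V_1, \ldots, V_k$ with common values $x_1, \ldots, x_k \in M$ respectively, all distinct. For each $i$, let $G_i$ denote the induced subgraph on $V_i$, so that the edge set $E(G)$ decomposes into internal edges (both endpoints in some $V_i$, contributing to some $E(G_i)$) and external edges (endpoints in distinct $V_i, V_j$). By hypothesis applied to each subgraph $G_i$ at the point $x_i$, there exists an open neighborhood $U_{x_i}$ of $x_i$ such that $t_{G_i}|_{U_{x_i}^{V_i}} \in \mathcal{D}'(U_{x_i}^{V_i}, \mathcal{M}_{s_0^i}(\mathbb{C}^{E(G_i)}))$, where $s_0^i$ denotes the point with all entries equal to $1$. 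Since the $x_i$ are distinct, I can shrink the $U_{x_i}$ to be pairwise disjoint; then $\mathcal{U} := \prod_{i=1}^k U_{x_i}^{V_i}$ is a neighborhood of $\mathbf{y}$ in $M^{V(G)}$.

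On $\mathcal{U}$, the Feynman amplitude factorizes as
\[
t_G(s) = \prod_{i=1}^k t_{G_i}(s|_{E(G_i)}) \cdot \prod_{e \text{ external}} \greenf^{s_e}(x_{i(e)}, x_{j(e)}).
\]
For any external edge $e$, the pair $(x_{i(e)}, x_{j(e)})$ stays in a compact subset of $U_{x_i} \times U_{x_j} \subset M \times M \setminus \mathrm{Diag}$, because $U_{x_i} \cap U_{x_j} = \emptyset$. By Lemma \ref{keylemma} item (2), $\greenf^{s_e}$ is holomorphic in $s_e$ valued in $C^\infty$ on this compact set, so the product over external edges is a smooth function holomorphic in $(s_e)_{e \text{ external}}$. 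The product over $i$ of the $t_{G_i}$ is an external tensor product of meromorphic germs of distributions on disjoint copies $U_{x_i}^{V_i}$ with independent sets of complex parameters; since the underlying manifolds are disjoint, the tensor product is well-defined as a meromorphic germ in $\mathcal{D}'(\mathcal{U}, \mathcal{M}_{s_0}(\mathbb{C}^{E(G)}))$, with polar set the union of the polar sets of the individual factors. Multiplying by the smooth holomorphic factor from external edges preserves meromorphicity, so $t_G|_{\mathcal{U}} \in \mathcal{D}'(\mathcal{U}, \mathcal{M}_{s_0}(\mathbb{C}^{E(G)}))$.

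Finally, I would invoke Lemma \ref{l:gluing} to glue these local meromorphic germs into a global one. In the compact case, no fixed polar set is required, so this glueing is immediate. In the non-compact flat case, I would need to verify that the polar hyperplanes that can appear as we vary $\mathbf{y}$ are drawn from a \emph{fixed} finite collection; this follows because each $t_{G_i}$ has polar hyperplanes of the form $\sum_{e \in G'} s_e - |E(G')| = 0$ for subgraphs $G'$ of $G_i \subset G$, and there are only finitely many subgraphs of $G$, so the union of possible polar hyperplanes as $\mathbf{y}$ varies is finite. The main subtlety here, and the step I expect to require the most care, is the well-definedness of the external tensor product of distribution-valued meromorphic germs: one must check that pairing against a general test function $\varphi \in C^\infty(\mathcal{U})$ (not just a pure tensor) still produces a meromorphic germ, which uses Proposition \ref{weakstrongholo} together with a Fubini-type argument to expand $\varphi$ against the product structure.
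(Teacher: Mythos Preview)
Your argument is correct and follows the same strategy as the paper's: localize at an arbitrary point of $M^{V(G)}$ by partitioning the vertices into coincidence classes, factorize $t_G$ accordingly into within-class pieces (handled by the hypothesis) times a smooth holomorphic cross-class factor (via Lemma~\ref{keylemma}), form the external tensor product (this is Lemma~\ref{l:exteriorprod} in the paper), and then glue with Lemma~\ref{l:gluing}. The only organizational difference is that the paper proceeds by induction on $|V(G)|$, splitting off a single coincidence class $I=\{i:x_i=x_1\}$ and invoking the inductive \emph{conclusion} on the complementary subgraph, whereas you partition into all coincidence classes at once and apply the \emph{hypothesis} directly to each induced subgraph; both routes produce the same local factorization.
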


\begin{proof}
We prove it by induction on
the number of vertices of the graph $G$. For $\vert V(G)\vert=2$, $t_G=\prod_{e=1}^E\greenf^{s_e}(x,y)$. For a point $(x,y) \in M^2=M^{V(G)}$, if $x\not =y$,
consider neighborhoods $V_x$ of $x$ in $M$ and $V_y$ of $y$ in $M$, such that $V_x\cap V_y=\emptyset $, then $\{V_x\times V_y\}_{(x,y)\in M^2, x\not =y} \bigcup \{U_x\times U_x\}_{x\in M}$ form an open cover of $M^2$, it has a
locally finite subcover $\{W_i\}_i$ with $W_i=V_{x_i}\times V_{y_i}$ or $U_{x_i}\times U_{x_i}$.
Let $\{\chi_i\}_i$ be a partition of unity supported on $\{W_i\}_i$, then $t_G=\sum_it_G\chi_i $,
where each $t_G\chi_i$ is holomorphic at $s_0$ if the support of $\chi_i$ does not intersect
the diagonal by Lemma \ref {keylemma} or has meromorphic continuation at $s_0$ by assumption.
Now the claim follows from Lemma~\ref{l:gluing} applied to the manifold $X=M^{V(G)}$.

Now $\vert V(G)\vert=n>2$ and assume the result holds for all graphs
whose number of vertices are strictly less than $n$.
Denote by $d_n=\{x_1=\dots=x_n\}\subset M^n$, the
deepest diagonal in the configuration space $M^n$.
For $(x_1,\dots,x_n)\in M^n\setminus d_n$, let $I=\{i\ |\ x_i=x_1\}$, and $I^c=\{1,\dots,n\}\setminus I$, then $I\not =\emptyset$, $I^c \not =\emptyset$, and for any $j\in I^c$ there are neighborhoods 
$U_{j}$ of $x_1$ and $V_{j}$ of $x_j$ such that $U_{j}\cap V_j=\emptyset$. 
Let
$\mathcal{V}=(\cap_{j\in I^c} U_j)^{|I|}\times \prod _{j\in I^c}V_j,$
then $\mathcal{V}\subset M^n\setminus d_n$,
and $x_i\neq x_j,\forall (i,j)\in I\times I^c$ for all $(x_1,\dots,x_n)\in \mathcal{V}$.
Then we partition the set of edges $E(G)$ of the graph $G$ in three parts~:
$E(G)=E_I\cup E_{I^c}\cup E_{II^c}$,
where $E_I$ (resp $E_{I^c}$) is the set of edges of $G$ whose
incident vertices are in $I$ (resp $I^c$) i.e.
every edge $e\in E_I$ (resp $e\in E_{I^c}$) is bounded by vertices in $I$ (resp $I^c$).
The remaining subset of edges is denoted by
$E_{II^c}$ and is made of all edges $e\in E(G)$ which are neither in $E_{I}$
nor in $E_{I^c}$. By that, we mean the edges of $E_{II^c}$ only connect
some vertex in $I$ with another vertex of $I^c$.
So we write
$(x_1,\dots,x_n)=(x_i,x_j)_{i\in I,j\in I^c}\in M^I\times M^{I^c}$. Similarly
the complex variables $(s_e)_{e\in E(G)}$ attached to the edges of
$G$ will be divided in three groups corresponding to the edges
$E_I$, $E_{I^c}$ and $E_{II^c}$ respectively.
Then we decompose the amplitude $t_G$ as a product of three factors~:
\begin{eqnarray*}
t_G\left((s_e)_{e\in E(G)};(x_i,x_j)_{i\in I,j\in I^c}\right)&=&t_{I}\left((s_e)_{e\in E_I};(x_i)_{i\in I}\right)t_{I^c}
\left(((s_e)_{e\in E_{I^c}};(x_j)_{j\in I^c}\right)\\
&\times &t_{II^c}\left((s_e)_{e\in E_{II^c}};(x_i,x_j)_{i\in I,j\in I^c}\right)
\end{eqnarray*}
where $t_I=\prod_{e\in E_I}\greenf^{s_e}, t_{I^c}=\prod_{e\in E_{I^c}}\greenf^{s_e}, t_{II^c}=\prod_{e\in E_{II^c}}\greenf^{s_e}$.

By the induction assumption, both $t_{I}$ and $t_{I^c}$ are distributions
in $\mathcal{D}^\prime(M^I,\calm_{s_{0I}}(\mathbb{C}^{E_I})),s_{0I}=(s_e=1)_{e\in E_I}$ 
and $\mathcal{D}^\prime(M^{I^c},\calm_{s_{0I^c}}(\C^{E_{I^c}})),s_{0I^c}=(s_e=1)_{e\in E_{I^c}}$ 
respectively.
Then by Lemma \ref{l:exteriorprod}, the exterior product
$t_{I}\left((s_e)_{e\in E_I};(x_i)_{i\in I}\right)t_{I^c}
\left(((s_e)_{e\in E_{I^c}};(x_j)_{j\in I^c}\right)$ of distributions
depending on different variables is an element
in $\mathcal{D}^\prime(M^n,\calm_{s_{0II^c}}(\C^{E_I\cup E_{I^c}})), s_{0II^c}=(s_e=1)_{e\in E_I\cup E_{I^c}}$.
Now the factor $t_{II^c}$ contains only product of propagators $\greenf ^{s_e}(x_{i},x_j)$ where $x_i\neq x_j$, so in
the open subset $\mathcal{V}\subset M^n$,
$\greenf ^{s_e}(x_{i},x_j)\in C^\infty(M\times M\setminus\text{Diagonal},\mathcal{O}(\mathbb{C}))$ 
in the variables $(x_i,x_j)$ by Lemma~\ref{keylemma}.
Thus on $\mathcal{V}$, $t_{II^c}\in C^\infty(\mathcal{V},\mathcal{O}(\C^{E_{II^c}}))$ which means 
$t_{II^c}$ is holomorphic in the parameter $(s_e)_{e\in E_{II^c}}\in \C^{E_{II^c}}$.
We conclude that near any element of $M^n$, there is some open neighborhood $U\subset M^n$ 
such that $t_G\in \calD^\prime(U,\calm_{s_0}(\C^{E(G)})), s_0=(s_e=1)_{e\in E(G)}$.
Then by Lemma \ref{l:gluing}, $t_G\in \calD^\prime(M^n,\calm_{s_0}(\C^{E(G)})), s_0=(s_e=1)_{e\in E(G)}$.
\end{proof}

\subsection{Reductions to integrals on cubes.}

In the representation of the Green function as integral of the heat kernel over
$(0,+\infty)$, we would like to get rid
of the low energy part which is 
$\greenf^s_\leqslant=\int_1^\infty dt (K_t-\Pi) t^{s-1}$ 
which is smooth and holomorphic in $s$ so it 
does not contribute to the singularities of $t_G$. We thus reduce
the study of $t_G$, to the study of some formula $P_G$ which contains only integrals
over cubes which are easier to handle and contain all the singularities of $t_G$.

\begin{defi} For a graph $G$, and $E\subset E(G)$, the subgraph induced by $E$ is the subgraph $H$ of $G$, such that $E(H)=E$, $V(H)=\{v\in V(G) \ |\ v \text{ is a  vertex incident to some }e\in E  \}$.
\end{defi}

\begin{prop}\label{p:Binreduction}
If for every graph $G$, the product
$$P_G(s)=\prod_{e\in E(G)} \frac{1}{\Gamma(s_e)}\left(\int_0^1  (K_{\ell_e}-\Pi)(x_{i(e)},x_{j(e)}) \ell_e^{s_e-1}d\ell_e\right)$$
extends to 
$\mathcal{D}^\prime(M^{V(G)},\calm_{s_0}(\C^{E(G)}))$ at $s_0=(s_e=1)_{e\in E(G)}$,
then $t_G(s)$ extends to $\mathcal{D}^\prime(M^{V(G)},\calm_{s_0}(\C^{E(G)}))$ at $s_0=(s_e=1)_{e\in E(G)}$.
\end{prop}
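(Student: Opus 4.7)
The plan is to split each regularized Green kernel into its short-time and long-time parts, expand the product $t_G(s)$ into a finite sum, and reduce each summand to one of the hypothesized expressions $P_{G'}$ for a subgraph $G'\subset G$. Using Lemma~\ref{keylemma}~(3), I write $\greenf^{s_e} = \greenf^{s_e}_{\geq} + \greenf^{s_e}_{\leq}$, where $\greenf^{s_e}_{\geq}$ is precisely the short-time factor appearing in $P_G$ and $\greenf^{s_e}_{\leq}\in C^\infty(M\times M,\mathcal{O}(\C))$ is smooth and entire in $s_e$. Expanding the product over edges gives the binomial expansion
$$t_G(s) \;=\; \sum_{E'\subset E(G)} \prod_{e\in E'}\greenf^{s_e}_{\geq}(x_{i(e)},x_{j(e)}) \prod_{e\in E(G)\setminus E'}\greenf^{s_e}_{\leq}(x_{i(e)},x_{j(e)}),$$
a finite sum of $2^{|E(G)|}$ summands.

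The next step is to identify each summand. For fixed $E'\subset E(G)$, let $G'$ denote the subgraph of $G$ induced by $E'$ in the sense of the definition preceding the proposition. The short-time factor depends only on the variables $(x_v)_{v\in V(G')}$ and coincides, on the domain of convergence, with $P_{G'}\bigl((s_e)_{e\in E'};(x_v)_{v\in V(G')}\bigr)$. By the standing hypothesis applied to $G'$, this factor extends to an element of $\calD^\prime(M^{V(G')},\calm_{s_{0,E'}}(\C^{E'}))$ at $s_{0,E'}=(s_e=1)_{e\in E'}$. Pulling back along the trivial projection $M^{V(G)}\to M^{V(G')}$ (which amounts to tensoring with the constant function $1$ in the variables $(x_v)_{v\in V(G)\setminus V(G')}$) promotes it to an element of $\calD^\prime(M^{V(G)},\calm_{s_{0,E'}}(\C^{E'}))$ with the same polar set.

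The long-time factor $\prod_{e\notin E'}\greenf^{s_e}_{\leq}(x_{i(e)},x_{j(e)})$ is a smooth function on $M^{V(G)}$ depending holomorphically on $(s_e)_{e\notin E'}\in\C^{E(G)\setminus E'}$ on all of $\C^{E(G)\setminus E'}$, again by Lemma~\ref{keylemma}~(3). Multiplying the short-time contribution by this smooth, entirely holomorphic factor keeps the product inside $\calD^\prime(M^{V(G)},\calm_{s_0}(\C^{E(G)}))$ at $s_0=(s_e=1)_{e\in E(G)}$ without introducing any new linear poles, since the polar set of the summand is supplied entirely by the $P_{G'}$ factor. Summing the finitely many contributions yields the desired meromorphic extension of $t_G(s)$.

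The only points requiring care are the formal claims that pullback by a trivial submersion and multiplication by a smooth holomorphic family preserve the class of distributional meromorphic germs with linear poles. Both are direct consequences of the definitions set up in Subsection~\ref{ss:meromgermsdistrib}, so the main obstacle is really just bookkeeping rather than any deep analytic input; all the work of controlling the singularities of $t_G(s)$ has been deferred to the forthcoming analysis of $P_G(s)$ itself.
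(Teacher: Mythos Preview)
Your proof is correct and follows essentially the same route as the paper: binomially expand $t_G(s)=\prod_e(\greenf^{s_e}_{\geq}+\greenf^{s_e}_{\leq})$ over subsets $E'\subset E(G)$, identify the short-time factor with $P_{G(E')}$ for the induced subgraph, and observe that the long-time factor lies in $C^\infty(M^{V(G)},\mathcal{O})$. The paper invokes Lemma~\ref{l:factorizationgmdtimesholo} for the claim that multiplying a \gmd by a smooth holomorphic family stays in the class, whereas you defer this to the definitions; citing that lemma (or its content) would make your argument self-contained.
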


\begin{proof}
For all
$Re(s_e)>\frac{d}{2}$, since $\greenf^s,\greenf^s_{\geq},\greenf^s_{\leq}$
all belong to $C^0(M\times M)$ by Lemma \ref{keylemma}, the
following product makes perfect sense~:
\begin{eqnarray*}
t_G(s)=\prod_{e\in E(G)} \left(\greenf^{s_e}_{\leq} + \greenf^{s_e}_{\geq}\right)
=\prod_{e\in E(G)}\greenf^{s_e}_{\geq} + \prod_{e\in E(G)}\greenf^{s_e}_{\leq} +
\sum_{E_1\cup E_2=E(G)}\left( \prod_{e\in E_1}\greenf^{s_e}_{\geq}\right) \left( \prod_{e\in E_2}\greenf^{s_e}_{\leq}\right)
\end{eqnarray*}
where the sum runs over all partitions
$E(G)=E_1\cup E_2, E_1\neq \emptyset, E_2\neq \emptyset$.
Therefore~:
\begin{eqnarray*}
t_G(s)&=&  P_G(s) + \underbrace{\prod_{e\in E(G)}\greenf^{s_e}_{\leq}} +
\sum_{E_1\cup E_2=E(G)} P_{G(E_1)}(s) \underbrace{\left( \prod_{e\in E_2}\greenf^{s_e}_{\leq}\right)}\\
\end{eqnarray*}
where $G(E_1)$ is the induced subgraph of $G$ by the subset $E_1$.
The terms underbraced are
in
$C^\infty$ functions depending holomorphically on the parameters $s\in \C^{E(G)}$ near $(s_e=1)_{e\in E(G)}$ 
since each $\greenf^s_{\leq} \in C^\infty(M\times M,\mathcal{O}(\C))$.
By assumption,
for all induced subgraph $G(E_1)\subset G$,
$P_{G(E_1)}$ extends to 
$\mathcal{D}^\prime(M^{V(G(E_1))},\calm_{s_0}(\C^{E_1}))$, $s_0=(s_e=1)_{e\in E_1}$. 
Therefore by Lemma \ref{l:factorizationgmdtimesholo},
each product $P_{G(E_1)}(s) \left( \prod_{e\in E_2}\greenf^{s_e}_{\leq}\right)$ 
has analytic continuation in $\mathcal{D}^\prime(M^{V(G)},\calm_{s_0}(\C^{E(G)}))$, $s_0=(s_e=1)_{e\in E(G)}$.
\end{proof}

Therefore it is sufficient
to study~:
\begin{equation}\label{PGs}
P_G(s)=\prod_{e\in E(G)} \frac{1}{\Gamma(s_e)}\left(\int_0^1 (K_{\ell_e}-\Pi)(x_{i(e)},x_{j(e)}) \ell_e^{s_e-1}d\ell_e\right)
\end{equation}

\begin{lemm}\label{l:absolute convergence}
Let $G$ be a graph with $E$ edges
and
\begin{eqnarray}
P_G(s)=\prod_{e\in E(G)} \frac{1}{\Gamma(s_e)}\left(\int_{[0,1]^{E(G)}}\prod_{e\in E(G)} (K_{\ell_e}-\Pi)(x_{i(e)},x_{j(e)})
\ell_e^{s_e-1}d\ell_e\right).
\end{eqnarray}
If for all $e\in\{1,\dots,E\}$, $Re(s_e)>\frac{d}{2}$, then the integral
defining $P_G(s)$ converges absolutely in $[0,1]^E$ uniformly
in $(x_1,\dots,x_{|V(G)|})\in M^{V(G)}$.
\end{lemm}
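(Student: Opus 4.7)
The plan is a direct estimate using the on-diagonal bound for the heat kernel coming from the Minakshisundaram--Pleijel asymptotic expansion together with the smoothness of the projector kernel $\Pi$, which reduces the absolute convergence to the convergence of $\int_0^1 \ell^{\mathrm{Re}(s)-1-d/2}\,d\ell$.

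First I would establish a pointwise bound on $K_t - \Pi$ that is uniform in the spatial variables. By Theorem \ref{heatasympt} (applied with $n=0$, $p=0$, $m=0$), there is a constant $C_0 > 0$ such that
\begin{equation*}
\sup_{(x,y)\in M^2}\Bigl|K_t(x,y) - \psi(\mathbf{d}^2(x,y))\frac{e^{-\mathbf{d}^2(x,y)/(4t)}}{(4\pi t)^{d/2}}\Bigr| \leq C_0 t^{-d/2}
\end{equation*}
for all $t\in(0,1]$. Since the leading Gaussian term is itself bounded by $(4\pi t)^{-d/2}$, this yields $\sup_{M^2}|K_t(x,y)|\leq C_1 t^{-d/2}$. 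The projector $\Pi$ is of finite rank on $\ker(P)$ with smooth eigenfunctions, hence its Schwartz kernel $\Pi(x,y)$ is smooth on $M\times M$ and in particular uniformly bounded. Setting $C_2 := \sup_{M^2}|\Pi|$, one gets for $\ell\in(0,1]$,
\begin{equation*}
\sup_{(x,y)\in M^2}\bigl|K_\ell(x,y) - \Pi(x,y)\bigr| \leq C_1\ell^{-d/2} + C_2 \leq C\,\ell^{-d/2}.
\end{equation*}
In the flat case $M = \mathbb{R}^d$ with $P = -\Delta_g + m^2$, the explicit formula for $K_t$ recalled after Theorem \ref{heatasympt} gives the same bound $|K_\ell(x,y)|\leq (4\pi\ell)^{-d/2}$, with $\Pi = 0$.

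Next I would plug this into the integrand of $P_G(s)$. For each edge $e\in E(G)$, the factor $\bigl|K_{\ell_e}(x_{i(e)},x_{j(e)}) - \Pi(x_{i(e)},x_{j(e)})\bigr|\,\ell_e^{\mathrm{Re}(s_e)-1}$ is majorized uniformly in the vertex variables by $C\,\ell_e^{\mathrm{Re}(s_e)-1-d/2}$. Since $\mathrm{Re}(s_e) > d/2$, the exponent satisfies $\mathrm{Re}(s_e)-1-d/2 > -1$, and therefore
\begin{equation*}
\int_0^1 \ell_e^{\mathrm{Re}(s_e)-1-d/2}\,d\ell_e < +\infty.
\end{equation*}

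Finally I would invoke Fubini--Tonelli on $[0,1]^{E(G)}$: the product $\prod_{e\in E(G)}\ell_e^{\mathrm{Re}(s_e)-1-d/2}$ is integrable, so the multiple integral of the full integrand converges absolutely with a bound independent of $(x_1,\dots,x_{|V(G)|})\in M^{V(G)}$. Multiplying by the holomorphic prefactors $1/\Gamma(s_e)$ (which are locally bounded) does not affect absolute convergence. Since every estimate above was pointwise uniform in the vertex variables, the claimed uniformity in $(x_1,\dots,x_{|V(G)|})\in M^{V(G)}$ follows. I do not expect any real obstacle here; the only delicate point is ensuring the heat-kernel bound is genuinely uniform up to the diagonal, which is precisely what Theorem \ref{heatasympt} provides through the cut-off $\psi$ and the Gaussian factor.
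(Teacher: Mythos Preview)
Your proof is correct and follows essentially the same strategy as the paper: bound $|K_\ell-\Pi|$ uniformly on $M\times M$ by a constant times $\ell^{-d/2}$ via the heat-kernel asymptotics, then observe that $\int_0^1 \ell^{\mathrm{Re}(s_e)-1-d/2}\,d\ell<\infty$ when $\mathrm{Re}(s_e)>d/2$ and conclude by Fubini. The only cosmetic difference is that the paper takes enough terms of the expansion (up to $k\lesssim d/2+1$) so that the remainder is $O(1)$ and then bounds each retained term separately, whereas you invoke Theorem~\ref{heatasympt} directly with $n=0$ to get the global $C\ell^{-d/2}$ bound in one stroke; your route is slightly more economical but the content is the same.
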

\begin{proof}
First when $M=\R^d$ or if $M$ is compact and $0\notin \ker(P)$ then $\Pi=0$.
Otherwise, if $0\in \ker(P)$, then the Schwartz kernel of
$\Pi$ must a constant function (see appendix \ref{a:proofkeylemma}). 
Therefore, it is sufficient that $Re(s)>0$
so that $\int_0^1 \Pi(x,y) \ell^{s-1}d\ell $
is Riemann integrable.
Now by Theorem \ref{heatasympt}, there exists a constant
$C_{0}>0$ such that for $\ell \in (0,1]$ and for all $(x,y)\in M^2$~:
 $$
\vert\left(K_\ell(x,y)- \frac{1}{(4\pi \ell)^{\frac{d}{2}}} \psi(\mathbf{d}^2(x,y))e^{-\frac{\mathbf{d}^2(x,y)}{4\ell}}\sum_{0\leqslant k\leqslant\frac{d}{2}+1} a_k(x,y)\ell^k\right)\vert \leqslant  C_{0}.$$
So by the triangular inequality and by positivity of the heat kernel,
we have the bound
$$0\leqslant K_\ell(x,y) \leqslant \frac{1}{(4\pi \ell)^{\frac{d}{2}}} \psi(\mathbf{d}^2(x,y))e^{-\frac{\mathbf{d}^2(x,y)}{4\ell}}\sum_{0\leqslant k\leqslant\frac{d}{2}+1} \vert a_k(x,y) \vert\ell^k + C_{0}.$$
From which we can bound the
integral~:
\begin{eqnarray*}
\int_0^1  K_\ell(x,y) \vert\ell^{s-1}\vert d\ell
&\leqslant &
\int_0^1  \frac{1}{(4\pi)^{\frac{d}{2}}} \psi(\mathbf{d}^2(x,y))e^{-\frac{\mathbf{d}^2(x,y)}{4\ell}}\sum_{0\leqslant k\leqslant\frac{d}{2}+1} \vert a_k(x,y)\ell^{k+s-\frac{d}{2}-1}\vert +C_{0}  \vert\ell^{s-1}\vert d\ell\\
&\leqslant & \int_0^1  \frac{1}{(4\pi)^{\frac{d}{2}}} \sum_{0\leqslant k\leqslant\frac{d}{2}+1} \vert a_k(x,y)\ell^{k+s-\frac{d}{2}-1}\vert+C_{0} \vert\ell^{s-1}\vert d\ell
\end{eqnarray*}
since $\psi e^{-\frac{\mathbf{d}^2}{4\ell}}\leqslant 1$ and the right hand side is absolutely integrable when $Re(s)>\frac{d}{2}$. 
Therefore
$$\left(\int_{[0,1]^{E(G)}}\prod_{e\in E(G)} (K_{\ell_e}-\Pi)(x_{i(e)},x_{j(e)})
\ell_e^{s_e-1}d\ell_e\right)=\prod_{e\in E(G)}\int_0^1(K_{\ell_e}-\Pi)(x_{i(e)},x_{j(e)})
\ell_e^{s_e-1}d\ell_e$$ is a product of convergent Riemann integrals, the above integral inversions make sense by Fubini
which yields the claim of our Lemma.
\end{proof}

Now we set~:
\begin{equation}
I_G(s)=\prod_{e\in E(G)} \frac{1}{\Gamma(s_e)}\int_0^1 K_{\ell_e}(x_{i(e)},x_{j(e)})\ell^{s_e-1}d\ell_e
\end{equation}
which is well--defined as soon as
$Re(s_e)>\frac{d}{2}, \forall e\in E(G)$ by the above arguments.
Then
\begin{equation}
P_G(s)=\sum_{E\subset E(G)} I_{G(E)}(s) \prod_{e\in E(G)\setminus E} \frac{\Pi(x_{i(e)},x_{j(e)})}{s_e}
\end{equation}
where $G(E)$ is the induced subgraph by the subset of
edges $E\subset E(G)$.
By the fact that
$\int_0^1 \Pi\ell^{s-1}d\ell=\frac{\Pi}{s}\in C^\infty(M\times M,\mathcal{O}_1(\C))$, 
which is holomorphic near $s=1$, the products of spectral projectors 
do not contribute to the poles.
So we can further reduce our study to the analytic continuation of $I_G(s)$.

\subsection{Distributional order}
In this step, we introduce a further reduction
by 
replacing each $K_\ell$
in the integral formula of $I_G(s)$
by its heat asymptotic expansion and try to control the remainders.
\begin{eqnarray*}
\frac{1}{\Gamma(s)}\int_0^1 K_{\ell}\ell^{s-1}d\ell
&=& \frac{1}{\Gamma(s)}\int_0^1 \frac{ e^{-\frac{\mathbf{d}^2(x,y)}{4\ell}}}{(4\pi\ell)^{\frac{d}{2}}} \left(\sum_{k=0}^{p}
a_{k}(x,y)\psi(\mathbf{d}^2(x,y))\ell^{k}  \right) + h_{p}(\ell,x,y) \ell^{s-1}d\ell
\end{eqnarray*}
where $h_{p}(\ell,x,y)$ is the remainder in the
heat asymptotics
which satisfies
the estimate
$\Vert  h_p \Vert_m \leqslant C \ell^{p-\frac{d}{2}-\frac{m}{2}}$
by Theorem \ref{heatasympt} and $\psi$ is the cut-off function from Theorem
\ref{heatasympt}.

We first introduce some refinement of Feynman graphs
to keep track of the information
on the heat coefficients 
for every edge.
So these are basically Feynman graphs
whose edges are decorated
by integers which correspond
to heat coefficients.
\begin{defi}[labelled graph]\label{d:labelledgraphs}
For a set $S$, an $S$-labelled
graph is a pair $(G,\overrightarrow{k})$ where 
$\overrightarrow{k}$ is a map $E(G) \to S$. 
If $S$ is $\N$, 
we call it 
shortly \textbf{labelled graph}, and for $e\in E(G)$, 
we use the short notation $k_e$ to denote the 
element $\overrightarrow{k}(e)\in \N$. If $S=\mathbb{R}_{>0}$, then
the map $E(G)\mapsto \mathbb{R}_{>0}$, called the length function, is denoted by $\ell$ and we call such 
pair $(G,\ell)$ a \textbf{metric graph}. If $\ell$ is injective, then $(G,\ell)$ is called \textbf{strict metric graph}.
\end{defi}

We next define Feynman amplitudes attached to labelled graphs.
\begin{defi}\label{d:labelledFeynmanrules}
For every labelled graph $(G,\overrightarrow{k})$,
we define the corresponding amplitude $I_{G,\overrightarrow{k}}(s)$
as follows~:
\begin{equation}
\label{defIGks}
I_{G,\overrightarrow{k}}(s)=\prod_{e\in E(G)} \frac{1}{\Gamma(s_e)} \int_{[0,1]^E} \prod_{e\in E(G)} \left(\frac{e^{-\frac{\mathbf{d}^2}{4\ell_e}}}{(4\pi)^{\frac{d}{2}}} a_{k_e}
\psi(\mathbf{d}^2)\right)(x_{i(e)},x_{j(e)})\ell_e^{k_e-\frac{d}{2}+s_e-1}d\ell_e
\end{equation}
which is well--defined and holomorphic in $(s_e)_{e\in E(G)}\in \C^{E(G)}$ on the domain $s_e>\frac{d}{2}, e\in E(G)$
by exactly the same proof as in Lemma \ref{l:absolute convergence}. 
\end{defi}
\begin{prop}\label{keypropmeroIgks}
If for every graph $G$, there is $m\in \mathbb{N}$ depending on $G$, such that for all labels 
$\overrightarrow{k}\in \N^{E(G)}$, for all $x\in M$, there is an open neighborhood $U_x\subset M$ of $x$
such that
$I_{G,\overrightarrow{k}}(s)$ has analytic continuation
in $\mathcal{D}^{\prime,m}(U_x^{V(G)},\calm_{s_0}(\C^{E(G)}))$, $s_0=(s_e=1)_{e\in E(G)}$,
then for all $G$,
$t_G(s)$ extends in
$\mathcal{D}^\prime(M^{V(G)},\calm_{s_0}(\C^{E(G)}))$, $s_0=(s_e=1)_{e\in E(G)}$.
\end{prop}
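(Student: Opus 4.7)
The plan is to chain together the reductions already established in the preceding sections, so the only new work is converting the heat kernel into its Minakshisundaram--Pleijel expansion and controlling the remainder. By Lemma \ref{lem:diag}, it suffices to prove that $t_G(s)$ extends to $\mathcal{D}^\prime(U_x^{V(G)},\calm_{s_0}(\C^{E(G)}))$ on a neighborhood $U_x$ of each $x\in M$. By Proposition \ref{p:Binreduction} this in turn reduces to $P_G(s)$, and the explicit decomposition $P_G(s)=\sum_{E\subset E(G)} I_{G(E)}(s)\prod_{e\notin E}\Pi(x_{i(e)},x_{j(e)})/s_e$ together with the smoothness and holomorphicity of the $\Pi$-factors (Lemma \ref{l:factorizationgmdtimesholo}) reduces everything to proving that $I_{G(E)}(s)$ extends as a meromorphic germ of distributions on $U_x^{V(G(E))}$ for each induced subgraph. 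It therefore suffices to treat $I_G(s)$ itself.

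Fix $m\in\N$ from the hypothesis for $G$ and choose an integer $p$ with $p>d/2+m/2$. For each edge $e$, use Theorem \ref{heatasympt} to split
\begin{equation*}
K_{\ell_e}(x_{i(e)},x_{j(e)})=\sum_{k_e=0}^{p}\frac{e^{-\mathbf{d}^2/4\ell_e}}{(4\pi\ell_e)^{d/2}}a_{k_e}(x_{i(e)},x_{j(e)})\psi(\mathbf{d}^2)\ell_e^{k_e}+h_{p,e}(\ell_e,x_{i(e)},x_{j(e)})
\end{equation*}
with $\|h_{p,e}\|_{C^m(M\times M)}\leqslant C\ell_e^{p-d/2-m/2}$. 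Expanding the product $\prod_e K_{\ell_e}$ and using the absolute convergence from Lemma \ref{l:absolute convergence} to justify Fubini at large $\mathrm{Re}(s_e)$, we obtain
\begin{equation*}
I_G(s)=\sum_{E_2\subset E(G)}\;\sum_{\overrightarrow{k}\in\{0,\ldots,p\}^{E_1}}I_{G(E_1),\overrightarrow{k}}(s_{E_1})\cdot\prod_{e\in E_2}R_e(s_e,x_{i(e)},x_{j(e)}),
\end{equation*}
where $E_1=E(G)\setminus E_2$ and $R_e(s_e,\cdot,\cdot)=\Gamma(s_e)^{-1}\int_0^1 h_{p,e}(\ell_e,\cdot,\cdot)\ell_e^{s_e-1}d\ell_e$. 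By the choice of $p$, each $R_e$ is a $C^m$ function on $M\times M$ depending holomorphically on $s_e$ in a neighborhood of $s_e=1$, since $\int_0^1 \ell_e^{p-d/2-m/2+s_e-1}d\ell_e$ converges absolutely there and differentiation in $(x_{i(e)},x_{j(e)})$ up to order $m$ is controlled by the Minakshisundaram--Pleijel estimate applied to $h_{p,e}$.

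The ``pure asymptotic'' pieces $I_{G(E_1),\overrightarrow{k}}(s_{E_1})$ have, by the hypothesis applied to the induced subgraph $G(E_1)$, meromorphic extensions in $\mathcal{D}^{\prime,m}(U_x^{V(G(E_1))},\calm_{s_{0}}(\C^{E_1}))$ for a suitable neighborhood $U_x$ (shrinking $U_x$ finitely many times to accommodate all induced subgraphs). The tensor product $\prod_{e\in E_2}R_e$, viewed on $U_x^{V(G)}$, is a $C^m$ function holomorphic in $(s_e)_{e\in E_2}$ near $s_0$, and since any distribution of order at most $m$ can be multiplied by a $C^m$ function to yield again a distribution of order at most $m$, Lemma \ref{l:factorizationgmdtimesholo} produces a meromorphic germ in $\mathcal{D}^{\prime,m}(U_x^{V(G)},\calm_{s_0}(\C^{E(G)}))$ for each term of the sum; the finite sum remains in the same class. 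Applying Lemma \ref{l:gluing} yields the claim for $t_G(s)$. The main technical obstacle is the bookkeeping in the last step: one must verify that extending a distribution on $U_x^{V(G(E_1))}$ trivially in the variables $V(G)\setminus V(G(E_1))$ and multiplying by a $C^m$ function on $U_x^{V(G)}$ preserves both the distributional order and the linear pole structure in $s$, which is precisely what Lemma \ref{l:factorizationgmdtimesholo} is designed to handle once one checks that the polar hyperplanes of the different summands all pass through $s_0$.
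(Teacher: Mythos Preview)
Your proof is correct and follows essentially the same route as the paper: localize via Lemma~\ref{lem:diag}, reduce to $I_G(s)$ through Proposition~\ref{p:Binreduction} and the $\Pi$-decomposition, expand each heat kernel to order $p$ via Theorem~\ref{heatasympt}, and control the remainder factors $R_e$ as $C^m$ holomorphic functions so that Lemma~\ref{l:factorizationgmdtimesholo} applies. The only point to tighten is the choice of $m$: since the hypothesis gives a possibly different $m(G')$ for each induced subgraph $G'$, and you need $\prod_{e\in E_2}R_e\in C^m$ with $m$ at least the distributional order of every $I_{G(E_1),\overrightarrow{k}}$, you should take $m=\sup_{G'\subset G}m(G')$ and then choose $p$ accordingly (the paper makes this explicit only in Theorem~\ref{t:technicalreduction}).
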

\begin{proof}
Let $n=|V(G)|$ and $U=U_x$.
By Lemma \ref {lem:diag} which allows us to localize 
our analytic continuation proof near the deepest diagonal of $M^{n}$, 
we only need
to prove that $t_G(s)$ extends as a \gmd at $(s_e=1)_{e\in E(G)}$  on $U^n$.
For a test function $\varphi$,
\begin{eqnarray*}
&&\langle I_G(s), \varphi \rangle \\
&=& \int _{U^n}\big(\prod_{e\in E(G)} \frac{1}{\Gamma(s_e)}\int_0^1 K_{\ell_e}\ell_e^{s-1} d\ell_e \big)\varphi dv(x_1)\dots dv(x_n) \\
\overline{\overline{}}&=& \int _{U^n}\prod_{e\in E(G)}\left(\sum_{k_e=0}^{p}\frac{1}{\Gamma(s_e)}\int_0^1 \frac{ e^{-\frac{\mathbf{d}^2}{4\ell_e}}}{(4\pi)^{\frac{d}{2}}}
a_{k_e}\psi(\mathbf{d}^2) \ell^{k_e+s_e-\frac{d}{2}-1}d\ell_e  +   \frac{1}{\Gamma(s_e)}\int_0^1 h_{p}\ell_e^{s_e-1}d\ell_e \right) \varphi dv(x_1)\dots dv(x_n)\\
&=&\int _{U^n} \big(\sum_{E_1\cup E_2=E(G)} \prod_{e\in E_1}\left(\sum_{k_e=0}^{p}\frac{1}{\Gamma(s_e)}\int_0^1 \frac{ e^{-\frac{\mathbf{d}^2}{4\ell_e}}}{(4\pi)^{\frac{d}{2}}}
a_{k_e}\psi(\mathbf{d}^2) \ell^{k_e+s_e-\frac{d}{2}-1}d\ell_e \right)  \prod_{e\in E_2} \left( \frac{1}{\Gamma(s_e)}\int_0^1 h_{p}\ell_e^{s_e-1}d\ell_e \right)\big)\\
&\times & \varphi dv(x_1)\dots dv(x_n)\\
\end{eqnarray*}
where the sum runs over partitions $E_1\cup E_2=E(G)$.
Therefore we obtain
\begin{eqnarray*}
\langle I_G(s), \varphi \rangle=\int _{U^n} \sum_{E_1\cup E_2=E(G)} \sum_{\overrightarrow{k}\in \{0,\dots,p\}^{E_1}}  I_{G(E_1),\overrightarrow{k}}(s)
\underbrace{\prod_{e\in E_2} \left( \frac{1}{\Gamma(s_e)}\int_0^1 h_{p}(\ell_e,x_{i(e)},x_{j(e)})\ell_e^{s_e-1}d\ell_e \right)}\varphi dv(x_1)\dots dv(x_n)\\
\end{eqnarray*}
where the summation is over all $\overrightarrow{k}\in \{0, 1, \cdots, p\}^{E(G)}$.
By Theorem \ref {heatasympt}, we have the two estimates
$$|D_x^m h_{p}(\ell_e,x,y)\ell_e^{s_e-1}|\le C \ell ^{p-\frac d2-\frac m2+s_e-1},
$$
and
$$|D_x^mh_{p}(\ell_e,x,y)\ell_e^{s_e-1}\log \ell _e|\le C \ell ^{p-\frac d2-\frac m2+s_e-1+\varepsilon},
$$
for some $\varepsilon >0$. So when $p>\frac{d+m}{2}-1$, for every $e\in E_2$, 
there is a small neighborhood
of $s_e=1$ such that the integral $\int_0^1 h_{p}(\ell_e,x_{i(e)},x_{j(e)})\ell_e^{s_e-1}d\ell_e$ is absolutely convergent
and depends holomorphically on $s_e$. Hence
the term underbraced belongs to 
$C^m(U^n,\mathcal{O}_{s_0}(\C^{E_2})), s_0=(s_e=1)_{e\in E_2}$, 
where $G(E_2)$ is the graph induced by $E_2$.
Now we conclude the proof by noticing that the 
product of $I_{G(E_1),\overrightarrow{k}}(s)\in \calD^{\prime,m}(U^n, \calm_{s_0}(\C^{E_1})), s_0=(s_e=1)_{e\in E_1}$ 
with some element in $C^m(U^n,\mathcal{O}_{s_0}(\C^{E_2})), s_0=(s_e=1)_{e\in E_2}$ 
yields an element of $ \calD^{\prime,m}(U^n, \calm_{s_0}(\C^{E_1\cup E_2})), s_0=(s_e=1)_{e\in E_1\cup E_2}$ 
by Lemma~\ref{l:factorizationgmdtimesholo} proved in the appendix.
\end{proof}

The next Theorem is the main result from the present section and
summarizes all reduction steps performed above~:
\\
\\
\fbox{
\begin{minipage}{0.94\textwidth}
\begin{thm}[Reduction Theorem]\label{t:technicalreduction}
Assume that for every graph $G$, there is an integer $m(G)$, such that for any $x\in M$, there is a chart $U_x$ of $M$ around $x$
such that for all $\overrightarrow{k}$,
$I_{G,\overrightarrow{k}}(s)|_{U_x^n}$ admits an analytic continuation in 
$\mathcal{D}^{\prime,m(G)}\left(U_x^n,\calm_{s_0}(\C^{E(G)}) \right)$, $s_0=(s_e=1)_{e\in E(G)}$. 

Then for a given graph $G$, let $m=\sup_{G^\prime\subset G} m(G^\prime) $, for any $p>\frac{d+m}{2}-1$,
we have a decomposition
\begin{eqnarray}
t_G(s)=\sum_{G^\prime\subset G}\mathbf{m}_{G^\prime}(s)h_{G\setminus G^\prime}(s) 
\end{eqnarray}
where the sum runs over induced subgraphs $G^\prime$ of $G$, 
$\mathbf{m}_{G^\prime}(s)=\sum_{\overrightarrow{k}\in \{0,\dots,p\}^{E(G^\prime)}}
 I_{G^\prime,\overrightarrow{k}}(s)\in \mathcal{D}^{\prime,m}(M^{V(G^\prime)},\calm_{s_0}(\C^{E(G^\prime)}) )$
and 
$h_{G\setminus G^\prime}(s) \in C^m(M^{V(G\setminus G^\prime)},\mathcal{O}_{s_0}(\C^{E(G)\setminus E(G)^\prime}))$. 
In particular, $t_G(s)$ extends
as an element in $\calD^\prime(M^{V(G)},\calm_{s_0}(\C^{E(G)})) $, $s_0=(s_e=1)_{e\in E(G)}$.
\end{thm}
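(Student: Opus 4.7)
The plan is to chain together the reduction steps already established in this section, namely Lemma~\ref{lem:diag}, Proposition~\ref{p:Binreduction}, the spectral projector splitting of $P_G$, and Proposition~\ref{keypropmeroIgks}, to produce the desired explicit decomposition. First I would invoke the Popineau--Stora Lemma~\ref{lem:diag} to restrict the analytic continuation problem for $t_G(s)$ to a neighborhood $U_x^{V(G)}$ of the deepest diagonal, since off this diagonal the Green kernels $\greenf^{s_e}$ are smooth and holomorphic in $s_e$ by Lemma~\ref{keylemma}, and the gluing Lemma~\ref{l:gluing} then recovers the global statement.

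Next I would use Proposition~\ref{p:Binreduction} to write $t_G(s) = \sum_{E_1 \cup E_2 = E(G)} P_{G(E_1)}(s) \prod_{e \in E_2} \greenf^{s_e}_{\leq}$, where each factor $\prod_{e \in E_2}\greenf^{s_e}_{\leq}$ lies in $C^\infty(M^{V(G(E_2))}, \mathcal{O}(\mathbb{C}^{E_2}))$. Each $P_{G(E_1)}(s)$ is then split using the spectral projector $\Pi$ to yield $P_{G(E_1)}(s) = \sum_{E \subset E_1} I_{G(E)}(s) \prod_{e \in E_1 \setminus E} \Pi(x_{i(e)}, x_{j(e)})/s_e$, where the projector factors $\Pi/s_e$ are smooth and holomorphic near $s_e=1$. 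So it suffices to analytically continue each $I_{G(E)}(s)$.

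The core step is to insert the Minakshisundaram--Pleijel asymptotic expansion (Theorem~\ref{heatasympt}) up to order $p$ into each heat kernel $K_{\ell_e}$ appearing in $I_G(s)$, writing $K_{\ell_e} = \sum_{k_e=0}^{p} \frac{e^{-\bfd^2/4\ell_e}}{(4\pi \ell_e)^{d/2}} a_{k_e} \psi(\bfd^2) \ell_e^{k_e} + h_p(\ell_e, \cdot, \cdot)$. Distributing the product over edges gives a sum over partitions $E(G) = E_1 \cup E_2$ (where $E_1$ carries the asymptotic terms and $E_2$ the remainders) and, on $E_1$, a further sum over labels $\overrightarrow{k} \in \{0,\dots,p\}^{E_1}$, producing exactly the amplitudes $I_{G(E_1), \overrightarrow{k}}(s)$. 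By the hypothesis these lie in $\calD^{\prime,m}(U_x^{V(G)}, \calm_{s_0}(\C^{E_1}))$ after analytic continuation. For the remainder factors on $E_2$, the bounds $|D_x^m h_p(\ell_e, x, y) \ell_e^{s_e-1}| \leq C \ell_e^{p - d/2 - m/2 + s_e - 1}$ in Theorem~\ref{heatasympt} ensure, for $p > \frac{d+m}{2}-1$ and $s_e$ near $1$, that $\frac{1}{\Gamma(s_e)}\int_0^1 h_p \ell_e^{s_e-1} d\ell_e$ defines an element of $C^m(U_x^{V(G)}, \mathcal{O}_{s_0}(\C^{E_2}))$; the extra $\log \ell_e$ factor arising in the differentiation in $s_e$ is absorbed by an additional $\ell_e^\varepsilon$.

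Finally, I would invoke Lemma~\ref{l:factorizationgmdtimesholo} to conclude that the product of a distributional meromorphic germ of order $m$ with a holomorphic family of $C^m$ functions in complementary variables lies in $\calD^{\prime,m}(U_x^{V(G)}, \calm_{s_0}(\C^{E(G)}))$, and Lemma~\ref{l:gluing} to pass from the local chart $U_x$ to the global configuration space $M^{V(G)}$. Relabeling the partition yields the stated formula $t_G(s) = \sum_{G' \subset G} \mathbf{m}_{G'}(s) h_{G \setminus G'}(s)$. The main obstacle is purely bookkeeping: keeping the various partition sums, the projector splitting, and the heat kernel expansions organized so that every factor lands either in the meromorphic-germ-of-distributions class or in the smooth-holomorphic class, with orders $m(G')$ uniformly bounded by $m = \sup_{G' \subset G} m(G')$ so that a single choice of $p$ makes all remainder integrals converge and Lemma~\ref{l:factorizationgmdtimesholo} applies uniformly.
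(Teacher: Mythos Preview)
Your proposal is correct and follows essentially the same approach as the paper: you chain the Popineau--Stora localization, the cube reduction via $\greenf^s_\leq/\greenf^s_\geq$, the spectral projector splitting, and the heat expansion into labelled amplitudes $I_{G',\overrightarrow{k}}$, then invoke the hypothesis together with Lemma~\ref{l:factorizationgmdtimesholo} and the gluing Lemma~\ref{l:gluing}. The paper's proof compresses these three reductions into a single sum over four-fold partitions $E_1\cup E_2\cup E_3\cup E_4=E(G)$, but the content and the order of ideas are the same.
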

\end{minipage}
}
\\
\\
So the above Theorem allows to reduce the proof
of Theorem \ref{mainthmintro} to the analytic continuation 
of the
simpler objects $I_{G,\overrightarrow{k}}(s)$ if we can control the distributional
order of $I_{G,\overrightarrow{k}}(s)$ independently 
of the label $\overrightarrow{k}\in \N^{E(G)}$.

\begin{proof}
We
use the following
decomposition formula which summarizes the above three reduction steps, namely 
the reduction on cubes, the elimination of the spectral projector and
the extraction of labelled graphs~:
\begin{eqnarray*}
t_G(s)|_{U_x^n}&=&\sum_{E_1\cup E_2\cup E_3\cup E_4=E(G)}\left(\sum_{\overrightarrow{k}\in \{0,\dots,p\}^{E_1}} \underbrace{I_{G(E_1),\overrightarrow{k}}(s)}\right)\prod_{e\in E_2} \left( \frac{1}{\Gamma(s_e)}\int_0^1 h_{p}(\ell_e,x,y)\ell_e^{s_e-1}d\ell_e \right)\\
&\times&\prod_{e\in E_3} \frac{\Pi(x_{i(e)},y_{j(e)})}{s_e} \prod_{e\in E_4} \greenf^{s_e}_{\leq}
\end{eqnarray*}
where the sum runs over partitions $E_1\cup E_2\cup E_3\cup E_4=E(G)$. Then
consider $m=\sup_{E_1\subset E(G)} m(G(E_1))$ which is 
the supremum of distributional orders $m(G(E_1))$ for $E_1\subset E(G)$,
$m$ is finite by assumption and bounds the distributional order of
all the terms $I_{G(E_1),\overrightarrow{k}}(s)$ underbraced. Moreover,
we saw in the proof of Proposition \ref{keypropmeroIgks} 
that if we choose
$p>\frac{d+m}{2}-1$,
then the product
$$\prod_{e\in E_2} \left( \frac{1}{\Gamma(s_e)}\int_0^1 h_{p}(\ell_e,x,y)\ell_e^{s_e-1}d\ell_e \right)
\prod_{e\in E_3} \frac{\Pi(x_{i(e)},y_{j(e)})}{s_e} \prod_{e\in E_4} \greenf^{s_e}_{\leq}$$ 
belongs to $C^m(M^{V(G(E_2\cup E_3\cup E_4))},\mathcal{O}_{s_0}(\C^{E_2\cup E_3\cup E_4}))$, $s_0=(s_e=1)_{e\in E_2\cup E_3\cup E_4}$.
Therefore the whole product $t_G(s)\in \mathcal{D}^{\prime}(M^{V(G)},\calm_{s_0}(\C^{E(G)}))$, $(s_e=1)_{e\in E(G)}$.
The above complicated formula
can be written
very concisely as
\begin{eqnarray*}
t_G(s)=\sum_{G^\prime\subset G}\mathbf{m}_{G^\prime}(s)h_{G\setminus G^\prime}(s) 
\end{eqnarray*}
where the sum runs over induced subgraphs $G^\prime$ of $G$, $\mathbf{m}_{G^\prime}(s)=\sum_{\overrightarrow{k}\in \{0,\dots,p\}^{E(G^\prime)}} I_{G^\prime,\overrightarrow{k}}(s) $ 
and $h_{G\setminus G^\prime}=\prod_{e\in E_2} \left( \frac{1}{\Gamma(s_e)}\int_0^1 h_{p}(\ell_e,x,y)\ell_e^{s_e-1}d\ell_e \right)
\prod_{e\in E_3} \frac{\Pi(x_{i(e)},x_{j(e)})}{s_e} \prod_{e\in E_4} \greenf^{s_e}_{\leq}$
where $E_2\cup E_3\cup E_4$ forms a partition of $E(G)\setminus E(G^\prime)$.
\end{proof}

\section{Desingularization of parameter space.}

Now that we reduced
the proof of Theorem \ref{mainthmintro} to the proof of Theorem
\ref{t:technicalreduction}, we start by studying  in local coordinates the
amplitudes $I_{G,\overrightarrow{k}}(s)\in \mathcal{D}^\prime(M^{V(G)})$ corresponding to labelled
graphs $(G,\overrightarrow{k})$.

\paragraph{Fixing charts.}

For any $x\in M$, take a coordinate chart $U$ around $x$ such that $U\cong \R ^d$ and  
$\bar U \subset M$ is compact and 
$\mathbf{d}(y_1, y_2)<\frac {\varepsilon}2 $ for any $y_1, y_2 \in U$. Since the volume form $dv$ on a Riemannian manifold
reads $\vert\det(g) \vert d^dx$ in local coordinate chart, we may absorb
the smooth function $\vert\det(g) \vert$ in the test function $\varphi$ and forget
about the determinant of the metric in the local coordinate system.
We number the vertices of $G$ by $\{1,\dots,n\}$,
and the edges by integers $\{1,\dots,E\}$. For every edge
$e$, we associate vertices $i(e),j(e)$ which are incident to $e$.
Then for a test function $\varphi$ with $supp(\varphi )\subset U^n$,
\begin{eqnarray*}
&&\langle I_{G,\overrightarrow{k}}(s),\varphi\rangle\\
&=&\frac{1}{(4\pi)^{\frac{dE}{2}}}\prod_{e=1}^E\frac{1}{\Gamma(s_e)}\int_{[0,1]^E}d\ell_1\dots d\ell_E\left(\int_{U^n} \prod_{e=1}^E  \exp\left(-\frac{\mathbf{d}^2}{4\ell_e}\right)\ell_e^{s_e+k_e-\frac{d}{2}-1}a_{k_e}\psi(\mathbf{d}^2)\tilde \varphi d^dx_1\dots d^dx_n\right)
\end{eqnarray*}
where $\tilde \varphi=\vert\det(g) \vert \varphi$. 
This formula is well--defined when
$Re(s_e)>\frac{d}{2}, \forall e\in \{1,\dots,E\}$
since the integration on $[0,1]^E$ is absolutely convergent, the integral
on $U^n$ converges absolutely by compactness of the support of $\varphi$, hence
we can integrate in order by 
Fubini Theorem. Furthermore, arguing as in the proof of Proposition \ref{keypropmeroIgks} 
show that $\langle I_{G,\overrightarrow{k}}(s),\varphi\rangle$ is holomorphic in $s\in \mathbb{C}^E$
when $Re(s_e)>\frac{d}{2}$.

By our choice of $U$, $\mathbf{d}^2$ is smooth on $U\times U$, it is enough to prove that~:
$$
\frac{1}{(4\pi)^{\frac{dE}{2}}}\int_{[0,1]^E}d\ell_1\dots d\ell_E\left(\int_{\R ^{dn}} \prod_{e=1}^E  \exp\left(-\frac{\mathbf{d}^2}{4\ell_e}\right)\ell_e^{s_e+k_e-\frac{d}{2}-1}a_{k_e}\psi (\mathbf{d}^2)\tilde \varphi d^dx_1\dots d^dx_n\right)
$$
extends to a meromorphic germ of distribution at $(s_e=1)$. Note that this argument 
also applies to the case where $M=\R^d$ with constant metric $g$ 
and $P=-\Delta_g+m^2, m\in \R_{\geqslant 0} $.

\subsection{Smoothness problems and the need to resolve singularities.}

Assume we work on flat space $\R^d$, then to study the analytic continuation
of $I_{G,\overrightarrow{k}}$, we need to study 
integrals of the form~:
$$ \int_{[0,1]^E}d\ell_1\dots d\ell_E\left(\int_{\R ^{dn}} \prod_{e=1}^E  \exp\left(-\frac{\vert x_{i(e)}-x_{j(e)} \vert^2}{4\ell_e}\right)\ell_e^{s_e+k_e-\frac{d}{2}-1}a_{k_e}\psi (\mathbf{d}^2)\tilde \varphi d^dx_1\dots d^dx_n\right) .$$
The analytic continuation would come from integration by parts on the
cube $[0,1]^E$
w.r.t. the variables
$(\ell_1,\dots,\ell_E)$. However, 
we see immediately that
$e^{-\vert x-y\vert^2/4\ell}$ is not
a smooth function of $(\ell,x,y)\in [0,1]\times \mathbb{R}^d\times \mathbb{R}^d$.
The problem occurs at the set
$X=\{\ell=0,x-y=0\}\subset [0,1]\times \mathbb{R}^d\times \mathbb{R}^d$.
A solution in global analysis is to 
consider the following smooth map~:
$$\pi: (t,x,h)\in [0,1]\times (\mathbb{R}^d)^2\mapsto (t,x,x+\sqrt{t}h)\in [0,1]\times (\mathbb{R}^d)^2 .$$
Note that after pull--back by $\pi$, 
we find that $e^{-\vert x-y\vert^2/4\ell}\circ \pi(x,t,h)= e^{-\vert h\vert^2/4} $
which is now a smooth function near
the preimage $\pi^{-1}\left(X \right)=\{t=0\}$
in $[0,1]\times (\mathbb{R}^d)^2$. We say that we resolved the singularities 
of $e^{-\vert x-y\vert^2/4\ell}$.
For a discussion of why one needs to use blow--ups
to study heat kernels and applications to index theory, 
the reader is referred to~\cite[p.~253]{melroseaps}.
Similarly, the 
product of exponentials 
$  \prod_{e=1}^E  \exp\left(-\frac{\vert x_{i(e)}-x_{j(e)} \vert^2}{4\ell_e}\right) $
appearing in Feynman amplitudes
is also not smooth on the whole domain
of integration $(\ell_1,\dots,\ell_E,x_1,\dots,x_n) \in [0,1]^E\times \mathbb{R}^{dn}$
and integration by parts cannot be done.
It follows that we must resolve the products
$  \prod_{e=1}^E  \exp\left(-\frac{\vert x_{i(e)}-x_{j(e)} \vert^2}{4\ell_e}\right) $
to make them smooth which is discussed in paragraph~\ref{ss:blowup}.
Such resolution of singularities 
were studied by Speer on flat space building on the work of Hepp. Also,
when $(M,g)$ is an analytic Riemannian manifold or when $M=\R^d$ 
with constant Euclidean metric,
one can use Hironaka's resolution of singularities
as in~\cite{Atiyah} or Bernstein--Sato polynomials
to regularize Feynman amplitudes~\cite{dang2015complex,Herscovich}.
However,
on a Riemannian manifold $(M,g)$, 
if for all $m\in M$,
there is an open subset $U$ containing $m$ 
and a local coordinate system $(x_1,\dots,x_d):U\subset M\mapsto \R^d$ 
such that for every $(m_1,m_2)\in U^2$,
$\textbf{d}( m_1,m_2 )=\sqrt{\sum_{i=1}^d (x_i(m_1)-x_i(m_2))^2 }$ 
\textbf{then $(M,g)$ is flat}. Otherwise for \textbf{generic} Riemannian
manifolds $(M,g)$, it is not possible to find good coordinates
to make the distance function locally quadratic because of curvature.
This makes our resolution of singularities
more difficult to handle than the one appearing in the work of Speer and the
fact that we work in the $C^\infty$ case and not in the analytic or algebraic category 
prevents us from using 
directly Hironaka's resolution of singularities 
or Bernstein--Sato polynomials.
Following the tradition in QFT~\cite{RivasseauGurau,Rivasseau}, 
our strategy is essentially combinatorial and our blow-ups are 
encoded by spanning trees of Feynman graphs whose definition
is recalled in the next paragraph.

\subsection{Spanning trees of metric graphs.}

\delete {We consider the enlarged
configuration space
$(x_1,\dots,x_n,\lambda_1,\dots,\lambda_E)\in \zb {(\R ^d) ^n}\times [0,1]^E$.}
 Let us first collect some definitions and classical results on graphs which are close to~\cite[paragraph 2.1]{krajewski2010topological}.
Recall that for all graph we consider in the present paper, since we assume the graph has no self-loop, every edge $e$
is adjacent to two different vertices.
\begin{defi} For a graph $G$,
\begin {itemize}
\item a path from vertex $u$ to $w$ in a graph $G$ is a sequence $(u=v_1, e_1, v_2, \cdots, v_{n}, e_n, v_{n+1}=w)$, where $v_i \in V(G)$, $e_j\in E(G)$ such that the vertices bounding $e_i$ are $v_{i}$ and $v_{i+1}$, $u$ is the initial vertex and $w$ is the terminal vertex, $n$ is called the length. A path is simple if all the edges are distinct. If $u=w$, it is called a 
\textbf{cycle}.
%\item In a graph $G$, for two paths $(p_1, p_2)$, if the terminal vertex of $p_1$ is the same as the initial vertex of $p_2$, that is $p_1=(u=v_1, e_1, v_2, \cdots, v_{n}, e_n, v_{n+1}=v)$, $p_2=(v=u_1, f_1, u_2, \cdots, f_k, u_{k+1}=w)$, then their {\bf composition} is the new path
%    $$p_1\circ p_2=(u=v_1, e_1, v_2, \cdots, v_{n}, e_n, v, f_1, u_2, \cdots, f_k, u_{k+1}=w).
%    $$
\item
The set of subgraphs is ordered as follows, we say $G_1\subset G_2$ for
two subgraphs $G_1,G_2$ of $G$ if $E(G_1)\subset E(G_2)$.
A forest $T$ is a graph
without any simple cycle and a tree is a connected forest.
\item A {\bf spanning forest} $T$ of a graph $G$ is a subgraph of $G$ which is a forest and is maximal for the inclusion relation among subgraphs which are forests. If $T$ is a tree, it is called \textbf{spanning tree}.
For any graph $G$,  we will often use
the following equivalent characterizations of spanning forests
which is a classical result in graph 
theory~\cite[p.~40-41]{lefschetzapplications}.
A subgraph $T\subset G$ is a spanning forest if and only if $T$ is a forest whose complement contains $b_1(G)$ edges~:
\begin{equation}\label{e:spanning}
b_1(G)=\vert E(G) \vert-\vert E(T)\vert 
\end{equation}
where $b_1(G)$ is the first Betti number of $G$.
\item To a metric graph $(G,\ell)$, a metric filtration of $G$ is an increasing family of subgraphs
$G_1\subset G_2 \subset \dots \subset G_E$ where $G_i$ is \textbf{induced by the shortest $i$ edges} where $E=\vert E(G)\vert$ is the number
of edges in $G$. For a strict metric graph, the metric filtration is unique.
\item For every forest $T\subset G$ and every subgraph $G_i$ of $G$, we define
$T|_{G_i}$ as the subgraph of $G_i$ \textbf{induced by} the subset of edges $E(T)\cap E(G_i)\subset E(G_i)$.
We will call $T|_{G_i}$ \textbf{the trace} of $T$ in $G_i$.
\item If $T$ is a subgraph of $G$ induced by $E(T)$, and $e\in E(G)\setminus E(T)$
then we define $T\cup e$ as the subgraph of $G$ induced by $E(T)\cup e$. For every edge $e\in E(G)$, the graph $G\setminus e$ is the subgraph induced
by $E(G)\setminus e$.
\end{itemize}

\end{defi}

\begin{defi}
For any permutation $\sigma\in S_E$ of $\{1,\dots,E\}$, the
simplex $\mathbf{\Delta}_\sigma$~:
\begin{equation}
 \Delta_\sigma=\{\ell_{\sigma(1)}< \dots < \ell_{\sigma(E)} \}.
\end{equation}
is called a \textbf{sector} of $[0,1]^E$.
\end{defi}
Before we proceed, let us remark that
for a graph $G$, an element $\ell\in [0,1]^{E(G)}$, which is a map $\ell: E(G)\to [0,1]$, naturally defines a metric graph $(G, \ell)$.
To a strict metric graph $G$, the metric induces a natural strict ordering of edges
by the length which defines an element $\ell\in [0,1]^{E(G)}$ in a unique sector.
The next Theorem, due to Kruskal, aims to show how from
a strict connected metric graph $(G,\ell)$, one can
produce some algorithm which
extracts a unique spanning tree $T$ in $G$.
\delete {For each sector $\Delta_\sigma$, it assigns to a graph $G$ a tree
$T_\sigma(G)$ which is spanning in $G$.
}
\begin{thm}[Kruskal]\label{t:inequalitiestree}
For a connected strict metric graph $(G,\ell)$,
let $G_1 \subset \dots \subset G_E=G$ be the unique metric filtration of $G$. 
We denote by $N_i$ the first Betti number $b_1(G_i)$ of $G_i$.
Then there exists a unique spanning tree $T$ of $G$ such that for all $i\in\{1,\dots,E\}$,
its trace $T|_{G_i}$
is a spanning forest of $G_i$.
\end{thm}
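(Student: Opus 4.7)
The plan is to prove this by running Kruskal's algorithm on the ordering induced by the length function $\ell$, then verifying both existence and uniqueness using the characterization of spanning forests given in equation (\ref{e:spanning}), namely $|E(T)| = |E(G)| - b_1(G)$.

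For existence, let $e_1, \ldots, e_E$ denote the edges listed in increasing length (this ordering is unambiguous because $(G,\ell)$ is strict), so that $G_i$ is the subgraph induced by $\{e_1,\ldots,e_i\}$. Construct $T_i \subset G_i$ inductively by $T_0 = \emptyset$ and
\begin{equation*}
T_i = \begin{cases} T_{i-1} \cup e_i & \text{if } T_{i-1}\cup e_i \text{ is a forest,} \\ T_{i-1} & \text{otherwise.} \end{cases}
\end{equation*}
Set $T = T_E$. The key observation, proved by induction on $i$, is that $T_i$ is a spanning forest of $G_i$. The inductive step relies on the following claim: the endpoints of $e_i$ lie in the same connected component of $T_{i-1}$ if and only if they lie in the same connected component of $G_{i-1}$. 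This is immediate once $T_{i-1}$ is known to be a spanning forest of $G_{i-1}$, since then $T_{i-1}$ and $G_{i-1}$ have identical partitions into components. Consequently $b_1(G_i) - b_1(G_{i-1})$ equals $0$ if we add $e_i$ and $1$ if we reject it, which matches $|E(T_i)| - |E(T_{i-1})|$ exactly so that $|E(T_i)| = i - N_i = |E(G_i)| - b_1(G_i)$; together with the fact that $T_i$ is a forest by construction, this yields the spanning forest property via (\ref{e:spanning}). Taking $i = E$ and using connectedness of $G$ gives $|E(T)| = |V(G)| - 1$; a short argument rules out missed vertices (any isolated vertex would have an incident edge whose processing would necessarily enlarge $T$), so $T$ is a genuine spanning tree, and $T|_{G_i} = T_i$ by construction.

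For uniqueness, suppose $T$ and $T'$ are two spanning trees of $G$ whose traces $T|_{G_i}$ and $T'|_{G_i}$ are spanning forests of every $G_i$. If $T \neq T'$, let $i$ be the smallest index such that $e_i$ belongs to exactly one of them, say $e_i \in T \setminus T'$. By minimality, $T|_{G_{i-1}} = T'|_{G_{i-1}}$; hence
\begin{equation*}
|E(T|_{G_i})| = |E(T'|_{G_{i-1}})| + 1 = |E(T'|_{G_i})| + 1.
\end{equation*}
But both $T|_{G_i}$ and $T'|_{G_i}$ are spanning forests of $G_i$, so by (\ref{e:spanning}) each has cardinality $|E(G_i)| - b_1(G_i) = i - N_i$, a contradiction.

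This approach has no substantive obstacle; the entire proof is essentially the classical correctness of Kruskal's greedy algorithm rephrased in the language of the paper. The only real care required is in the bookkeeping step linking the increments of $b_1(G_i)$ to the decisions made by the algorithm, which is handled by the observation about components above.
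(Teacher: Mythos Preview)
Your proof is correct and follows essentially the same approach as the paper's: both run Kruskal's greedy algorithm on the edge ordering induced by $\ell$, verify inductively that the resulting $T_i$ is a spanning forest of each $G_i$ via the edge-count characterization (\ref{e:spanning}), and deduce uniqueness from the fact that the increment of $b_1$ at each step forces whether $e_i$ is kept or discarded. The only cosmetic difference is that the paper describes $T$ top-down (as $G$ minus the edges $e_{i_j}$ at which $b_1$ jumps) whereas you build it bottom-up, and your uniqueness argument is phrased as a direct contradiction on edge counts rather than as a forward induction; these are equivalent.
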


\begin{proof} Let $\ell: E(G)\to (0, \infty)$ be the length function.
We shall assume that the edges $E(G)$ are numbered as
$\{e_1,\dots,e_E\}$ in
such a way that $i<j\implies \ell ({e_i})< \ell ({e_j})$.
We construct the tree by the Kruskal algorithm~\cite{kruskal1956shortest} as described in~\cite[p.~107]{RivasseauGurau}.
Notice that the requirement that $T$ is a tree implies that $T$ together with all traces $T|_{G_i}$ 
contain no simple cycles. So $T|_{G_i}$ is a forest and therefore its complement in $G_i$
contains at least $N_i$ edges of $G_i$.
Also notice that for any graph $G$, for every $e\in E(G)$, we have the inequality
$0\leqslant b_1\left(G\right)-b_1(G\setminus e)\leqslant 1$. This implies that the sequence $N_1, N_2, \cdots$ is increasing.

Now we can construct the desired spanning tree $T$: start from $G_1$ which has only one edge $\{e_1\}$ hence contains no simple cycle, $N_1=0$.
Let us denote by $i_1$ the
first integer such that $b_1(G_{i_1})=1$, similarly define  $\{i_2,\dots,i_{N_E}\}\subset \{1,\dots,E\}$ such that $b_1(G_{i_2})=2,\dots,b_1(G_{i_{N_E}})=N_E=b_1(G)$ and every $i_j$
is the smallest integer so that $b_1(G_{i_j})=j$ for any $j=1, \cdots , N_E$.
Set $i_0=1$,
then we have an increasing family of subgraphs $G_{i_0}\subset G_{i_1}\subset \dots \subset G_{i_{N_E}}\subset G$.
\delete {where $G_{i_{j}}\setminus G_{i_{j-1}}$
contains exactly one loop \zb {This is not true, see theta graph} for every $j=1,\dots,N_E$.
For}
Let $G_{i}= G_{i-1}\cup e_{i}$ and we set
$T=G\setminus \cup_{j=1}^{N_E} e_{i_j}.$
We prove that the subgraph $T$ constructed
above has the property that its trace $T|_{G_i}$ to every subgraph $G_i$
is a spanning
forest in $G_i$ by induction for $j=1,\dots,E$.
First, we initialize the induction for $j=1$, $G_1$ contains just one edge hence $T|_{G_1}=G_1$ is
a spanning tree in $G_1$. Assume that $T|_{G_k}$ is a spanning forest in $G_k$, then
there are two cases:

Case 1~: $b_1(G_k)=b_1(G_{k+1})$ i.e. $N_k=N_{k+1}$
so $e_{k+1}\in T$, and
$T|_{G_{k+1}}=T|_{G_k}\cup e_{k+1}$, and let us prove that
$T|_{G_k}\cup e_{k+1}$ is a spanning forest in $G_{k+1}$.
First $T|_{G_k}\cup e_{k+1}$ contains no simple cycle. Since if it contained a simple cycle $\gamma$, 
then
$e_{k+1}$ would belong to $\gamma$ and therefore $T|_{G_k}$ would be
a spanning forest in $G_{k+1}$, so $b_1(G_{k+1})=b_1(G_{k})+1$ by equation \ref{e:spanning} which contradicts $b_1(G_k)=b_1(G_{k+1})$.
$T|_{G_k}\cup e_{k+1}$ is thus a forest.
$T|_{G_k}$ is spanning in $G_k$ hence
$T|_{G_k}\cup e_{k+1}$ meets all vertices of $G_{k+1}$ and is spanning
in $G_{k+1}$.

Case 2~: $b_1(G_k)+1=b_1(G_{k+1})$ and by definition
$T|_{G_{k+1}}=T|_{G_k}$. $T|_{G_k}$ is obviously a forest in $G_{k+1}$, its complement in $G_{k+1}$ contains
$b_1(G_k)+1=b_1(G_{k+1})$ edges by construction which implies it is spanning by equation \ref{e:spanning}.

Now we use induction to prove the uniqueness of $T$, in fact, we prove $T|_{G_k}$ is unique for any $k$. 
The initial step is trivial, and in general there are two cases. Either $b_1(G_k)=b_1(G_{k+1})$, then
$T|_{G_{k+1}}=T|_{G_k}\cup e_{k+1}$ or
$b_1(G_k)+1=b_1(G_{k+1})$ then
$T|_{G_{k+1}}=T|_{G_k}$, so our algorithm produces a unique spanning tree.
\end{proof}

\begin{coro}
\label{coro:UniqueCycle}
Let $(G,\ell)$ be a strict metric graph and $T$ be
the unique spanning forest in $T$ from Theorem \ref{t:inequalitiestree}. Then 
for every edge $e\in E(G)\setminus E(T)$, there is a unique simple cycle $\gamma_{e}$ 
in $T\cup e$, such that for any edge $e^\prime\in \gamma_e\setminus \{e\}$, $\ell (e)> \ell (e^\prime)$.
\end{coro}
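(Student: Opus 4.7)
The plan is first to establish the existence and uniqueness of the simple cycle $\gamma_e$ in $T\cup e$ by standard tree arguments, and then to extract the length inequality by identifying $\gamma_e$ with the cycle that caused Kruskal's algorithm in the proof of Theorem \ref{t:inequalitiestree} to reject the edge $e$. Since the hypotheses of that theorem require $G$ to be connected, $T$ is in fact a spanning tree, so between any two vertices $u,v$ of $G$ there is a unique simple path $P_{uv}\subset T$. Taking $u,v$ to be the endpoints of $e$, the set $\gamma_e:=P_{uv}\cup\{e\}$ is a simple cycle in $T\cup e$. For uniqueness, any simple cycle $\gamma\subset T\cup e$ must contain $e$, since otherwise $\gamma\subset T$ would contradict the fact that $T$ is a tree; removing $e$ from such a $\gamma$ yields a simple $u$-$v$ path in $T$ which must equal $P_{uv}$, so $\gamma=\gamma_e$.

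For the length inequality, I would order the edges of $G$ by strictly increasing length as $e_1,\dots,e_E$ and write $e=e_k$. Because $e\notin T$, the Kruskal construction inside the proof of Theorem \ref{t:inequalitiestree} is in Case 2 at step $k$, i.e.\ $b_1(G_k)=b_1(G_{k-1})+1$. From the identity $b_1(H)=|E(H)|-|V(H)|+c(H)$, with $c(H)$ the number of connected components, comparing $G_k$ to $G_{k-1}\cup\{e\}$ shows that a jump of exactly one in $b_1$ forces both endpoints of $e$ to already lie in $V(G_{k-1})$ and to belong to the same connected component of $G_{k-1}$. Since by Theorem \ref{t:inequalitiestree} the trace $T|_{G_{k-1}}$ is a spanning forest of $G_{k-1}$, its restriction to that component is a spanning tree of it, so there is a simple $u$-$v$ path $P'\subset T|_{G_{k-1}}\subset T$. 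Uniqueness of simple paths in the tree $T$ forces $P'=P_{uv}$, hence $P_{uv}\subset G_{k-1}$. Strictness of $\ell$ then gives $\ell(e')<\ell(e_k)=\ell(e)$ for every $e'\in P_{uv}=\gamma_e\setminus\{e\}$, which is the desired conclusion.

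The only delicate point is the Euler-characteristic bookkeeping in the second paragraph: one must verify, by a short case analysis on whether the endpoints of $e_k$ both lie in $V(G_{k-1})$ and whether they belong to the same component, that $b_1(G_k)-b_1(G_{k-1})=1$ singles out precisely the configuration in which $P_{uv}$ is already available inside $G_{k-1}$. Once that Euler-characteristic step is in place, the rest reduces to standard facts about spanning trees and uniqueness of paths in a tree, and does not require any further input from the metric structure beyond the strictness of $\ell$.
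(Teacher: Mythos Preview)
Your proof is correct and follows essentially the same route as the paper: both arguments pin down the unique cycle $\gamma_e$ in $T\cup e$ and then show it already lives inside $T|_{G_{k-1}}\cup e$ by analyzing the Kruskal step at which $e$ was rejected. The paper's version is terser---it simply asserts that $T|_{G_{i_j-1}}\cup e_{i_j}$ contains the cycle $\gamma_e$ (using the notation $e=e_{i_j}$ for the rejected edges)---whereas you spell out the Euler-characteristic bookkeeping that forces the endpoints of $e$ into a single component of $G_{k-1}$ and then invoke uniqueness of paths in $T$; this makes explicit precisely the step the paper leaves to the reader.
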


\begin{proof}
Since $T$ is a spanning tree, $e\in E(G)\setminus E(T)$, so there is a unique simple cycle $\gamma_{e}$ in $T\cup e$. By our construction, if $e=e_{i_j}$, then $T|_{G_{i_j-1}}= T|_{G_{i_j}}$,
and $T|_{G_{i_j-1}}\cup e_{i_j}$ contains only one simple cycle $\gamma_{e}$, so for any edge $e^\prime\in \gamma_e$, $e^\prime\neq e$, $\ell(e)> \ell (e^\prime)$.
\end{proof}
%
%From general property of trees, we deduce the following
%\begin{lemm} Let $(G,\ell)$ be a connected strict metric graph 
%with unique spanning tree $T$, for any pair $(u,v)$ of distinct vertices in $G$ let $p_{u}^{v}$ be the unique path from $u$ to $v$ in $T$. 
%For an edge $e \not \in T$ with vertices $v_1$, $v_2$, let $\gamma$ be 
%the unique simple cycle in $T\cup e$. For any vertex $v_0$, there exists a unique vertex  $u$ in $\gamma$, such that
%$$\boxed{p_{v_0}^{v_1}\circ (v_1,e,v_2) \circ p_{v_2}^{v_0}= p_{v_0}^{u}\circ \gamma '\circ p_{u}^{v_0}  },$$
%where $\gamma '$ is the simple cycle $\gamma $ viewed as a simple cycle with initial and terminal vertex $u$.
%\end{lemm}

\subsection{Approximation of the Riemannian distance in normal coordinates.}
For a smooth Riemannian manifold $(M,g)$, and any $x\in M$, $g(x)$ is an inner product in $T_xM$ which 
induces an isomorphism $g(x): T_xM\to  T_x^*M$ and thus an inner product $g^{-1}(x)$ on $T^*_xM$ by $g^{-1}(x)(w_1,w_2)=g(x)(g^{-1}(x)(w_1), g^{-1}(x)(w_2))$. This defines a smooth metric $g^{-1}$ on $T^*M$.

For every $x\in M$, we will use normal coordinates $(U, \phi, x^\mu )$ around $x$, without loss of generality $U$ is assumed to be geodesically
convex. The use of normal coordinates will be crucial since
it allows us to approximate the squared distance $\mathbf{d}^2(x,y)$ by
$\vert x-y\vert^2$ in local coordinates in Lemma \ref{lemmametric}. 
In some other coordinate chart, this approximation might not be as good.
On $(U, \phi, x^\mu) $, there are two metrics: the Riemannian metric $g$ and the Euclidean metric $h$:
$$h(\frac {\partial}{\partial x ^\mu}, \frac {\partial}{\partial x ^\nu})=h_{\mu\nu}=\delta_{\mu\nu}.$$
For this Euclidean metric, we will use $|x-y|$ to denote the induced distance. We recall that at the origin, we have the identity $g_{\mu\nu}(0)=h_{\mu\nu}=\delta_{\mu\nu}$. The following Lemma which dates back to Hadamard
can be found in \cite[Lemma 8.3 p.~90]{Duistermaat}, \cite[(A.3) p.~31]{nicolaescurandom}, \cite[(38) p.~171]{riesz1949}~:
\begin{lemm}[Hadamard]
\label{l:hadamard}
We denote by $\mathbf{d}$ the Riemannian distance and $\phi =\mathbf{d}^2$.
Then there exists a neighborhood $U$ of the diagonal $\subset M\times M$, such that $\phi\in C^\infty(U)$ and is symmetric, that is $\phi (x,y)=\phi (y,x)$, $\phi$ vanishes along the diagonal
at order $2$ and $\phi$ satisfies the Hamilton--Jacobi equation~:
\begin{eqnarray}\label{e:hadamarddist}
\boxed{g^{-1}\left(d_x \phi(x,y), d_x \phi(x,y)\right)=4\phi(x,y).}
\end{eqnarray}
\end{lemm}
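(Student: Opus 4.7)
The plan is to construct $\phi = \mathbf{d}^2$ through the exponential map so that smoothness and symmetry come essentially for free, and then to derive the Hamilton--Jacobi identity in normal coordinates, where $\phi$ becomes Euclidean and the Gauss lemma controls the metric in the radial direction.

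First I would construct $U$. The smooth map $F\colon(x,v)\in TM \mapsto (x,\exp_x(v))\in M\times M$ has invertible differential at every zero section point $(x,0)$; by the inverse function theorem together with a standard shrinking argument there exists an open neighborhood $\mathcal{V}\subset TM$ of the zero section on which $F$ restricts to a diffeomorphism onto a neighborhood $U\subset M\times M$ of the diagonal. Set $v(x,y):=F^{-1}(x,y)\in T_xM$ and $\phi(x,y):=g_x(v(x,y),v(x,y))$; smoothness of $\phi$ on $U$ is then automatic. For $(x,y)\in U$ the curve $t\mapsto \exp_x(tv(x,y))$ is the unique minimizing geodesic from $x$ to $y$ and has length $|v(x,y)|_{g(x)}$, so $\phi = \mathbf{d}^2$ on $U$. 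Symmetry $\phi(x,y) = \phi(y,x)$ is inherited from the symmetry of $\mathbf{d}$, and quadratic vanishing on the diagonal follows because $\phi\geqslant 0$ attains its minimum on the diagonal, which forces $\phi$ and $d\phi$ to vanish there.

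The core step is the Hamilton--Jacobi equation (\ref{e:hadamarddist}). I would fix $y\in M$ and introduce normal coordinates $(u^1,\dots,u^d)$ based at $y$; then for $x$ near $y$, $x = \exp_y(u)$ and $\phi(x,y) = \sum_i (u^i)^2$, so $d_x\phi = 2\sum_i u^i\, du^i$. The Gauss lemma asserts that the radial vector field $R=\sum_i u^i\partial_{u^i}$ is $g$-orthogonal to the Euclidean spheres, which in these coordinates reads $\sum_j g_{ij}(u)u^j = u^i$; contracting with $g^{ki}(u)$ yields the dual identity $\sum_j g^{kj}(u) u^j = u^k$. Therefore
\begin{equation*}
g^{-1}\bigl(d_x\phi(x,y),\, d_x\phi(x,y)\bigr) \;=\; 4\sum_{i,j} g^{ij}(u)\, u^i u^j \;=\; 4\sum_i (u^i)^2 \;=\; 4\phi(x,y),
\end{equation*}
which is exactly (\ref{e:hadamarddist}). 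The only nontrivial ingredient is the classical Gauss lemma, so I anticipate no real obstacle; all remaining properties of $\phi$ are immediate from the exponential-map construction.
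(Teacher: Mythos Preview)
Your proof is correct and follows the standard route: construct $\phi$ via the exponential map to obtain smoothness and symmetry, then verify the Hamilton--Jacobi identity in normal coordinates using the Gauss lemma in the form $\sum_j g_{ij}(u)u^j = u^i$. The paper itself does not prove this lemma; it simply states it and refers to the literature (Duistermaat, Nicolaescu, Riesz), so your argument supplies strictly more than what the paper contains.

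One minor remark: your justification that $\phi$ vanishes to order $2$ on the diagonal (``$\phi\geqslant 0$ attains its minimum there, so $\phi$ and $d\phi$ vanish'') only shows vanishing to order at least $2$. That this is exactly order $2$ --- i.e.\ that the transverse Hessian is nondegenerate --- follows immediately from your normal-coordinate expression $\phi(x,y)=\sum_i (u^i)^2$, whose Hessian at $u=0$ is $2\,\mathrm{Id}$. You may want to add a sentence to that effect if the exact order matters downstream.
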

Next we state an important Lemma which gives informations
on the jets of the function $\phi=\mathbf{d}^2$ along the
diagonal in $M\times M$.
\begin{lemm}\label{lemmametric}
For $x_0\in M$, if $(U,\phi)$ is a normal coordinate system
around $x_0$ such that $\phi (x_0)=0$ and the square of Riemannian distance $\phi$ is smooth on $U\times U$, then on $U\times U$,
\delete {let $\mathcal{I}_{d_2}$ denote the ideal of functions
vanishing on the diagonal $d_2=\{x=y\}\subset U\times U$
then~:}
\begin{equation}
\boxed{\phi(x,y)-g_{\mu\nu}(x)(x^\mu-y^\mu)(x^\nu-y^\nu),}
\end{equation}
vanishes along the diagonal
at order $3$.
%\begin {itemize}
%\item
%let $\mathcal{I}_{(0,0)}$ denotes the ideal
%of functions
%vanishing at $(0,0)\in U\times U$
%then
%\begin{equation}
%\phi(x,y)-\delta^{\mu\nu}(x^\mu-y^\mu)(x^\nu-y^\nu)\in \mathcal{I}_{(0,0)}^4.
%\end{equation}
%\item
%Let $\mathcal{I}_{d_2}$ denotes the ideal of functions
%vanishing on the diagonal $d_2=\{x=y\}\subset U\times U$
%then~:
%\begin{equation}
%\phi(x,y)-g_{\mu\nu}(x)(x^\mu-y^\mu)(x^\nu-y^\nu)\in \mathcal{I}_{d_2}^3.
%\end{equation}
%\item

%{\red Let $\exp$ denotes the Riemannian exponential map,
%we make the change of variables
%$y=\exp_x(h)$
%and set $\mathcal{I}_{d_2}$ to be the ideal of functions
%vanishing on the diagonal $d_2=\{h=0\}$.
%Then~:
%\begin{equation}
%\phi(x,\exp_x(h))-g_{\mu\nu}(x)h^\mu h^\nu\in \mathcal{I}_{d_2}^3.
%\end{equation}
%}
%\end{itemize}
\end{lemm}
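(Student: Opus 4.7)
My plan is to Taylor expand $\phi(x,y)$ in the second variable around $y=x$ and then identify the quadratic coefficient using the Hamilton--Jacobi equation from Lemma \ref{l:hadamard}.

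First, using smoothness of $\phi$ together with the fact from Lemma \ref{l:hadamard} that $\phi$ vanishes on the diagonal to order $2$, Taylor's theorem in the variable $v = y - x$ gives
$$\phi(x, x+v) = A_{\mu\nu}(x)\, v^\mu v^\nu + R(x,v),$$
where $A_{\mu\nu}(x) := \frac{1}{2}\partial^2_{y^\mu y^\nu}\phi(x,x)$ is smooth and symmetric in $(\mu,\nu)$, and $R(x,v) = O(|v|^3)$ uniformly on compact sets, so that $R$ lies in the cube of the vanishing ideal of the diagonal. The Lemma thus reduces to the pointwise identity $A_{\mu\nu}(x) = g_{\mu\nu}(x)$ in the chosen chart.

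To establish this identity, I would substitute the Taylor expansion into the symmetric counterpart of the Hamilton--Jacobi equation (\ref{e:hadamarddist}), namely
$$g^{\alpha\beta}(y)\, \partial_{y^\alpha}\phi(x,y)\, \partial_{y^\beta}\phi(x,y) = 4\phi(x,y),$$
which follows from the symmetry $\phi(x,y) = \phi(y,x)$ together with (\ref{e:hadamarddist}) applied to the second argument. Differentiating the Taylor expansion yields $\partial_{y^\alpha}\phi(x, x+v) = 2 A_{\mu\alpha}(x)\, v^\mu + O(|v|^2)$, and using $g^{\alpha\beta}(x+v) = g^{\alpha\beta}(x) + O(|v|)$, matching the $v^\mu v^\nu$ coefficients on each side yields the algebraic identity
$$g^{\alpha\beta}(x)\, A_{\mu\alpha}(x)\, A_{\nu\beta}(x) = A_{\mu\nu}(x),$$
i.e.\ $A\, g^{-1}\, A = A$ as symmetric matrices at every $x$.

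The main potential obstacle is then the invertibility of $A(x)$, which is needed to multiply on the left by $A^{-1}$ and conclude $A(x) = g(x)$. This follows from positive definiteness of $A(x)$: since $\phi = \mathbf{d}^2$ and the Riemannian distance is locally comparable to the coordinate Euclidean distance on any compact coordinate neighborhood, there exists $c > 0$ with $\mathbf{d}(x, x+v)^2 \geq c|v|^2$ for all small $v$, forcing $A_{\mu\nu}(x)\, v^\mu v^\nu \geq c|v|^2$. I note that no specific feature of normal coordinates is used in this argument; the order-$3$ identification holds in any smooth chart, although normal coordinates presumably become important for finer jet expansions exploited elsewhere in the paper.
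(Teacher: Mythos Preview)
Your proof is correct and follows essentially the same route as the paper: partial Taylor expansion of $\phi(x,x+v)$ in $v$, substitution into the Hamilton--Jacobi equation to obtain the matrix identity $A\,g^{-1}A=A$, and an invertibility argument to conclude $A=g$.

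The one genuine difference is how invertibility of $A(x)$ is justified. The paper first computes the full $2$-jet of $\phi$ at the origin using the normal-coordinate identity $\phi(0,y)=\|y\|^2$, obtaining $\tfrac{\partial^2\phi}{\partial x^\mu\partial x^\nu}(0,0)=2\delta_{\mu\nu}$, and then invokes continuity to deduce invertibility at nearby diagonal points (possibly shrinking $U$). You instead use the metric comparison $\mathbf{d}(x,x+v)^2\geqslant c|v|^2$ to get positive definiteness of $A(x)$ directly at every $x$. Your argument is cleaner and, as you correctly observe, makes no use of the normal-coordinate hypothesis; the paper's argument is the only place in its proof where normal coordinates enter, and only to anchor the invertibility at one point. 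Either way the conclusion holds in any smooth chart.
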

The proof can be found in appendix \ref{p:lemmametric}. 

\subsection{Resolving singularities using spanning trees.}
\label{ss:blowup}
Let $(G,\ell)$ be a connected strict metric graph with edge set $E(G)$ identified with the set of integers
$\{1,\dots,E\}$ in such a way that $0\leqslant\ell_1 < \dots <\ell_E\leqslant 1$. This means that the
metric graph $(G,\ell)$ lies in a fixed sector $\mathbf{\Delta}=\{0<\ell _1< \dots < \ell _E<1 \}\subset
[0,1]^E$, $\overline{\mathbf{\Delta}}$ denotes its closure 
$\{0\leqslant\ell _1\leqslant \dots \leqslant \ell _E\leqslant 1 \}$. 
It is associated with
a unique spanning tree $T$ by Theorem~\ref{t:inequalitiestree} and the vertices of both graphs $G$ and $T$
are numbered by $\{1,\dots,n\}$.
For any $(i,j)\in \{1,\dots,n\}^2$, we denote by
$\overrightarrow{ij}$ the unique simple path in $T$ from $i$ to $j$.

The product $\prod_{e\in E(G)} e^{-\frac{\mathbf{d}^2(x_{i(e)},x_{j(e)})}{4\ell_e}} \psi(\mathbf{d}^2(x_{i(e)},x_{j(e)}))$, where $\psi$
is the cut--off function from Theorem \ref{heatasympt}, is not smooth
near the algebraic set $X=\cup_{e\in E(G)}\{\ell_e(x_{i(e)}-x_{j(e)})=0  \}\subset \mathbb{R}^{dn}\times\overline{\mathbf{\Delta}}$.
What we do next is give a recipe to resolve the singularities of such products
by some explicit map $\pi$ which is defined as follows~:

\begin{defi}
In the above notations,
the map from $\R^d \times (\mathbb{R}^{d})^{E(T)}\times [0,1]^E$ to $\mathbb{R}^{dn}\times\overline{\mathbf{\Delta}}$ is~:
\begin{eqnarray}\label{e:pi}
\boxed{\pi:\underset{\in \R^d \times (\mathbb{R}^{d})^{E(T)}\times [0,1]^E}{\underbrace{(x,(h_e)_{e\in {E(T)}},(t_k)_{k=1}^E)}} \longmapsto
\underset{\in \mathbb{R}^{dn}\times\overline{\mathbf{\Delta}}}{\underbrace{\left(
(x+\sum_{e\in \overrightarrow{1i}}
(\prod_{j\geqslant e} t_j) h_e)_{i=1}^n,
(\prod_{k\geqslant e}t_k^2)_{e=1}^E
\right)}}}
\end{eqnarray}
where the sum runs over all edges $e$ in the path $\overrightarrow{1i}$.
The map $\pi$ depends on spanning tree $T$ hence on the
strict ordering of
$E(G)$ induced by the metric $\ell$.
\end{defi}
In the sequel, we shall denote elements of the target space
$\mathbb{R}^{dn}\times\overline{\mathbf{\Delta}}$ by
$(x_i,\ell_e)_{1\leqslant i\leqslant n,1\leqslant e\leqslant E}$.
We check that the map $\pi:\R^d \times (\mathbb{R}^{d})^{E(T)}\times [0,1]^E  \mapsto \mathbb{R}^{dn}\times\overline{\mathbf{\Delta}}$ 
is a diffeomorphism outside some subset of measure zero.
\begin{prop}\label{p:blowupdiffeo} The map $\pi$ is a smooth diffeomorphism from
$\R^d \times \mathbb{R}^{d(n-1)}\times (0,1)^E$ to $\R ^{dn}\times \Delta
$.
\end{prop}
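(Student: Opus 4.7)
The plan is to exhibit an explicit smooth inverse. First one checks that the dimensions match: since $T$ is a spanning tree of a connected graph on $n$ vertices, $|E(T)|=n-1$, so the source has dimension $d+d(n-1)+E = dn+E$, matching the target $\R^{dn}\times\Delta$. Smoothness of $\pi$ is immediate because all components are polynomials in $(x,(h_e),(t_k))$.

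The construction of $\pi^{-1}$ splits into two independent pieces, corresponding to the two factors of the target. For the length variables, observe that $\ell_e = \prod_{k\geqslant e} t_k^2$ depends only on $(t_1,\dots,t_E)$. Setting $\ell_{E+1}=1$ by convention, this system is triangular: $t_E^2=\ell_E$, and for $e<E$, $t_e^2 = \ell_e/\ell_{e+1}$. On the open subset $\Delta=\{0<\ell_1<\dots<\ell_E<1\}$ each ratio lies in $(0,1)$, so taking the positive square roots yields a smooth map $\Delta\to(0,1)^E$, $(\ell_e)\mapsto(t_e)$, inverting the $t$-part of $\pi$. In particular $\prod_{j\geqslant e}t_j=\sqrt{\ell_e}>0$ throughout $\Delta$.

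For the spatial variables I would first note that the path $\overrightarrow{11}$ is empty, so the first component of $\pi$ reads $x_1 = x$; this recovers $x$. To recover the $h_e$'s I would orient each edge $e\in E(T)$ toward the endpoint that lies farther from vertex $1$ in $T$, calling it $b(e)$, and the other endpoint $a(e)$. Since $T$ is a tree, the unique simple path $\overrightarrow{1b(e)}$ equals $\overrightarrow{1a(e)}\cup\{e\}$ and all other tree paths $\overrightarrow{1i}$ either contain both endpoints in the same block or avoid $e$. Subtracting the formulas for $x_{b(e)}$ and $x_{a(e)}$ therefore collapses to the single contribution
\begin{equation*}
x_{b(e)}-x_{a(e)} = \Bigl(\prod_{j\geqslant e}t_j\Bigr)h_e = \sqrt{\ell_e}\,h_e,
\end{equation*}
which gives $h_e = \ell_e^{-1/2}(x_{b(e)}-x_{a(e)})$. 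On $\Delta$ the denominator is strictly positive and smooth in $\ell_e$, so the map $(x_i)_{i=1}^n\times(\ell_e)\mapsto(x,(h_e))$ is smooth.

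The formulas above define a smooth map $\pi^{-1}:\R^{dn}\times\Delta\to\R^d\times\R^{d(n-1)}\times(0,1)^E$, and a direct substitution (the $\ell$-part by construction, the $x$-part by induction on the distance in $T$ from vertex $1$) shows $\pi\circ\pi^{-1}=\mathrm{id}$ and $\pi^{-1}\circ\pi=\mathrm{id}$. The only non-routine step is this inductive verification of the spatial identity, and it is the place where the tree structure of $T$ is essential: in a graph with cycles one could not disentangle the $h_e$'s from linear combinations of vertex positions, whereas in a tree each edge is uniquely associated to the pair $(a(e),b(e))$ and each vertex difference $x_{b(e)}-x_{a(e)}$ collapses to a single $\sqrt{\ell_e}\,h_e$ term. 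This is the key obstacle and it is exactly what the spanning-tree choice resolves.
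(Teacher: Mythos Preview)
Your proof is correct and follows essentially the same route as the paper: invert the $t$-variables by the triangular relations $t_e^2=\ell_e/\ell_{e+1}$, then invert the linear spatial map for fixed $t$. Your version is in fact more complete than the paper's, which only asserts that the spatial linear map is invertible and then appeals to a Jacobian computation; you instead write down the explicit smooth inverse $h_e=\ell_e^{-1/2}(x_{b(e)}-x_{a(e)})$, which makes the diffeomorphism immediate without a separate differential check.
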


\begin{proof}
It is one to one
since we can explicitly invert
$\pi$ as
$t_E=\ell_E^{\frac{1}{2}}$,
$t_e=\left(\frac{\ell_{e+1}}{\ell_{e}}\right)^{\frac{1}{2}}\text{ for } e\leqslant E-1$
and the linear map:
$(x,(h_e)_{e\in E(T)})\in \R^d \times \mathbb{R}^{d(n-1)} \mapsto
\left(x_i=x+\sum_{e\in \overrightarrow{1i}}
\left(\prod_{j\geqslant e} t_j\right) h_e\right)_{i=1}^n\in \mathbb{R}^{dn}$
is \textbf{invertible} when $(t_j)_j\in (0,1)^E$.
Then the diffeomorphism property follows from an explicit calculation of the differential of $\pi$ whose determinant
%equals
%$
%\pm\prod_{e=1}^E \left(\prod_{j\geqslant e} t_j\right) \prod_{i=1}^E \frac{d\left(\prod_{j\geqslant i} t_j^2\right)}{dt_i}  \neq 0$
does not vanish
when $t\in (0,1)^E$.
\end{proof}

Finally we may state the main 
Theorem from this section~:
\begin{thm}[Resolution of singularities.]
\label{thm:parabolicblowup}
Let $g$ be a Riemannian metric on $\R ^d$ and $\mathbf{d}:
\R ^d \times \R ^d\mapsto \mathbb{R}_{\geqslant 0}$ be the Riemannian distance 
whose injectivity radius is $\varepsilon$.
Let $\pi$ be the map defined by equation~(\ref{e:pi}).
For any $\psi(t)\in C_c^\infty(\mathbb{R})$
such that $\psi(t)=1$ when $t\leqslant \frac{\varepsilon^2}{4}$ and
$\psi(t)=0$ when $t>\frac{4\varepsilon^2}{9}$,
for every edge
$e\in E(G)$ bounded by the vertices $(i(e),j(e))$,
the pull--back~:
\begin{equation}
\pi^*\left( \psi(\mathbf{d}^2(x_{i(e)},x_{j(e)})) \frac{\mathbf{d}^2(x_{i(e)},x_{j(e)})}{\ell_e} \right)
\end{equation}
defines a smooth function in $\R ^d\times \mathbb{R}^{d(n-1)}\times [0,1]^E$.
\end{thm}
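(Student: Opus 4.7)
\emph{Proof plan.} Since $\pi^{*}(\mathbf{d}^{2}(x_{i(e)},x_{j(e)}))=\ell_{e}\cdot\pi^{*}(\mathbf{d}^{2}/\ell_{e})$, once I show that $\pi^{*}(\mathbf{d}^{2}(x_{i(e)},x_{j(e)})/\ell_{e})$ is smooth on $\mathbb{R}^{d}\times\mathbb{R}^{d(n-1)}\times[0,1]^{E}$, smoothness of the cut-off factor follows by composing with $\psi\in C_{c}^{\infty}(\mathbb{R})$. So the entire problem reduces to analyzing $\pi^{*}(\mathbf{d}^{2}/\ell_{e})$.

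\emph{Step 1 (algebraic factorization).} I first compute $\pi^{*}(x_{i(e)}-x_{j(e)})$ for any edge $e\in E(G)$. Let $P_{e}$ denote the unique simple path in $T$ from $i(e)$ to $j(e)$. By the definition of $\pi$ and telescoping (the common initial segment of $\overrightarrow{1,i(e)}$ and $\overrightarrow{1,j(e)}$ cancels),
$$
\pi^{*}(x_{i(e)}-x_{j(e)})=\sum_{e'\in P_{e}}\pm\bigl(\textstyle\prod_{k\geqslant e'}t_{k}\bigr)h_{e'}.
$$
The key point is that every $e'\in P_{e}$ satisfies $e'\leqslant e$ in the ordering: if $e\in E(T)$ then $P_{e}=\{e\}$, and if $e\notin E(T)$ then $P_{e}=\gamma_{e}\setminus\{e\}$ and Corollary~\ref{coro:UniqueCycle} gives $\ell(e')<\ell(e)$, hence $e'<e$. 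Consequently, using $\sqrt{\ell_{e}}=\prod_{k\geqslant e}t_{k}$,
$$
\prod_{k\geqslant e'}t_{k}=\bigl(\textstyle\prod_{e'\leqslant k<e}t_{k}\bigr)\sqrt{\ell_{e}},
$$
so that $\pi^{*}(x_{i(e)}-x_{j(e)})=\sqrt{\ell_{e}}\,V_{e}$, where $V_{e}$ is a polynomial, hence smooth, vector-valued function of $(h_{\bullet},t_{\bullet})$ on the whole domain.

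\emph{Step 2 (Hadamard expansion).} By Lemma~\ref{lemmametric} applied in the fixed normal chart, on $U\times U$,
$$
\mathbf{d}^{2}(x,y)=g_{\mu\nu}(x)(x^{\mu}-y^{\mu})(x^{\nu}-y^{\nu})+R(x,y),
$$
with $R$ smooth and vanishing to order $3$ on the diagonal. Taylor's theorem then gives $R(x,y)=\sum_{|\alpha|=3}(x-y)^{\alpha}R_{\alpha}(x,y)$ with smooth $R_{\alpha}$. Pulling back and dividing by $\ell_{e}$,
$$
\pi^{*}\!\left(\frac{\mathbf{d}^{2}(x_{i(e)},x_{j(e)})}{\ell_{e}}\right)=g_{\mu\nu}(\pi^{*}x_{i(e)})V_{e}^{\mu}V_{e}^{\nu}+\sqrt{\ell_{e}}\sum_{|\alpha|=3}V_{e}^{\alpha}\,\pi^{*}R_{\alpha}.
$$
Each factor on the right is smooth on $\mathbb{R}^{d}\times\mathbb{R}^{d(n-1)}\times[0,1]^{E}$: the first summand involves only smooth composition with $g_{\mu\nu}$, and in the second, $\sqrt{\ell_{e}}=\prod_{k\geqslant e}t_{k}$ is a polynomial in the $t_{k}$'s, while $\pi^{*}x_{i(e)}$ is smooth so $\pi^{*}R_{\alpha}$ is smooth. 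This yields smoothness of $\pi^{*}(\mathbf{d}^{2}/\ell_{e})$, hence of $\pi^{*}\mathbf{d}^{2}=\ell_{e}\cdot\pi^{*}(\mathbf{d}^{2}/\ell_{e})$, and thus by composition of $\pi^{*}(\psi(\mathbf{d}^{2})\mathbf{d}^{2}/\ell_{e})$.

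\emph{Main obstacle.} The delicate point is that $\sqrt{\ell_{e}}$ must survive on the boundary stratum $\{t_{k}=0\}$ without introducing a non-smooth square-root singularity. This is rescued by the parabolic nature of $\pi$ (the quadratic definition $\ell_{e}=\prod_{k\geqslant e}t_{k}^{2}$) together with the property, guaranteed by Kruskal's spanning tree (Theorem~\ref{t:inequalitiestree}), that the path $P_{e}$ in $T$ connecting the endpoints of any edge $e$ contains only edges with smaller index. Had the spanning tree not been adapted to the metric in this way, the terms $\prod_{e'\leqslant k<e}t_{k}$ appearing in Step~1 would not make sense, and the factorization $\pi^{*}(x_{i(e)}-x_{j(e)})=\sqrt{\ell_{e}}V_{e}$ would fail. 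The combinatorial ordering provided by Kruskal's algorithm is therefore the essential ingredient that makes the blow-up map $\pi$ simultaneously resolve all edges of $G$.
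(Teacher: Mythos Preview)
Your proof is correct and follows essentially the same approach as the paper: both arguments use Lemma~\ref{lemmametric} to expand $\mathbf{d}^2$ to third order along the diagonal and then invoke Corollary~\ref{coro:UniqueCycle} (the Kruskal ordering) to guarantee that the factor $\prod_{k\geqslant e}t_k$ cancels the denominator $\ell_e$. The only organizational difference is that the paper splits into the two cases $e\in E(T)$ and $e\notin E(T)$, whereas you handle both at once by writing $\pi^*(x_{i(e)}-x_{j(e)})=\sqrt{\ell_e}\,V_e$ via the tree path $P_e$; this is a mild streamlining but not a different argument.
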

\begin{proof}
For $e \in E(T)$,
set $x_{j(e)}-x_{i(e)}=\pm h_e$. Recall that $h_e$ has in fact
$d$ components $(h_e^\mu)_{\mu=1}^d$.
Then by Lemma \ref{lemmametric}~:
\begin{eqnarray*}
\pi^*\left(\frac{\mathbf{d}^2(x_{i(e)},x_{j(e)})}{\ell_e} \right)&=& \frac{ g_{\mu\nu}(x_{i(e)})h^\mu_eh^\nu_e (\prod_{i\geqslant e} t_i)^2 +R(\pi^*x_{i(e)},\pi^*x_{i(e)}\pm (\prod_{i\geqslant e} t_i)h_e)}{(\prod_{i\geqslant e} t_i)^2}\\
 &=&g_{\mu\nu}(x_{i(e)})h^\mu_eh^\nu_e+r_e(t,x,h)
\end{eqnarray*}
where
$$r_e(t,x,h)=\frac{R(\pi^*x_{i(e)},\pi^*x_{i(e)}+(\prod_{i\geqslant e} t_i)h_e)}{(\prod_{i\geqslant e} t_i)^2}=\frac{O((\prod_{i\geqslant e} t_i)^3\Vert h_e\Vert^3)}{(\prod_{i\geqslant e} t_i)^2}=O((\prod_{i\geqslant e} t_i)\Vert h_e\Vert^3)$$ vanishes at order $3$ in $(h_e)_{e\in T}$ and at order $1$ in $(t_e)_{e=1}^E$ by Lemma \ref{lemmametric} and $r_e$ is smooth.

If $e\notin E(T)$, then
$$\pi^*\left( \frac{\mathbf{d}^2(x_{i(e)},x_{j(e)})}{\ell_e} \right)=\pi^*\left(\frac{g_{\mu\nu}(x_{i(e)})(x^\mu _{i(e)}-x^\mu_{j(e)})(x^\nu _{i(e)}-x^\nu_{j(e)})}{\ell_e} \right)+\pi^*\left(\frac{R(x_{i(e)},x_{j(e)})}{\ell_e}\right).$$
where
\begin{eqnarray*}
&&\pi^*\left(\frac{g_{\mu\nu}(x_{i(e)})(x^\mu _{i(e)}-x^\mu_{j(e)})(x^\nu _{i(e)}-x^\nu_{j(e)})}{\ell_e} \right)\\
&=& \frac{g_{\mu\nu}(x_{i(e)})(\sum_{e^\prime\in \gamma _e\setminus e}\varepsilon(e^\prime)\left(\prod_{j\geqslant e'} t_j\right)h^\mu _{e^\prime})(\sum_{e^\prime\in \gamma _e\setminus e}\varepsilon(e^\prime)\left(\prod_{j\geqslant e'} t_j\right)h^\nu_{e^\prime})}{(\prod_{i\geqslant e}t_i)^2}
\end{eqnarray*}
where $\varepsilon(e^\prime)=\pm 1$ and $\gamma _e$ is the unique simple cycle in $T\cup e$ in Corollary \ref {coro:UniqueCycle}.
The important fact is that
for all edge $e^\prime$ in the path $\gamma_e\setminus \{e\}$, $e^\prime <e$.
It follows that~:
\begin{eqnarray*}
&&\pi^*\left(\frac{g_{\mu\nu}(x_{i(e)})(x^\mu _{i(e)}-x^\mu_{j(e)})(x^\nu _{i(e)}-x^\nu_{j(e)})}{\ell_e} \right)\\
&=& g_{\mu\nu}(x_{i(e)})(\sum_{e^\prime\in \gamma _e\setminus e}\varepsilon(e^\prime)\left(\prod_{e^\prime\leqslant j<e} t_j\right)h^\mu _{e^\prime})(\sum_{e^\prime\in \gamma _e\setminus e}\varepsilon(e^\prime)\left(\prod_{e^\prime\leqslant j<e} t_j\right)h^\nu_{e^\prime})
\end{eqnarray*}
which is smooth 
since the product $(\pi_{i\geqslant e}t_i)^2$ 
on the denominator 
cancel out with the same 
powers appearing on the numerator.
The same argument applies to the remainder term
$\pi^*\left(\frac{R(x_{i(e)},x_{j(e)})}{\ell_e}\right)$.
\end{proof}

\subsection{Change of variables}
For a test function $\varphi$ supported in $\R ^{dn}$, since the map
$\pi$ is a smooth diffeomorphism from
$\R^d \times \mathbb{R}^{d(n-1)}\times (0,1)^E$ to $\R ^{dn}\times \Delta
$,
we can take it as a change of variables for integration:
\begin{eqnarray*}
\langle I_{G,\overrightarrow{k}}(s),\varphi\rangle
&=&\frac{1}{(4\pi)^{\frac{dE}{2}}}\left(\prod_{e=1}^E\frac{1}{\Gamma(s_e)}\right)\int_{[0,1]^E}d\ell_1\dots d\ell_E\int_{\mathbb{R}^{dn}} \prod_{e=1}^E \exp\left(-\frac{\mathbf{d}^2(x_{i(e)},x_{j(e)})}{4\ell_e}\right)\\
&&\psi(\mathbf{d}^2(x_{i(e)},x_{j(e)}))\ell_e^{s_e+k_e-\frac{d}{2}-1}a_{k_e}(x_{i(e)},x_{j(e)})\tilde \varphi d^dx_1\dots d^dx_n\\
&=&\frac{1}{(4\pi)^{\frac{dE}{2}}}\left(\prod_{e=1}^E\frac{1}{\Gamma(s_e)}\right)\sum_{\sigma\in S_E}\int_{\Delta_\sigma}d\ell_1\dots d\ell_E\int_{\mathbb{R}^{dn}} \prod_{e=1}^E \exp\left(-\frac{\mathbf{d}^2(x_{i(e)},x_{j(e)})}{4\ell_e}\right)\\
&&\psi(\mathbf{d}^2(x_{i(e)},x_{j(e)}))\ell_e^{s_e+k_e-\frac{d}{2}-1}a_{k_e}(x_{i(e)},x_{j(e)})\tilde \varphi d^dx_1\dots d^dx_n
\end{eqnarray*}
where $\sigma$ runs over the group $S_E$ of permutations of $\{1,\dots,E\}$.
The open simplices $\mathbf{\Delta}_\sigma$ do not cover the unit cube $[0,1]^E$. 
However the complement of $\cup_\sigma \mathbf{\Delta}_\sigma$
in $[0,1]^E$ has zero Lebesgue measure. Since for $Re(s_e)_{e=1,\cdots,E}$ large enough, the integral
$\int_{[0,1]^E} \prod_{e\in E(G)} \frac{e^{-\frac{\mathbf{d}^2}{4\ell_e}}}{(4\pi)^{\frac{d}{2}}} a_{k_e}
\psi(\mathbf{d}^2)\ell_e^{k_e-\frac{d}{2}+s_e-1}d\ell_e$ is absolutely convergent and depends holomorphically in $s\in \mathbb{C}^{E}$, 
we have the equality of integrals
$$ \int_{[0,1]^E} \prod_{e\in E(G)} \frac{e^{-\frac{\mathbf{d}^2}{4\ell_e}}}{(4\pi)^{\frac{d}{2}}} a_{k_e}
\psi(\mathbf{d}^2)\ell_e^{k_e-\frac{d}{2}+s_e-1}d\ell_e=\sum_{\sigma\in S(E)}\int_{\mathbf{\Delta}_\sigma } \prod_{e\in E(G)} \frac{e^{-\frac{\mathbf{d}^2}{4\ell_e}}}{(4\pi)^{\frac{d}{2}}} a_{k_e}
\psi(\mathbf{d}^2)\ell_e^{k_e-\frac{d}{2}+s_e-1}d\ell_e ,$$
where both sides depend holomorphically on $(s_e)_e$ for $Re(s_e)_{e=1,\cdots,E}$ large enough.

Now we can carry the change of variables in the fixed sector $\Delta=\{0 <\ell_1<\dots < \ell_E< 1  \}$ (the other terms will be obtained by permutation) which yields an expression of the form~:
\begin{eqnarray*}
&&\int_{\Delta} \prod_{e=1}^E\frac{d\ell_e}{\ell_e}\left(\int_{(\mathbb{R}^d)^n} \prod_{e=1}^E \exp\left(-\frac{\mathbf{d}^2(x_{i(e)},x_{j(e)})}{4\ell_e}\right)\psi(\mathbf{d}^2(x_{i(e)},x_{j(e)}))\ell_e^{s_e+k_e-\frac{d}{2}}
a_{k_e}(x_{i(e)},x_{j(e)})\tilde \varphi d^dx_1\dots d^dx_n\right) \\
&=&2^E\int_{[0,1]^E} \prod_{e=1}^E\frac{dt_e}{t_e}\int_{(\mathbb{R}^d)^n} \pi^*\left(\prod_{e=1}^E \exp\left(-\frac{\mathbf{d}^2(x_{i(e)},x_{j(e)})}{4\ell_e}
\right)\psi(\mathbf{d}^2(x_{i(e)},x_{j(e)}))\right) \\
&\times& \pi^*\left(\tilde \varphi\prod_{e\in E(G)}a_{k_e}\right)
\left(\prod_{e\in E(G)}\ell_e^{(s_e+k_e)-\frac{d}{2}}\right)
\left(\prod_{e\in E(T)}\ell_e^{\frac{d}{2}}\right) d^dx\prod_{e\in E(T)}d^dh_e.
\end{eqnarray*}
We can further simplify the product $\prod_{e\in E(G)}\ell_e(t)^{(s_e+k_e)-\frac{d}{2}}\prod_{e\in E(T)}\ell_e(t)^{\frac{d}{2}}$ as~:
\begin{eqnarray*}
&&\prod_{e\in E(G)}\left(\prod_{i\geqslant e}t_i \right)^{2(s_e+k_e)-d}\prod_{e\in E(T)}\left(\prod_{i\geqslant e}t_i \right)^d
=\prod_{e\in E(T)} \left(\prod_{i\geqslant e}t_i\right)^{2s_e+2k_e}\prod_{e\notin E(T)} \left(\prod_{i\geqslant e}t_i\right)^{2s_e-d+2k_e}\\
&=&\prod_{e\in E(G)}\left(\prod_{i\geqslant e}t_i\right)^{2s_e+2k_e}\prod_{e\notin E(T)}\left(\prod_{i\geqslant e}t_i\right)^{-d}\\
&=&t_E^{2s_E+2k_E}(t_Et_{E-1})^{2s_{E-1}+2k_{E-1}}\dots(t_E\dots t_1)^{2s_1+2k_1}(t_E\dots t_{i_k})^{-d}\dots (t_E\dots t_{i_1})^{-d}
\end{eqnarray*}
where $(i_1<\dots<i_k)\subset \{1,\dots,E\}$ are the numbers decorating the edges in the complement of $E(T)$ and $k=b_1(G)$,
hence~:
\begin{eqnarray}
\prod_{e\in E(T)} \ell_e(t)^{2s_e+2k_e}\prod_{e\notin E(T)} \ell_e(t)^{2s_e-d+2k_e}=
\prod_{e=1}^Et_e^{\sum_{i\leqslant e}2s_i+2k_i-db_1(G_e)}
\end{eqnarray}
where $G_e$ denotes the graph induced by the first $e$ edges
$\{1,\dots,e\}$.
This in turns implies that we obtain the simplified form~:
\begin{eqnarray}
&&\int_{\Delta} \prod_{e=1}^E\frac{d\ell_e}{\ell_e}\left(\int_{(\mathbb{R}^d)^n} \prod_{e=1}^E \exp\left(-\frac{\mathbf{d}^2(x_{i(e)},x_{j(e)})}{4\ell_e}\right)\psi(\mathbf{d}^2(x_{i(e)},x_{j(e)}))\ell_e^{s_e+k_e-\frac{d}{2}}
a_{k_e}(x_{i(e)},x_{j(e)})\tilde \varphi d^dx_1\dots d^dx_n\right) \notag \\
&=&\int_{[0,1]^E} \prod_{e=1}^E\frac{dt_e}{t_e}t_e^{(\sum_{i\leqslant e}2s_i+2k_i)-db_1(G_e)}\int_{(\mathbb{R}^d)^n}A(t_e,x,h_e) d^dx\prod_{e\in E(T)}d^dh_e
\label {eqn:IntSect}
\end{eqnarray}
where
\begin{eqnarray*}
A((t_e)_{e=1}^E,x,(h_e)_{e\in E(T)})&=& 2^E\pi^*\left(\left(\prod_{e=1}^E \exp\left(-\frac{\mathbf{d}^2(x_{i(e)},x_{j(e)})}{4\ell_e}
\right)\psi(\mathbf{d}^2(x_{i(e)},x_{j(e)}))a_{k_e}\right) \tilde \varphi\right).
\end{eqnarray*}

The coordinates $((t_e)_{e=1}^E,x,(h_e)_{e\in E(T)})$ on $[0,1]^E\times \R^d\times (\R^d)^{E(T)}$ will be shortly denoted
by $(t,x,h)$ for simplicity.
We now prove the smoothness in $t\in [0,1]^E$ of the partial integral
$\int_{(\mathbb{R}^d)^n}A(t,x,h) d^dx\prod_{e\in E(T)}d^dh_e$ which is needed to ensure analytic continuation.
\begin{lemm}\label{l:boundonA}
The map
$t\mapsto\int_{(\mathbb{R}^d)^n}A(t,x,h) d^dx\prod_{e\in E(T)}d^dh_e$
belongs to $C^\infty([0,1]^E)$.
\end{lemm}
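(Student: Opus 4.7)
The plan is to apply Leibniz's rule for differentiation under the integral sign: I would verify that $A(t,x,h)$ is $C^\infty$ on $[0,1]^E\times\R^{dn}$ and that $A$ together with all its $t$-derivatives is dominated in $(x,h)$ by an integrable function, uniformly in $t\in[0,1]^E$. The smoothness of $A$ is the direct payoff of Theorem~\ref{thm:parabolicblowup}: each factor $\pi^*(\mathbf{d}^2(x_{i(e)},x_{j(e)})/\ell_e)$ is smooth on $[0,1]^E\times\R^{dn}$, and composition with $\exp$, $\psi$, the heat coefficients $a_{k_e}$, and multiplication by the smooth compactly supported $\pi^*\tilde\varphi$ preserves smoothness.

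For the domination, the support of $\tilde\varphi$ confines $x=\pi^*x_1$ to a fixed compact $K\subset\R^d$ independently of $(t,h)$, so the $x$-support is compact. The decay in the $h$-variables comes from the tree-edge exponentials. The key geometric observation is that on the support of $\psi(\pi^*\mathbf{d}^2)$, the pair $(x_{i(e)},x_{j(e)})$ lies in a normal coordinate chart where the Riemannian distance is comparable to the Euclidean distance: there exists $c>0$ such that $\mathbf{d}^2(x_{i(e)},x_{j(e)})\geq c\,|x_{i(e)}-x_{j(e)}|^2$. Since the change of variables gives $|x_{i(e)}-x_{j(e)}|^2=(\prod_{j\geq e}t_j)^2|h_e|^2$ for $e\in E(T)$, pulling back and dividing by $\ell_e=(\prod_{j\geq e}t_j)^2$ yields
$$\pi^*\bigl(\mathbf{d}^2(x_{i(e)},x_{j(e)})/\ell_e\bigr)(t,x,h)\geq c|h_e|^2,$$
uniformly in $t\in[0,1]^E$. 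Combined with the non-negativity of the exponents coming from non-tree edges, this produces the Gaussian domination
$$|A(t,x,h)|\leq C\,\mathbf{1}_K(x)\exp\Bigl(-\tfrac{c}{4}\sum_{e\in E(T)}|h_e|^2\Bigr).$$

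For the $t$-derivatives, the chain and product rules acting on $\exp(-\pi^*(\mathbf{d}^2/\ell_e)/4)$ generate polynomial factors in $h$, since the $t$-derivatives of $\pi^*(\mathbf{d}^2/\ell_e)$ are polynomial of bounded degree in $h$ (as one verifies from the explicit formulas in the proof of Theorem~\ref{thm:parabolicblowup}), and the same holds for the factors coming from $\psi$, $a_{k_e}$ and $\tilde\varphi$ by the chain rule applied to $\pi^*x_i=x+\sum_{e'}(\prod_{j\geq e'}t_j)h_{e'}$. Hence analogous bounds with polynomial prefactors in $h$ hold for every $\partial_t^\alpha A$ and remain integrable. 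The main obstacle is establishing the uniform lower bound $\pi^*(\mathbf{d}^2/\ell_e)\geq c|h_e|^2$ over the whole of $[0,1]^E$, and in particular controlling it near the boundary stratum where $\prod_{j\geq e}t_j\to 0$; this is resolved cleanly by the geometric comparability between Riemannian and Euclidean distance inside the normal neighborhood cut out by $\psi$. The smoothness of the partial integral on $[0,1]^E$ then follows from the standard Leibniz/dominated-convergence criterion.
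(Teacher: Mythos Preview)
Your proposal is correct and follows essentially the same approach as the paper: smoothness of $A$ from Theorem~\ref{thm:parabolicblowup}, compact $x$-support from $\tilde\varphi$, Gaussian decay in the tree variables $h_e$ via the Riemannian--Euclidean distance comparison $\mathbf{d}^2\geq c|x_{i(e)}-x_{j(e)}|^2$, and polynomial growth in $h$ of all $t$-derivatives controlled by the chain rule, so that differentiation under the integral sign applies. One minor remark: the uniform constant $c$ is not really delivered by the cutoff $\psi$ (which only constrains $\mathbf{d}(x_{i(e)},x_{j(e)})$, not the location of the points in the chart) but rather by the fact that on the support of $\pi^*\tilde\varphi$ all the $\pi^*x_i$ lie in a fixed compact subset of the chart where the metric is uniformly comparable to the Euclidean one---this is exactly how the paper justifies it as well.
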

\begin{proof}
The smoothness of $A$ is a direct consequence of Theorem \ref{thm:parabolicblowup}.
Now start from the definition of $A$~:
\begin{eqnarray*}
A(t,x,h)&=&2^E\pi^*\left(\prod_{e\in E(T)} \exp\left(-\frac{\mathbf{d}^2(x_{i(e)},x_{j(e)})}{4\ell_e}
\right)
\psi(\mathbf{d}^2(x_{i(e)},x_{j(e)}))
a_{k_e}\right) \\
&\times &  \pi^*\left(\prod_{e\notin E(T)}  \exp\left(-\frac{\mathbf{d}^2(x_{i(e)},x_{j(e)})}{4\ell_e}
\right)
\psi(\mathbf{d}^2(x_{i(e)},x_{j(e)}))
a_{k_e}\right) \\
&\times &\tilde \varphi(x+\sum_{e\in (r,1)}(\prod_{j\geqslant e}t_j)h_e,\dots,x+\sum_{e\in (r,n)}(\prod_{j\geqslant e}t_j)h_e).
\end{eqnarray*}

Then use the key fact that
since the open neighborhood ($\cong \R ^d$) is chosen small enough and has compact closure,
there exists
a fixed constant $\delta >0$ such that
for all $x,y\in \R ^d$, we have the following bound
on the Riemannian distance~:
\begin{eqnarray*}
\delta\vert x-y \vert\leqslant \mathbf{d}(x,y)=\vert x-y \vert+o(\vert x-y \vert) \leqslant \delta^{-1}\vert x-y \vert
\end{eqnarray*}
which follows from Lemma \ref{lemmametric}
since
$\mathbf{d}^2(x,y)-\sum g_{\mu\nu}(x)(x^\mu -y^\mu)(x^\nu-y^\nu)$ 
vanishes along the diagonal at order 3, 
and locally $\sum g_{\mu\nu}(x)(x^\mu -y^\mu)(x^\nu-y^\nu)$ 
is bounded by some multiple of $|x-y|$ by compactness of the neighborhood.
It follows that
$$\frac{ \delta^2\vert x-y \vert^2}{4\ell_e}\leqslant \frac{\bfd (x,y)^2}{4\ell_e} \leqslant \frac{\delta^{-2}\vert x-y \vert^2}{4\ell_e} $$
which implies that for all edges
$e\in E(T)$, we have the bound~:
$$  \pi^*\exp\left(-\frac{\mathbf{d}^2(x_{i(e)},x_{j(e)})}{4\ell_e}
\right)\leqslant \pi^*\exp(-\frac{ \delta^2\vert x-y \vert^2}{4\ell_e})=\exp\left(-\frac{\delta^2\vert h_e\vert^2}{4} \right). $$
This allows us to use the product
$\pi^*\left(\prod_{e\in E(T)} \exp\left(-\frac{\mathbf{d}^2(x_{i(e)},x_{j(e)})}{4\ell_e}
\right)\right)$ to control the exponential decay of $A$~:
\begin{eqnarray*}
\pi^*\left(\prod_{e\in E(T)} \exp\left(-\frac{\mathbf{d}^2(x_{i(e)},x_{j(e)})}{4\ell_e}
\right)
\right)\leqslant \prod_{e\in E(T)} \exp\left(-\frac{\delta^2\vert h_e\vert^2}{4} \right).
\end{eqnarray*}
By smoothness of the heat coefficients $a_k$, by compactness of the support of $\varphi\in C^\infty_c(U^n)$ thus of $\tilde \varphi$, and by the definition of $\pi$, for every multi-index $\alpha$, we find that there exists some constant $C_\alpha>0$
s.t.
\begin{eqnarray*}
\vert\partial_t^\alpha\pi^*\left(\tilde \varphi\prod_{e\notin E(T)}  \exp\left(-\frac{\mathbf{d}^2(x_{i(e)},x_{j(e)})}{4\ell_e}
\right)\prod_{e=1}^Ea_{k_e}\psi(\mathbf{d}^2(x_{i(e)},x_{j(e)}))\right)\vert\leqslant C_\alpha\left(1+\sum_{e\in E(T)}\vert h_e\vert \right)^{\vert\alpha \vert},
\end{eqnarray*}
the partial derivatives in $t$ contributes the powers of $h$.

Therefore we have the bound for all $(t,x,h)$~:
$$\vert \partial_t^\alpha A(t_e,x,h_e)\vert
\leqslant C_\alpha\prod_{e\in E(T)} \exp\left(-\frac{\delta^2\vert h_e\vert^2}{4} \right)
\left(1+\sum_{e\in E(T)}\vert h_e\vert \right)^{\vert\alpha \vert}$$
with $A$ compactly supported in $x$.
Then smoothness of $t\mapsto\int_{(\mathbb{R}^d)^n}A(t,x,h) d^dx\prod_{e\in E(T)}d^dh_e$ follows 
from smoothness of the integrand which is the function $A\in C^\infty( [0,1]^E\times (\mathbb{R}^d)^n)$, and any derivative $\partial_t^\alpha A$
has fast decrease in $h$ when $\vert h\vert\rightarrow +\infty$ and compact support in the variable $x\in \R^d$.
Therefore all derivatives in $\partial_t^\alpha A$ are integrable, uniformly in $t\in [0,1]^E$ and the conclusion follows
from classical results on integrals depending smoothly on parameters.
\end{proof}

\begin{lemm}\label{jetlemma}[Jet Lemma]
Fix the sector $\mathbf{\Delta}$ corresponding to the system of inequalities
$\{0\leqslant \ell_{1}\leqslant \dots\leqslant \ell_{E}\}$.
Let
\begin{eqnarray*}
A((t_e)_{e=1}^E,x,(h_e)_{e\in T})&=& \pi^*\left(\prod_{e=1}^E\left( \exp\left(-\frac{\mathbf{d}^2(x_{i(e)},x_{j(e)})}{4\ell_e}
\right)
\psi(\mathbf{d}^2(x_{i(e)},x_{j(e)}))a_{k_e}\right)\tilde \varphi\right).
\end{eqnarray*}
Then
the $k$-jet of $$\chi(t_1,\dots,t_E)=\int_{(\mathbb{R}^d)^n} A((t_e)_{e=1}^E,x,(h_e)_{e\in E(T)})d^dx\prod_{e\in E(T)}d^dh_e$$
depends continuously on the $k$-jet of $(a_{k_i}, i=1, \cdots E,\varphi,\mathbf{d}^2,g)$.
%\begin{itemize}
%\item The $k$-jet of $$\chi(t_1,\dots,t_E)=\int_{(\mathbb{R}^d)^n} A((t_e)_{e=1}^E,x,(h_e)_{e\in E(T)})d^dx\prod_{e\in E(T)}d^dh_e$$
%depends on the $k$-jet of $(a_{k_i}, i=1, \cdots E,\varphi,\mathbf{d}^2,g)$.
%\item The restriction to the subspace $t_{e}=0$
%on the simplex $\mathbf{\Delta}$ of any partial derivative $\partial_{t_e}P(D)A$
%depends on the restriction 
%to the \textbf{diagonal} $\{x_1=x_2=\dots=x_e\}$ 
%of finite number of \text{derivatives} of 
%$(a_{k_i}, i=1, \cdots E,\varphi,\mathbf{d}^2,g)$.
%\end{itemize}
\end{lemm}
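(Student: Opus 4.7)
The plan is to trace through the explicit structure of $A$ established in the proof of Lemma~\ref{l:boundonA} and observe that every operation involved preserves continuous dependence of $k$-jets.

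First I would write $A$ as a product of three types of factors: (i) the pulled-back Gaussian exponentials $\pi^*\exp(-\mathbf{d}^2(x_{i(e)},x_{j(e)})/4\ell_e)$, (ii) the pulled-back cut-offs and heat coefficients $\pi^*(\psi(\mathbf{d}^2)a_{k_e})$, and (iii) the pulled-back test function $\pi^*\tilde\varphi$. Theorem~\ref{thm:parabolicblowup} combined with Lemma~\ref{lemmametric} tells us that each factor is a smooth function on $[0,1]^E\times \R^d\times (\R^d)^{E(T)}$ whose expression involves only the data $(a_{k_i},\varphi,\mathbf{d}^2,g)$ composed with the polynomial map $\pi$ and the Gaussian. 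Since composition with the fixed smooth map $\pi$, multiplication, and application of the fixed smooth nonlinearity $u\mapsto e^{-u/4}$ are all continuous operations from $C^k$ into $C^k$ on any relative compact region, the $k$-jet of each factor at a given point depends continuously on the $k$-jet of the corresponding input data. By the Leibniz rule, the same holds for the product $A$.

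Next I would differentiate under the integral sign. As in Lemma~\ref{l:boundonA}, the spanning-tree factors $\prod_{e\in E(T)}\pi^*\exp(-\mathbf{d}^2/4\ell_e)$ are bounded by $\prod_{e\in E(T)}\exp(-\delta^2|h_e|^2/4)$, with $\delta>0$ controlled by the $C^0$-norm of $g$ on a fixed compact neighborhood; in particular this Gaussian decay is uniform under small $C^k$-perturbations of $g$. Combining this with the compact support of $\tilde\varphi$ in $x$ and the polynomial growth in $|h|$ produced by each $t$-derivative, we obtain for each multi-index $\alpha$ with $|\alpha|\leq k$ a dominating function
\begin{equation*}
|\partial_t^\alpha A(t,x,h)|\leq C_\alpha\bigl(1+\textstyle\sum_{e\in E(T)}|h_e|\bigr)^{|\alpha|}\prod_{e\in E(T)}e^{-\delta^2|h_e|^2/4}\mathbf{1}_{x\in\mathrm{supp}\,\tilde\varphi},
\end{equation*}
where the constant $C_\alpha$ is a continuous function of the $k$-jets of $(a_{k_i},\varphi,\mathbf{d}^2,g)$ on the fixed compact region. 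This dominating function is integrable in $(x,h)$ uniformly in $t$, so by dominated convergence $\partial_t^\alpha\chi(t)=\int \partial_t^\alpha A(t,x,h)\,d^dx\prod_{e\in E(T)}d^dh_e$.

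Finally, continuous dependence of the jet of $\chi$ on the jets of the data follows from this integral representation: if $(a_{k_i}^{(n)},\varphi^{(n)},\mathbf{d}^{2,(n)},g^{(n)})\to (a_{k_i},\varphi,\mathbf{d}^2,g)$ in the $C^k$-topology on the relevant compact region, then pointwise $\partial_t^\alpha A^{(n)}(t,x,h)\to\partial_t^\alpha A(t,x,h)$ for all $|\alpha|\leq k$, with a common integrable dominator obtained from a uniform $\delta$ and uniform bounds on $C_\alpha$. Dominated convergence then yields $\partial_t^\alpha\chi^{(n)}(t)\to\partial_t^\alpha\chi(t)$ uniformly in $t$, proving the claim. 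The only delicate point, which I would handle carefully, is the uniformity of the Gaussian constant $\delta$ under perturbation of $g$: this follows because $\delta$ can be chosen as any positive lower bound on the smallest eigenvalue of $g_{\mu\nu}$ on the compact neighborhood, and such a bound is stable under small $C^0$-perturbations of $g$.
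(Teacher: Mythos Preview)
Your proposal is correct and follows essentially the same approach as the paper: the paper's proof is a one-sentence appeal to the explicit formula for $A$, the change of variables $\pi$, and repeated application of the chain rule, while you have fleshed out precisely these steps and additionally spelled out the dominated-convergence argument (borrowed from Lemma~\ref{l:boundonA}) needed to pass the $k$-jet continuity through the integral.
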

\begin{proof}
The claim follows from the formulas
defining $A$ and the change of variables $\pi$
and repeated application of the chain rule to
$\pi^*\left(\prod_{e=1}^E\left( \exp\left(-\frac{\mathbf{d}^2(x_{i(e)},x_{j(e)})}{4\ell_e}
\right)
\psi(\mathbf{d}^2(x_{i(e)},x_{j(e)}))a_{k_e}\right)\varphi\right)$.
%
%The second claim follows from the fact that
%the image of the
%submanifold $\ell_E=0$ by $\pi$ is contained
%in the \textbf{diagonal} $\{x_1=x_2=\dots=x_e\}$ for all $e\in T$.
\end{proof}
%
%{\red
%\textbf{The jets of $\chi$ in $t_e$ restricted to the boundary face $\{t_e=0\}$ of the simplex $\mathbf{\Delta}$
%depends only on the restriction on the diagonal $\{x_1=x_2=\dots=x_e\}$ of the jets of $(a_k,\varphi,\mathbf{d}^2)$.}
%}

%\subsubsection{Summary before integration by parts.}

Recall that from 
Theorem 
\ref{t:technicalreduction}, the main Theorem \ref{mainthmintro}
reduces to an analytic continuation result
for the amplitudes $I_{G,\overrightarrow{k}}(s)$ corresponding
to the labelled Feynman graphs $(G,\overrightarrow{k})$.
The problem was that the integral formula for
$I_{G,\overrightarrow{k}}(s)$ involved some product of heat kernels which required
blow--ups which were performed in sectors. 
The following
Proposition shows how the integral expression $I_{G,\overrightarrow{k}}(s)$ simplifies after blow-up~:
\begin{prop}\label{p:IGksafterblowup}
Let us consider $I_{G,\overrightarrow{k}}(s)$ as in Equation (\ref{defIGks}),
for any $e\in \{1,\dots,E\}\simeq E(G)$ and any permutation $\sigma\in S_E$, $G_{\sigma(e)}$
is the subgraph of $G$ induced by the collection of edges 
$\{\sigma(1),\dots,\sigma(e)\}\subset E(G)$.
Then for every test function $\varphi\in C^\infty_c(U^n)$, there exists a family
$\chi_\sigma\in C^\infty([0,1]^E)$ indexed by $\sigma\in S_E$ such that
the identity~:
\begin{eqnarray}\label{e:reductionIGks}
\int_{\mathbb{R}^{dn}}I_{G,\overrightarrow{k}}(s)\varphi d^{nd}x=\prod_{e=1}^E\frac{1}{\Gamma(s_e)}
\sum_{\sigma\in S_E}\int_{[0,1]^E} \prod_{e=1}^E\frac{dt_e}{t_e}t_e^{\sum_{i\leqslant e}(2s_{\sigma(i)}+2k_{\sigma(i)})-db_1(G_{\sigma(e)})}\chi_\sigma(t)
\end{eqnarray}
holds true for all
$Re(s_e), e\in \{1,\dots,E\}$ large enough and both sides are holomorphic in $s\in \mathbb{C}^E$.
\end{prop}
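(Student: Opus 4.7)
The proof essentially packages the computation carried out in the preceding subsection, so the plan is to assemble the pieces and track the exponents carefully. First I would observe that for $\mathrm{Re}(s_e)$ large enough, Lemma~\ref{l:absolute convergence} (applied to the integrand, which is bounded pointwise by the one bounded there) guarantees absolute convergence over $[0,1]^E\times (\mathbb{R}^d)^n$, so by Fubini we may invert the order of integration and decompose
\[
\int_{[0,1]^E} = \sum_{\sigma\in S_E}\int_{\mathbf{\Delta}_\sigma},
\]
the complement of the sectors having Lebesgue measure zero. Holomorphicity in $s\in\mathbb{C}^E$ on this domain follows from absolute convergence together with an $L^1$-dominated differentiation argument, so it suffices to establish the claimed identity for each fixed sector.

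Now I would fix a sector $\mathbf{\Delta}_\sigma$ and, after relabelling the edges so that $0\leq \ell_{\sigma(1)}\leq\dots\leq\ell_{\sigma(E)}\leq 1$, apply Theorem~\ref{t:inequalitiestree} to produce the unique associated spanning tree $T_\sigma$. The blow-up map $\pi=\pi_\sigma$ of equation~(\ref{e:pi}), which by Proposition~\ref{p:blowupdiffeo} is a diffeomorphism from $\mathbb{R}^d\times\mathbb{R}^{d(n-1)}\times (0,1)^E$ onto $\mathbb{R}^{dn}\times\mathbf{\Delta}_\sigma$, is taken as a change of variables. The key point is that Theorem~\ref{thm:parabolicblowup} makes the pullback
\[
\pi_\sigma^{*}\!\left(\prod_{e\in E(G)}\psi(\mathbf{d}^2(x_{i(e)},x_{j(e)}))\exp\!\Big(-\tfrac{\mathbf{d}^2(x_{i(e)},x_{j(e)})}{4\ell_e}\Big)\,a_{k_e}(x_{i(e)},x_{j(e)})\,\tilde\varphi\right)
\]
a smooth function $A_\sigma(t,x,h)$ on $[0,1]^E\times \mathbb{R}^d\times (\mathbb{R}^d)^{E(T_\sigma)}$.

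Next I would compute the exponents. The Jacobian contributes $\prod_{e\in E(T_\sigma)} (\prod_{i\geq e}t_i)^d$ from the $h_e$-variables together with $\prod_e 2\,t_e\,(\prod_{i>e}t_i^2)$ from the $\ell_e$-substitution, and combining with the factors $\ell_e^{s_e+k_e-d/2-1}$ one gets, exactly as in the displayed computation preceding equation~(\ref{eqn:IntSect}),
\[
\prod_{e\in E(T_\sigma)}\!\!\Big(\prod_{i\geq e}t_i\Big)^{2s_e+2k_e}\!\prod_{e\notin E(T_\sigma)}\!\!\Big(\prod_{i\geq e}t_i\Big)^{2s_e-d+2k_e}=\prod_{e=1}^{E} t_e^{\sum_{i\leq e}(2s_{\sigma(i)}+2k_{\sigma(i)})-d\,b_1(G_{\sigma(e)})}.
\]
Here I would emphasise that this last equality uses Kruskal's theorem in an essential way: the number of non-tree edges among the first $e$ edges of the filtration is precisely $b_1(G_{\sigma(e)})$ by equation~(\ref{e:spanning}), which is why the $-d$ contributions assemble into $-d\,b_1(G_{\sigma(e)})$.

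Finally I would set
\[
\chi_\sigma(t)=2^{E}\int_{\mathbb{R}^d\times(\mathbb{R}^d)^{E(T_\sigma)}} A_\sigma(t,x,h)\,d^d x \prod_{e\in E(T_\sigma)}d^d h_e
\]
and invoke Lemma~\ref{l:boundonA} (with the Gaussian decay in the $h_e$-variables deduced from Lemma~\ref{lemmametric} for the tree edges) to conclude $\chi_\sigma\in C^\infty([0,1]^E)$. Summing the contributions of all sectors yields the stated identity. The main obstacle — essentially the only nontrivial one — is the simultaneous smoothness of the pullback in $t$ and integrability in $(x,h)$; this is precisely what Theorem~\ref{thm:parabolicblowup} and Lemma~\ref{l:boundonA} together provide, and the combinatorial price paid is the choice of the spanning tree $T_\sigma$ dictated by Kruskal's algorithm, without which the negative powers of $t_e$ coming from the non-tree edges would not cancel in the pullback of the Gaussian.
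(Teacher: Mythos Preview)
Your proposal is correct and follows precisely the approach of the paper: the proposition is a summary of the computation carried out in the preceding ``Change of variables'' subsection, and you have reproduced the key steps (sector decomposition, Kruskal spanning tree, the blow-up $\pi_\sigma$, the exponent bookkeeping via equation~(\ref{e:spanning}), and the smoothness of $\chi_\sigma$ via Lemma~\ref{l:boundonA}) in the same order and with the same justifications. The only cosmetic difference is that the paper absorbs the factor $2^E$ into the definition of $A$ rather than into $\chi_\sigma$, which is immaterial.
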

In the next paragraph, we will proceed to the meromorphic continuation of the r.h.s of
equation (\ref{e:reductionIGks}) as meromorphic function with linear poles in $s$ and we will also bound
the distributional order of $I_{G,\overrightarrow{k}}(s)$ independently of the label $\overrightarrow{k}$.

\subsection{Integration by parts, bounding orders and pole decomposition.}
Now the proof of Theorem \ref{mainthmintro} on the analytic continuation of
$t_G(s)$ is reduced to the meromorphic continuation in
$s\in \C^E$ of the  
right hand side
of Equation (\ref{e:reductionIGks}).  
The meromorphic continuation comes from integration
by parts as shown
in the next Lemma and its corollary.
\begin{lemm}\label{l:analyticcontinuationmodelcase}
Let $E$ be a positive integer, for any smooth function $\psi$ on $[0,1]$,
$$
I_s(\psi)=\int_{[0,1]^E} t_1^{s_1}\dots t_E^{s_E}\psi(t_1,\dots,t_E)d^Et
$$ can be analytically extended to a meromorphic germ at $(s_e=p_e)_{e=1}^E \in\mathbb{Z}^E$, more precisely, let $I=\{i\ | \  p_i< 0\}$
\begin{equation}
\left (\prod _{i\in I}(s_i-p_i)\right )I_s(\psi)
\end{equation}
extends to a holomorphic germ at $(s_e=p_e)_e$ and
$I_s \in \calD^{\prime,m}([0,1]^E,\calm_{s_0}(\C^E) )$, $s_0=(p_1,\dots,p_E)$,
$m=\sum_{i\in I} \vert p_i\vert $.
\end{lemm}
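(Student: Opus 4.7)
The plan is to obtain meromorphic continuation by iterated integration by parts in each variable $t_i$ corresponding to indices $i \in I$ (where $p_i<0$). For $i\notin I$ we have $p_i\geqslant 0$ so the integral in $t_i$ is already holomorphic at $s_i=p_i$ and nothing needs to be done. Throughout, $\psi$ plays the role of a smooth test function on $[0,1]^E$, and the claim has the flavor of the classical analytic continuation of $s\mapsto \int_0^1 t^s f(t)\,dt$ due to Gelfand--Shilov.

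First I would treat the one-variable model. For $f\in C^\infty([0,1])$ and $\mathrm{Re}(s)>-1$, repeated integration by parts yields
\begin{equation}
\int_0^1 t^s f(t)\,dt = \sum_{j=0}^{k-1}\frac{(-1)^j f^{(j)}(1)}{(s+1)(s+2)\cdots(s+j+1)} + \frac{(-1)^k}{(s+1)\cdots(s+k)}\int_0^1 t^{s+k}f^{(k)}(t)\,dt,
\end{equation}
the boundary contributions at $t=0$ vanishing because $\mathrm{Re}(s+j)>-1$ at every stage. The right-hand side is holomorphic on $\{\mathrm{Re}(s)>-k-1\}$ except for simple poles at $s\in\{-1,-2,\dots,-k\}$. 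Thus for $p\in\{-1,\dots,-k\}$, the product $(s-p)\int_0^1 t^s f(t)\,dt$ extends holomorphically near $s=p$, and only derivatives of $f$ up to order $k$ are used.

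Second, for the multivariable case I would apply Fubini in the region of absolute convergence $\mathrm{Re}(s_e)>-1$ for all $e$, then perform, variable by variable, $k_i:=|p_i|=-p_i$ iterated integrations by parts in $t_i$ for each $i\in I$, leaving the variables with $i\notin I$ untouched. This produces a finite sum of terms of the shape
\begin{equation}
\frac{1}{\prod_{i\in I}\prod_{j=1}^{k_i}(s_i+j)^{\varepsilon_{i,j}}}\,R_{\alpha}(s,\psi),
\end{equation}
with $\varepsilon_{i,j}\in\{0,1\}$, where each $R_\alpha(s,\psi)$ is an integral against $\psi$ or some partial derivative $\partial_I^\beta \psi$ with $\beta_i\leqslant k_i$, multiplied by a holomorphic monomial in the $t_e^{s_e+\cdot}$. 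Every such integral converges absolutely and holomorphically in a neighborhood of $(p_e)_e$ by construction. Since each factor $(s_i+j)^{-1}$ contributes at most a simple pole along a coordinate hyperplane, the product $\prod_{i\in I}(s_i-p_i)$ clears every pole at $(p_e)_e$, showing that $\bigl(\prod_{i\in I}(s_i-p_i)\bigr)I_s(\psi)$ is a holomorphic germ at $s_0=(p_e)_e$.

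Third, for the distributional order: only partial derivatives of $\psi$ of order at most $k_i$ in the $i$-th variable (for $i\in I$) ever appear, hence of total order at most $\sum_{i\in I}k_i=\sum_{i\in I}|p_i|=m$. Consequently $I_s$ extends a continuous linear functional on $C^m([0,1]^E)$-valued meromorphic germs, giving $I_s\in \calD^{\prime,m}([0,1]^E,\calm_{s_0}(\C^E))$. The only subtle point is bookkeeping: one must verify that after each partial integration by parts, Fubini can be reapplied and boundary terms at $t_i=0$ still vanish on the required shifted half-plane; this is immediate from the smoothness of $\psi$ and the explicit power $t_i^{s_i+\cdot}$ carried along. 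No genuine obstacle arises, and the proof is essentially a careful multivariable version of the Gelfand--Shilov argument.
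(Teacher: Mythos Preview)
Your proposal is correct and follows essentially the same approach as the paper: both proofs use iterated integration by parts in each variable to produce an explicit formula exhibiting the simple poles along coordinate hyperplanes, then observe that $\prod_{i\in I}(s_i-p_i)$ clears them and that only derivatives of $\psi$ up to total order $m=\sum_{i\in I}|p_i|$ appear. The only cosmetic difference is that you integrate by parts solely in the variables $i\in I$ whereas the paper writes down the expansion in all variables simultaneously; this changes nothing substantive.
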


The proof of this Lemma, given in the appendix, follows from integration by parts.
One consequence of this Lemma is
\begin{coro}\label{c:integrationbypartscorollary}
Denote by $1_{[0,1]^E}$ the indicator function
of the unit cube $[0,1]^E\subset\R^E$.
Let $(L_1, \cdots , L_E)$ be linear functions in $s\in \C^E$
with real coefficients $L_i\in (\R^E)^*, 1\leqslant i\leqslant E$, and $(a_1,\dots,a_E)\in \Z^E$. Set $I=\{i\ | \  a_i<0\}$, then
$$\left(\prod _{i\in I} L_i(s) \right )\int_{[0,1]^E} t_1^{L_1(s)+a_1}\dots t_E^{L_E(s)+a_E}\psi(t_1,\dots,t_E)d^Et
$$
is a holomorphic germ at $s=0\in \C^E$, and
$$1_{[0,1]^E} t_1^{L_1(s)+a_1}\dots t_E^{L_E(s)+a_E}
$$
extends to an element in
$\calD^{\prime,m}(\R^E ,\calm_0(\C^E))$
where $m=\sum_{i\in I} \vert a_i\vert $ and the polar set
is contained in $\{\prod _{i\in I}{L_i}=0\}$.
\end{coro}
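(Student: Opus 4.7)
The plan is to reduce the corollary directly to Lemma~\ref{l:analyticcontinuationmodelcase} via a linear change of parameters. First, note that for the statement to be non-vacuous one needs $L_i \not\equiv 0$ for each $i \in I$, since otherwise the integrand $t_i^{a_i}$ with $a_i \leqslant -1$ would already be non-integrable at $s=0$. Under this tacit hypothesis, the set
$$\Omega = \{s\in\C^E : \mathrm{Re}(L_i(s)+a_i) > -1 \text{ for all } i\}$$
is a nonempty open subset of $\C^E$ on which the integral converges absolutely and defines a holomorphic function of $s$. It therefore suffices to produce a meromorphic extension to a full neighborhood of $s=0$ with the asserted polar set and distributional order.

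Second, I would invoke Lemma~\ref{l:analyticcontinuationmodelcase} with $(p_1,\dots,p_E) = (a_1,\dots,a_E)$. This yields a meromorphic germ of distributions
$$u \longmapsto I_u(\psi) = \int_{[0,1]^E} t_1^{u_1}\cdots t_E^{u_E}\,\psi(t)\,d^E t$$
at $u_0 = a$, lying in $\calD^{\prime,m}(\R^E, \calm_{a}(\C^E))$ with $m = \sum_{i\in I}|a_i|$, whose polar divisor is contained in $\bigcup_{i\in I}\{u_i = a_i\}$, and such that $\prod_{i\in I}(u_i-a_i)\,I_u(\psi)$ is a holomorphic germ at $u=a$.

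Third, I would pull back along the inhomogeneous linear (in particular polynomial, hence holomorphic) map $\Phi: s \mapsto L(s)+a$, which sends $0$ to $a$. Pullback along $\Phi$ carries holomorphic germs at $a$ to holomorphic germs at $0$, and since $\Phi^{\ast}(u_i - a_i) = L_i(s)$ is linear in $s$, the polar divisor pulls back to $\bigcup_{i\in I}\{L_i = 0\}$, still a union of hyperplanes through $s=0$. Hence $I_{L(s)+a}(\psi)$ is a meromorphic germ at $s=0$ with linear poles inside the prescribed set, and $\prod_{i\in I}L_i(s)\cdot I_{L(s)+a}(\psi)$ extends holomorphically to $s=0$. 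The distributional order is a statement about the dependence on $\psi$ alone and is therefore preserved under the $s$-pullback, giving $m = \sum_{i\in I}|a_i|$. On $\Omega$ the pullback coincides with the original absolutely convergent integral, so this realizes the desired analytic continuation. The only genuine content beyond Lemma~\ref{l:analyticcontinuationmodelcase} is the elementary observation that \emph{linear} forms in $u-a$ pull back to \emph{linear} forms in $s$ under the linear map $\Phi$, so there is no serious obstacle to overcome.
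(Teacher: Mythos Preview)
Your argument is correct and matches the paper's intent: the paper states the corollary without proof, merely as ``one consequence of this Lemma,'' and your reduction via the affine change of parameters $\Phi:s\mapsto L(s)+a$ is exactly the intended (and essentially only) route. Your observation that the poles $\{u_i=a_i\}$ pull back to the linear hyperplanes $\{L_i(s)=0\}$ and that the distributional order in $\psi$ is unaffected by the reparametrization in $s$ is precisely what is needed.

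One small remark: your claim that $\Omega$ is nonempty can fail in general (e.g.\ if $L_1=-L_2$ with $a_1,a_2<0$), but this does not affect the argument, since the pullback of the meromorphic germ from Lemma~\ref{l:analyticcontinuationmodelcase} is well defined at the level of germs regardless, and in the paper's actual applications the $L_i$ are sums of variables with positive coefficients so that $\Omega$ is indeed nonempty.
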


Applying Corollary \ref{c:integrationbypartscorollary} to the r.h.s. of
Equation (\ref{e:reductionIGks}) shows that~:
\begin{lemm}
\label{l:polesinsector}
Let $S_\sigma$ be the set of all subgraphs $H\in \{G_{\sigma(1)}\subset\dots\subset G_{\sigma(E)}=G\}$ such that $b_1(H)\geqslant 1$, then
\begin{eqnarray*}
\prod_{H\in S_\sigma}\left(\sum_{e\in E(H)}s_e-\vert E(H)\vert\right)
\left(\int_{[0,1]^E} \prod_{e=1}^E\frac{dt_e}{t_e}t_e^{\sum_{i\leqslant e}(2s_{\sigma(i)}+2k_{\sigma(i)})-db_1(G_{\sigma(e)})}
\chi_\sigma(t)\right)
\end{eqnarray*}
is a \textbf{holomorphic germ} at $(s_e=1)_e$.
\end{lemm}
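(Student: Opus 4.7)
The plan is a direct application of Corollary~\ref{c:integrationbypartscorollary}, using that $\chi_\sigma \in C^\infty([0,1]^E)$ was already established in Lemma~\ref{l:boundonA}. The essential preliminary step is bookkeeping of exponents. Including the $t_e^{-1}$ coming from $dt_e/t_e$, the full exponent of $t_e$ in the integrand equals $L_e(s) + a_e$ with
\begin{equation*}
L_e(s) \;=\; 2\sum_{i\leqslant e}\bigl(s_{\sigma(i)} - 1\bigr), \qquad a_e \;=\; 2e + 2\sum_{i\leqslant e} k_{\sigma(i)} - d\, b_1(G_{\sigma(e)}) - 1.
\end{equation*}
By construction $L_e$ vanishes at the expansion point $s_0=(1,\ldots,1)$, and since $|E(G_{\sigma(e)})| = e$ and $E(G_{\sigma(e)}) = \{\sigma(1),\ldots,\sigma(e)\}$, one has the key identity
\begin{equation*}
\tfrac{1}{2} L_e(s) \;=\; \sum_{e'\in E(G_{\sigma(e)})} s_{e'} - |E(G_{\sigma(e)})|,
\end{equation*}
which identifies the $L_e$ (up to a harmless factor $2$) with the linear forms appearing in the statement, indexed by the subgraph $H = G_{\sigma(e)}$.

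Next, after translating $s \mapsto s - s_0$ to move the expansion point to the origin, Corollary~\ref{c:integrationbypartscorollary} applies verbatim: multiplying the integral by $\prod_{e \in I} L_e(s)$ yields a holomorphic germ at $s_0$, where $I := \{e \in \{1,\dots,E\} : a_e < 0\}$. It remains only to check that the product displayed in the statement covers all these $L_e$, i.e.\ that $I \subseteq \{e : b_1(G_{\sigma(e)}) \geqslant 1\}$. This is immediate from the formula for $a_e$: if $b_1(G_{\sigma(e)}) = 0$, then $a_e = 2e + 2\sum_{i\leqslant e} k_{\sigma(i)} - 1 \geqslant 1 > 0$ because $e \geqslant 1$ and each $k_{\sigma(i)} \geqslant 0$. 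Hence no such $e$ lies in $I$.

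Consequently, $\prod_{H \in S_\sigma}\bigl(\sum_{e \in E(H)} s_e - |E(H)|\bigr)$ contains (up to nonzero constants of the form $1/2$) every factor $L_e(s)$ with $e \in I$, together with additional linear factors that vanish at $s_0$ but are otherwise holomorphic. Since multiplying a holomorphic germ by further polynomial factors preserves holomorphicity, the lemma follows. The only delicate aspect of the argument is the exponent bookkeeping that translates the ``subgraphs $H$ with $b_1(H)\geqslant 1$'' appearing in $S_\sigma$ into the indexing by $e$ used in Corollary~\ref{c:integrationbypartscorollary}; the real analytic work was already accomplished in Lemma~\ref{l:boundonA} (smoothness of $\chi_\sigma$) and in the integration by parts underlying Lemma~\ref{l:analyticcontinuationmodelcase} and its corollary, so no serious obstacle remains.
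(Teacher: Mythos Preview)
Your proof is correct and follows essentially the same approach as the paper's: apply Corollary~\ref{c:integrationbypartscorollary} after the shift $s_e\mapsto s_e-1$, identify the integer exponent $a_e$, and use the elementary observation that $b_1(G_{\sigma(e)})=0$ forces $a_e\geqslant 1$ so that $I\subseteq\{e:b_1(G_{\sigma(e)})\geqslant 1\}$. Your write-up is simply more explicit about the exponent bookkeeping and the identification $\tfrac12 L_e(s)=\sum_{e'\in E(G_{\sigma(e)})}s_{e'}-|E(G_{\sigma(e)})|$ than the paper's very terse proof.
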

\begin{proof}
By Corollary \ref {c:integrationbypartscorollary} and a change of variables $s_e'=s_e-1, e\in\{1, \cdots, E\}$, 
we need to consider the following set of indices~:
$$I=\{e\ | \ e+\sum_{i\leqslant e}k_{\sigma(i)}-\frac d2 b_1(G_{\sigma(e)})\le 0 \}\subset \{1,\dots,E \}$$
which is contained in
$\{e \ | \ b_1(G_{\sigma(e)})\ge 1\},
$
which yields the conclusion.
\end{proof}

Let us comment on the above bound on the location of the pole.
First the bound seems suboptimal since the set of indices
$I=\{e\ | \ e+\sum_{i\leqslant e}k_{\sigma(i)}-\frac d2 b_1(G_{\sigma(e)})\le 0 \}\subset \{1,\dots,E \}$
is only a subset of $\{e \ | \ b_1(G_{\sigma(e)})\ge 1\}\subset \{1,\dots,E \}
$. However, it is important for us that
we can give a bound on the location of the poles which does not depend on the multi--index
$\overrightarrow{k}$ since poles from the original Feynman amplitude
$t_G(s)$ do not depend on $\overrightarrow{k}$.
The formula
of Theorem~\ref{t:technicalreduction} expresses $t_G$ as a sum
of $I_{G,\overrightarrow{k}}$ for some $\overrightarrow{k}$. Hence the poles of
$t_G(s)$ come from contributions from 
the poles of $I_{G,\overrightarrow{k}}$. Therefore it is convenient 
to have some $\overrightarrow{k}$ independent bound for poles of $I_{G,\overrightarrow{k}}$. 
Finally, we bound the distributional order
of $I_{G,\overrightarrow{k}}$ and also give precise location on the affine planes supporting the
poles of
$I_{G,\overrightarrow{k}}$ in the following~:
\begin{prop}[Poles of $I_{G,\overrightarrow{k}}$ and distributional order]
\label{p:keyproppolesanddistriborder}
Let $G$ be a graph whose set of edges is in bijection with $\{1,\dots,E\}$.
For any $e\in \{1,\dots,E\}\simeq E(G)$ and any permutation $\sigma\in S_E$, $G_{\sigma(e)}$
is the subgraph of $G$ induced by the collection of edges $\{\sigma(1),\dots,\sigma(e)\}\subset E(G)$.
For every permutation $\sigma\in S_E$, we associate the filtration
$\{G_{\sigma(1)}\subset\dots\subset G_{\sigma(E)}=G\}$,
and we consider the set $S_\sigma$ of all subgraphs $H\in \{G_{\sigma(1)}\subset\dots\subset G_{\sigma(E)}=G\}$ such that $b_1(H)\geqslant 1$.
For every $\overrightarrow{k}\in \mathbb{N}^{E(G)}$, the distribution $I_{G,\overrightarrow{k}}(s)$ defined
in definition \ref{defIGks} can be analytically continued
to $\calD^{\prime,m}(U^n,\calm_{s_0}(\C^E))$, $s_0=(s_e=1)_{e=1}^E$
where
\begin{equation}
m=\sum_{H\subset G; 2\vert E(H)\vert-db_1(H)-1<0} db_1(H)-2\vert E(H)\vert+1.
\end{equation}

For every
test function $\varphi$,
\begin{eqnarray*}
\int_{\mathbb{R}^{dn}}I_{G,\overrightarrow{k}}(s)\varphi d^{nd}x&=&
\sum_{\sigma\in S_E} \prod_{H\in S_\sigma} \frac{1}{\left(\sum_{i\in H}s_i-E(H) \right)} f_\sigma(s)
\end{eqnarray*}
where $f_\sigma$ is holomorphic germ at $(s_e=1)_e$.
\end{prop}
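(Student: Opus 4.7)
The starting point is Proposition \ref{p:IGksafterblowup}, which after blow-up and sector decomposition expresses $\int I_{G,\overrightarrow{k}}(s)\varphi\, d^{nd}x$ as a finite sum indexed by $\sigma\in S_E$ of integrals of the form
\begin{equation*}
\frac{1}{\prod_e\Gamma(s_e)}\int_{[0,1]^E}\prod_{e=1}^E t_e^{\sum_{i\leqslant e}(2s_{\sigma(i)}+2k_{\sigma(i)})-db_1(G_{\sigma(e)})-1}\,\chi_\sigma(t)\,dt,
\end{equation*}
with $\chi_\sigma\in C^\infty([0,1]^E)$ by Lemma \ref{l:boundonA}. My plan is to apply Corollary \ref{c:integrationbypartscorollary} to each such sector integral, centered at the point $s_0=(1,\dots,1)$.

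To do so, I first perform the shift $s_e=s_e'+1$ and read off, for each edge index $e\in\{1,\dots,E\}$, the linear form $L_{e,\sigma}(s')=\sum_{i\leqslant e}2s'_{\sigma(i)}$ and the integer constant $a_{e,\sigma}=2|E(G_{\sigma(e)})|+2\sum_{i\leqslant e}k_{\sigma(i)}-db_1(G_{\sigma(e)})-1$ (using $|E(G_{\sigma(e)})|=e$). Corollary \ref{c:integrationbypartscorollary} then yields analytic continuation as an element of $\calD^{\prime,m_\sigma}([0,1]^E,\calm_0(\C^E))$, with polar set contained in $\{\prod_{e\in I_\sigma}L_{e,\sigma}=0\}$, where $I_\sigma=\{e:a_{e,\sigma}<0\}$ and $m_\sigma=\sum_{e\in I_\sigma}|a_{e,\sigma}|$. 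Returning to the original variables, the hyperplane $L_{e,\sigma}(s')=0$ becomes $\sum_{i\in E(G_{\sigma(e)})}s_i-|E(G_{\sigma(e)})|=0$, which is exactly the statement's pole hyperplane for $H=G_{\sigma(e)}$. As in Lemma \ref{l:polesinsector}, the condition $a_{e,\sigma}<0$ together with $k_e\geqslant 0$ forces $2|E(G_{\sigma(e)})|-db_1(G_{\sigma(e)})-1<0$, which in particular excludes the case $b_1(G_{\sigma(e)})=0$; hence the relevant subgraphs lie in $S_\sigma$, giving the claimed pole decomposition.

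For the distributional order bound, the key observation is that the worst case is achieved at $\overrightarrow{k}=0$, because each $k_{\sigma(i)}\geqslant 0$ only enlarges $a_{e,\sigma}$ and can only shrink $I_\sigma$ and the corresponding $|a_{e,\sigma}|$. Thus for any $\overrightarrow{k}$,
\begin{equation*}
m_\sigma\;\leqslant\;\sum_{e\,:\,2|E(G_{\sigma(e)})|-db_1(G_{\sigma(e)})-1<0}\bigl(db_1(G_{\sigma(e)})-2|E(G_{\sigma(e)})|+1\bigr).
\end{equation*}
Taking the maximum over $\sigma\in S_E$ and bounding the sum over sectors by the sum over all induced subgraphs $H\subset G$ yields the uniform bound $m=\sum_{H\subset G,\,2|E(H)|-db_1(H)-1<0}\bigl(db_1(H)-2|E(H)|+1\bigr)$ claimed in the statement. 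The prefactor $\prod_e\Gamma(s_e)^{-1}$ is entire and does not affect the pole structure; summing the sector contributions and writing $f_\sigma(s)=\bigl(\prod_e\Gamma(s_e)^{-1}\bigr)\bigl(\prod_{H\in S_\sigma}L_H(s)\bigr)\int\cdots$ (holomorphic at $s_0$ by construction) gives the required factorization.

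The main obstacle, beyond careful bookkeeping of the shift $s\mapsto s-\mathbf{1}$ and the conversion between the linear forms $L_{e,\sigma}$ and the hyperplane equations $\sum_{i\in H}s_i-|E(H)|$, is ensuring that the order bound $m$ is genuinely \emph{independent} of $\overrightarrow{k}$. This is crucial because Theorem \ref{t:technicalreduction} sums over an infinite family of labels $\overrightarrow{k}\in\{0,\dots,p\}^{E(G)}$ with $p$ depending on $m$, so a $\overrightarrow{k}$-dependent bound would create a circular dependency; the monotonicity argument above (worst case at $\overrightarrow{k}=0$) is what breaks the circle.
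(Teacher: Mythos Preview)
Your approach is the same as the paper's and the pole analysis is correct, but there is one genuine gap in the distributional order argument. Corollary \ref{c:integrationbypartscorollary} gives you that the sector integral, viewed as a functional of $\chi_\sigma$, has order $m_\sigma$ in $\calD^{\prime,m_\sigma}([0,1]^E,\calm_0(\C^E))$; this is an order bound in the $t$-variable. The statement, however, asserts $I_{G,\overrightarrow{k}}(s)\in\calD^{\prime,m}(U^n,\calm_{s_0}(\C^E))$, i.e.\ an order bound in the $x$-variable acting on $\varphi$. To pass from one to the other you must know that the $m$-jet of $\chi_\sigma$ in $t$ depends continuously on the $m$-jet of $\varphi$ in $x$; this is exactly the content of the Jet Lemma (Lemma \ref{jetlemma}), which the paper invokes at this point and which your argument omits. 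Without it, the conclusion about distributional order on $U^n$ does not follow.

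Apart from this, your bookkeeping of the shift $s\mapsto s-\mathbf{1}$, the identification of the pole hyperplanes with the subgraphs in $S_\sigma$, and the monotonicity argument showing the worst case occurs at $\overrightarrow{k}=0$ (hence the bound is $\overrightarrow{k}$-independent) are all in line with the paper's proof.
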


\begin{remark} The bound on the distributional order depends only on the topology of the graph $G$ and the dimension $d$ and not on
the element $\overrightarrow{k}\in \mathbb{N}^{E}$. The bound on the distributional order is also not sharp since we should only sum over subgraphs
$G^\prime\in \{G_{\sigma(1)},\dots,G_{\sigma(E)}\}$ such that $2\vert E(G^\prime)\vert-db_1(G^\prime)-1<0$ then
take the supremum over all permutations $\sigma$.
\end{remark}

\begin{proof}
We proved in Proposition \ref{p:IGksafterblowup} that
for every labelled graph $(G, \overrightarrow{k})$,  and every test function $\varphi\in C^\infty_c(U^{n}), n=\vert V(G)\vert$, there exists
a family $\chi_\sigma(t)$ of smooth functions
on the cube $[0,1]^E$ indexed by permutations $\sigma\in S_E$ such that:
\begin{eqnarray*}
\int_{\mathbb{R}^{dn}}I_{G,\overrightarrow{k}}(s)\varphi d^{nd}x&=&\prod_{e=1}^E\frac{1}{\Gamma(s_e)}
\sum_{\sigma\in S_E}\int_{[0,1]^E} \prod_{e=1}^E\frac{dt_e}{t_e}t_e^{\sum_{i\leqslant e}(2s_{\sigma(i)}+2k_{\sigma(i)})-db_1(G_{\sigma(e)})}\chi_\sigma(t).
\end{eqnarray*}

By applying Lemma \ref{l:polesinsector} to
$$\sum_{\sigma\in S_E}\int_{[0,1]^E} \prod_{e=1}^E\frac{dt_e}{t_e}t_e^{\sum_{i\leqslant e}(2s_{\sigma(i)}+2k_{\sigma(i)})-db_1(G_{\sigma(e)})}\chi_\sigma(t),$$
we obtain the meromorphic continuation of $s\mapsto\int_{\mathbb{R}^{dn}}I_{G,\overrightarrow{k}}(s)\varphi d^{nd}x$
with the bound on the location of poles.
To show that $I_{G,\overrightarrow{k}}(s)$ is actually an element
of $\calD^{\prime,m}(U^n,\calm_{s_0}(\C^E))$, $s_0=(s_e=1)_{e=1}^E$, we need to show that
$$\varphi\in C^\infty_c(U^n) \mapsto  \int_{\mathbb{R}^{dn}}I_{G,\overrightarrow{k}}(.)\varphi d^{nd}x \in \calm_{s_0}(\C^E) $$ depends linearly on the $m$-jet of $\varphi$ for some $m$. By Corollary \ref {c:integrationbypartscorollary},
the integral $$\int_{[0,1]^E} \prod_{e=1}^E\frac{dt_e}{t_e}t_e^{\sum_{i\leqslant e}(2s_{\sigma(i)}+2k_{\sigma(i)})-db_1(G_{\sigma(e)})}\chi_\sigma(t)$$
depends linearly on the $m$-jet of $\chi_\sigma$ 
for $$m=\sum_{G^\prime\subset G; 2\vert E(G^\prime)\vert-db_1(G^\prime)-1<0} db_1(G^\prime)-2\vert E(G^\prime)\vert+1.$$
Then by Lemma \ref{jetlemma}, the $m$-jet of $\chi_\sigma$ depends continuously
on the $m$-jet of $\varphi$ which yields the result.
\end{proof}

Now let us restate the first
main Theorem from our paper
and conclude its proof~:
\\
\\
\fbox{
\begin{minipage}{0.94\textwidth} 
\begin{thm}
\label{mainthm}
Let $(M,g)$ be a smooth, compact, connected Riemannian manifold without boundary of dimension $d$, $dv(x)$ the Riemannian volume and $P=-\Delta_g+V$, $V\in C^\infty_{\geqslant 0}(M)$ 
or $M=\mathbb{R}^d$ with a constant metric $g$ and $P=-\Delta_g+\lambda^2, \lambda\in \R_{\geqslant 0}$.
Then for every
graph $G$,
\begin{equation}
t_G(s)=\prod_{e\in E(G)}\greenf^{s_e}(x_{i(e)},x_{j(e)})
\end{equation}
can be analytically continued 
as an element
of $\calD^{\prime}(M^{V(G)},\calm_{s_0}(\C^{E(G)}))$
where $s_0=(s_{e}=1)_{e \in E(G)}$, 
with linear poles supported
on the union of affine hyperplanes
$$\bigcup_{G^\prime}\{\sum_{e\in G^\prime}s_e-\vert E(G^\prime)\vert=0\}$$
where the union runs over subgraphs $G^\prime$ of $G$ such that
$2\vert E(G^\prime)\vert-b_1(G^\prime)d\leqslant 0 $ and $\vert E(G^\prime)\vert$ is the number of edges
in $G^\prime$.
\end{thm}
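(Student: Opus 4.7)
The plan is to synthesize the two main technical ingredients already established in the preceding sections: the Reduction Theorem~\ref{t:technicalreduction}, expressing $t_G(s)$ as a sum $\sum_{G' \subset G} \mathbf{m}_{G'}(s)\, h_{G\setminus G'}(s)$ indexed by induced subgraphs $G'$ of $G$, with each $\mathbf{m}_{G'}(s)$ a finite sum of labelled amplitudes $I_{G',\overrightarrow{k}}(s)$ and each $h_{G\setminus G'}(s)$ holomorphic at $s_0$; and Proposition~\ref{p:keyproppolesanddistriborder}, which analytically continues each $I_{G',\overrightarrow{k}}(s)$ with quantitative control of its polar set and its distributional order, uniformly in $\overrightarrow{k} \in \mathbb{N}^{E(G')}$.

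First I would apply the Popineau--Stora Lemma~\ref{lem:diag} to reduce the analytic continuation problem to neighborhoods of the deepest diagonal in $M^{V(G)}$. This places us in a fixed normal coordinate chart $U \subset M$ around an arbitrary base point, where Proposition~\ref{p:keyproppolesanddistriborder} furnishes, for every induced subgraph $G' \subset G$ and every label $\overrightarrow{k}$, an analytic continuation of $I_{G',\overrightarrow{k}}(s)|_{U^{V(G')}}$ as an element of $\mathcal{D}^{\prime,m(G')}(U^{V(G')}, \mathcal{M}_{s_0}(\mathbb{C}^{E(G')}))$ whose distributional order $m(G')$ is bounded by a quantity depending only on the topology of $G'$ and on $d$, and crucially \emph{not} on $\overrightarrow{k}$. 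This $\overrightarrow{k}$-uniformity is exactly the hypothesis of the Reduction Theorem, whose conclusion then delivers the analytic continuation of $t_G(s)$ to an element of $\mathcal{D}'(M^{V(G)}, \mathcal{M}_{s_0}(\mathbb{C}^{E(G)}))$.

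For the location of the poles, I would propagate the polar set through the above decomposition. Since each $h_{G\setminus G'}$ is holomorphic at $s_0$, the polar set of $t_G(s)$ is contained in the union, over induced subgraphs $G' \subset G$, of the polar sets of the amplitudes $\mathbf{m}_{G'}(s)$. By Proposition~\ref{p:keyproppolesanddistriborder} applied to each $G'$, this union lies in $\bigcup_H \{\, \sum_{e \in E(H)} s_e - |E(H)| = 0 \,\}$, where $H$ ranges over subgraphs of $G'$ satisfying $2|E(H)| - db_1(H) \leq 0$. Since any subgraph of an induced subgraph of $G$ is itself a subgraph of $G$, the resulting hyperplane arrangement is precisely the one stated in the theorem.

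The genuinely hard content of the proof has already been dispatched in the preceding sections, via the Kruskal spanning tree blow-up $\pi$ which simultaneously smooths the product $\prod_e \exp(-\mathbf{d}^2/4\ell_e)$ in each Hepp sector, combined with the integration-by-parts arguments of Lemma~\ref{l:analyticcontinuationmodelcase} and Corollary~\ref{c:integrationbypartscorollary}. The only subtle point in the present assembly is the necessity of a $\overrightarrow{k}$-independent bound on both the polar set and the distributional order: since $I_{G',\overrightarrow{k}}(s)$ enters $\mathbf{m}_{G'}(s)$ through summation over labels $\overrightarrow{k}$ up to some cutoff, any polar or regularity control must be quoted in the worst case $\overrightarrow{k} = 0$, which is precisely why the condition $2|E(H)| - db_1(H) \leq 0$ appears, rather than a finer sector-dependent condition involving the $k_e$'s.
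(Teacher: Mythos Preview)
Your proposal is correct and follows essentially the same route as the paper's own proof: invoke the Reduction Theorem~\ref{t:technicalreduction} to write $t_G(s)$ as a sum of products $\mathbf{m}_{G'}(s)\,h_{G\setminus G'}(s)$, feed in the $\overrightarrow{k}$-uniform distributional order and pole bounds from Proposition~\ref{p:keyproppolesanddistriborder}, and read off the hyperplane arrangement. Your remark that the $\overrightarrow{k}$-independence forces the coarser condition $2|E(H)|-db_1(H)\leqslant 0$ (rather than a sector-and-label dependent one) is exactly the point the paper is making when it passes from the index set $I$ in Lemma~\ref{l:polesinsector} to the larger set $\{e:\, b_1(G_{\sigma(e)})\geqslant 1\}$.
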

\end{minipage}
}
\\
\\
\begin{proof}
From Theorem
\ref{t:technicalreduction},
one has a decomposition~:
\begin{eqnarray*}
t_G(s)|_{U_x^n}&=&\sum_{G^\prime\subset G}
\left(\sum_{\overrightarrow{k}\in \{0,\dots,p\}^{E_1}} \underbrace{I_{G^\prime,\overrightarrow{k}}(s)}\right)
\times h_{G\setminus G^\prime}(s)
%\prod_{e\in E_2} \left( \frac{1}{\Gamma(s_e)}\int_0^1 h_{p}(\ell_e,x,y)\ell_e^{s_e-1}d\ell_e \right)\\
%&\times&\prod_{e\in E_3} \frac{\Pi(x_{i(e)},y_{j(e)})}{s_e} \prod_{e\in E_4} \greenf^{s_e}_{\leq}
\end{eqnarray*}
where the sum runs over subgraphs  $G^\prime\subset G$ and 
$h_{G\setminus G^\prime}(s)\in C^m(M^{V(G)},\mathcal{O}_{s_0}(\C^{E(G)\setminus E(G^\prime)}))$, 
$s_0=(s_e=1)_{e\in E(G)\setminus E(G^\prime)}$ as soon as 
$p>\frac{d+m}{2}-1$ for $m=\sum_{G^\prime\subset G; 2\vert E(G^\prime)\vert-db_1(G^\prime)-1<0} db_1(G^\prime)-2\vert E(G^\prime)\vert+1$. Therefore,
the analytic continuation
of $t_G(s)$ should follow
from the analytic continuation of
$I_{G^\prime,\overrightarrow{k}}$ for all subgraphs $G^\prime$ of $G$
and the fact that the distributional order of
$I_{G^\prime,\overrightarrow{k}}$ is bounded from above by some
integer $m(G^\prime)$ independent of $\overrightarrow{k}\in \mathbb{N}^{E(G^\prime)}$.
But Proposition \ref{p:keyproppolesanddistriborder} precisely gives us
that the distributional order of every $I_{G^\prime,\overrightarrow{k}}$ is bounded from above by some
integer which depends only on the topology of $G^\prime$. 
Now following the notations from Proposition \ref{p:keyproppolesanddistriborder}, 
for every subgraph $G^\prime\subset G$, 
we denote
by
$S_{E(G^\prime)}$ the permutations of the edges $E(G^\prime)=\{1,\dots,E^\prime\}$. For every permutation
$\sigma\in S_{E(G^\prime)}$ corresponds a canonical filtration
$\{G^\prime_{\sigma(1)}\subset\dots\subset G^\prime_{\sigma(E^\prime)} \}$ of $G^\prime$ and
$S_\sigma$ denotes
the set of all subgraphs
$H\in \{G^\prime_{\sigma(1)}\subset\dots\subset G^\prime_{\sigma(E^\prime)} \}$ such that
$b_1(H)\geqslant 1$.
Finally doing all the bookkeeping, we find that
\begin{eqnarray*}
t_G(s)=\sum_{G^\prime\subset G}\sum_{\sigma\in S_{E(G^\prime)}}\left(\prod_{H\in S_\sigma} \frac{1}{(\sum_{i\in H}s_i)-E(H)} \right)h_\sigma(s)
\end{eqnarray*}
where $h_\sigma(s)\in \calD^\prime(M^{V(G)},\mathcal{O}_{s_0}(\C^{E(G)}))$, $s_0=(s_e=1)_{e\in E(G)}$.
\end{proof}

\section{Renormalization of Feynman amplitudes}
\label{s:renormalization}

In this second part of our paper, we shall apply the analytic continuation results
derived in the previous part to the renormalization of Feynman amplitudes
on Riemannian manifolds.

\subsection{Renormalization maps.}
\label{p:notationsdefirenorm}
For a smooth manifold $(M,g)$ and for every finite subset $I\subset \N$, we denote by $M^I$ the configuration space of points labelled by $I$. 
For $J\subset I$ with $|J|\ge 2$, $D_{J}$ is the subset $\{(x_i)_{i\in I} \text{ s.t. }x_j=x_k\text{ for }j,k\in J\}$ of $M^I$ called $J$-diagonal. Let $\Delta _I =\cup _{J\subset I, |J|\ge 2}D_{J}$ be the maximal diagonal.

\begin{defi}[Labelling vertices] Let 
$I\subset \mathbb{N}, \vert I\vert<+\infty$ be
some finite subset of integers. A \textbf{graph with vertices labelled by $I$} 
is a pair $(G, \iota)$, where $G$ is a graph, $\iota$ is an injective map from $V(G)$ to $I$.
\end{defi}

For a graph with vertices labelled by $I$, $(G, \iota)$, defines an element
$$t_{G}=\underset{e\in E(G)}{\prod} \greenf (x_{i(e)},x_{j(e)}),$$
where $(i(e),j(e))\in I^2$ and $t_{G}$
is a smooth function on $M^I\setminus \Delta _I$. 
For a finite subset $I$ of $\N $, let us denote 
by $\mathcal{F}(M^I)$ the linear span of $t_{G}$ of all 
$(G, \iota)$ with $\iota (V(G))\subset I$ as smooth 
functions on $M^I\setminus \Delta _I$.

For a linear map
$\mathcal{R}:E\mapsto \mathcal{D}^\prime(M)$
where $E$ is a vector space and $M$ is a smooth manifold, 
and any open subset $U\subset M$, let $i_U:U\hookrightarrow M$ denote the inclusion map. 
Then $\mathcal{R}|_U=i_U^*\mathcal{R}:E\mapsto \mathcal{D}^\prime(U)$ is the pull--back of $\mathcal{R}$ by $i_U$.
Following recent work by Nikolov--Stora--Todorov~\cite{NST}, we can give a definition
of renormalization as follows~:
\begin{defi}\label{d:Functionalequations}
A renormalization is a sequence
of (not necessarily continuous) linear maps $\mathcal{R}_{I}:\mathcal{F}(M^I)\mapsto \mathcal{D}^\prime(M^I)$ indexed by finite
subsets $I$ of $\mathbb{N}$, which satisfies the following
system of \textbf{functional equations}~:

\begin{itemize}
\item For $I\subset J$, and $ t\in \mathcal{F}(M^I)$,
\begin {equation}
\boxed{\calr_{J}(t)=\calr _I (t).}
\end{equation}
This is the \textbf {compatibility} condition for the family of linear maps.
\item $\forall t\in \mathcal{F}(M^I), \varphi\in C^\infty_c(M^I\setminus \Delta _I),$
\begin{equation}
\boxed{\langle\mathcal{R}_{I}(t),\varphi\rangle=\langle t,\varphi\rangle.}
\end{equation}
This means that
$\mathcal{R}_{I}(t)$ is a \textbf{distributional extension} of $t\in C^\infty(M^I\setminus \Delta_I)$.
\item For a graph $(G, \iota )$ with vertices labelled by $J\subset \N$, and $ I\subset J=\iota (V(G))$, set $I^c=J\setminus I$, 
let $E_I=\{e\in E(G) ; i(e),j(e)\in I^2\},E_{I^c}=\{e\in E(G) ; i(e),j(e)\in I^{c2}\},E_{II^c}=E(G)\setminus \left(E_I\cup E_{I^c}\right)$, and we denote by
$G_I,G_{I^c},G_{II^c}$ the corresponding induced subgraphs of $G$.
For open subsets $U,V$ of $M$ with $\text{dist}(U,V)>0$, denote by 
$U^I\times V^{I^c}$ the subset
$\{ (x_j)_{j\in J}\in M^J \text{ s.t. }x_i\in U, \forall i\in I, x_i\in V, \forall i\in I^c \}\subset M^J$.
Then~:
\begin{eqnarray*}
\boxed{\mathcal{R}_{J}|_{U^I\times V^{I^c}}(t_{G})=\left(\mathcal{R}_{J}|_{U^I}(t_{G_I} )\boxtimes \mathcal{R}_{J}|_{V^{I^c}}(t_{G_{I^c}})\right)  t_{G_{II^c}}  }
\end{eqnarray*}
as distributions in $\mathcal{D}^\prime(U^I\times V^{I^c})$. This means that renormalization must preserve \textbf{locality}.

\item 
Let $\Phi:M\mapsto M$ be an orientation preserving diffeomorphism and 
we denote by $\Phi_I:M^I\mapsto M^I $ the induced
diffeomorphism on configuration space $M^I$. We assume 
that the renormalization maps
depend on the Riemannian metric $g$ and denote it by 
$\mathcal{R}[g]=( \mathcal{R}[g]_I)_I$ to stress this dependence.
Then the covariance equation on renormalization maps
reads for all graph $(G, \iota)$ with vertices labelled by $I$~:
\begin{equation}
\boxed{\mathcal{R}[\Phi^*g]_I\left( \Phi_I^*t_G\right)= \Phi_I^*\left(\mathcal{R}[g]_I\left(t_G\right)\right).}
\end{equation}
This axiom of functorial nature ensures the renormalization is \textbf{covariant}.
\delete {\item For every diffeomorphism
$\Phi:N\mapsto M $ of $N$ onto an open submanifold of $M$ and a Riemannian structure $g$ on $M$, then
\begin{equation}
\forall t\in \mathcal{F}(M^I),\,\ \mathcal{R}_{N^I}[\Phi^*g](\Phi^*t)=\Phi^*\left( \mathcal{R}_{M^I}[g](t)\right).
\end{equation}
This ensures the renormalization is \textbf{covariant}. \zb {It seems this is redundant }}
\end{itemize}
\end{defi}

The following property follows from the locality condition~: for a graph
$(G, \iota )$ with vertices labelled by $I$, if $G$ is the disjoint union of $G_1 $ and $G_2$, $\iota (V(G_1))\subset I_1$, $\iota (V(G_2))\subset I_2$, $I_1\cap I_2=\emptyset$
then~:
\begin{eqnarray*}
\boxed{\mathcal{R}_{I_1\cup I_2}(t_{G})=\calr _{I_1}(t_{G_1}) \boxtimes \calr _{I_2}(t_{G_2}) }
\end{eqnarray*}
as distributions in $\mathcal{D}^\prime(M^{I_1\cup I_2})$.

\subsection{Decompositions of \gmds}
\label{ss:decompositiongermsdistrib}
Our goal in this
paragraph is to extend
the decomposition
of \cite{guopaychazhang2015} (see also \cite[Appendix]{berline2016local})
of the space $\calm_{s_0}$ of meromorphic germs with linear poles
at $s_0\in \R^p\subset \C^p$ to their distributional counterpart
$\calD^\prime(.,\calm_{s_0})$ defined in 
paragraph~\ref{ss:meromgermsdistrib}.
This decomposition plays an essential role
in our definition of renormalization maps by projections.
Recall we
denoted by $\calo_{s_0}$ the space of holomorphic germs at $s_0$.

Let us fix a
nondegenerate bilinear form
$$Q(\cdot, \cdot): \R ^p\times \R ^p\to \R, $$
 which induces a nondegenerate bilinear form
 $$Q^*(\cdot, \cdot): (\R ^p)^*\times (\R ^p)^*\to \R.
 $$
We can now define the concept of polar germ~\cite[definition 2.3]{guopaychazhang2015}, a polar germ at $s_0$ is a meromorphic germ of the form $\frac 1{L_1^{n_1}(s-s_0)\cdots L_k^{n_k}(s-s_0)}h(\ell (s-s_0))$, where $L_1, \cdots, L_k $ are linearly independent linear functions in $(\R ^p)^*$, $(n_1, \cdots n_k) \in \N_{>0}^k$, $\ell =(\ell _1, \cdots , \ell _n): \R ^p \to \R ^n$ defined by linear functions $\ell _1, \cdots , \ell _n$, and $h$ is a holomorphic germ at $0\in \C^n$, such that $Q^*(L_i, \ell _j)=0$, $1\leqslant i\leqslant k$, $1 \leqslant j \leqslant  n$. 
Let $\mathcal{P}_{s_0}$ be the linear span of polar germs at $s_0$ in $\calm _{s_0}$.

Notice the polar set is defined by real linear functions, by similar proof as in \cite{guopaychazhang2015} using
some geometry of cones, we have the following~:
\begin{prop}
 \label{prop:germd}
 There is a decomposition:
 $$\calm _{s_0}=\calo_{s_0}\oplus \mathcal{P}_{s_0}.
 $$
\end{prop}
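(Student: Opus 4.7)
The plan is to translate so that $s_0 = 0$ (nothing depends on the choice of $s_0$) and then adapt the argument of \cite{guopaychazhang2015}, which established the analogous decomposition for the Euclidean inner product; replacing the Euclidean inner product by the nondegenerate bilinear form $Q^*$ is essentially cosmetic provided one systematically reinterprets all orthogonality statements via $Q^*$. Two things must be proved: (i) every $f\in \calm_0$ admits a decomposition $f = f_{\mathrm{hol}} + f_{\mathrm{pol}}$ with $f_{\mathrm{hol}} \in \calo_0$ and $f_{\mathrm{pol}} \in \mathcal{P}_0$, and (ii) the sum is direct, i.e. $\calo_0 \cap \mathcal{P}_0 = \{0\}$.

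For existence, write a given $f\in \calm_0$ as $f = h(s)/\prod_{i=1}^k L_i(s)^{n_i}$ with $h\in \calo_0$ and linearly independent $L_i\in(\R^p)^*$ (after clearing repeated factors). Using $Q^*$, split $(\R^p)^*$ as the $Q^*$-orthogonal sum of $V = \mathrm{span}(L_1,\dots,L_k)$ and a complement $W$, with dual coordinates $s = s_V + s_W$ on $\C^p$. Taylor expanding $h$ in the $V$-directions to order $N = n_1+\dots+n_k$ yields
$$
h(s) = \sum_{|\alpha|< N} c_\alpha(s_W)\, L(s)^\alpha + r_N(s),
$$
where $c_\alpha\in \calo_0$ depends only on $s_W$, $L(s)^\alpha = L_1(s)^{\alpha_1}\cdots L_k(s)^{\alpha_k}$, and $r_N$ vanishes at $s_V=0$ to order at least $N$. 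The remainder term $r_N/\prod_i L_i^{n_i}$ is holomorphic by order counting (each factor $L_i^{n_i}$ gets absorbed), while each of the finitely many terms $c_\alpha(s_W) L(s)^\alpha / \prod_i L_i(s)^{n_i}$ is exactly a polar germ in the required sense, because its numerator, viewed as a holomorphic function of $\ell(s) = s_W$, lives in the $Q^*$-orthogonal complement of $\mathrm{span}(L_1,\dots,L_k)$.

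For uniqueness, suppose $g \in \calo_0 \cap \mathcal{P}_0$, so $g = \sum_j h_j(\ell_j(s))\,\prod_i L_{i,j}(s)^{-n_{i,j}}$ with each summand satisfying the $Q^*$-orthogonality $Q^*(L_{i,j},\ell_j) = 0$. Group the summands according to the minimal arrangement of hyperplanes that supports them, and argue by induction on the codimension of the deepest stratum of the combined polar set. For a summand with deepest stratum $\{L_{1,j}=\dots=L_{k,j}=0\}$, restrict to a generic complex line transverse to this stratum; the $Q^*$-orthogonality forces $h_j(\ell_j(s))$ to be independent of this transverse direction, so the pole survives along the stratum and cannot be cancelled either by other summands (which have strictly shallower polar strata) or by $g$ itself (which is holomorphic). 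Thus each $h_j \equiv 0$.

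The main obstacle is organising the uniqueness step: several polar germs with different but overlapping arrangements can conspire in subtle ways, and controlling these cancellations is precisely where \cite{guopaychazhang2015} invokes the geometry of convex cones dual to the hyperplane arrangements. Translating that geometric combinatorics verbatim into the present $Q^*$-framework is purely formal — the nondegeneracy of $Q$ is all that is needed for every orthogonal projection used — but the combinatorial bookkeeping of the cone decomposition has to be reproduced in full to ensure no spurious holomorphic polar germ slips through.
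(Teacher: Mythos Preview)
Your existence argument contains a concrete error. Expanding $h$ to total degree $N=n_1+\cdots+n_k$ in the $L_i$ does \emph{not} make $r_N/\prod_i L_i^{n_i}$ holomorphic. Take $k=2$, $n_1=n_2=1$ (so $N=2$) and $h=L_1^2$: the Taylor sum $\sum_{|\alpha|<2}$ contributes nothing, $r_N=L_1^2$, and $r_N/(L_1L_2)=L_1/L_2$ still has a pole along $\{L_2=0\}$. Likewise your claim that each term $c_\alpha(s_W)L^\alpha/\prod_i L_i^{n_i}$ is ``exactly a polar germ'' fails: with $n_1=1$, $n_2=2$ and $\alpha=(2,0)$ (so $|\alpha|<3=N$) you get $L_1/L_2^2$, whose numerator depends on $L_1$, and $L_1$ is in general \emph{not} $Q^*$-orthogonal to $L_2$, so the orthogonality condition in the definition of polar germ is violated.

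The remedy is not simply to replace the simplex $|\alpha|<N$ by the box $\alpha_i<n_i$ for all $i$: the complement of the box produces mixed terms like $L_1^{\alpha_1-n_1}/\prod_{i\neq 1}L_i^{n_i-\alpha_i}$ whose numerators again depend on some $L_j$ not $Q^*$-orthogonal to the surviving polar factors. One must recursively re-expand these numerators with respect to the smaller set of poles, and it is precisely in organising this recursion that the cone geometry of \cite{guopaychazhang2015} does its work. In other words, the ``combinatorial bookkeeping'' you flag at the end is not confined to the uniqueness step; it is already needed to show that the decomposition you write down actually lands in $\mathcal{P}_0$ in the strict sense. The paper itself gives no self-contained proof of the proposition (it just invokes \cite{guopaychazhang2015}), so you are right that the substance is all in that reference, but your sketch misplaces where the difficulty lies.
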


 Now we extend the
concept of polar germs to distributions valued in polar germs.
\begin{defi} 
Let $M$ be a smooth manifold and $p\in \N$.
A polar germ of distributions at $s_0\in \R^p\subset \C^p$ is some element of $\calD^\prime(M,\calm_{s_0}(\C^p))$ 
of the form $\frac 1{L_1^{n_1}(s-s_0)\cdots L_k^{n_k}(s-s_0)}h(\ell (s-s_0))$, where $L_1, \cdots, L_k $ are linearly independent linear functions in $(\R ^p)^*$, $(n_1, \cdots n_k) \in \N_{>0}^k$, $\ell =(\ell _1, \cdots , \ell _n): \R ^p \to \R ^n$ defined by linear functions $\ell _1, \cdots , \ell _n$, and $h\in \calD^\prime(M,\calo_{0}(\C^p))$ such that $Q^*(L_i, \ell _j)=0$, $1\leqslant i\leqslant k$, $1 \leqslant j \leqslant  n$. 
We denote by $\mathcal{D}^\prime(M,\mathcal{P}_{s_0}(\C^p))$ the linear span of polar germs of distributions at $s_0$ in $\calD^\prime(M,\calm_{s_0}(\C^p))$.
\end{defi}

\begin{lemm} If $\frac 1{L_1^{n_1}(s-s_0)\cdots L_k^{n_k}(s-s_0)}h(\ell (s-s_0))$ and $\frac 1{M_1^{m_1}(s-s_0)\cdots M_p^{m_p}(s-s_0)}g(\ell ^\prime (s-s_0))$ represent the same nonzero meromorphic germ of distributions, then $k=mp$,  and   $M_1, \cdots, M_m, L_1,\dots, L_k$ can be rearranged in such a way that $L_i$ is a multiple of $M_i$ and $n_i=m_i$ for $1\leq i\leq k$.
\end{lemm}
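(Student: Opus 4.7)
The plan is to reduce to a scalar identity by pairing with a suitable test function, and then to use the $Q^*$-orthogonality hypothesis to show that the polar hyperplanes and their multiplicities are intrinsic invariants of the meromorphic germ of distributions. Since both sides represent the same nonzero element of $\calD^\prime(M,\calm_{s_0}(\C^p))$, we can pick $\varphi\in C_c^\infty(M)$ so that both $H(s):=\langle h(\ell(s-s_0)),\varphi\rangle$ and $G(s):=\langle g(\ell^\prime(s-s_0)),\varphi\rangle$ are nonzero holomorphic germs at $s_0$. Pairing the distributional equality with $\varphi$ and clearing denominators yields the holomorphic identity
$$H(s)\,\prod_{j=1}^p M_j^{m_j}(s-s_0) \;=\; G(s)\,\prod_{i=1}^k L_i^{n_i}(s-s_0) \qquad (\star)$$
in $\calo_{s_0}(\C^p)$.

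Next, the $Q^*$-orthogonality enters in a crucial way. Since $Q^*$ is nondegenerate and $Q^*(L_i,\ell_\alpha)=0$ for every $\alpha$, no nonzero $L_i$ belongs to $\mathrm{span}(\ell_\alpha)_\alpha$; equivalently the restriction of the linear map $\ell$ to the hyperplane $\{L_i(s-s_0)=0\}$ remains surjective onto $\mathrm{im}(\ell)$. In particular $h(\ell(s-s_0))$, viewed as a distribution-valued germ, is not identically zero on $\{L_i=0\}$; hence for a generic $\varphi$ (chosen outside a finite union of proper closed subspaces of $C_c^\infty(M)$) the germ $H$ does not vanish along any $\{L_i=0\}$, and symmetrically $G$ does not vanish along any $\{M_j=0\}$.

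With such $\varphi$ fixed, we read off the polar structure from $(\star)$. The right-hand side vanishes along $\{L_i(s-s_0)=0\}$ to order exactly $n_i$, since $G$ is nonzero there and the $L_i$ are linearly independent, so the left-hand side must vanish to the same order along this hyperplane. Because $H$ does not vanish along $\{L_i=0\}$ and $\calo_{s_0}(\C^p)$ is a unique factorization domain (in which linear forms are irreducible), some $M_j$ must be a nonzero scalar multiple of $L_i$. Iterating this argument with the roles of the two representations exchanged produces $k=p$ and a bijection $\sigma\in S_k$ with $M_{\sigma(i)}=c_i L_i$ for some $c_i\in\R^\times$; a final comparison of the order of vanishing of both sides of $(\star)$ along each $\{L_i=0\}$ then yields $n_i=m_{\sigma(i)}$.

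The main obstacle is the non-vanishing step: without the $Q^*$-orthogonality, the numerator $h(\ell)$ could a priori contain factors cancelling with some $L_i$, and the polar datum $(L_i,n_i)$ would fail to be intrinsic. The orthogonality guarantees that $h\circ\ell$ is constant along every direction transverse to $\mathrm{im}(\ell)$, so its germ along $\{L_i=0\}$ is a faithful copy of $h$; once this is secured, the rest of the argument is factoriality in $\calo_{s_0}(\C^p)$ combined with a density argument for the choice of $\varphi$.
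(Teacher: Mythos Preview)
Your proof is correct and follows the same approach as the paper: pair with a test function to reduce to an identity of scalar polar germs, then read off the polar data. The paper's proof simply picks $\varphi$ with $h(\ell(s-s_0))(\varphi)\not\equiv 0$ and cites \cite[Lemma~2.8]{guopaychazhang2015} for the scalar statement, whereas you unpack that cited argument via the $Q^*$-orthogonality (forcing $L_i\notin\mathrm{span}(\ell_\alpha)$, so $L_i\nmid H$) and unique factorization in $\mathcal{O}_{s_0}(\mathbb{C}^p)$.
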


\begin{proof} Since this meromorphic germ is not zero, we can take a test function $\varphi $ such that $h(\ell (s-s_0))(\varphi)$ is not identically zero, then $\frac 1{L_1^{n_1}(s-s_0)\cdots L_k^{n_k}(s-s_0)}h(\ell (s-s_0))(\varphi)$ and $\frac 1{M_1^{m_1}(s-s_0)\cdots M_p^{m_p}(s-s_0)}g(\ell ^\prime (s-s_0))(\varphi)$ represent the same polar germ, by the same proof as in \cite[Lemma 2.8]{guopaychazhang2015}, we have the conclusion.
\end{proof}
We then prove the promised decomposition Theorem for $\calD^\prime(M,\calm_{s_0})$ which 
generalizes the result in \cite{guopaychazhang2015}.
\begin{thm}\label{t:decompgmd}
Let $M$ be some smooth manifold and $s_0\in\R^p\subset \C^p$. 
We have the direct sum decomposition~:
$$\calD^\prime(M,\calm_{s_0}(\C^p))=\calD^\prime(M,\calo_{s_0}(\C^p))\oplus \calD^\prime(M,\mathcal{P}_{s_0}(\C^p)).$$
\end{thm}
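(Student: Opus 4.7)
The plan is to lift the scalar decomposition of Proposition \ref{prop:germd} to the distribution-valued setting, using Proposition \ref{weakstrongholo} to access distributional Taylor coefficients of the holomorphic part. Uniqueness is immediate: if $t = t_{\calo} + t_{\mathcal{P}} = t_{\calo}' + t_{\mathcal{P}}'$ are two candidate decompositions, then pairing with any test function $\varphi \in C^\infty_c(M)$ yields two scalar decompositions of $\langle t, \varphi\rangle \in \calm_{s_0}(\C^p)$ into holomorphic and polar parts, which by Proposition \ref{prop:germd} must agree. Hence $\langle t_{\calo} - t_{\calo}', \varphi\rangle = 0$ for every $\varphi$, forcing $t_{\calo} = t_{\calo}'$ and $t_{\mathcal{P}} = t_{\mathcal{P}}'$.

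For existence I would proceed constructively from a local representation $t(s) = P(s)^{-1} h(s)$ with $P = L_1^{n_1} \cdots L_k^{n_k}$, linearly independent $L_i \in (\R^p)^*$, and $h \in \calD^\prime(M, \calo_{s_0}(\C^p))$. Using $Q^*$, complete $L_1, \ldots, L_k$ to a basis of $(\R^p)^*$ by adjoining linear forms $\ell_1, \ldots, \ell_{p-k}$ satisfying $Q^*(L_i, \ell_j) = 0$ for every $i, j$, and split $\R^p = V \oplus V^\perp$ where $V^\perp = \bigcap_i \ker L_i$ and $V$ is its $Q$-orthogonal complement. In the induced coordinates $s - s_0 = u + v$ with $u \in V$ and $v \in V^\perp$, Proposition \ref{weakstrongholo} provides a weakly convergent partial Taylor expansion
$$h(s) = \sum_{\alpha \in \N^k} \frac{u^\alpha}{\alpha!}\, h_\alpha(v),$$
with coefficients $h_\alpha \in \calD^\prime(M, \calo_0(\C^{p-k}))$ obtained by taking partial $u$-derivatives of $h$ at $u = 0$.

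The candidate decomposition splits this series according to whether $\alpha \geq n = (n_1, \ldots, n_k)$ componentwise. Define
$$\pi_{\calo}(t) = P^{-1}\sum_{\alpha \geq n} \frac{u^\alpha}{\alpha!}\, h_\alpha(v), \qquad \pi_{\mathcal{P}}(t) = P^{-1}\sum_{\alpha \not\geq n} \frac{u^\alpha}{\alpha!}\, h_\alpha(v).$$
In each monomial of $\pi_{\calo}(t)$ one has $P^{-1} u^\alpha = u^{\alpha - n}$, a polynomial in $u$, so $\pi_{\calo}(t) \in \calD^\prime(M, \calo_{s_0}(\C^p))$. For $\pi_{\mathcal{P}}(t)$, for each $\alpha \not\geq n$ a Gram--Schmidt step in $Q^*$ replaces the $u_j$-monomials for $j \notin I_\alpha := \{i : \alpha_i < n_i\}$ by $Q^*$-orthogonalized linear forms $\tilde L_j$ plus $L_i$-corrections with $i \in I_\alpha$; absorbing the $L_i$-corrections into the denominator and regrouping by pole type realizes $\pi_{\mathcal{P}}(t)$ as a sum of polar germs of distributions in the prescribed form $\prod_{i \in I} L_i^{-m_i}\, \tilde h_I(\ell(s-s_0))$, with $\tilde h_I \in \calD^\prime(M, \calo_0(\C^{p-k}))$ and the $Q^*$-orthogonality condition satisfied by construction.

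The main obstacle is this bookkeeping and verifying that the reorganized series converges to bona fide germs of distributions in the respective spaces. The cleanest way to handle this is to observe that the construction reduces termwise to the scalar Guo--Paycha--Zhang decomposition: pairing $\pi_{\calo}(t)$ and $\pi_{\mathcal{P}}(t)$ against any test function $\varphi$ recovers $\pi_{\calo}(\langle t, \varphi\rangle)$ and $\pi_{\mathcal{P}}(\langle t, \varphi\rangle)$ from Proposition \ref{prop:germd}, which simultaneously gives convergence, the direct sum decomposition $\calD^\prime(M,\calm_{s_0}(\C^p)) = \calD^\prime(M,\calo_{s_0}(\C^p)) \oplus \calD^\prime(M,\mathcal{P}_{s_0}(\C^p))$, and independence from the choice of representation $t = P^{-1} h$.
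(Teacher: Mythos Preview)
Your proposal is correct and follows essentially the same route as the paper. Both arguments write $t = P^{-1}h$ with $P = L_1^{n_1}\cdots L_k^{n_k}$ and linearly independent $L_i$ (the paper is explicit that this reduction needs a partial-fractions step, citing \cite[Lemma 2.9]{guopaychazhang2015}; you assume it), choose coordinates adapted to $Q^*$, expand $h$ via Proposition~\ref{weakstrongholo}, and split the series by whether the multi-index clears the pole. Your uniqueness argument and the paper's are identical.

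Two small points where the paper is more explicit. First, it does a full power-series expansion in all $p$ variables and uses the Cauchy bound $|t_\alpha(\varphi)| \leq \frac{\alpha!}{r^{|\alpha|}} C P(\varphi)$ from Proposition~\ref{weakstrongholo} directly to show each sub-series $h_d$ defines an element of $\calD^\prime(M,\calo_{s_0})$; your appeal to the scalar decomposition for convergence is correct but hides exactly this estimate, which is what guarantees continuity in $\varphi$. Second, you are right to flag the Gram--Schmidt issue for terms with mixed $\alpha$ (some $\alpha_i < n_i$, others $\alpha_j \geq n_j$): the remaining $L_j$ in the numerator need not be $Q^*$-orthogonal to the $L_i$ in the denominator, so such a term is only in the \emph{span} of polar germs, not a single polar germ. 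The paper's write-up is terser on this point than yours.
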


\begin{proof} Without loss of generality, 
we can assume that $s_0=0$. For $t \in \calD^\prime(M,\calm_{0}(\C^p))$, by definition, there exist $L_1, \cdots ,L_k\in (\R ^p)^*$ such that $L_1\cdots L_k t \in \calD^\prime(M,\calo_{0}(\C^p))$.
By partial fractions decompositions as in the proof of \cite[Lemma 2.9 property a)]{guopaychazhang2015}, we may
assume there is
$(n_1,\dots,n_k)\in \N_{>0}^k$
such that ${L_1^{n_1}\cdots L_k^{n_k}}t\in \calD^\prime(M,\mathcal{O}_0(\C^p))$ with $L_1, \cdots, L_k $ linearly independent and $(n_1,\cdots,n_k) \in \N^k_{>0}$.

Now let us expand $L_1, \cdots, L_k $
to a basis $\left(e_1, \dots,e_p\right)$ of $(\R ^p)^*$ with $e_i=L_i, 1\leqslant i \leqslant k$ and $Q(e_i, e_j)=0$ for 
$1\leqslant i \leqslant k$,  
$k+1\leqslant j \leqslant p$. Then by
Proposition \ref{weakstrongholo}, we have the power series expansion for ~:
$$z_1^{n_1}\cdots z_k^{n_k}t=\sum_{\alpha\in \mathbb{N}^p} \frac {z^\alpha}{\alpha !} t_\alpha,
$$ where $z=\sum z_ie_i^* \in (\C ^p)^*$.
So when we apply $z_1^{n_1}\cdots z_k^{n_k}t$ against the test function $\varphi$, we
obtain
$z_1^{n_1}\cdots z_k^{n_k}t(\varphi)=\sum_{\alpha\in \mathbb{N}^p} \frac {z^\alpha}{\alpha !} t_\alpha(\varphi),
$
which is absolutely convergent in a small neighborhood of $0\in \C^p$.

Let $S=\{d=(d_1, \cdots , d_p)\in \N ^p\ |\ d\neq (0,\dots,0), 0\leqslant d_i\leqslant n_i \}$. For $d\in S$, let
$I(d)=\{i\ | \ d_i\not =0\}\subset \{1,\dots,p\}$, and set
$$N_d=\{\alpha \in \N ^p\ |\ \alpha _i=n_i-d_i \text{ if } i\in I(d), \ \alpha _i\ge n_i \text{ if } i \in \{1,\dots,k\}\setminus I(d), \alpha _i\in \N \text{ if } i\in\{k+1,\dots,p\} \}.
$$
Then we note that $N_{d}\cap N_{e}=\emptyset $ if $d\not =e\in S$ and
most importantly, we have the partition $\N ^p=\bigcup _{d\in S}N_d$.

Now for $z_i\not =0, 1 \leqslant i \leqslant k$,
$$t(\varphi )=\sum_{\alpha\in \mathbb{N}^p} \frac {z^{\alpha -\alpha_0}}{\alpha !}t_\alpha (\varphi)=\sum _{d\in S}\sum _{\alpha \in N_d}\frac {z^{\alpha -\alpha_0}}{\alpha !}t_\alpha (\varphi),
$$
where $\alpha _0=(n_1, \cdots, n_k, 0,\cdots,0)$. And we have
$$\sum _{\alpha \in N_d}\frac {z^{\alpha -\alpha_0}}{\alpha !}t_\alpha (\varphi)=\frac 1{z_{I(d)}^d}\sum _{\alpha \in N_d}\frac {z_{[p]\setminus I(d)}^{\alpha -\alpha_0}}{\alpha !}t_\alpha (\varphi),
$$
where $z_{I(d)}^d=\prod _{i\in I(d)}z_i^{d_i}$, $z_{[p]\setminus I(d)}^{\alpha -\alpha_0}=\prod _{i\in \{1,\dots,k\}\setminus I(d)}z_i^{\alpha _i-s_i}\prod _{i\in \{k+1,\dots,p\}}z_i^{\alpha _i}$.

Let
$$h_d=\sum _{\alpha \in N_d}\frac {z_{[p]\setminus I}^{\alpha -\alpha_0}}{\alpha !}t_\alpha .$$ By Proposition \ref{weakstrongholo}, for every compact
$K\subset U$ there exists $C>0$ and some continuous seminorm $P$ for the Fr\'echet topology
of $C^\infty_K(U)$ such that~:
$\forall \varphi\in C^\infty_K(U), \vert t_\alpha(\varphi)\vert\leqslant \frac{\alpha!}{r^{\vert\alpha\vert}}  CP(\varphi) $
for all multi-indices $\alpha\in \mathbb{N}^p$.
This implies that for every $0<R<r$,
for all
$\vert z\vert<R$,
\begin{eqnarray*}
\vert \sum_{\alpha\in \mathbb{N}^p} \frac {z^\alpha}{\alpha !} t_\alpha (\varphi) \vert\leqslant
\sum_{\alpha\in \mathbb{N}^p} \vert \frac {z^\alpha}{\alpha !} t_\alpha(\varphi)\vert
\leqslant
\sum_{\alpha\in \mathbb{N}^p}  \frac {R^{\vert\alpha\vert}}{\alpha !} \frac{\alpha!}{r^{\vert\alpha\vert}}  CP(\varphi)=
\left(1-\frac{R}{r}\right)^{-p}
CP(\varphi).
\end{eqnarray*} Therefore,
$
\vert (\prod _{i\in [k]\setminus I(d)}z_i^{s_i}) h_d(\varphi) \vert\leqslant
\sum_{\alpha\in \mathbb{N}^p} \vert \frac {z^\alpha}{\alpha !} t_\alpha(\varphi)\vert
\leqslant
\left(1-\frac{R}{r}\right)^{-p}
CP(\varphi),
$ so $(\prod _{i\in [k]\setminus I(d)}z_i^{s_i}) h_d\in \calD^\prime(M,\calo_0(\C^p))$, so $h_d\in \calD^\prime(M,\calo_0(\C^p))$.

   Now by definition, $\frac 1{z_{I(d)}^d} h_d$ is a polar germ of distributions if $d\not =(0, \cdots , 0)$, $h_{(0,\cdots , 0)}\in\calD^\prime(M,\calo_0(\C^p))$, and
 \begin{equation}\label{e:decompgmddetailed}
 \boxed{  t=\sum _{d\in S}\frac 1{z_{I(d)}^d} h_d \in  \calD^\prime(M,\mathcal{P}_{0}(\C^p)) +\calD^\prime(M,\mathcal{O}_{0}(\C^p))}
 \end{equation}
where the singular part reads as a finite sum of polar germs as a corollary of
the above argument.
So we have
$\calD^\prime(M,\calm_{s_0}(\C^p))=\calD^\prime(M,\calo_{s_0}(\C^p)) + \calD^\prime(M,\mathcal{P}_{s_0}(\C^p)).$
To show it is a direct sum, if $t\in \calD^\prime(M,\calo_{s_0}(\C^p)) \cap \calD^\prime(M,\mathcal{P}_{s_0}(\C^p))$, then for any test function $\phi $, $t(\phi ) \in \mathcal{P}_{s_0}\cap \calo_{s_0}$, so $t(\phi)=0$ by Proposition \ref{prop:germd}, which implies $t=0$.
\end{proof}

A consequence of the decomposition Theorem is the following 
\begin{prop}\label{p:proj}
Let $M$ be a smooth manifold, $p\in \N$ and $s_0\in \R^p\subset \C^p$.
There exists
a 
projection
$$\boxed{\pi_p:\calD^\prime(M,\calm_{s_0}(\C^p))\mapsto \calD^\prime(M,\calo_{s_0}(\C^p))}$$
which sends distribution valued in meromorphic germs at $s_0$
to distribution valued in holomorphic germs at $s_0$ such that
$\ker(\pi_p)=\calD^\prime(M,\mathcal{P}_{s_0}(\C^p))$. 
\end{prop}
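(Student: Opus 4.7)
The plan is to obtain $\pi_p$ as an immediate consequence of the decomposition Theorem~\ref{t:decompgmd}, which gives us the direct sum
\begin{equation*}
\calD^\prime(M,\calm_{s_0}(\C^p))=\calD^\prime(M,\calo_{s_0}(\C^p))\oplus \calD^\prime(M,\mathcal{P}_{s_0}(\C^p)).
\end{equation*}
For any $t\in \calD^\prime(M,\calm_{s_0}(\C^p))$, the direct sum decomposition produces a unique pair $(t^{\mathrm{hol}},t^{\mathrm{pol}})\in \calD^\prime(M,\calo_{s_0}(\C^p))\times \calD^\prime(M,\mathcal{P}_{s_0}(\C^p))$ with $t=t^{\mathrm{hol}}+t^{\mathrm{pol}}$. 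Define $\pi_p(t):=t^{\mathrm{hol}}$. Linearity and the identity $\pi_p\circ \pi_p=\pi_p$ (since elements of $\calD^\prime(M,\calo_{s_0}(\C^p))$ are fixed by this procedure because their polar part vanishes) follow at once from the uniqueness of the decomposition.

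Concretely, one can recover $\pi_p(t)$ from the explicit formula~\eqref{e:decompgmddetailed} produced in the proof of Theorem~\ref{t:decompgmd}: choosing linearly independent linear forms $L_1,\dots,L_k$ and integers $n_1,\dots,n_k$ such that $L_1^{n_1}\cdots L_k^{n_k}\, t\in \calD^\prime(M,\calo_{s_0}(\C^p))$, expanding as a power series via Proposition~\ref{weakstrongholo}, and partitioning $\N^p$ into the sets $N_d$ indexed by $d\in S$, one obtains
\begin{equation*}
t=\sum_{d\in S}\frac{1}{z_{I(d)}^d}\,h_d,
\end{equation*}
in which the term corresponding to $d=(0,\dots,0)$ is the holomorphic piece and every other term is a polar germ of distributions. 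Then $\pi_p(t)=h_{(0,\dots,0)}$.

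It remains to identify the kernel: by construction $\pi_p(t)=0$ precisely when the holomorphic component in the direct sum decomposition vanishes, i.e.\ when $t\in \calD^\prime(M,\mathcal{P}_{s_0}(\C^p))$, so $\ker(\pi_p)=\calD^\prime(M,\mathcal{P}_{s_0}(\C^p))$. The only nontrivial input is Theorem~\ref{t:decompgmd}; once that is in hand, there is essentially no obstacle. The main conceptual point to keep in mind is that $\pi_p$ depends on the auxiliary choice of the nondegenerate bilinear form $Q$ on $\R^p$ used to define polar germs (through the orthogonality condition $Q^*(L_i,\ell_j)=0$), so one should emphasize that $\pi_p$ is canonical \emph{once} $Q$ has been fixed, but it is not independent of $Q$.
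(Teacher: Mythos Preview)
Your proof is correct and follows exactly the approach implicit in the paper: Proposition~\ref{p:proj} is stated as an immediate corollary of the direct sum decomposition of Theorem~\ref{t:decompgmd}, and the paper gives no separate argument beyond the subsequent Remark noting the dependence of $\pi_p$ on the choice of $Q$. Your identification of the kernel and the observation about $Q$-dependence match the paper's treatment.
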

\begin{remark}
Note that $\pi_p$ is uniquely
determined by the vector subspace of polar germs which are in turn 
uniquely determined by the choice of
the canonical quadratic form
$Q:\R^p\times \R^p\to \R$ that we fixed at the beginning of the present section.
\end{remark}
 
In appendix \ref{p:productsofgmd}, we show some useful Lemmas
on the functorial properties of the projection $\pi_p$ for $p\in \mathbb{N}$.
As a consequence of~\cite{guopaychazhang2015}, we have a similar projection, still denoted
a bit abusively by $\pi_p$, at the germ level~: $\pi_p:\calm_{s_0}\mapsto \calo_{s_0}$. It follows that
the two projectors are related by the following 
equation~:
\begin{coro} 
\label{coro:pi}
Let $X$ be a smooth manifold and $s_0\in \R^p\subset \C^p$.
For all $t(s) \in \calD^\prime(X,\calm_{s_0}(\C^p))$ and for all test function $\varphi\in C^\infty_c(X)$,
$$\boxed{(\pi_p t(s))(\varphi)=\pi_p (t(s)(\varphi)).}
$$
\end{coro}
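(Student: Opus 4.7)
The plan is to reduce the identity to the fact that the two decomposition theorems (at the distributional level from Theorem~\ref{t:decompgmd} and at the germ level from Proposition~\ref{prop:germd}) are compatible under pairing with a test function. By Theorem~\ref{t:decompgmd}, I would first write the unique decomposition
\begin{equation*}
t(s) = h(s) + p(s), \qquad h(s) \in \calD^\prime(X,\calo_{s_0}(\C^p)), \quad p(s) \in \calD^\prime(X,\mathcal{P}_{s_0}(\C^p)),
\end{equation*}
so that by the very definition of $\pi_p$ in Proposition~\ref{p:proj} we have $\pi_p t(s) = h(s)$. Applying both sides to $\varphi \in C^\infty_c(X)$ gives the scalar identity $t(s)(\varphi) = h(s)(\varphi) + p(s)(\varphi)$ of meromorphic germs at $s_0$.

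The second step is to check that this scalar splitting is exactly the Guo--Paycha--Zhang splitting of $t(s)(\varphi) \in \calm_{s_0}(\C^p)$ into a holomorphic and a polar germ. The holomorphic part is immediate: $h(s)(\varphi) \in \calo_{s_0}(\C^p)$ by the definition of a holomorphic germ of distributions. For $p(s)(\varphi)$, I would use the explicit finite sum description of an element of $\calD^\prime(X,\mathcal{P}_{s_0}(\C^p))$ as a combination of polar germs of distributions of the form
\begin{equation*}
\frac{1}{L_1^{n_1}(s-s_0)\cdots L_k^{n_k}(s-s_0)} h(\ell(s-s_0)),
\end{equation*}
with $L_1,\dots,L_k$ linearly independent in $(\R^p)^*$, $h \in \calD^\prime(X,\calo_0(\C^p))$ and $Q^*(L_i,\ell_j)=0$. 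Pairing with $\varphi$ simply replaces the distribution-valued holomorphic germ $h$ by the scalar-valued holomorphic germ $\tilde h := h(\cdot)(\varphi) \in \calo_0(\C^p)$, while leaving the linear forms $L_i$ and $\ell_j$ (and the orthogonality relation $Q^*(L_i,\ell_j)=0$) untouched. Hence $p(s)(\varphi) \in \mathcal{P}_{s_0}(\C^p)$ by the germ-level definition.

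Finally, Proposition~\ref{prop:germd} gives the uniqueness of the decomposition $\calm_{s_0}(\C^p) = \calo_{s_0}(\C^p) \oplus \mathcal{P}_{s_0}(\C^p)$, so the holomorphic part of $t(s)(\varphi)$ is unambiguously $h(s)(\varphi)$, that is,
\begin{equation*}
\pi_p\bigl(t(s)(\varphi)\bigr) = h(s)(\varphi) = (\pi_p t(s))(\varphi),
\end{equation*}
which is precisely the claim. The only point that requires care is the verification in the second step that a distributional polar germ really maps to a scalar polar germ under evaluation at $\varphi$; but since the orthogonality condition $Q^*(L_i,\ell_j)=0$ involves only the linear data and not the distributional coefficient, this reduces to the trivial observation that $z \mapsto h(z)(\varphi)$ is a holomorphic germ whenever $z \mapsto h(z)$ is a distribution-valued holomorphic germ, which in turn is immediate from the definition of $\calD^\prime(X,\calo_0(\C^p))$.
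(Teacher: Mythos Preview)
Your proof is correct and follows exactly the approach implicit in the paper, which states the corollary without proof (merely asserting that ``it follows'' from the two decomposition results). You have correctly unpacked the argument: decompose $t(s)$ via Theorem~\ref{t:decompgmd}, pair with $\varphi$, observe that evaluating a distributional polar germ at a test function yields a scalar polar germ (since the linear data $L_i,\ell_j$ and the orthogonality condition are unaffected), and invoke uniqueness from Proposition~\ref{prop:germd}. One cosmetic point: when you write $\tilde h := h(\cdot)(\varphi) \in \calo_0(\C^p)$, this should be $\calo_0(\C^n)$ since in the definition of a polar germ the holomorphic part $h$ lives over $\C^n$ (the target of $\ell$), not $\C^p$; this does not affect the argument.
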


\subsection{A renormalization map by projections}
\label{ss:renormproj}

From now on, for any integer $p$, we fix
the canonical quadratic form $Q$ on $\R^p$~: $Q(x)=\sum_{i=1}^p \vert x_i\vert^2$ 
and we study germs at $s_0=(1,\dots,1)\in \R^p$. 
We denote by $\mathbf{ev}|_{s_0}$ the evaluation of
some holomorphic germ at $s_0$.
The properties of the family of projections $(\pi_p)_{p\in \N}$
allow to us to give a definition of
renormalization maps as follows~:
\begin{defi}[Renormalization maps by projections]
\label{d:renormmapszeta}
For $I \subset \N$, we define the renormalization map $\calr _I$ as follows: for a graph $(G, \iota)$ with vertices labelled by 
$I$,
\begin{eqnarray*}
\boxed{\mathcal{R}_{I}(t_{G})=\mathbf{ev}|_{s_0}\left(\pi_{|E(G)|}\left( t_G(s) \right)\right)  }
\end{eqnarray*}
where $ \greenf^{s}$ is the Schwartz kernel of $(-\Delta)^{-s}$.
\end{defi}

\begin{thm}[Renormalization Theorem]
\label{t:renormtheorem}
Let $(M,g)$ be a smooth, compact, connected Riemannian manifold without boundary of dimension $d$, $dv(x)$ the Riemannian volume and $P=-\Delta_g+V$, $V\in C^\infty_{\geqslant 0}(M)$ 
or $M=\mathbb{R}^d$ with a constant metric $g$ and $P=-\Delta_g+m^2, m\in \R_{\geqslant 0}$.
For every finite subset $I\subset \N$, for every
graph $(G,\iota)$ with vertices labelled by $I\subset \N$, we define
$\mathcal{R}_I(t_G)\in \calD^\prime(M^I)$ as in definition \ref{d:renormmapszeta} and extend it by linearity 
to the vector space $\mathcal{F}(M^I)$.

Then the collection of renormalization maps 
$\left(\mathcal{R}_{I}\right)_{I\subset \mathbb{N}, \vert I\vert <+\infty}$
satisfies the functional equations of definition \ref{d:Functionalequations}.
\end{thm}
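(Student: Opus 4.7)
The plan is to verify the four functional equations of definition \ref{d:Functionalequations} one by one, using Theorem \ref{mainthm} to make sense of $t_G(s)$ as a meromorphic germ of distributions, Theorem \ref{t:decompgmd} / Proposition \ref{p:proj} to extract its holomorphic part $\pi_{|E(G)|}(t_G(s))$, and then evaluating at $s_0 = (1,\dots,1)$. The well-definedness of $\mathcal{R}_I(t_G) \in \mathcal{D}'(M^I)$ is already built into definition \ref{d:renormmapszeta}, so what remains is purely to check the compatibility, extension, locality and covariance axioms.

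For the \textbf{extension} axiom, I would use Lemma \ref{keylemma}: if $\varphi\in C_c^\infty(M^{V(G)}\setminus\Delta)$, then on a neighbourhood of $\mathrm{supp}(\varphi)$ the distribution $t_G(s)$ lies in $C^\infty$ depending holomorphically on $s$ (the Green kernels are evaluated at pairwise distinct points). Hence $\langle t_G(s),\varphi\rangle$ is already a holomorphic germ at $s_0$, the projection $\pi_{|E(G)|}$ acts as the identity on it by Proposition \ref{p:proj}, and $\mathbf{ev}|_{s_0}$ returns $\langle t_G,\varphi\rangle$. The \textbf{compatibility} axiom is formal: for $(G,\iota)$ with $\iota(V(G))\subset I\subset J$, both the regularized amplitude $t_G(s)$ and the projection $\pi_{|E(G)|}$ only see the coordinates in $\iota(V(G))$ and the complex parameters $s_e$ for $e\in E(G)$; the construction is insensitive to the ambient configuration space, so $\mathcal{R}_J(t_G)$ and $\mathcal{R}_I(t_G)$ agree as distributions on $M^{\iota(V(G))}$ trivially lifted.

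For \textbf{covariance}, the key remark is that spectral zeta regularization is manifestly functorial under isometries: if $\Phi:M\to M$ is a diffeomorphism, the operator $-\Delta_{\Phi^*g}+V\circ\Phi$ is conjugate to $-\Delta_g+V$ via pullback, and Seeley's complex powers commute with this conjugation, giving the identity $\greenf^s_{\Phi^*g}(x,y)=\greenf^s_g(\Phi(x),\Phi(y))$. Taking products over edges yields $\Phi_I^*(t_G(s)[g]) = t_G(s)[\Phi^*g]$ as meromorphic germs of distributions in $s$. Since the projection $\pi_{|E(G)|}$ and the evaluation $\mathbf{ev}|_{s_0}$ only touch the $s$-variables, they commute with $\Phi_I^*$, yielding the desired covariance identity.

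The main obstacle is the \textbf{locality} axiom. Given the partition $E(G) = E_I\sqcup E_{I^c}\sqcup E_{II^c}$, the regularized amplitude factorises as a product $t_G(s) = t_{G_I}(s_I)\,t_{G_{I^c}}(s_{I^c})\,t_{G_{II^c}}(s_{II^c})$ of three factors depending on disjoint sets of complex variables. On $U^I\times V^{I^c}$ with $\mathrm{dist}(U,V)>0$, every edge in $E_{II^c}$ joins two well separated points, so by Lemma \ref{keylemma}(2) the factor $t_{G_{II^c}}(s_{II^c})$ is $C^\infty$ and holomorphic in $s_{II^c}$ on a neighbourhood of $s_0$. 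The first two factors, $t_{G_I}(s_I)$ and $t_{G_{I^c}}(s_{I^c})$, depend on disjoint sets of configuration variables and disjoint sets of parameters, so their product is an exterior tensor product of meromorphic germs of distributions. The heart of the argument is then to show that the projection $\pi_{|E(G)|}$ respects both operations: for the exterior product in disjoint parameter groups one has $\pi_{|E_I|+|E_{I^c}|}(a\boxtimes b) = \pi_{|E_I|}(a)\boxtimes\pi_{|E_{I^c}|}(b)$, because with our canonical quadratic form $Q(s)=\sum|s_i|^2$ the polar germs in disjoint blocks remain orthogonal; and for multiplication by a holomorphic germ $c(s_{II^c})$ one has $\pi(ab\cdot c) = \pi(ab)\cdot c$ by Lemma \ref{l:factorizationgmdtimesholo}. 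Combining these two functorial properties of $\pi$ with $\mathbf{ev}|_{s_0}$ and noting that $\mathbf{ev}|_{s_0}(t_{G_{II^c}}(s_{II^c})) = t_{G_{II^c}}$ yields the factorisation $\mathcal{R}_J|_{U^I\times V^{I^c}}(t_G) = (\mathcal{R}_J|_{U^I}(t_{G_I})\boxtimes\mathcal{R}_J|_{V^{I^c}}(t_{G_{I^c}}))\,t_{G_{II^c}}$. The subtle point that must be checked carefully is the orthogonality claim for disjoint blocks of parameters, which is what makes our specific choice of $Q$ compatible with locality.
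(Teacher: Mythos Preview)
Your proposal is correct and follows essentially the same route as the paper: Lemma~\ref{keylemma} for the extension axiom, the factorisation $t_G(s)=t_{G_I}(s_I)t_{G_{I^c}}(s_{I^c})t_{G_{II^c}}(s_{II^c})$ together with Lemma~\ref{l:factorizationgmdtimesholo} and Lemma~\ref{l:factorizationforgmd} for locality, and the formal nature of compatibility. Your ``orthogonality claim for disjoint blocks'' is exactly Lemma~\ref{l:factorizationforgmd}, and you correctly identify that the choice of the canonical quadratic form $Q(s)=\sum|s_i|^2$ is what makes the polar decomposition compatible with the product structure. You also supply a covariance argument via spectral conjugation of $P$ under pullback, which the paper's proof does not spell out; this is a welcome addition and the reasoning is sound.
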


\begin{proof} The compatibility condition is encoded in the family of projections.
For simplicity of notations, we choose to drop the subindex $s_0$ for the space of germs so we will write
$\calm,\calo$ instead of $\calm_{s_0},\calo_{s_0}$ and it will always be understood from the context that
we consider holomorphic and meromorphic germs localized at $s_0=(1,\dots,1)\in \C^p$ for some $p\in \N$.
%$s_0=(s_e=1)$ where $e$ belongs to some set of edges 
%$E(G)$ for some graph $G$. 
Furthermore, we also
write $\pi$ for the projections instead of $\pi_{\vert E(G)\vert}$ where it will be understood
that for every graph $G$, $\pi(t_G(s))$ means $\pi_{\vert E(G)\vert}(t_G(s))$. 

 We now prove that
$\mathcal{R}_I(t_{G})$ is a distributional
extension of $t_{G}$. By Lemma \ref{keylemma}, on $M^I\setminus \Delta_I$, for every $e\in E(G)$, 
every
Green function $\greenf^{s_e}\in C^\infty(M^I\setminus \Delta_I,\calo) $ are in fact smooth and depend holomorphically on $s_e$. 
We also have the convergence
$\greenf^{s_e}\rightarrow \greenf$ in $C^\infty(M^2\setminus \Delta _2)$ when $s_e\rightarrow 1$.
Therefore,
for any test function
$\varphi\in C_c^\infty(M^I\setminus \Delta_I)$, by Corollary \ref{coro:pi}, 
$\pi(t_G(s))(\varphi)=\pi(t_G(s)(\varphi))=t_G(s)(\varphi)$ since $t_G(s)(\varphi)$ is holomorphic at $s_0=(s_e=1)_{e\in E(G)}\in \C^{E(G)}$, and 
$$\mathcal{R}_I(t_G)(\varphi)=\mathbf{ev}|_{s_0}\pi(t_G(s))(\varphi)=\mathbf{ev}|_{s_0}t_G(s)(\varphi)=t_G(\varphi).$$

Now let us prove the locality. 
For a graph $(G, \iota )$ with vertices labelled by $J\subset \N$, and $ I\subset J=\iota (V(G))$, set $I^c=J\setminus I$, 
let $E_I=\{e\in E(G) ; i(e),j(e)\in I^2\},E_{I^c}=\{e\in E(G) ; i(e),j(e)\in I^{c2}\},E_{II^c}=E(G)\setminus \left(E_I\cup E_{I^c}\right)$, and we denote by
$(G_I,G_{I^c},G_{II^c})$ the corresponding induced subgraphs of $G$.
Start from $t_G(s)=t_{G_I}(s_I)t_{G_{I^c}}(s_{I^c})t_{G_{II^c}}(s_{II^c})$ where
$s=(s_e)_{e\in E(G)}, s_I=(s_e)_{e\in E_I}, s_{I^c}=(s_e)_{e\in E_{I^c}}, s_{II^c}=(s_e)_{e\in E_{II^c}}$.
For a pair of disjoint open subsets $(U,V)$ such that $\text{dist}(U,V)>0$, 
denote by
$\{ (x_j)_{j\in J}\in M^J \text{ s.t. }x_i\in U, \forall i\in I, x_i\in V, \forall i\in I^c \}$
the open subset of configuration space $M^J$. Then note that
the product
$t_{G_{II^c}}(s_{II^c})=\prod_{e\in E_{II^c}} \greenf^{s_e}(x_{i(e)},x_{j(e)})\in C^\infty(U^I\times V^{I^c},\calo(\C^{E_{II^c}}))$. 
It follows by Lemma \ref{l:factorizationgmdtimesholo} proved in the appendix
that
$$\pi\left( t_G(s) \right)=\pi\left(t_{G_I}(s_I)t_{G_{I^c}}(s_{I^c})t_{G_{II^c}}(s_{II^c}) \right)=\pi\left(t_{G_I}(s_I)t_{G_{I^c}}(s_{I^c})\right)t_{G_{II^c}}(s_{II^c}).$$

Now the distributions $t_{G_I}(s_I)\in \mathcal{D}^\prime(U^I,\calm(\C^{E_I}))$ and $t_{G_{I^c}}(s_{I^c})\in \mathcal{D}^\prime(V^{I^c},\calm(\C^{E_{I^c}}))$ depend on different variables, therefore
by Lemma \ref{l:factorizationforgmd}, 
$\pi\left(t_{G_I}(s_I)\boxtimes t_{G_{I^c}}(s_{I^c})\right)=\pi\left(t_{G_I}(s_I)\right)\boxtimes 
\pi\left( t_{G_{I^c}}(s_{I^c})\right)$. Then as 
distributions on $U^I\times V^{I^c}$~:
\begin{eqnarray*}
\mathcal{R}_J(t_G)
&=&\mathbf{ev}|_{(s_e=1)_{e\in E(G)}}
\left(\pi\left(t_{G_I}(s_I)\right)
\boxtimes
\pi\left( t_{G_{I^c}}(s_{I^c})\right)
\times t_{G_{II^c}}(s_{II^c})\right)\\
&=&\mathbf{ev}|_{(s_e=1)_{e\in E_I}} \pi\left(t_{G_I}(s_I)\right)
\mathbf{ev}|_{(s_e=1)_{e\in E_{I^c}} } 
\pi\left( t_{G_{I^c}}(s_{I^c})\right) 
\mathbf{ev}|_{(s_e=1)_{e\in E_{II^c}}}t_{G_{II^c}}(s_{II^c})\\
&=&\mathcal{R}_{I}\left(t_{G_I} \right)\mathcal{R}_{I^{c}}\left(t_{G_{I^c}} \right)t_{G_{II^c}}|_{U^I\times V^{I^c}}
\end{eqnarray*}
where $t_{G_{II^c}}$ is smooth on $U^I\times V^{I^c}$ which yields the desired equation.
\end{proof}

\section{Appendix: technical details.}

\subsection{Proof of Proposition \ref{weakstrongholo}.}
\label{weakstrongholoproof}

\begin{proof}
Without loss of generality
assume that $z=0$. By definition and the multidimensional
Cauchy's formula~\cite[p.~3]{gunningrossi}, for any polydisc
$D_1\times \dots\times D_p$ around
$z=0$, for any test function $\varphi\in C ^\infty _c(M)$  and any $\lambda$ in the polydisc~:
\begin{eqnarray*}
t(\lambda)(\varphi)=\frac{1}{(2\pi i)^p}\int_{\partial D_1}\dots \int_{\partial D_p} \frac{t(z) (\varphi)dz_1 \dots dz_p}{(z_1-\lambda_1)\dots (z_p-\lambda_p)}
=\frac{1}{(2\pi i)^p}\int_{\partial D_1}\dots \int_{\partial D_p} \sum _\alpha \lambda ^\alpha \frac{t(z) (\varphi)dz_1 \dots dz_p}{z_1^{\alpha_1+1}\dots z_p^{\alpha_p+1}}.
\end{eqnarray*}
So for any multi-index $\alpha$ and any test function $\varphi$, we define the functional $t_\alpha $ by
$$t_\alpha(\varphi) = \frac{\alpha !}{(2\pi i)^p}\int_{\partial D_1}\dots\int_{\partial D_p} \frac{t(z) (\varphi)dz_1 \dots dz_p}{z_1^{\alpha_1+1}\dots z_p^{\alpha_p+1}},$$
then the series
$\sum_{\vert\alpha\vert\geqslant 0} \frac{s^\alpha}{\alpha !} t_\alpha(\varphi)$
converges absolutely to $t(s)(\varphi)$.

First note that the functional $t_\alpha $ is linear,
it remains to prove that $t_\alpha$ is continuous.
For that, it suffices to show that for every compact $K\subset M$, the restriction of $t_\alpha $ to
the Fr\'echet space $C^\infty_K(M)$ of test functions supported in $K$ is continuous.
For fixed $s$, $t(s)$ is linear continuous on
$C^\infty_K(M)$ therefore
there exists some constant $C(s)$ and continuous
seminorm $P$ of $C^\infty_K(M)$ such that
$\vert t(s)(\varphi) \vert\leqslant C(s)P(\varphi),\forall \varphi\in C^\infty_K(M) $.
Conversely for fixed $\varphi$,
$s\in D_1\times \dots\times D_p \mapsto t(s)(\varphi)$ is bounded
by holomorphicity.
By an application of the uniform boundedness principle since $C^\infty_K(M)$ is Fr\'echet,
for every compact
$K\subset M$, there exists $C>0$ and some continuous seminorm $P$ for the Fr\'echet topology
of $C^\infty_K(M)$ such that~:
\begin{eqnarray*}
\forall \varphi\in C^\infty_K(M), \sup_{s\in \partial D_1 \times\dots\times \partial D_p}\vert t(s)(\varphi)\vert\leqslant CP(\varphi).
\end{eqnarray*}

Assuming that all discs $\partial D_i$ have radius $r$, it immediately follows that
$t_\alpha$ satisfies a distributional version of Cauchy's bound:
\begin{equation}\label{Cauchybound}
\forall \varphi\in C^\infty_K(M), \vert t_\alpha(\varphi)\vert\leqslant \frac{\alpha!}{r^{\vert\alpha\vert}}  CP(\varphi).
\end{equation}
This also implies that for all $\varphi\in C^\infty_K(M)$, the power series
$\sum_{\alpha} \frac{s^\alpha}{\alpha !} t_\alpha(\varphi)$
converges for $\vert\lambda\vert <r$ i.e. the convergence radius equals $r$.
\end{proof}

\subsection{Products of \gmd in different variables.}
\label{p:productsofgmd}

In this part, we prove some useful Lemmas on products of \gmd in different variables.

\begin{lemm}\label{l:exteriorprod}
Let $(X_1,X_2)$ be smooth manifolds, $\mu_1 \in \R ^{p_1}\subset \mathbb{C}^{p_1}$ and $\mu _2 \in \R ^{p_2}\subset \mathbb{C}^{p_2}$.
If $t_1(s_1)\in \calD^\prime(X_1,\calm_{\mu_1}(\C^{p_1}))$ and $t(s_2)\in \calD^\prime(X_2,\calm_{\mu_2}(\C^{p_2}))$ 
then
the external tensor product
$t_1(s_1)\boxtimes t_2(s_2)$ is a well--defined element in 
$\mathcal{D}^\prime(X_1\times X_2,\calm_{(\mu_1,\mu_2)}(\C^{p_1+p_2}))$.
\end{lemm}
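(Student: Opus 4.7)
The plan is to reduce the meromorphic case to the holomorphic case by clearing denominators, build the external tensor product at the holomorphic level, and then divide back out. By definition of meromorphic germs of distributions, after shrinking neighborhoods I may choose linear functionals $L^{(1)}_1,\dots,L^{(1)}_{k_1}\in(\R^{p_1})^*$ and $L^{(2)}_1,\dots,L^{(2)}_{k_2}\in(\R^{p_2})^*$ and open sets $\Omega_1\ni\mu_1$, $\Omega_2\ni\mu_2$ such that
\begin{equation*}
u_1(s_1):=\Bigl(\prod_iL^{(1)}_i(s_1-\mu_1)\Bigr)t_1(s_1)\in\calD^\prime(X_1,\calo(\Omega_1)),\quad u_2(s_2):=\Bigl(\prod_jL^{(2)}_j(s_2-\mu_2)\Bigr)t_2(s_2)\in\calD^\prime(X_2,\calo(\Omega_2)).
\end{equation*}
Viewing each $L^{(i)}_\bullet$ as a linear form on $\R^{p_1+p_2}$ (by pullback along the relevant projection) the product $\prod_iL^{(1)}_i\prod_jL^{(2)}_j$ is a finite product of real linear forms on $\R^{p_1+p_2}$. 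Thus it suffices to construct the external tensor product of $u_1$ and $u_2$ as a holomorphic family of distributions over $\Omega_1\times\Omega_2$, and then divide.

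Next I construct the holomorphic tensor product. For fixed $(s_1,s_2)\in\Omega_1\times\Omega_2$ the classical Schwartz tensor product $u_1(s_1)\boxtimes u_2(s_2)\in\calD^\prime(X_1\times X_2)$ is well defined by iterated integration; explicitly, for any $\varphi\in C^\infty_c(X_1\times X_2)$ set
\begin{equation*}
\bigl\langle u_1(s_1)\boxtimes u_2(s_2),\varphi\bigr\rangle=\bigl\langle u_1(s_1),\,x_1\mapsto\langle u_2(s_2),\varphi(x_1,\cdot)\rangle\bigr\rangle.
\end{equation*}
The inner pairing $f(x_1,s_2):=\langle u_2(s_2),\varphi(x_1,\cdot)\rangle$ is a smooth function of $x_1$ with compact support (the image of $\mathrm{supp}\,\varphi$ under the projection) and is holomorphic in $s_2$. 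Holding $s_2$ fixed, $s_1\mapsto\langle u_1(s_1),f(\cdot,s_2)\rangle$ is holomorphic on $\Omega_1$ because $u_1\in\calD^\prime(X_1,\calo(\Omega_1))$.

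The main technical point is to upgrade this separate holomorphicity to joint holomorphicity in $(s_1,s_2)$. To that end I would check that $(x_1,s_2)\mapsto f(x_1,s_2)$ is a jointly continuous map into $C^\infty_c(X_1)$ that depends holomorphically on $s_2$ in the Fréchet topology; this is a standard consequence of the fact that $x_1\mapsto\varphi(x_1,\cdot)$ is smooth with values in $C^\infty_c(X_2)$ together with continuity of $u_2$ in each Fréchet seminorm. Composing with $u_1$, which is continuous in $s_1$ into $\calD^\prime(X_1)$, one sees that $(s_1,s_2)\mapsto\langle u_1(s_1)\boxtimes u_2(s_2),\varphi\rangle$ is separately holomorphic and (jointly) continuous; Hartogs' theorem then gives joint holomorphicity. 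Alternatively, Morera together with Fubini (legal because the integrand is continuous on the compact product of triangles) gives the same conclusion. This shows $u_1\boxtimes u_2\in\calD^\prime(X_1\times X_2,\calo(\Omega_1\times\Omega_2))$.

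To conclude, I divide by the polar factors. Since
\begin{equation*}
t_1(s_1)\boxtimes t_2(s_2)=\frac{u_1(s_1)\boxtimes u_2(s_2)}{\prod_iL^{(1)}_i(s_1-\mu_1)\prod_jL^{(2)}_j(s_2-\mu_2)}
\end{equation*}
on the complement of the polar locus and the denominator is a product of linear forms on $\C^{p_1+p_2}$ vanishing at $(\mu_1,\mu_2)$, the germ represented by the right--hand side lies in $\calD^\prime(X_1\times X_2,\calm_{(\mu_1,\mu_2)}(\C^{p_1+p_2}))$, as claimed. Independence of the construction from the auxiliary choice of linear forms follows from uniqueness of meromorphic extension: any two choices agree on a Zariski open subset of $\Omega_1\times\Omega_2$, hence define the same meromorphic germ of distributions.
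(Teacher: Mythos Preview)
Your argument is correct and takes a genuinely different route from the paper. The paper localizes to products of compacts $K_1\times K_2$, clears denominators as you do, but then invokes Proposition~\ref{weakstrongholo} to expand each holomorphic factor as a power series with explicit Cauchy bounds on the distributional coefficients; the tensor product is then \emph{defined} as the double series $\sum_{\alpha_1,\alpha_2}s_1^{\alpha_1}s_2^{\alpha_2}\,u_{\alpha_1}\boxtimes v_{\alpha_2}$, whose convergence is first checked on simple tensors $\varphi_1\boxtimes\varphi_2$ and then extended to all $\varphi\in C^\infty_{K_1\times K_2}$ by density of the algebraic tensor product (nuclearity of $C^\infty_K$). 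You instead use the classical iterated-pairing definition of $u_1(s_1)\boxtimes u_2(s_2)$ for each fixed $(s_1,s_2)$ and reduce joint holomorphicity to Hartogs (or Morera with Fubini), which is cleaner conceptually and sidesteps the nuclearity argument. The one place you leave as ``I would check'' --- that $s_2\mapsto f(\cdot,s_2)$ is holomorphic into the Fr\'echet space $C^\infty_K(X_1)$ --- is exactly where one ends up needing uniform bounds of the same type as those in Proposition~\ref{weakstrongholo}, so the two proofs share the same analytic core; the paper just makes that core explicit, and those explicit estimates are reused later in the proof of the decomposition Theorem~\ref{t:decompgmd}.
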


\begin{proof}
Denote by $dv_1,dv_2$ some smooth densities on $X_1,X_2$ respectively.
Since every compact subset $K\subset X_1\times X_2$,
can be
covered by some finite number of products of compacts of the form
$K_1\times K_2$, 
by Lemma \ref{l:gluing} it suffices to show
that for all compacts $K_1\subset X_1, K_2\subset X_2$,
the element $t(s_1;x)t(s_2;y)|_{K_1\times K_2}$ 
is a well--defined meromorphic family of distribution in $\mathcal{D}^\prime(K_1\times K_2)$
at $(\mu_1,\mu_2)\in \mathbb{C}^{p_1}\times \mathbb{C}^{p_2}$ with linear poles. 
Hence we can assume, without loss of generality, that we work
over some product $K_1\times K_2\subset X_1\times X_2$
of compact subsets and we
assume without loss of 
generality that we work around 
$(\mu _1,\mu _2)=(0,0)$.
There exists mononomials $P(s_1)=L_1(s_1)\dots L_k(s_1)$ 
and $Q(s_2)=M_1(s_2)\dots M_l(s_2)$, 
where $(L_i)_{i=1}^k,(M_i)_{i=1}^l$ are linear functions, such that
$P(s_1)t_1(s_1)$ and $Q(s_2)t_2(s_2)$ are holomorphic germs of distributions at $s_1=\mu_1,s_2=\mu_2$ 
respectively. 
Therefore by Proposition~\ref{weakstrongholo}, we know that $P(s_1)t_1(s_1)$ and $Q(s_2)t_2(s_2)$ both admit Laurent series expansions
$P(s_1)t_1(s_1)=\sum_{\alpha_1} s_1^{\alpha_1} u_{\alpha_1}$,
$Q(s_2)t_2(s_2)=\sum_{\alpha_2}s_2^{\alpha_2} v_{\alpha_2}$
where there exists two integers $(m_1,m_2)$ corresponding
to the distributional orders of $(t_1|_{K_1},t_2|_{K_2})$ and  two positive real numbers
$(r_1,r_2)$ such that
for all multi--index $(\alpha_1,\alpha_2)\in \N^{p_1+p_2}$, we have the bounds~:
\begin{eqnarray}
\Vert u_{\alpha_1} \Vert_{(C^{m_1})^\prime}\leqslant C_1r_1^{\vert\alpha_1\vert}, \ \,\
\Vert v_{\alpha_2} \Vert_{(C^{m_2})^\prime}\leqslant C_2r_2^{\vert\alpha_2\vert}.
\end{eqnarray}

We define the series
$P(s_1)t_1(s_1)\boxtimes Q(s_2)t_2(s_2)=\sum_{\alpha_1,\alpha_2} s_1^{\alpha_1}s_2^{\alpha_2} u_{\alpha_1}\boxtimes v_{\alpha_2} $ and we shall prove that the above series
converges for $\vert s_1\vert+\vert s_2\vert$ small enough
in the sense that
for every test function
$\varphi(x_1,x_2)$ supported in $K_1\times K_2$, the series
$$ \sum_{\alpha_1,\alpha_2} s_1^{\alpha_1}s_2^{\alpha_2} u_{\alpha_1}\boxtimes v_{\alpha_2}(\varphi)=\sum_{\alpha_1,\alpha_2} s_1^{\alpha_1}s_2^{\alpha_2} \int_{X_1\times X_2} u_{\alpha_1}(x_1) v_{\alpha_2}(x_2)\varphi(x_1,x_2)
dv_1(x_1)dv_2(x_2) $$
converges absolutely.
We first prove it for some element $\varphi=\varphi_1\boxtimes \varphi_2\in C^\infty_{K_1}(X_1)\boxtimes C^\infty_{K_2}(X_2)\subset 
C^\infty_{K_1\times K_2}(X_1\times X_2)$
which is a tensor product of two elements.
For $s\in \C^{p}$, we shall use the notation $\Vert s\Vert=\sup_{j\in \{1,\dots,p\}} \vert s_{j} \vert $
and for some multi--index $\alpha\in \N^{p}$, $\vert \alpha\vert=\sum_{j=1}^{p}\alpha_j$. 
Then the series converges thanks to the bound~:
\begin{eqnarray*}
&&\vert\sum_{\alpha_1,\alpha_2} s_1^{\alpha_1}s_2^{\alpha_2} \int_{X_1\times X_2} u_{\alpha_1}(x_1) v_{\alpha_2}(x_2)\varphi(x_1,x_2)
dv_1(x_1)dv_2(x_2)\vert\\
&\leqslant &
\sum_{\alpha_1,\alpha_2}\Vert s_1^{\alpha_1}s_2^{\alpha_2}\Vert \vert\int_{X_1} u_{\alpha_1}(x_1)\varphi_1(x_1)dv_1(x_1)\int_{X_2} v_{\alpha_2}(x_2)\varphi_2(x_2)dv_2(x_2)\vert\\
&\leqslant &
\sum_{\alpha_1,\alpha_2} \Vert s_1\Vert^{\vert\alpha_1\vert} C_1r_1^{\vert \alpha\vert} \Vert \varphi_1\Vert_{C^{m_1}(X_1)} 
\Vert s_2\Vert^{\vert\alpha_2\vert}C_2r_2^{\vert\alpha\vert} \Vert \varphi_2\Vert_{C^{m_2}(X_2)}\\
&\leqslant & 
\sum_{\alpha_1,\alpha_2} C_1(\Vert s_1\Vert r_1)^{\vert\alpha_1\vert}  C_2(\Vert s_2\Vert r_2)^{\vert\alpha_2\vert} \Vert \varphi\Vert_{C^m(X_1\times X_2)}
\end{eqnarray*}
for any $m\geqslant \sup(m_1,m_2)$ where the r.h.s is absolutely convergent for $s_1,s_2$ small enough.
Then we conclude by using the fact that the completed
tensor product
$C^\infty_{K_1}(X_1)\widehat{\boxtimes} C^\infty_{K_2}(X_2)$ 
coincides with $C^\infty_{K_1\times K_2}(X_1\times X_2) $ 
where the topology for which we do the completion does not matter
since $C^\infty_{K_i}(X_i)$ are Fr\'echet nuclear spaces. Therefore 
the algebraic tensor product $C^\infty_{K_1}(X_1)\boxtimes C^\infty_{K_2}(X_2)$ is dense in
$C^\infty_{K_1\times K_2}(X_1\times X_2) $ and the inequality
$$\vert\sum_{\alpha_1,\alpha_2} s_1^{\alpha_1}s_2^{\alpha_2} \int_{X_1\times X_2} u_{\alpha_1}(x_1) v_{\alpha_2}(x_2)\varphi(x_1,x_2)dv_1(x_1)dv_2(x_2)\vert
\leqslant \sum_{\alpha_1,\alpha_2} C_1(\Vert s_1\Vert r_1)^{\vert\alpha_1\vert}  C_2(\Vert s_2\Vert r_2)^{\vert\alpha_2\vert} \Vert \varphi\Vert_{C^m(X_1\times X_2)}
 $$ holds true for all $\varphi\in C^\infty_{K_1\times K_2}(X_1\times X_2)$.
\end{proof}

For every $p\in \C^p, s_0\in \R^p\subset \C^p$, let
$\pi_p:\calD^\prime(M,\calm_{s_0}(\C^p))\mapsto \calD^\prime(M,\calo_{s_0}(\C^p))$ be the projection from Proposition \ref{p:proj}.
Then~:
\begin{lemm}\label{l:factorizationforgmd}
Under the assumptions of the previous Lemma, 
the following equation holds true~:
\begin{eqnarray}\label{e:factorizationgmd}
\boxed{\pi_{p_1+p_2}\left(t_1\boxtimes t_2\right)=\pi_{p_1}(t_1) \boxtimes \pi_{p_2}(t_2).}
\end{eqnarray}
\end{lemm}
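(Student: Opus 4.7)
The plan is to first reduce the distributional identity to a pointwise germ identity via Corollary \ref{coro:pi}, then prove the germ identity using the decomposition of Theorem \ref{t:decompgmd} together with the block-diagonal structure of the quadratic form $Q_{p_1+p_2}=Q_{p_1}\oplus Q_{p_2}$ on $\R^{p_1+p_2}$, and finally pass from factorized test functions to arbitrary ones by density.

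First, by Lemma \ref{l:gluing} it suffices to check the identity over a product of compacts $K_1\times K_2\subset X_1\times X_2$. Both sides lie in $\calD^\prime(K_1\times K_2,\mathcal{O}_{(\mu_1,\mu_2)}(\C^{p_1+p_2}))$ (the LHS is the holomorphic part of the meromorphic germ of distributions built in Lemma \ref{l:exteriorprod}, and the RHS is the external product of two holomorphic germs of distributions). Applying Corollary \ref{coro:pi} to both sides, the identity to be proved reads, for any test function $\varphi$,
\begin{equation*}
\pi_{p_1+p_2}\!\bigl((t_1\boxtimes t_2)(\varphi)\bigr)=\bigl(\pi_{p_1}(t_1)\boxtimes \pi_{p_2}(t_2)\bigr)(\varphi).
\end{equation*}
For $\varphi=\varphi_1\boxtimes\varphi_2$ this becomes $\pi_{p_1+p_2}(fg)=\pi_{p_1}(f)\,\pi_{p_2}(g)$ where $f=t_1(\varphi_1)\in\calm_{\mu_1}(\C^{p_1})$ and $g=t_2(\varphi_2)\in\calm_{\mu_2}(\C^{p_2})$, viewed as meromorphic germs in independent variables on $\C^{p_1+p_2}$.

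The heart of the matter is thus a purely algebraic factorization at the germ level. Write $f=f_{+}+f_{-}$ and $g=g_{+}+g_{-}$ with $f_\pm=\pi_{p_1}(f),f-\pi_{p_1}(f)$ and similarly for $g$, so that $f_{+}\in\mathcal{O}_{\mu_1}(\C^{p_1})$, $g_{+}\in\mathcal{O}_{\mu_2}(\C^{p_2})$ and $f_{-},g_{-}$ are polar germs. Then
\begin{equation*}
fg=f_{+}g_{+}+f_{+}g_{-}+f_{-}g_{+}+f_{-}g_{-}.
\end{equation*}
The term $f_{+}g_{+}$ is holomorphic at $(\mu_1,\mu_2)$. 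The key observation is that each of the three remaining terms is a polar germ in $\calm_{(\mu_1,\mu_2)}(\C^{p_1+p_2})$ with respect to the quadratic form $Q_{p_1+p_2}=Q_{p_1}\oplus Q_{p_2}$. Indeed, a typical summand in $f_{+}g_{-}$ has the form
\begin{equation*}
\frac{f_{+}(s_1)\,h(\ell^{(2)}(s_2-\mu_2))}{M_1^{m_1}\cdots M_r^{m_r}(s_2-\mu_2)}
\end{equation*}
with $M_i,\ell^{(2)}_j\in(\R^{p_2})^*$ satisfying $Q_{p_2}^*(M_i,\ell^{(2)}_j)=0$. Extending $M_i$ and $\ell^{(2)}_j$ by zero on the first block yields elements of $(\R^{p_1+p_2})^*$; writing $f_{+}(s_1)h(\ell^{(2)}(s_2-\mu_2))$ as a holomorphic germ in the variables $(s_1-\mu_1,\ell^{(2)}(s_2-\mu_2))$ and using that $Q_{p_1+p_2}^*$ is block diagonal, the orthogonality conditions $Q_{p_1+p_2}^*(M_i,\cdot)=0$ are preserved against every coordinate of the first block and against $\ell^{(2)}_j$. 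The same argument handles $f_{-}g_{+}$ and $f_{-}g_{-}$. By the uniqueness of the decomposition in Theorem \ref{t:decompgmd}, we conclude $\pi_{p_1+p_2}(fg)=f_{+}g_{+}=\pi_{p_1}(f)\,\pi_{p_2}(g)$.

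It remains to upgrade the identity from factorized test functions to arbitrary ones. Since $C^\infty_{K_i}(X_i)$ are Fréchet nuclear, the algebraic tensor product $C^\infty_{K_1}(X_1)\boxtimes C^\infty_{K_2}(X_2)$ is dense in $C^\infty_{K_1\times K_2}(X_1\times X_2)$, and the bounds established in the proof of Lemma \ref{l:exteriorprod} show that both sides, viewed as maps $C^\infty_{K_1\times K_2}(X_1\times X_2)\to\calm_{(\mu_1,\mu_2)}(\C^{p_1+p_2})$, are continuous with a common explicit control on the polar and holomorphic parts. Therefore the identity extends by continuity, yielding the claimed factorization \eqref{e:factorizationgmd}. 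The only delicate step is the verification of the orthogonality condition after embedding $(\R^{p_i})^*\hookrightarrow(\R^{p_1+p_2})^*$, which is precisely where the choice of $Q$ as a direct sum is essential.
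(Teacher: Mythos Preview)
Your proof is correct and shares the paper's core idea: decompose each factor into holomorphic plus polar parts, expand the tensor product into four terms, show the three cross terms are polar, and invoke uniqueness of the decomposition. You are more explicit than the paper about the key point---verifying that the orthogonality conditions defining polar germs are preserved under the embedding $(\R^{p_i})^*\hookrightarrow(\R^{p_1+p_2})^*$ thanks to the block-diagonal form $Q_{p_1+p_2}=Q_{p_1}\oplus Q_{p_2}$---whereas the paper merely cites equation~\eqref{e:decompgmddetailed}.

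The main difference is architectural. The paper applies Theorem~\ref{t:decompgmd} directly at the distributional level: it writes $t_i=\pi_{p_i}(t_i)+(1-\pi_{p_i})(t_i)$ in $\calD'(X_i,\calm_{\mu_i})$, observes that the cross terms of $t_1\boxtimes t_2$ lie in $\calD'(X_1\times X_2,\mathcal{P}_{(\mu_1,\mu_2)})$, and concludes by uniqueness. This avoids test functions entirely. You instead descend to scalar germs via Corollary~\ref{coro:pi}, prove the identity $\pi_{p_1+p_2}(fg)=\pi_{p_1}(f)\,\pi_{p_2}(g)$ for $f=t_1(\varphi_1)$, $g=t_2(\varphi_2)$, and then must climb back up by a density argument. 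That last step is stated loosely (``continuous with a common explicit control'' on a space $\calm$ with no declared topology); the clean fix is to note that both sides are \emph{holomorphic} germs of distributions, so for each fixed $s$ near $(\mu_1,\mu_2)$ they are ordinary distributions agreeing on the dense subspace $C^\infty_{K_1}\boxtimes C^\infty_{K_2}$, hence everywhere. The paper's route is shorter precisely because it never leaves the distributional category, but your scalar-level argument has the virtue of isolating the purely algebraic content of the lemma.
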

\begin{proof}
%The fact that the exterior product $(u_{\lambda_1}v_{\lambda_2})_{(\lambda_1,\lambda_2)}$
%is a \gmd at $(\lambda_1,\lambda_2)=(0,0)\in \mathbb{C}^{n_1+n_2}$ in $\mathcal{D}^\prime(U\times V)$
%follows from Lemma \ref{l:exteriorprod}.
%Then 
The proof of equation \ref{e:factorizationgmd} goes as follows,
we decompose $t_1$ and $t_2$ as $t_1=\pi_{p_1}(t_1)+(1-\pi_1)(t_1)$ and 
$t_2=\pi_{p_2}(t_2)+(1-\pi_2)(t_2)$ where $(\pi_{p_1}(t_1),\pi_{p_2}(t_2))\in 
\calD^\prime(X_1,\calo_{\mu_1}(\C^{p_1}))\times \calD^\prime(X_2,\calo_{\mu_2}(\C^{p_2}))$ and
$((1-\pi_1)(t_1),(1-\pi_2)(t_2))\in \calD^\prime(X_1,\mathcal{P}_{\mu_1}(\C^{p_1}))\times 
\calD^\prime(X_2,\mathcal{P}_{\mu_2}(\C^{p_2}))$.
Then note that
\begin{eqnarray*}
t_1\boxtimes t_2&=&
\underset{\in \calD^\prime(X_1\times X_2,\calo_{(\mu_1,\mu_2)}(\C^{p_1+p_2}))}{\underbrace{\pi_{p_1}(t_1)\boxtimes \pi_{p_2}(t_2)}} \\ 
&+& \underset{\in 
\calD^\prime(X_1\times X_2,\mathcal{P}_{(\mu_1,\mu_2)}(\C^{p_1+p_2}))}{\underbrace{
(1-\pi_1)(t_1)\boxtimes\pi_2(t_2) + \pi_1(t_1)\boxtimes(1-\pi_2)(t_2)     +(1-\pi_1)(t_1)\boxtimes(1-\pi_2)(t_2) }}
\end{eqnarray*}
where the term $
(1-\pi_1)(t_1)\boxtimes\pi_2(t_2) + \pi_1(t_1)\boxtimes(1-\pi_2)(t_2)     +(1-\pi_1)(t_1)\boxtimes(1-\pi_2)(t_2) $
is a finite sum of polar germs by equation
(\ref{e:decompgmddetailed}). It follows that
$\pi_{p_1+p_2}\left(t_1\boxtimes t_2\right)=\pi_{p_1}(t_1) \boxtimes \pi_{p_2}(t_2)$ by the uniqueness of the decomposition
which follows from Theorem \ref{t:decompgmd}.
\end{proof}

By a similar proof as in the above Lemma, we also have~:
\begin{lemm}\label{l:factorizationgmdtimesholo}
Let $X$ be a smooth manifold, $U\subset X$ an open subset and $m\in \N$.
Set $(p_1,p_2)\in \mathbb{N}^2$ to be an arbitrary pair
of integers and $(\mu_1,\mu_2)\in \R^{p_1}\times \R^{p_2}\subset \C^{p_1+p_2}$. 
Let $t(s_1)\in  \mathcal{D}^{\prime,m}(U,\calm_{\mu_1}(\C^{p_1}))$ and $h(s_2)\in C^m(U,\calo_{\mu_2}(\C^{p_2}))$ .
Then the product $t(s_1)h(s_2)$ is an element
of $\mathcal{D}^{\prime,m}(U,\calm_{(\mu_1,\mu_2)}(\C^{p_1+p_2})) $
which
satisfies the equation~:
\begin{eqnarray}
\boxed{ \pi_{p_1+p_2}(t(s_1)h(s_2))=\pi_{p_1}(t(s_1))h(s_2) .}
\end{eqnarray}
\end{lemm}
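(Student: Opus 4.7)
\medskip

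\noindent\emph{Proof plan.} The strategy parallels that of Lemma~\ref{l:factorizationforgmd}: I would first check that the product $t(s_1)h(s_2)$ is a well-defined meromorphic germ of distributions of bounded order on $U$ with parameter space $\C^{p_1+p_2}$, then use the decomposition of Theorem~\ref{t:decompgmd} together with its uniqueness to identify the projection. Well-definedness is straightforward: for any test function $\varphi\in C_c^m(U)$, the function $h(s_2)\varphi$ lies in $C_c^m(U)$ (because $h(s_2)\in C^m(U)$) and depends holomorphically on $s_2$, so $s\mapsto \langle t(s_1), h(s_2)\varphi\rangle$ is meromorphic in $s_1$ with linear poles coming only from those of $t$, and holomorphic in $s_2$. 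Multiplication by $C^m$ functions preserves distributional order $m$, hence $t(s_1)h(s_2)\in \mathcal{D}^{\prime,m}(U,\calm_{(\mu_1,\mu_2)}(\C^{p_1+p_2}))$.

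The substance of the proof is the factorization formula. Decompose $t(s_1)=\pi_{p_1}(t(s_1))+(1-\pi_{p_1})(t(s_1))$ according to Theorem~\ref{t:decompgmd} for $\C^{p_1}$. Multiplying by $h(s_2)$ yields
\[
t(s_1)h(s_2)=\underbrace{\pi_{p_1}(t(s_1))\,h(s_2)}_{\in\,\mathcal{D}^{\prime,m}(U,\calo_{(\mu_1,\mu_2)}(\C^{p_1+p_2}))}+\underbrace{(1-\pi_{p_1})(t(s_1))\,h(s_2)}_{\text{claim: polar germ}}.
\]
The first term is a product of a holomorphic germ of distributions of order $m$ with a $C^m$-valued holomorphic germ, hence itself a holomorphic germ of order $m$ in both variables jointly. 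By the uniqueness of the decomposition in Theorem~\ref{t:decompgmd}, once I establish that the second term lies in $\mathcal{D}^{\prime,m}(U,\mathcal{P}_{(\mu_1,\mu_2)}(\C^{p_1+p_2}))$, the desired identity $\pi_{p_1+p_2}(t(s_1)h(s_2))=\pi_{p_1}(t(s_1))\,h(s_2)$ follows at once.

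The main obstacle is therefore to verify that the polar part stays polar after tensoring by $h(s_2)$, i.e.\ that the orthogonality condition in the definition of polar germs is preserved. By equation~(\ref{e:decompgmddetailed}), $(1-\pi_{p_1})(t(s_1))$ is a finite sum of terms of the form
\[
\frac{1}{L_1^{n_1}(s_1-\mu_1)\cdots L_k^{n_k}(s_1-\mu_1)}\,g\bigl(\ell(s_1-\mu_1)\bigr),
\]
with $L_i\in(\R^{p_1})^*$ linearly independent, $\ell=(\ell_1,\dots,\ell_n):\R^{p_1}\to\R^n$ linear, $g\in \mathcal{D}^{\prime,m}(U,\calo_0(\C^n))$, and satisfying $Q^*(L_i,\ell_j)=0$ for all $i,j$. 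Viewing $L_i$ as a linear function on $\R^{p_1+p_2}$ that depends only on the $\R^{p_1}$-coordinates, and writing $\tilde\ell=(\ell,\mathrm{id}_{\R^{p_2}}):\R^{p_1+p_2}\to\R^{n+p_2}$, the product is
\[
\frac{1}{L_1^{n_1}(s-(\mu_1,\mu_2))\cdots L_k^{n_k}(s-(\mu_1,\mu_2))}\,\tilde g\bigl(\tilde\ell(s-(\mu_1,\mu_2))\bigr),
\]
where $\tilde g(\ell,s_2-\mu_2)=g(\ell)\,h(s_2)$ is a holomorphic germ at $0\in\C^{n+p_2}$ valued in $\mathcal{D}^{\prime,m}(U)$ (order $m$ is preserved by multiplication by the $C^m$-valued holomorphic germ $h$). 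The decisive point is that the canonical quadratic form fixed at the beginning of Section~\ref{ss:renormproj} is diagonal, so $Q^*$ splits as an orthogonal sum on $(\R^{p_1+p_2})^*=(\R^{p_1})^*\oplus(\R^{p_2})^*$; hence $Q^*(L_i,\ell_j)=0$ still holds for $j\le n$ (by assumption) and $Q^*(L_i,(s_2-\mu_2)_j)=0$ trivially. Thus the above expression is a polar germ of distributions at $(\mu_1,\mu_2)$ in the sense required, which completes the argument.
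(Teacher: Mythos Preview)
Your proof is correct and follows essentially the same approach as the paper: decompose $t(s_1)$ into its holomorphic and polar parts, multiply by $h(s_2)$, and invoke the uniqueness of the decomposition in Theorem~\ref{t:decompgmd}. You are in fact more explicit than the paper on the key point---the paper simply asserts that $h(s_2)(1-\pi_{p_1})(t(s_1))$ is valued in polar germs, whereas you spell out the orthogonality check and correctly note that it uses the block-diagonal structure of the canonical quadratic form on $\R^{p_1+p_2}$.
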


\begin{proof}
%Smooth functions on $X$ supported in
%some compact $K$ are denoted $C^\infty_K(X)$.
Without loss of generality, we can work locally since
all local results can be glued together by partition of unity
thanks to Lemma \ref{l:gluing}.
For every 
test function $\varphi\in C^\infty_c(U)$,
$\langle t(s_1) h(s_2),\varphi \rangle=\langle t(s_1), \underset{\in C^m_c(U)}{\underbrace{ h(s_2)\varphi}} \rangle$
hence the product $t(s_1) h(s_2)$ is well defined in
$\calD^{\prime,m}(U)$
as soon as both $t(s_1),h(s_2)$ exist.
We now explain the meromorphicity of $(s_1,s_2)\mapsto \langle t(s_1) h(s_2),\varphi \rangle$
at $(\mu_1,\mu_2)\in \C^{p_1+p_2}$.
Since $t(s)\in \calD^{\prime,m}(U,\calm_{\mu_1}(\C^{p_1}))$, there exists
$u(s)\in \calD^{\prime,m}(U,\calo_{\mu_1}(\C^{p_1}))$ and linear functions
$(L_1,\dots,L_k)$
such that $\left( L_1(s) \cdots L_k(s)\right)t(s) = u(s)$.
Therefore the product 
$t(s_1) h(s_2)$ also reads
$ \frac{1}{ L_1(s_1) \cdots L_k(s_1)} u(s_1)h(s_2) $.
Then using power expansions in $s_1-\mu_1$ for $u(s_1)$ as in
Theorem \ref{weakstrongholo} and expanding $h(s_2)$ in powers of $s_2-\mu_2$
where coefficients are in $C^m(U)$, 
we easily show that
$u(s_1)h(s_2) \in  \calD^{\prime,m}(U,\calo_{(\mu_1,\mu_2)}(\C^{p_1+p_2}))$
for $(s_1,s_2)\in \C^{p_1+p_2}$ close enough to
$(\mu_1,\mu_2)\in \C^{p_1+p_2}$ which proves
$t(s_1)h(s_2)\in\mathcal{D}^{\prime,m}(U,\calm_{(\mu_1,\mu_2)}(\C^{p_1+p_2})) $.

The equation $ \pi_{p_1+p_2}(th)=h\pi_{p_1}(t)$ immediately follows
from the fact that $\pi_{p_2}(h(s_2))=h(s_2)$ since $h$ is holomorphic 
and $h(s_2)(1-\pi_{p_1})(t(s_1))$ is valued in polar germs.
\end{proof}

\subsection{Proof of Lemma \ref{keylemma}.}
\label{a:proofkeylemma}

\begin{proof}
%Since $P=-\Delta_g+V$ is an elliptic operator on some compact manifold
%$M$,
%%our Riemannian manifold $(M,g)$ is connected, $\ker(-\Delta_g)$
%%contains only constant functions.
%%Indeed $-\Delta_g u=0\implies 0 =\langle u, -\Delta_g u\rangle_{L^2}=\langle \nabla u,\nabla u \rangle_{L^2}\implies \nabla u=0$
%%thus $u$ is constant on connected components.
%%Let us determine explicitly the spectral projector
%%$\Pi$, it should satisfy for all $u$~:
%%$$0= \langle 1,u-\Pi(u) \rangle=\int_M (u-\Pi(u))=\int_M udx -\Pi(u)\text{Vol}(M)\implies \Pi(u)= \frac{\int_M udx}{\text{Vol}(M)}.$$
%%The
%the 
%Schwartz kernel of the spectral projector $\Pi:L^2(M)\mapsto \ker(P)$ is a smooth function $\Pi(x,y)\in C^\infty(M\times M)$ and
%$\Pi:L^2(M)\mapsto C^\infty(M)$ is a smoothing operator.
Since our Riemannian manifold $(M,g)$ is connected, $\ker(P)$
contains only constant functions.
Indeed $P u=0$ implies that $u\in C^\infty$ by elliptic regularity and $ 0 =\langle u, -\Delta_g u\rangle + \langle u,Vu\rangle \implies\langle \nabla u,\nabla u \rangle =0\implies \nabla u=0$
thus $u$ is constant on connected components.
Let us determine explicitly the spectral projector
$\Pi$, it should satisfy for all $u$~:
$$0= \langle 1,u-\Pi(u) \rangle=\int_M (u-\Pi(u))=\int_M udx -\Pi(u)\text{Vol}(M)\implies \Pi(u)= \frac{\int_M udx}{\text{Vol}(M)}.$$
The Schwartz kernel of the spectral projector $\Pi$ is therefore the constant function $\Pi(x,y)=\text{Vol}(M)^{-1}$.

The first two claims about the Schwartz kernel $\greenf ^{s}(x,y)$ follow from~\cite[Theorem 4 p.~302]{Seeley}
in the celebrated work of Seeley, by applying his Theorem to $A=P-\Pi$ which is a well defined
elliptic pseudodifferential operator of order $2$.

For the third claim, we start from the formula
$\greenf ^{s}=\int_0^\infty\left(e^{-tP}-\Pi\right)(x,y)t^{s-1}dt$
and our proof exactly follows the proof of~\cite[Proposition 1]{BarMoroianu}
where we replace the heat semigroup $e^{t\Delta_g}$ in their proof by the semigroup
$(e^{-tP}-\Pi)$ whose Schwartz kernel is  $K_t-\Pi$ and is denoted by $p_t$.
Start
from the formula
$p_t(x,y)=\langle \delta_x,(e^{-tP}-\Pi)\delta_y \rangle_{L^2(M)}=\langle (e^{-\frac{t}{2}P}-\Pi)\delta_x,(e^{-\frac{t}{2}P}-\Pi)\delta_y \rangle_{L^2(M)}$.
For any integers $(k,l,m)$,
$\vert \partial^m_t P_{x}^kP_{y}^l p_t(x,y)\vert=\vert  P_{x}^{k+m}P_{y}^l p_t(x,y)\vert$
since $\partial_t^m\left(e^{-tP}-\Pi\right)=P^m\left(e^{-tP}-\Pi\right)$. Hence,
$$\vert \partial^m_t P_{x}^kP_{y}^l p_t(x,y)\vert\leqslant \Vert (e^{-(t-\varepsilon)P}-\Pi)\Vert_{B(L^2(M))} \Vert P_{x}^{k+m}
(e^{-\frac{\varepsilon}{2}P}-\Pi)\delta_x\Vert_{L^2(M)}
\Vert P_{y}^l\left(e^{-\frac{\varepsilon}{2}P}-\Pi\right)\delta_y\Vert_{L^2(M)}.$$
Therefore taking the supremum over $(x,y)\in M\times M$ yields~:
\begin{eqnarray*}
\Vert \partial^m_t P_{x}^kP_y^l p_t\Vert_{C^0(M\times M)}\leqslant \Vert (e^{-(t-\varepsilon)P}-\Pi)\Vert_{B(L^2(M))} \Vert P_x^{k+m}
(e^{-\frac{\varepsilon}{2}P}-\Pi)\delta_x\Vert_{L^2(M)}
\Vert P_{y}^l\left(e^{-\frac{\varepsilon}{2}P}-\Pi\right)\delta_y\Vert_{L^2(M)}
\end{eqnarray*}
where both $\Vert P_{x}^{k+m}
(e^{-\frac{\varepsilon}{2}P}-\Pi)\delta_x\Vert_{L^2(M)}$ and 
$\Vert P_{y}^l\left(e^{-\frac{\varepsilon}{2}P}-\Pi\right)\delta_y\Vert_{L^2(M)}$ are finite
since both $(e^{-\frac{\varepsilon}{2}P}-\Pi)\delta_x$ and $(e^{-\frac{\varepsilon}{2}P}-\Pi)\delta_y$
are smooth functions because 
the semigroup $(e^{tP}-\Pi)_{t\in\R_{\geqslant 0}}$ is smoothing.
Furthermore, the term $\Vert (e^{-(t-\varepsilon)P}-\Pi)\Vert_{B(L^2(M))}$
has
exponential decay when $t\rightarrow +\infty$
since $\left(e^{-(t-\varepsilon)P}-\Pi\right)$ is a smoothing operator which has
a gap in the spectrum, indeed by spectral theory
$e^{-tP}u=\sum_{\lambda\in \sigma(P)}e^{-t\lambda}\Pi_\lambda(u)$ where $\Pi_\lambda$ is the spectral projector
on the eigenspace of eigenvalue $\lambda$ and the
r.h.s.
converges absolutely in all Sobolev spaces $H^s(M),s\geq 0$ when $t>0$.
More generally, we obtain decay estimates of the form
\begin{eqnarray*}
\Vert \partial^m_tp_t \Vert_{C^k(M\times M)}&\leqslant &\sum_{l_1,l_2\leqslant \frac{k}{2}+1}
\Vert\partial^m_t P_{x}^{k_1}P_{y}^{k_2} p_t \Vert_{C^0(M\times M)}\\
&\leqslant & C_{k,m} \Vert (e^{-(t-\varepsilon)P}-\Pi)\Vert_{B(L^2(M))}\leqslant C_{k,m} e^{-(t-\varepsilon)\lambda_1}
\end{eqnarray*}
where $\lambda_1>0$ is the smallest non zero eigenvalue of $P$ which exists since
$\sigma(P)$ is a discrete subset of $[0,+\infty)$. It follows that
the integral $\int_1^\infty t^{s-1}p_tdt$ converges absolutely for all $s\in \mathbb{C}$
and is valued in all Banach spaces $C^k(M\times M),k\in\mathbb{N}$ since~:
\begin{eqnarray*}
\Vert \int_1^\infty t^{s-1}p_tdt \Vert_{C^k(M\times M)}\leqslant
\int_1^\infty t^{Re(s)-1}\Vert p_t \Vert_{C^k(M\times M)}dt\leqslant C_k\int_1^{\infty} t^{Re(s)-1}e^{-(t-\varepsilon)\lambda_1}dt.
\end{eqnarray*}
The integral $\int_1^\infty t^{s-1}p_tdt  $ depends holomorphically in $s$
since
\begin{eqnarray}
\Vert \int_1^\infty \left(\frac{d}{ds}\right)^l t^{s-1}p_tdt \Vert_{C^k(M\times M)}\leqslant
C_k\int_1^{\infty} t^{Re(s)-1}\log(t)^le^{-(t-\varepsilon)\lambda_1}dt
\end{eqnarray}
where the r.h.s. is absolutely convergent and we can conclude by dominated convergence arguments.
\end{proof}

\subsection{Proof of Lemma \ref{l:analyticcontinuationmodelcase}.}

\begin{proof} First notice that when $Re(s_i)> -1$, $i=1, \cdots , E$, this integral is absolutely convergent and holomorphic in $s$.

Now If $E=1$, then by integration by parts, for $Re(s)> -1$,
$$\int_{[0,1]} t^{s}\psi(t)dt=\sum _{i=0}^{k-1} (-1)^{i}\frac 1{l_i(s)}\psi^{(i)} (1)+(-1)^k\frac 1{l_{k-1}(s)}\int_{[0,1]} t^{s+k}\psi ^{(k)}(t)dt,
$$
where $l_i(s)=(s+1)\cdots (s+i+1)$, the l.h.s is a meromorphic function when $ Re(s)> -k-1$ with possible poles at $s=-1, \cdots, -k$, so it extends to a meromorphic function on $ Re(s)> -k-1$. 

In general,
for $Re(s_i)> -1$, $i=1, \cdots, E$, and $k_1, \cdots , k_E\in \Z _{>0}$,
\begin {eqnarray}
I_s(\psi)&=&\int_{[0,1]^E} t_1^{s_1}\dots t_E^{s_E}\psi(t_1,\dots,t_E)d^Et \notag \\
&=&\sum _{\{j_1, \cdots, j_m\}\subset \{1, \cdots , E\}} \sum _{\overset {j\not =j_1, \cdots j_m}{i_j=0, \cdots , k_j-1}}\frac {(-1)^{i_j}}{ l_{i_j}(s_j)}\prod ^m\frac {(-1)^{k_{j_i}}}{ l_{k_{j_i}-1}(s_{j_i})} \notag\\
&&\int _{[0,1]^m}\prod _{j=j_1, \cdots j_m}t^{s_j+k_j}\big (\prod _{j\not =j_1, \cdots j_m}\partial_{t_{j}}^{i_j}\big)\partial_{t_{j_1}}^{k_{j_1}}\cdots \partial _{ t_{{j_m}}}^{k_{j_m}}\psi |_{t_j=1,j\not =j_1, \cdots j_m}dt_{j_1}\cdots dt_{j_s}
\label {eqn:IntByParts}
\end{eqnarray}
the r.h.s is a meromorphic function when $ Re(s_i)> -k_i-1$. So $I_s(\psi)$ extends to a meromorphic germ at any point in $\Z ^E$.

Now at a given point $(p_e)_e\in \Z ^E$, $\frac 1{s_e-a_e}$ is holomorphic except $a_e=p_e$, therefore
$$
\left (\prod _{i\in I}(s_i-p_i)\right )I_s(\psi)
$$
is a holomorphic germ at $(p_e)_e$. The distribution order of $I_s(\psi)$ at the point $(p_e)$ can be read from Equation (\ref {eqn:IntByParts}) easily.
\end{proof}

\subsection{Proof of Lemma \ref{lemmametric}.}
\label{p:lemmametric}

\begin{proof}
In the chart $(U\times U, (x^\mu,y^\nu))$, let us consider the Taylor expansion of $\phi (x,y)$,
$\phi(x,y)=\sum_{k\geqslant 0}\phi_{[k]}(x,y),$ where
$\phi_{[k]}(x,y)=\sum_{\vert\alpha\vert+\vert\beta\vert=k}\frac{x^\alpha y^\beta}{\alpha!\beta!}\partial_x^\alpha\partial_y^\beta \phi(0,0)$.
\\
Obviously $\phi_{[0]}(x,y)=0$. \delete {Let us take $x=y$ in Equation (\ref {e:hadamarddist}),
$$g^{-1}(x)(\frac {\partial \phi}{\partial x^\mu}(x,x)dx^\mu , \frac {\partial \phi}{\partial x^\nu}(x,x)dx^\nu)=0$$
which means
$$\frac {\partial \phi}{\partial x^\mu}(x,x)dx^\mu\equiv 0\Rightarrow \frac {\partial \phi}{\partial x^\mu}(x,x)\equiv 0 .
$$
By symmetry
$$\frac {\partial \phi}{\partial y^\mu}(x,x)\equiv 0,$$
So} 
By symmetry and $\phi(x,x)=0$, we know that
$$\phi_{[1]}(x,y)=0.
$$
By symmetry and $\phi(x,x)\equiv 0$, we know that
$\phi_{[2]}(x,y)=\sum _{\mu } a_\mu (x^\mu-y^\mu)^2,$
now by the fact $\phi (0,y)=\| y\|^2$, we find that~:
\begin{equation}\label{e:phi2}
\phi_{[2]}(x,y)=\sum (x^\mu -y^\mu)^2
.\end{equation}
\\
In fact, let us take $x=y$ in Equation (\ref {e:hadamarddist}),
$$g^{-1}(x)(\frac {\partial \phi}{\partial x^\mu}(x,x)dx^\mu , \frac {\partial \phi}{\partial x^\nu}(x,x)dx^\nu)=0$$
which means
$$\frac {\partial \phi}{\partial x^\mu}(x,x)dx^\mu\equiv 0\Rightarrow \frac {\partial \phi}{\partial x^\mu}(x,x)\equiv 0 .
$$
By symmetry
\begin{equation}
\label {eqn:pphiy}\frac {\partial \phi}{\partial y^\mu}(x,x)\equiv 0.
\end{equation}
\\
Now let us make a change of variables on $V\times W \to U\times U$ given by
$$(v,h)\mapsto (v,v+h), $$
we can take $V, W$ small enough such that $V\times W$ is a coordinate chart around $(x_0,x_0)$.
\\
Let $\tilde{\phi}(v,h)=\phi(v,v+h)$.
Take a partial Taylor expansion in $h$ for $\tilde \phi$,
$$\tilde{\phi}(v,h)=\tilde \phi (v,0)+ \frac {\partial \tilde \phi}{\partial h^\mu}(v,0)h^\mu +\frac 12 \frac {\partial ^2 \tilde{\phi}}{\partial h^\mu \partial h^\nu } (v,0)h^\mu h^\nu +\varepsilon_3$$
\\
 where $\varepsilon_3$ vanishes at order $3$ in $h$.
\\
We know
$$\tilde \phi (v,0)=\phi (v,v)=0,
$$
by Equation (\ref {eqn:pphiy}),
$$\frac {\partial \tilde \phi}{\partial h^\mu}(v,0)=\frac {\partial \phi}{\partial y^\mu}(v,v)=0.
$$
By chain rule,
$$\frac {\partial ^2 \tilde{\phi}}{\partial h^\mu \partial h^\nu } (v,0)=\frac {\partial ^2 {\phi}}{\partial y^\mu \partial y^\nu } (v,v)
$$
Equation~(\ref{e:hadamarddist}) shows~:
$$\frac {\partial \phi}{\partial x^\mu}(x,y)g^{\mu \nu}(x)\frac {\partial \phi}{\partial x^\nu}(x,y)=4\phi (x,y).
$$
Taking $\frac {\partial^2 }{\partial x^{\mu _1}\partial x^{\nu _1}}$ on both sides and let $x=y=v$, we have
$$\frac {\partial^2 \phi}{\partial x^\mu\partial x^{\mu _1} }(v,v)g^{\mu \nu}(v)\frac {\partial^2 \phi}{\partial x^\nu \partial x^{\nu _1}}(v,v)+\frac {\partial^2 \phi}{\partial x^\mu\partial x^{\nu _1} }(v,v)g^{\mu \nu}(v)\frac {\partial^2 \phi}{\partial x^\nu \partial x^{\mu _1}}(v,v)=4\frac {\partial^2 \phi}{\partial x^{\mu _1}\partial x^{\nu _1} }(v,v).
$$
that is
$$\frac {\partial^2 \phi}{\partial x^\mu\partial x^{\mu _1} }(v,v)g^{\mu \nu}(v)\frac {\partial^2 \phi}{\partial x^\nu \partial x^{\nu _1}}(v,v)=2\frac {\partial^2 \phi}{\partial x^{\mu _1}\partial x^{\nu _1} }(v,v).
$$
Notice that
$\frac {\partial^2 \phi}{\partial x^{\mu}\partial x^{\nu } }(v,v)$ is invertible since $\frac {\partial^2 \phi}{\partial x^{\mu }\partial x^{\nu } }(0,0)=\delta_{\mu\nu} $ by (\ref{e:phi2})
and if $U$ is chosen small enough.
Then we get that
$$\frac {\partial^2 \phi}{\partial x^\mu \partial x^{\nu}}(v,v)=2g_{\mu \nu}(v).
$$
Since $\phi $ is symmetry, we know
$$\frac {\partial^2 \phi}{\partial y^\mu \partial y^{\nu}}(v,v)=\frac {\partial^2 \phi}{\partial x^\mu \partial x^{\nu}}(v,v)=2g_{\mu \nu}(v).
$$
So
$$\tilde{\phi}(v,h)=g_{\mu \nu}(v)h^\mu h^\nu +\varepsilon_3$$
which concludes the proof.
\end{proof}


\begin{thebibliography}{10}



\bibitem{ackermann1996note}
Thomas Ackermann.
\newblock A note on the Wodzicki residue.
\newblock {\em Journal of Geometry and Physics}, 20(4):404--406, 1996.

\bibitem{albert2016heat}
Benjamin~I Albert.
\newblock Heat kernel renormalization on manifolds with boundary.
\newblock {\em arXiv preprint arXiv:1609.02220}, 2016.

\bibitem{albertthesis}
Benjamin~I. Albert.
\newblock {\em Effective field theory on manifolds with boundary.}
\newblock Ph.D. thesis, University of Pennsylvania, 2017.

\bibitem{Baereinstein}
Bernd Ammann and Christian B{\"a}r.
\newblock The Einstein-Hilbert Action as a Spectral Action.
\newblock {\em Noncommutative Geometry and the Standard Model of Elementary
  Particle Physics}, pages 75--108, 2002.

\bibitem{Atiyah}
M.~F. Atiyah. 
\newblock Resolution of singularities and division of distributions. 
\newblock {\em Communications on pure and applied mathematics,} 1970, vol. 23, no 2, p. 145-150.

\bibitem{Atiyah-73}
M.~F. Atiyah, R.~Bott, and V.~K. Patodi.
\newblock On the Heat Equation and the Index Theorem.
\newblock {\em Inventiones Math.}, 19:279--330, 1973.

\bibitem{BarMoroianu}
Christian B{\"a}r and Sergiu Moroianu.
\newblock Heat kernel asymptotics for roots of generalized Laplacians.
\newblock {\em International Journal of Mathematics}, 14(04):397--412, 2003.


\bibitem{bar1506index}
C~B{\"a}r and A~Strohmaier.
\newblock An index theorem for Lorentzian manifolds with compact spacelike
Cauchy boundary (2015).
\newblock {\em arXiv preprint arXiv:1506.00959}.


\bibitem{Bergbauerdip}
C.~Bergbauer and D.~Kreimer.
\newblock {\em The Hopf Algebra of Rooted Trees
in Epstein-Glaser Renormalization}.
\newblock {\em Annales Henri Poincaré}
Vol. 6. No. 2. Birkhäuser Basel, 2005, p. 343-367. 

\bibitem{Bergbauer2009}
Christoph Bergbauer, Romeo Brunetti, and Dirk Kreimer.
\newblock Renormalization and resolution of singularities.
\newblock {\em arXiv preprint arXiv:0908.0633}, 2009.


\bibitem{berghoff}
Marko Berghoff.
\newblock {\em Wonderful Renormalization}.
\newblock PhD thesis, Humboldt-Universit{\"a}t zu Berlin,
  Mathematisch-Naturwissenschaftliche Fakult{\"a}t, 2015.

\bibitem{berghoff2015}
Marko Berghoff.
\newblock Wonderful Compactifications in Quantum Field Theory.
\newblock {\em Communications in Number Theory and Physics}, 9(3):477--547,
  2015.


\bibitem{berline2016local}
Nicole Berline and Mich{\`e}le Vergne.
\newblock Local asymptotic Euler-Maclaurin expansion for Riemann sums over a
  semi-rational polyhedron.
\newblock In {\em Configuration Spaces}, pages 67--105. Springer, 2016.

\bibitem{Berline-04}
N.~Berline, E.~Getzler, and M.~Vergne.
\newblock {\em Heat kernels and Dirac operators}.
\newblock Springer Verlag, Berlin, 2004.


\bibitem{Brunetti2}
R.~Brunetti and K.~Fredenhagen.
\newblock Microlocal Analysis and Interacting Quantum Field Theories:
  Renormalization on Physical Backgrounds.
\newblock {\em Commun. Math. Phys.}, 208:623--61, 2000.


\bibitem{bytsenko2003analytic}
Andrei~A Bytsenko, G~Cognola, V~Moretti, S~Zerbini, and E~Elizalde.
\newblock {\em Analytic Aspects of Quantum Fields}.
\newblock World Scientific, 2003.


\bibitem{paychacardonaducourtioux}
A.~Cardona, C.~Ducourtioux, and S.~Paycha. 
\newblock From tracial anomalies to anomalies in Quantum Field Theory. 
\newblock Communications in mathematical physics 242.1-2 (2003): 31-65.

\bibitem{ceyhan2012feynman}
Ozg{\"u}r Ceyhan and Matilde Marcolli.
\newblock Feynman integrals and motives of configuration spaces.
\newblock {\em Communications in Mathematical Physics}, pages 1--36, 2012.

\bibitem{clavier2017algebraic}
Pierre Clavier, Li~Guo, Sylvie Paycha, and Bin Zhang.
\newblock An algebraic formulation of the locality principle in
  renormalisation.
\newblock {\em arXiv preprint arXiv:1711.00884}, 2017.


\bibitem{Connes}
A.~Connes and D.~Kreimer.
\newblock Hopf Algebras, Renormalization and Noncommutative Geometry.
\newblock {\em Commun. Math. Phys.}, 199:203--42, 1998.

\bibitem{CKI}
A.~Connes and D.~Kreimer.
\newblock Renormalization in Quantum Field Theory and the {R}iemann-{H}ilbert
Problem {I}: {T}he {H}opf Algebra Structure of Graphs and the Main Theorem.
\newblock {\em Commun. Math. Phys.}, 210:249--73, 2000.

\bibitem{CKII}
A.~Connes and D.~Kreimer.
\newblock Renormalization in Quantum Field Theory and the {R}iemann-{H}ilbert
Problem {II}: the $\beta$-Function, Diffeomorphisms and the Renormalization
Group.
\newblock {\em Commun. Math. Phys.}, 216:215--41, 2001.


\bibitem{Costello-11}
K.~Costello.
\newblock {\em Renormalization and Effective Field Theory}, volume 170 of {\em
  Math. Surveys Mono.}
\newblock Amer. Math. Soc., Providence, 2011.


\bibitem{dang2015complex}
Nguyen~Viet Dang.
\newblock Complex powers of analytic functions and meromorphic renormalization
  in QFT.
\newblock {\em arXiv preprint arXiv:1503.00995}, 2015.


\bibitem{derezinski2016feynman}
Jan Derezi{\'n}ski and Daniel Siemssen.
\newblock Feynman propagators on static spacetimes.
\newblock {\em arXiv preprint arXiv:1608.06441}, 2016.

\bibitem{derezinski2017evolution}
Jan Derezi{\'n}ski and Daniel Siemssen.
\newblock An evolution equation approach to the Klein-Gordon operator on curved
spacetime.
\newblock {\em arXiv preprint arXiv:1709.03911}, 2017.


\bibitem{Duistermaat}
J.~J. Duistermaat.
\newblock {\em Fourier Integral Operators}.
\newblock Birkh{\"{a}}user, Boston, 1996.



\bibitem{Epstein}
H.~Epstein and V.~Glaser.
\newblock The role of locality in perturbation theory.
\newblock {\em Ann. Inst. Henri Poincar\'e}, 19:211--95, 1973.


\bibitem{fulling2007vacuum}
Stephen~A Fulling.
\newblock Vacuum energy as spectral geometry.
\newblock {\em Symmetry, Integrability and Geometry. Methods and Applications},
  3, 2007.



\bibitem{gerard2016massive}
Christian G{\'e}rard and Micha{\l} Wrochna.
\newblock The massive Feynman propagator on asymptotically Minkowski
spacetimes.
\newblock {\em arXiv preprint arXiv:1609.00192}, 2016.

\bibitem{gerard2016feynman}
Christian G{\'e}rard and Michal Wrochna.
\newblock Feynman propagators and Hadamard states from scattering data for the
Klein-Gordon equation on asymptotically Minkowski spacetimes.
\newblock {\em arXiv preprint arXiv:1603.07465}, 2016.


\bibitem{gerard2017analytic}
Christian G{\'e}rard and Micha{\l} Wrochna.
\newblock Analytic Hadamard states, Calder\'on projectors and Wick
rotation near analytic Cauchy surfaces.
\newblock {\em arXiv preprint arXiv:1706.08942}, 2017.


\bibitem{gerard2017hadamard}
Christian G{\'e}rard, Omar Oulghazi, and Micha{\l} Wrochna.
\newblock Hadamard states for the Klein--Gordon equation on 
Lorentzian
manifolds of bounded geometry.
\newblock {\em Communications in Mathematical Physics}, 352(2):519--583, 2017.



\bibitem{Gilkey}
P~Gilkey.
\newblock Invariance Theory, the Heat Equation and the Atiyah-Singer Index
  Theorem, (1995).
\newblock {\em Studies in Advanced Mathematics, CRC Press, Inc}.


\bibitem{grothendieckholo}
Alexandre Grothendieck.
\newblock Sur certains espaces de fonctions holomorphes. I.
\newblock {\em J. reine angew. Math}, 192(1):35--64, 1953.


\bibitem{RivasseauGurau}
Razvan Gurau, Vincent Rivasseau, and Alessandro Sfondrini.
\newblock Renormalization: an advanced overview.
\newblock {\em arXiv preprint arXiv:1401.5003}, 2014.


\bibitem{gunningrossi}
R~Gunning and H~Rossi.
\newblock Analytic functions of several complex variables.
\newblock {\em Prentice-Hall, Englewood Cliffs, NJ}, 1965.


\bibitem{guopaychazhang2015}
Li~Guo, Sylvie Paycha, and Bin Zhang.
\newblock A conical approach to Laurent expansions for multivariate meromorphic germs with linear poles.
\newblock {\em arXiv preprint arXiv:1501.00426}, 2015.

\bibitem{guo2017counting}
Li~Guo, Sylvie Paycha, and Bin Zhang.
\newblock Counting an infinite number of points: a testing ground for
  renormalization methods.
\newblock In {\em Geometric, Algebraic, and Topological Methods for Quantum
  Field Theory: Proceedings of the 2013 Villa de Leyva Summer School, Villa de
  Leyva, Colombia, 15--27 July 2013}, pages 309--352. World Scientific, 2017.




\bibitem{HairerBPHZ}
Martin Hairer.
\newblock An analyst's take on the BPHZ theorem.
\newblock {\em arXiv preprint arXiv:1704.08634}, 2017.


\bibitem{hawking1977zeta}
Stephen~W Hawking.
\newblock Zeta Function Regularization of Path Integrals in Curved Spacetime.
\newblock {\em Communications in Mathematical Physics}, 55(2):133--148, 1977.


\bibitem{Herscovich}
E. Herscovich. 
\newblock Renormalization in Quantum Field Theory (after R. Borcherds).


\bibitem{Hollands}
S.~Hollands and R.~M. Wald.
\newblock Local {W}ick Polynomials and Time Ordered Products of Quantum Fields
  in Curved Spacetime.
\newblock {\em Commun. Math. Phys.}, 223:289--326, 2001.

\bibitem{Hollands2}
S.~Hollands and R.~M. Wald.
\newblock Existence of Local Covariant Time Ordered Products of Quantum Fields
  in Curved Spacetime.
\newblock {\em Commun. Math. Phys.}, 231:309--45, 2002.



\bibitem{kalau1995gravity}
W~Kalau and M~Walze.
\newblock Gravity, non-commutative geometry and the Wodzicki residue.
\newblock {\em Journal of Geometry and Physics}, 16(4):327--344, 1995.

\bibitem{kastler1995dirac}
Daniel Kastler.
\newblock The Dirac Operator and Gravitation.
\newblock {\em Communications in Mathematical Physics}, 166(3):633--643, 1995.


\bibitem{kontsevichdeterminants}
Maxim Kontsevich and Simeon Vishik.
\newblock Determinants of elliptic pseudo-differential operators.
\newblock {\em arXiv preprint hep-th/9404046}, 1994.

\bibitem{krajewski2010topological}
Thomas Krajewski, Vincent Rivasseau, Adrian Tanas{\u{a}}, and Zhituo Wang.
\newblock Topological Graph Polynomials and Quantum Field Theory part I: Heat
Kernel Theories.
\newblock {\em Journal of Noncommutative Geometry}, 4(1):29--82, 2010.


\bibitem{kruskal1956shortest}
Joseph~B Kruskal.
\newblock On the shortest spanning subtree of a graph and the traveling
  salesman problem.
\newblock {\em Proceedings of the American Mathematical society}, 7(1):48--50,
  1956.

\bibitem{lefschetzapplications}
Solomon Lefschetz.
\newblock {\em Applications of algebraic topology}.
\newblock Springer, 1975.


\bibitem{lesch1999noncommutative}
Matthias Lesch.
\newblock On the Noncommutative Residue for Pseudodifferential Operators with
log-Polyhomogeneous Symbols.
\newblock {\em Annals of global analysis and geometry}, 17(2):151--187, 1999.


\bibitem{melroseaps}
R. B.~Melrose, 
\newblock{\em The Atiyah-Patodi-Singer index theorem,}
\newblock AK Peters, Wellesley, 1993.

\bibitem{MelroseNistor}
R. B.~Melrose and V.~Nistor. 
\newblock Homology of pseudodifferential operators I. Manifolds with boundary.
\newblock arXiv preprint funct-an/9606005 (1996).

\bibitem{mnev2014lecture}
Pavel Mnev.
\newblock Lecture notes on torsions.
\newblock {\em arXiv preprint arXiv:1406.3705}, 2014.


\bibitem{nicolaescurandom}
Liviu~I Nicolaescu.
\newblock Random Morse functions and spectral geometry.
\newblock {\em arXiv preprint arXiv:1209.0639}, 2012.


\bibitem{NST}
Nikolay~M Nikolov, Raymond Stora, and Ivan Todorov.
\newblock Renormalization of massless Feynman amplitudes in configuration
  space.
\newblock {\em Reviews in Mathematical Physics}, 26(04):1430002, 2014.



\bibitem{paycha2012regularised}
Sylvie Paycha.
\newblock Regularised Integrals, Sums and Traces.
\newblock {\em An analytic point of view, AMS University Lecture Notes}, 59,
  2012.

\bibitem{PaychaScottchernweil}
Sylvie Paycha and Simon Scott. 
\newblock Chern-Weil forms associated with superconnections. 
\newblock Analysis, Geometry and Topology of elliptic operators, B. Booss-Bavnbeck, S. Klimek, M. Lesch, W. Zhang World Scientific, 2006.

\bibitem{paycha2007laurent}
Sylvie Paycha and Simon Scott.
\newblock A Laurent Expansion for Regularized Integrals of Holomorphic Symbols.
\newblock {\em GAFA Geometric And Functional Analysis}, 17(2):491--536, 2007.


\bibitem{Popineau}
G.~ Popineau and R.~ Stora. (2016).
\newblock A pedagogical remark on the main theorem of perturbative renormalization theory. 
\newblock Nuclear Physics B, 912, 70-78.

\bibitem{pottel2017bphz}
Steffen Pottel.
\newblock A BPHZ theorem in configuration space.
\newblock {\em arXiv preprint arXiv:1706.06762}, 2017.

\bibitem{pottel2017analytic}
Steffen Pottel.
\newblock Configuration space BPHZ renormalization on analytic spacetimes.
\newblock {\em arXiv preprint arXiv:1708.04112}, 2017.



\bibitem{raysinger1971}
Daniel~B Ray and Isadore~M Singer.
\newblock R-torsion and the Laplacian on Riemannian manifolds.
\newblock {\em Advances in Mathematics}, 7(2):145--210, 1971.



\bibitem{riesz1949}
Marcel Riesz.
\newblock L'int{\'e}grale de Riemann-Liouville et le probl{\`e}me de Cauchy.
\newblock {\em Acta mathematica}, 81(1):1--222, 1949.


\bibitem{Rivasseau}
V.~Rivasseau.
\newblock {\em From Perturbative to Constructive Renormalization}.
\newblock Princeton University Press, Princeton, 1991.


\bibitem{roe1999}
John Roe.
\newblock {\em Elliptic operators, topology, and asymptotic methods}.
\newblock CRC Press, 1999.

\bibitem{schwarz1978partition}
Albert~S Schwarz.
\newblock The partition function of degenerate quadratic functional and
  Ray--Singer invariants.
\newblock {\em Letters in Mathematical Physics}, 2(3):247--252, 1978.


\bibitem{scott2010traces}
Simon Scott.
\newblock {\em Traces and Determinants of Pseudodifferential Operators}.
\newblock OUP Oxford, 2010.


\bibitem{Seeley}
Robert~T Seeley.
\newblock Complex powers of an elliptic operator.
\newblock {\em Electron. Res. Announc. Am. Math. Soc.}, pages 288--307, 1967.


\bibitem{speer1968}
Eugene~R Speer.
\newblock Analytic Renormalization.
\newblock {\em Journal of Mathematical Physics}, 9(9):1404--1410, 1968.

\bibitem{speer1976}
Eugene~R Speer.
\newblock Dimensional and Analytic Renormalization.
\newblock In {\em Renormalization theory}, pages 25--93. Springer, 1976.

\bibitem{speerseminar}
Eugene~R Speer.
\newblock Seminars on Renormalisation Theory, volume 1: Lectures on Analytic
Renormalisation--University of Maryland--technical report 73-067 (1972).
\newblock {\em Dimensional and Analytic Renormalization}.




\bibitem{taylor2013partial}
Michael Taylor.
\newblock {\em Partial Differential Equations II: Qualitative Studies of 
Linear Equations}, volume 116.
\newblock Springer Science \& Business Media, 2013.


\bibitem{vasy2015quantum}
Andr{\'a}s Vasy and Micha{\l} Wrochna.
\newblock Quantum fields from global propagators on asymptotically Minkowski
and extended de Sitter spacetimes.
\newblock {\em arXiv preprint arXiv:1512.08052}, 2015.





\bibitem{wodzicki1987}
Mariusz Wodzicki.
\newblock Noncommutative residue Chapter I. Fundamentals.
\newblock In {\em K-theory, arithmetic and geometry}, pages 320--399. Springer,
  1987.

\bibitem{wodzicki1984}
Mariusz Wodzicki.
\newblock Local invariants of spectral asymmetry.
\newblock {\em Inventiones mathematicae}, 75(1):143--177, 1984.


\end{thebibliography}
\end{document}